\newif\ifsocg       
\newif\iflipics    
\newcommand{\mdb}[1]{}
\newcommand{\mdbred}[1]{}
\newcommand{\mst}[1]{}
\newcommand{\ajs}[1]{}
\newcommand{\ben}[1]{}
\renewcommand{\leq}{\leqslant}
\renewcommand{\geq}{\geqslant}
\theoremstyle{definition}
\newtheorem{defn}[theorem]{Definition}
\newenvironment{myquote}%
  {\list{}{\leftmargin=4mm\rightmargin=4mm}\item[]}%
  {\endlist}
\newcommand{\Reals}{\mathbb{R}}
\DeclareMathOperator{\sign}{sign}
\newcommand{\bd}{\partial}
\newcommand{\graph}{\ensuremath{\mathcal{G}}}
\newcommand{\A}{\ensuremath{\mathcal{A}}}
\newcommand{\V}{\ensuremath{\mathcal{V}}}
\newcommand{\arr}[1]{\A(L_{#1})}
\newcommand{\B}{\ensuremath{\mathcal{B}}}
\newcommand{\E}{\ensuremath{\mathcal{E}}}
\newcommand{\F}{\ensuremath{\mathcal{F}}}
\newcommand{\G}{\ensuremath{\mathcal{G}}}
\newcommand{\R}{\ensuremath{R}}
\newcommand{\eps}{\varepsilon}
\newcommand{\etal}{\emph{et al.}\xspace}
\newcommand{\fre}{\F} 
\DeclareMathOperator{\interior}{int}
\newcommand{\bolds}{\boldsymbol{s}}
\newcommand{\boldt}{\boldsymbol{t}}
\newcommand{\BF}{\boldsymbol{F}}
\newcommand{\plan}{\boldsymbol{\pi}}
\newcommand{\stplan}{$(\bolds,\boldt)$-plan\xspace}
\newcommand{\rect}{R}
\newcommand{\Q}{Q}
\newcommand{\topR}{\mathrm{\textsf{top}}(\R)}
\newcommand{\botR}{\mathrm{\textsf{bot}}(\R)}
\newcommand{\topRsq}{\mathrm{\textsf{top}}(\Rsq)}
\newcommand{\botRsq}{\mathrm{\textsf{bot}}(\Rsq)}
\newcommand{\botQ}{\mathrm{\textsf{bot}}(\Q)}
\newcommand{\topQ}{\mathrm{\textsf{top}}(\Q)}
\newcommand{\topr}{\mathop{\mathrm{\textsf{top}}(\rho)}}
\newcommand{\botr}{\mathop{\mathrm{\textsf{bot}}(\rho)}}
\newcommand{\topin}[1]{\mathop{\mathrm{\textsf{top}}(#1)}}
\newcommand{\botin}[1]{\mathop{\mathrm{\textsf{bot}}(#1)}}
\newcommand{\topx}[1]{\mathop{\mathrm{\textsf{top}}(#1)}}
\newcommand{\height}{\mathop{\mathrm{\textsf{ht}}}}
\newcommand{\push}{\textsc{Push}\xspace}
\definecolor{darkorchid}{rgb}{0.6,0.196,0.8}
\newcommand{\camera}[1]{}
\newcommand{\RR}{\mathbb{R}}
\newcommand{\bt}{\boldsymbol{t}}
\newcommand{\bu}{\boldsymbol{u}}
\newcommand{\bv}{\boldsymbol{v}}
\newcommand{\bs}{\boldsymbol{s}}
\newcommand{\bp}{\boldsymbol{p}}
\newcommand{\bq}{\boldsymbol{q}}
\newcommand{\pth}{\boldsymbol{\pi}}
\newcommand{\cost}[1]{\| #1 \|}
\newcommand{\timec}[1]{\mbox{\textcent}( #1 )}
\newcommand{\Rsq}{\ensuremath{\R_{\square}}}
\newcommand{\Rxy}{\ensuremath{\R_{\leftrightarrow}}}
\newcommand{\Rud}{\ensuremath{\R_{\updownarrow}}}
\newcommand{\Ixp}{\ensuremath{I_x^{\oplus}}}
\newcommand{\Iyp}{\ensuremath{I_y^{\oplus}}}
\newcommand{\I}{\ensuremath{\mathcal{I}}}
\newcommand{\pc}{\mbox{\textcent}}
\newcommand*{\reals}{{\mathbb R}}
\newcommand*{\envir}{\EuScript{W}}
\newcommand*{\freesp}{\EuScript{F}}
\newcommand*{\fdfreesp}{\boldsymbol{F}}
\newcommand*{\fd}[1]{\boldsymbol{#1}}
\newcommand*{\fdpi}{\boldsymbol{\pi}}
\newcommand*{\robA}{A}
\newcommand*{\robB}{B}
\newcommand*{\gedges}{\mathcal{E}}
\newcommand*{\gverts}{\mathcal{V}}
\newcommand*{\norm}[1]{\mathopen|| #1 \mathclose||}
\newcommand*{\assign}{\coloneqq}
\newcommand{\figref}[1]{Figure~\ref{fig:#1}}
\newcommand{\secref}[1]{Section~\ref{sec:#1}}
\newcommand{\thmref}[1]{Theorem~\ref{thm:#1}}
\newcommand{\lemref}[1]{Lemma~\ref{lem:#1}}
\newcommand{\obsref}[1]{Observation~\ref{obs:#1}}
\newcommand{\propref}[1]{Proposition~\ref{prop:#1}}
\newcommand{\subsecref}[1]{Section~\ref{subsec:#1}}
\newcommand{\minsum}{{\sc Min-Sum}\xspace}
\newcommand{\minmakespan}{{\sc Min-Makespan}\xspace}
\newcommand{\partition}{{\sc Partition}\xspace}
\newcommand{\pspace}{{\sf PSPACE}\xspace}
\newcommand{\p}{{\sf P}\xspace}
\newcommand{\nphard}{{\sf NP}-hard\xspace}
\newcommand{\horF}{\mbox{\sc hor}(\F)}
\newcommand{\yco}[1]{{#1}_y}
\newcommand{\xco}[1]{{#1}_x}
\title{Optimal Motion Planning for Two Square Robots in a Rectilinear Environment}
\author{Pankaj K.~Agarwal}{Department of Computer Science, Duke University, Durham, NC, USA}{pankaj@cs.duke.edu}{}{}
\author{Mark de Berg}{Department of Mathematics and Computer Science, TU Eindhoven, the Netherlands}{M.T.d.Berg@tue.nl}{https://orcid.org/0000-0001-5770-3784}{MdB is supported by the  Dutch Research Council (NWO) through Gravitation-grant NETWORKS-024.002.003.}
\author{Benjamin Holmgren}{Department of Computer Science, Duke University, Durham, NC, USA}{ben.holmgren@duke.edu}{}{}
\author{Alex Steiger}{Department of Computer Science, Duke University, Durham, NC, USA}{asteiger@cs.duke.edu}{}{}
\author{Martijn Struijs}{Department of Mathematics and Computer Science, TU Eindhoven, the Netherlands}{M.A.C.Struijs@tue.nl}{https://orcid.org/0000-0002-0116-7238}{}
\authorrunning{P.K.~Agarwal, M.~de Berg, B.~Holmgren, A.~Steiger, and M.~Struijs} 
\titlerunning{Optimal Motion Planning for Two Square Robots}
\keywords{Computational geometry, motion planning, multiple robots, rectilinear paths}
  \title{Optimal Motion Planning for Two Square Robots in a Rectilinear Environment}
  \author{
  Pankaj K. Agarwal \thanks{
    Department of Computer Science, Duke University, Durham, NC 27708, USA;
    {\sf pankaj@cs.duke.edu,
    https://orcid.org/0000-0002-9439-181X}}
    \and
    Mark de Berg \thanks{
      Department of Mathematics and Computer Science, TU Eindhoven, the Netherlands;
      {\sf M.T.d.Berg@tue.nl, https://orcid.org/0000-0001-5770-3784}}
    \and
    Benjamin Holmgren \thanks{
      Department of Computer Science, Duke University, Durham, NC, USA;
      {\sf ben.holmgren@duke.edu, https://orcid.org/0009-0002-7986-7987}}
    \and
    Alex Steiger \thanks{
      Department of Computer Science, Duke University, Durham, NC, USA;
      {\sf asteiger@cs.duke.edu, https://orcid.org/0000-0003-1546-6244}}
    \and
    Martijn Struijs \thanks{
      Department of Mathematics and Computer Science, TU Eindhoven, the Netherlands;
      {\sf M.A.C.Struijs@tue.nl, https://orcid.org/0000-0002-0116-7238}}
    }
\begin{document}
\maketitle
\blfootnote{Our manuscript has no associated data.}

\begin{abstract}
    Let $\envir \subset \Reals^2$ be a rectilinear polygonal environment (that is, a rectilinear 
    polygon potentially with holes) with a total of $n$ vertices, and let $A,B$ be two robots, 
    each modeled as an axis-aligned unit square, that can move rectilinearly inside $\envir$. 
    The goal is to compute a \emph{collision-free motion plan} $\plan$, that is, 
    a motion plan that continuously moves $\robA$ from $s_A$ to $t_A$ and $B$ from $s_B$ to $t_B$ 
    so that $\robA$ and $\robB$ remain inside $\envir$ and do not collide with each other during the motion. 
    We study two variants of this problem which are focused additionally on the \emph{optimality} of $\plan$,
    and obtain the following results.
    \begin{itemize}
    \item \minsum: Here the goal is to compute a motion plan that minimizes
          the sum of the lengths of the paths of the robots. We present 
          an $O(n^4\log{n})$-time algorithm for computing an optimal solution to the min-sum problem.
          This is the first polynomial-time algorithm to compute an optimal, collision-free
          motion of two robots amid obstacles in a planar polygonal environment.
    \item \minmakespan: Here the robots can move with at most unit speed,
          and the goal is to compute a motion plan that minimizes
          the maximum time taken by a robot to reach its target location. 
          We prove that the min-makespan variant is NP-hard.
    \end{itemize}
    \end{abstract}
\section{Introduction}\label{sec:intro}
Autonomous multi-robot systems are being increasingly used for a wide range of 
tasks such as logistics in industry, precision
agriculture, exploration of confined and cluttered 
environments, 
search and rescue operations, and visual inspection of areas of interest.
These applications have led to extensive work on designing efficient algorithms for computing 
high-quality--motion plans in a structured or an unstructured environment for a system of robots; 
see e.g.~\cite{ASG*-23,YJC-13} for recent surveys.
Several criteria have been proposed to measure the quality of a motion plan, including
the total length of the robot paths, 
the make-span (i.e., the maximum time taken by a robot to complete its motion), or 
by the utility of the plan (i.e., how well it performs the underlying task).
Already for two simple robots, such as unit squares or disks, translating in a planar polygonal environment,
little is known about computing an optimal motion plan.
Although polynomial-time algorithms are known for 
computing a collision-free motion plan of two simple robots~\cite{DBLP:journals/dcg/AronovBSSV99, SS91}, no polynomial-time algorithm is known even for computing a plan such that the sum (or the maximum) of the path lengths of 
the two robots is minimized, nor is the problem known to be NP-hard. In this paper we study the 
problem of computing an optimal motion plan for two simple robots, each modeled as a unit square.

\paragraph{Problem statement.}
Let $\Box := \{x \in \Reals^2 \mid \norm{x}_\infty \leq 1/2\}$ denote the 
axis-aligned square of unit side length---referred to as 
a \emph{unit square} for short---centered at the origin. 
Let $\robA$ and $\robB$ be two robots, each modeled as a
unit square, that can translate inside the same closed rectilinear 
polygonal environment $\envir\subset\reals^2$.
In other words, the shared \emph{workspace}~$\envir$ is a rectilinear polygon,
possibly with holes.\footnote{For simplicity we assume that the
outer polygon
and its holes are non-degenerate, that is, they do not have dangling edges. 
With some care, our algorithm can also handle degenerate cases.} Let $n$
be the number of vertices of $\envir$.
A placement of $\robA$ (and, similarly, of $\robB$) is represented by 
the position of its center in the workspace~$\envir$. For such a placement 
to be free of collision with the boundary $\bd\envir$ of $\envir$,
the representing point should be at $\ell_\infty$-distance at least $1/2$ 
from $\bd\envir$. (Note that the robot is allowed to touch
an obstacle, since we define $\F$ to be a closed set.)
We let $\freesp \subset \envir$ denote
the \emph{free space} of a single robot, which is the subset of $\envir$ 
consisting of all collision-free placements.  
A (joint) \emph{configuration} $\fd{p}$ of $\robA$ and $\robB$ 
is represented as a pair $\fd{p}=(p_A,p_B) \in \envir \times \envir$, 
where $p_A$ and $p_B$ are the placements of $\robA$ and $\robB$, respectively.
The \emph{configuration space}, called \emph{C-space} for short, 
is the set of all configurations, 
and is thus represented as $\envir\times\envir\subset\Reals^4$. 
A configuration $\fd{p} = (p_A,p_B) \in \Reals^4$ is called \emph{free} 
if $p_A, p_B\in\freesp$ 
and  $\norm{p_A-p_B}_\infty \geq 1$.
Let $\fdfreesp \assign \fdfreesp(\envir)$ 
denote the \emph{four-dimensional free space}, comprising the set of all free configurations.
Clearly, $\fdfreesp \subseteq \freesp \times \freesp$.

Let $\fd{s}=(s_A,s_B)$ be a given \emph{source configuration} and let 
$\fd{t}=(t_A,t_B)$ be a given \emph{target configuration}.
An \emph{$(\fd{s}, \fd{t})$-plan} is a continuous function $\plan:[0,T] \to \envir\times\envir$,
for some $T \in \RR_{\geq 0}$, with $\plan(0) = \fd{s}$ and $\plan(T) = \fd{t}$.
The image of $\plan$ is a (continuous) curve in the C-space, referred to 
as an \emph{$(\bs, \bt)$-path}. 
With a slight abuse of notation, we use $\plan$ to denote its image 
as well.\footnote{Since we do not impose any kinodynamic constraint on 
the motion of robots (e.g. maximum acceleration or maximum curvature), given an $(\bs,\bt)$-path
in C-space, it is straightforward to compute an $(\bs,\bt)$-plan corresponding to this path.}
If $\pth \subset \fdfreesp$ we say that $\pth$ is \emph{feasible},
and if there exists a feasible $(\bs,\bt)$-plan we say that the 
pair $(\bs,\bt)$ is \emph{reachable}.
For a plan $\fdpi:[0,T] \to \envir \times \envir$, let $\pi_A:[0,T] \to \envir$ 
and $\pi_B:[0,T] \to \envir$ be the projections of $\fdpi$ onto the two-dimensional plane 
spanned by the first two coordinates and the last two coordinates, respectively. 
The functions $\pi_A$ and $\pi_B$
specify the motions of $\robA$ and $\robB$ that $\fdpi$ induces, that is,
$\plan(\lambda) = \left(\pi_A(\lambda), \pi_B(\lambda)\right)$ for all $\lambda \in [0,T]$. 
Again, with a slight abuse of notation, we also use $\pi_A$
and $\pi_B$ to denote the paths followed by $A$ and $B$, respectively.
We define two versions of optimal motion planning.
\vspace*{2mm}
\begin{itemize}
\item \minsum.
    For a path $\gamma$ in $\envir$, let $\cost{\gamma}$ denote its
    $\ell_1$-length.
    We define $\cost{\plan}$, the \emph{cost} of an $(\bs,\bt)$-plan $\plan$,
    by $\cost{\plan} := \cost{\pi_A} + \cost{\pi_B}$, that is, 
    $\cost{\plan}$ is the sum of the $\ell_1$-lengths of the paths of the two robots. 
    The problem is to decide whether a given pair $(\bs,\bt) \in \BF^2$ is reachable, and, if so, compute a minimum-cost feasible
    $(\bs,\bt)$-plan, referred to as a min-sum $(\bs,\bt)$-plan.
    (As explained later, we can in fact restrict our attention to rectilinear paths
    because $\envir$ is rectilinear and we measure the length of a path in the $\ell_1$-metric.)
\item \minmakespan.
    We define the \emph{makespan} of an $(\bs,\bt)$-plan $\plan:[0,T] \to \envir \times \envir$,
    denoted by $\timec{\plan}$, to be $T$.
    The problem is to decide for a given pair $(\bs,\bt) \in \BF$ if $(\bs,\bt)$ is 
    reachable and, if so, compute a feasible $(\bs,\bt)$-plan $\plan^*(\bs,\bt)$
    that minimizes the makespan
    under the condition that the maximum speed of each robot is at most~1. 
    (Note that we do not require the $(\bs,\bt)$-plan to be $C^1$ continuous, 
    so the speed of $A$ or $B$ may instantaneously change from $0$ to $1$, 
    and the robots may take sharp turns.)
\end{itemize}
\vspace*{2mm}
Both variants of the optimal motion planning problem are interesting in their own right.
The first minimizes the total work done by the robots, while the second minimizes 
the total time taken until both robots have reached their destination.
  
\paragraph{Related work.}
It is beyond the scope of this paper to review the  known results on 
motion-planning algorithms; for a review of key relevant results,
we refer the reader to recent books and surveys on 
the topic~\cite{hks-r-18,hss-amp-18,Lav06,m-spn-18,ASG*-23,Sal19}.
We mention here only a small sample of results---ones that 
are most closely related to the problem at hand.

When there is a single translating square robot, or more generally when there is a single
convex polygonal translating robot with a constant number of vertices, the problem is 
equivalent---through C-space formulation---to moving a point robot amid polygonal 
obstacles with $O(n)$ vertices, and it can be solved in $O(n\log n)$ 
time~\cite{chen2015computing,DBLP:journals/siamcomp/HershbergerS99,DBLP:conf/stoc/Wang21}.
The problem of computing 
the shortest path for a point robot amid polyhedral obstacles in $\reals^3$  
is NP-hard~\cite{CanRei87}, and fast $(1+\eps)$-approximation algorithms are known \cite{clarkson1987,sharath2009}. 

Computing a feasible (not necessarily optimal) plan for a team of translating 
unit square robots in a polygonal environment 
is \pspace-hard~\cite{DBLP:journals/ijrr/SoloveyH16}; 
see~\cite{DBLP:conf/fun/BrockenHKLS21,DBLP:conf/fun/BrunnerCDHHSZ21,DBLP:journals/tcs/HearnD05,hopcroft1984complexity,DBLP:journals/ipl/SpirakisY84,DBLP:conf/aaai/YuL13, abrahamsen2025} 
for related intractability results.
Notwithstanding a rich literature~\cite{DBLP:journals/trob/DayanSPH23,
Sal19,
DBLP:journals/arobots/ShomeSDHB20,
stern2019multiagent,
DBLP:journals/arobots/TurpinMMK14} on multi-robot motion planning 
in both continuous and discrete setting---robots move on a graph in the 
latter setting---little 
is known about algorithms computing plans with provable quality guarantees.
 Kirkpatrick and Liu~\cite{DBLP:conf/cccg/KirkpatrickL16} presented an efficient algorithm 
 for computing the min-sum plan for unit disks in this setting and showed that an optimal plan always consists of at most six segments of 
 straight lines and circular arcs. Recently, similar results was derived for the cases of two 
 unit squares~\cite{Esteban2022,mastersthesisRuizHerrero} and two centrally-symmetric convex polygons~\cite{KL25}.
Recently, there have been a few results for more than two robots in a setting without obstacles:
Deligkas~\etal~\cite{DBLP:conf/icalp/DeligkasEGK024} and Eiben~\etal~\cite{DBLP:conf/compgeom/EibenGK23}
consider the problem on graphs minimizing the makespan in 
discrete time intervals, and Kanj~\etal~\cite{DBLP:conf/compgeom/KanjP24}
minimize the number of ``serial" or ``parallel" moves of the robots by proving various structural properties of an optimal plan. 

Approximation algorithms for minimizing the total path length are known
for unit-disk robots, assuming a certain minimum separation between 
the source and target positions as well as from the 
obstacles~\cite{DBLP:journals/comgeo/AgarwalGHT23,SolomonHalperin2018,DBLP:conf/rss/SoloveyYZH15}.
The separation assumption makes the problem considerably easier: 
a feasible plan always exists, and one can first compute an optimal path for each robot
independently (ignoring the other robots) and then locally modify the paths so that 
the robots do not collide with each other during their motion. 
For computing a plan that minimizes the makespan for a set of unit disks (or squares) 
in the plane without obstacles, an $O(1)$-approximation algorithm was proposed 
as well~\cite{demaine2019coordinated}, again assuming some separation.
Recently, Agarwal~\etal~\cite{AHSS-24} presented the first polynomial-time 
approximation algorithm for the min-sum motion-planning problem for two squares
without any assumptions on the work environment or on the source/target configurations.
Their algorithm computes an $(1+\eps)$-approximate solution in 
$(1/\eps)^{O(1)}\cdot n^2\log n$ time.
They leave it as an open problem whether an optimal plan in this case 
can be computed in polynomial time.

We conclude our discussion by mentioning a few other lines of work on multi-robot 
motion planning. The central and prevalent family of practical motion-planning 
techniques in robotics is based on sampling 
of the underlying C-space; see~\cite{Sal19} for a recent review. 
This paradigm has been used for multi-robot motion planning as well,
but it does not lead to a polynomial-time approximation 
algorithm for computing an optimal plan. 
See~\cite{DBLP:conf/icra/DayanSPH21,DBLP:journals/trob/DayanSPH23,DBLP:journals/ijrr/KaramanF11,DBLP:journals/arobots/ShomeSDHB20,DBLP:conf/icra/SoloveyJSFP20} 
for a few results on the analysis of this approach.
There is also work on the \emph{unlabeled} version of the problem, where
each robot can end up at any of the (collective) target positions, as long as 
all of the target positions are occupied by robots
at the end of the motion. For a team of unlabeled unit disks, an approximate solution 
for the minimum total path length is given in~\cite{DBLP:conf/rss/SoloveyYZH15}, 
assuming a certain separation between the source and target positions of the robots, 
as well as from the obstacles.  
See also~\cite{DBLP:journals/tase/AdlerBHS15,BanyassadyEtAl.SoCG.2022}. 
Finally, another major line of work on optimizing multi-robot motion plans 
addresses a discrete version of the problem, where robots are moving on graphs. 
In this setting the robots are often referred to as \emph{agents},
and the problem is called \emph{multi-agent path finding}~(MAPF). 
There is a rich literature on MAPF, 
and we refer the reader to the recent survey~\cite{stern2019multiagent}.

\paragraph{Our contributions.}
Our main result is that the min-sum motion planning problem for 
unit square robots translating in a 2D polygonal environment is in~\p
while the min-makespan problem in the same setting is \nphard, as
stated in the two following theorems.
%
\begin{theorem}
\label{thm:two-robots-alg}
Let $\envir$ be a closed rectilinear polygonal environment with $n$ vertices, let 
$\robA,\robB$ be two axis-parallel unit-square robots translating inside $\envir$,
and let $\fd{s},\fd{t}$ be source and target configurations of $\robA,\robB$. 
We can compute an optimal min-sum motion plan $\fdpi$ from $\fd{s}$ to $\fd{t}$ under 
the $\ell_1$-metric, or determine that no feasible motion exists, in $O(n^4\log n)$ time. 
\end{theorem}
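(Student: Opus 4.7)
The plan is to reduce the continuous min-sum motion planning problem to a shortest-path computation in a discrete graph $G$ of polynomial size. A first step is to argue that one can restrict to rectilinear motion without losing optimality. Because $\envir$ is rectilinear, the single-robot free space $\freesp$ (the Minkowski erosion of $\envir$ by $\Box$) is a rectilinear polygon with $O(n)$ vertices, and the forbidden ``collision region'' $\{(p_A,p_B) : \norm{p_A-p_B}_\infty < 1\}$ is a rectilinear subset of $\reals^4$. Thus $\fdfreesp$ is a rectilinear semi-algebraic set of combinatorial complexity $O(n^2)$. A standard staircase rectilinearization shows that any $(\bs,\bt)$-plan of $\ell_1$-cost $L$ can be replaced by a plan of cost at most $L$ whose $\pi_A$ and $\pi_B$ are both rectilinear, so from now on I assume both paths are rectilinear. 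The two-dimensional free space and the standard shortest-path machinery inside it can be precomputed in $O(n\log n)$ time.

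The crucial step is a structural characterization. I would aim to prove that there exists an optimal min-sum plan $\plan^*$ that is \emph{canonical} in the following sense: one can partition $[0,T]$ into $O(n^2)$ maximal intervals such that in each interval (i) at most one of the robots changes direction, and (ii) any direction change occurs either at a reflex vertex of $\freesp$, or at a configuration in which the moving robot touches the stationary robot, or at a vertex of the arrangement induced by the $O(n)$ horizontal/vertical lines through vertices of $\freesp$, shifted by a coordinate of the stationary robot. The argument is an exchange argument: given any optimal plan, whenever both robots move simultaneously one can rectify the motion to ``serialize'' portions and then straighten any non-canonical subpath; since each robot's path is rectilinear and the objective is the sum of $\ell_1$-lengths, such local modifications never increase cost, while strict decreases reduce a potential (for example, the number of non-canonical breakpoints). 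This bounds the set of ``canonical'' configurations one needs to consider to $O(n^2)$ per robot and $O(n^4)$ pairs, each lying inside $\fdfreesp$.

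With this structural lemma in hand, the algorithm becomes standard: build a planning graph $G$ whose $O(n^4)$ vertices are the canonical free configurations, add an edge between two configurations if they are connected by a single canonical move (one robot traversing a rectilinear segment while the other is stationary, or both traversing parallel aligned segments), and weight the edge by the $\ell_1$-length of the motion. Each vertex has constant out-degree among canonical moves, so $|E(G)| = O(n^4)$; feasibility of each candidate edge reduces to a constant number of rectilinear shortest-path and collision queries inside $\freesp$, computable in $O(\log n)$ time per query after the $O(n\log n)$ preprocessing of $\freesp$. Running Dijkstra on $G$ then yields an optimal plan (or certifies unreachability) in $O(n^4\log n)$ total time.

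The main obstacle will be proving the structural characterization: showing that every optimal plan can be perturbed to a canonical one without increasing $\ell_1$-cost. This requires handling several subtle cases---in particular, situations where the two robots repeatedly ``escort'' each other through narrow corridors, and situations where one robot must back up to let the other pass. The argument will need a careful exchange/normalization scheme that is monotone in a lexicographic potential (cost first, then number of non-canonical breakpoints), together with a clean classification of the critical events that can occur on the boundary of $\fdfreesp$. The rest of the proof---computing $\freesp$, enumerating canonical configurations, and running Dijkstra---is largely routine once the canonical-plan lemma is in hand.
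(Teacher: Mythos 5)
Your overall algorithmic skeleton is the same as the paper's: prove that some optimal plan lives on a polynomial-size set of ``canonical'' configurations, build a configuration graph with $O(n^4)$ nodes and constant degree, and run Dijkstra in $O(n^4\log n)$ time. The preliminary reductions you invoke (rectilinearity of optimal paths, serializing the motion so that only one robot moves at a time) are also what the paper uses. However, there is a genuine gap, and it sits exactly where the paper's entire technical contribution lies: the structural lemma. You do not prove it, and the version you sketch is not even well-posed as a polynomial bound. You define canonical breakpoints as, among other things, vertices of the arrangement of lines through vertices of $\freesp$ ``shifted by a coordinate of the stationary robot.'' But the stationary robot's coordinate is itself only constrained to be canonical, so the set of candidate lines for $\robA$ depends on where $\robB$ parks, which depends on where $\robA$ parked earlier, and so on. This is precisely the cascading effect the paper must defuse: pushing one robot onto a line at distance $i$ from a free-space edge can force the other onto a line at distance $i+1$, and without a fixed, plan-independent grid the candidate set does not stay at $O(n^2)$ per robot. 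Your claim of ``$O(n^2)$ canonical configurations per robot'' therefore presupposes the theorem you still need to prove.

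The paper resolves this by fixing the grid in advance as the arrangement of all horizontal/vertical lines at distance $0$, $1$, or $2$ from edges of $\freesp$ (and through the source/target points), and then proving that an optimal plan can be deformed onto this grid. That deformation is not a simple local exchange argument monotone in (cost, number of non-canonical breakpoints): a push applied to a segment of $\pi_A$ may be blocked by $\robB$ and force a secondary push on $\pi_B$ that \emph{increases} $\|\pi_B\|$, and optimality is only preserved because the increase is compensated by an equal decrease in $\|\pi_A\|$; moreover the push direction must be chosen consistently even though $\robB$ can be on different sides of the segment at different times, which requires a global analysis of ``unsafe'' and ``swap'' intervals, influence regions of corridors, and occasionally re-routing the paths outside the time window of the bad segment. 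None of these difficulties is addressed by your sketch, and the lexicographic potential you propose (cost strictly first) is too weak to drive the induction, since the individual path lengths are not monotone under the needed modifications. So while the easy half of your proposal (graph construction plus Dijkstra, $O(n\log n)$ preprocessing of $\freesp$) matches the paper and is fine, the canonical-plan lemma --- the part you defer as ``the main obstacle'' --- is the theorem itself, and as stated your candidate canonical set would need to be replaced by the fixed $0$/$1$/$2$-line grid and supported by the full pushing/compensation argument before the $O(n^4\log n)$ bound follows.
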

\begin{restatable}{theorem}{hardness}
\label{thm:two-robots-hardness}
Let $\envir$ be a closed rectilinear polygonal environment with $n$ vertices, let 
$\robA,\robB$ be two axis-parallel unit-square robots translating inside $\envir$,
let $\fd{s},\fd{t}$ be source and target configurations of $\robA,\robB$,
and let $T_{\max}$ be a given maximum time.
It is \nphard to determine whether there is a 
feasible $(\bs,\bt)$-plan $\plan^*$ such that the maximum speed of each robot is 
at most~1 and the makespan of $\plan^*$ is at most $T_{\max}$.
\end{restatable}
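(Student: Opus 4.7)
The plan is to reduce from the (weakly) \nphard PARTITION problem: given positive integers $a_1,\dots,a_k$ with $\sum_{i=1}^k a_i = 2S$, decide whether there exists $I \subseteq \{1,\dots,k\}$ with $\sum_{i \in I} a_i = S$. Given such an instance, I would construct in polynomial time a rectilinear environment $\envir$ with $O(k)$ vertices, source/target configurations $\bs,\bt$, and a threshold $T_{\max}$ such that a feasible $(\bs,\bt)$-plan with makespan at most $T_{\max}$ exists if and only if the PARTITION instance is a YES-instance.

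The construction is a long horizontal ``tube'' of height $2$ (just wide enough for two stacked unit-square robots) punctuated by $k$ structurally identical \emph{choice gadgets} $\Gamma_1,\dots,\Gamma_k$ in series, separated by short \emph{swap zones} in which the two robots can freely rearrange their vertical order. In gadget $\Gamma_i$ a solid rectilinear wall blocks the tube, and two disjoint one-robot-wide rectilinear bypasses route around the wall: an \emph{upper} bypass of $\ell_1$-length $b$ (a constant chosen with $b > \max_i a_i$) and a \emph{lower} bypass of $\ell_1$-length $b + a_i$. Since each bypass admits at most one robot at a time, the time-efficient way to cross $\Gamma_i$ is to send one robot through each bypass in parallel. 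The source $\bs$ places the robots stacked at the left end of the tube, and $\bt$ places them stacked at the right end.

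For correctness, let $I_A$ (respectively $I_B$) denote the set of gadgets in which robot $\robA$ (respectively $\robB$) uses the lower bypass in a given feasible plan; if every gadget is split as above, then $I_A$ and $I_B$ partition $\{1,\dots,k\}$. Robot $\robA$'s travel distance is $L_0 + bk + \sum_{i \in I_A}a_i$ and $\robB$'s is $L_0 + bk + \sum_{i \in I_B}a_i$, where $L_0$ is the constant $\ell_1$-length contributed by the swap zones and straight highway segments. Setting $T_{\max} := L_0 + bk + S$, the makespan (the maximum of the two travel times, since both robots move at unit speed) is at most $T_{\max}$ iff both sums are at most $S$, which in combination with $\sum_{I_A}a_i + \sum_{I_B}a_i = 2S$ forces $\sum_{I_A}a_i = \sum_{I_B}a_i = S$---precisely a solution to PARTITION. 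Conversely, any PARTITION solution $I$ yields a plan of makespan $T_{\max}$ by assigning $\robA$ to the lower bypass at each $i \in I$ and to the upper bypass otherwise. Since the gadget coordinates can be encoded in $\poly(\log \sum_i a_i)$ bits, the reduction runs in polynomial time.

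The main obstacle is proving rigorously that any feasible plan with makespan at most $T_{\max}$ is essentially of this intended form, ruling out unintended maneuvers: in principle the two robots could use the same bypass sequentially, backtrack to swap vertical order inside a gadget, or interleave moves across adjacent gadgets. The choice $b > \max_i a_i$ is made precisely so that splitting a single gadget (makespan contribution $b + a_i$) strictly beats any sequential use of one bypass (makespan contribution at least $2b$), so locally the parallel split is always optimal. A careful analysis of the joint free space $\BF$ restricted to a single gadget, tracking the $\ell_1$-lengths of the projections $\pi_A$ and $\pi_B$, together with a short exchange argument across adjacent gadgets, should suffice to show that every minimum-makespan plan realizes some partition $I_A \sqcup I_B = \{1,\dots,k\}$ of the gadgets, completing the reduction.
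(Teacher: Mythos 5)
Your high-level strategy is the same as the paper's: a chain of serial gadgets, one per PARTITION element, each offering a cheap and an expensive route, with a makespan threshold that forces the two robots to split the expensive detours evenly. The gap is in the mechanism you use to force the split. You claim that because each bypass is one robot wide, "each bypass admits at most one robot at a time," so sequential use of the cheap (upper) bypass costs makespan at least $2b$, which $b>\max_i a_i$ makes worse than the intended parallel split of cost $b+a_i$. This is false: a one-robot-wide corridor does not exclude the second robot until the first has exited; the robots can traverse it in convoy, with the trailing robot maintaining $\ell_\infty$-separation $1$ behind the leader. The trailing robot is then delayed only by an additive constant (roughly $1$--$2$), not by $b$, and this delay does not accumulate across gadgets, since the follower is already trailing by that constant when the next gadget is reached. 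Hence both robots can take the upper bypass in \emph{every} gadget at total makespan about $L_0+bk+O(1)$, which beats your threshold $T_{\max}=L_0+bk+S$ whenever $S$ exceeds a small constant --- i.e.\ your reduction answers YES on essentially every instance, regardless of whether a valid partition exists.

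The missing idea, which the paper supplies, is a normalization that makes the unavoidable \emph{constant} interference delay larger than the \emph{total} savings available in the whole construction. The paper rescales the input to $y_i=x_i/\sum_j x_j$, so that $\sum_i y_i=1$: each gadget's shortcut saves only $y_i<1$, the threshold is $11m-\tfrac12$, and if the two robots ever cross the same gate the second one is provably delayed by at least $1$, which already pushes its completion time to $\geq 11m$ and cannot be recouped anywhere. It also needs (and carries out) an explicit timing calculation of the form "earliest time the first robot can reach the gate, plus $1$, plus the shortest remaining distance," rather than a purely length-based exchange argument. If you want to salvage your construction, you would have to (i) scale the $a_i$ so their total is below the minimum interference penalty your geometry guarantees, and (ii) prove a quantitative lower bound on that penalty when both robots use the same bypass (or the same narrow entrance), in the style of the paper's Case~II accounting; as written, the $2b$ bound and hence the forcing argument do not hold.
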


We prove \thmref{two-robots-hardness} by a reduction from the 
\textsc{Partition} problem, which is to decide whether a given set $Y$ of integers
admits a partition into subsets 
$Y_1,Y_2$ such that $\sum_{p \in Y_1} p = \sum_{q \in Y_2} q$.
The idea of the reduction is to build a workspace~$\envir$ that consists
of gadgets $\envir_1,\ldots,\envir_m$, each corresponding to an element of~$Y$ and to choose a parameter $T_{\max} \geq 0$, 
so that both robots must pass through every gadget and there is
a plan with makespan at most $T_{\max}$ if and only if there is a valid partition of~$Y$. The reduction is described in detail in \secref{hardness}.

\medskip

Our main technical contribution is a polynomial-time exact algorithm for \minsum,
as stated in \thmref{two-robots-alg}.
The crucial step towards designing this algorithm is to prove the existence of 
an optimal \emph{canonical-grid plan},
in which the path followed by each robot lies on an $O(n) \times O(n)$ 
non-uniform predefined grid; see \thmref{canonical}.

For a single robot, the existence of such a plan is well known and easy to prove~\cite{RLW89}:
since an optimal path in the $\ell_1$-metric is rectilinear,
one can push the segments of an optimal path in such a way that
each segment is either incident to the source or target position,
or a part of the segment contains an edge of the free space~$\freesp$.
Importantly, pushing the path can be done without increasing its length.
Thus there is an optimal path on the grid defined by the horizontal
and vertical lines through the source and target positions and the lines
containing the edges of the free space~$\freesp$.  

For a pair of robots, the existence of an optimal plan on a suitably 
defined grid seems intuitive as well. Indeed, consider an optimal plan
$\plan^*(\bs,\bt)$ and imagine trying to push a segment of~$\pi_A$ in the same
manner as in the single-robot case. It may happen that we fail to push
the segment onto the grid because robot~$\robB$ is blocking it. 
The hope is that we can continue to push the segment of $\pi_A$
and push the appropriate segment of $\pi_B$ in the same direction---essentially
$\robA$ is pushing $\robB$ out of the way---until $\robB$
hits $\bd\envir$. Thus, we have pushed a segment of $\pi_B$
onto a line containing an edge of~$\freesp$ and the segment of $\pi_A$ 
onto a line at distance~1 from that line. Making this idea work
is highly nontrivial though. For instance, we must choose the
direction into which we push a specific segment of $\pi_A$ in such a way
that $\cost{\pi_A}$ does not increase, but this
may force us to push $\pi_B$ in such a way that $\cost{\pi_B}$ increases.
Hence, we have to choose the pushing directions carefully.
This is further complicated by the fact that $\robB$ may be on
different sides of a segment of $\pi_A$ at different moments in time.
Another major complication is that pushing (a segment of) one robot 
onto a line at distance $i$ from an edge of $\freesp$, may
cause the other robot to end up on a line at distance~$i+1$ from this edge. 
If we are not careful, this can cause a cascading effect
and the final grid will not have the desired size.

To overcome these issues, we first prove several structural properties of an optimal plan.
Using these properies, we combine local pushing
operations with more global re-routing operations such
that the transformed paths end up on lines at distance 0, 1, or 2
from the edges of~$\freesp$. These operations may increase the length
of one of the paths, but we will argue that when this happens
the length of the other path decreases by the same amount.
Thus we end up with a plan on a grid of size $O(n) \times O(n)$ 
and that is still optimal.

Using the existence of an optimal canonical-grid plan, 
we show that we can construct a $4$-dimensional weighted grid graph $\graph=(\gverts,\gedges)$, 
with $\fd{s},\fd{t}\in \gverts$, of size $O(n^4)$, in $\fdfreesp$
such that a shortest $(\bs,\bt$)-path in $\graph$ corresponds to a min-sum $(\bs,\bt)$-plan.

\section{A Polynomial-Time Algorithm for \minsum}
This section describes a polynomial-time algorithm for \minsum. We first define a
\emph{canonical-grid plan} in \secref{canonical}---an \stplan where both robots move on a 
predefined $O(n) \times O(n)$ nonuniform grid---and claim that for any reachable
pair $\bs,\bt \in \BF$, there exists an optimal $(\bs,\bt)$-plan that is a
canonical-grid plan. In \secref{convert} we give a high-level overview
of the proof of this claim, which proceeds by transforming an optimal plan 
to a canonical-grid plan whose cost is no more than that of the original plan.
A detailed proof of the claim is deferred to \secref{grid-snapping}.
Finally, we describe our algorithm in \secref{alg}.

\subsection{Canonical-grid plans}\label{sec:canonical}
\ifsocg
\subparagraph{Rectilinear, decoupled plans.}
\else
\paragraph{Rectilinear, decoupled plans.}
\fi
Since $\envir$ is a rectilinear environment and $A,B$ are squares, it is easily
seen that $\BF$ is a polyhedral region. Hence, if there is a feasible plan,
then there is a piecewise linear 
optimal plan, that is, an optimal plan whose image
is a polygonal chain in $\envir \times \envir$~\cite{AHSS-24}.
We thus focus on piecewise-linear plans.
We refer to the vertices of such a plan as \emph{breakpoints}. 
For a path $\plan = (\pi_A, \pi_B)$,
the breakpoints of $\pi_A$ and $\pi_B$ are the projections of the breakpoints of~$\pi$. 
Note that the two segments incident on a breakpoint of $\pi_A$ (or $\pi_B$) may be collinear.

An $(\bs,\bt)$-plan $\plan$ is called \emph{decoupled} if only one robot 
moves at any given time; the other robot is then \emph{parked} at some 
point in $\fre$. A decoupled plan can be represented as a sequence of 
\emph{moves}, with each move specifying the parking location of one robot 
and the motion of the other robot. By definition of breakpoints,
each parking spot of $A$ (resp. $B$) is a breakpoint of $\pi_A$ (resp. $\pi_B$). 
Agarwal et al.~\cite{AHSS-24} have shown that for any pair $\bs,\bt$ of 
reachable configurations, there is an optimal $(\bs,\bt)$-plan that is decoupled. 
Since only one robot moves at a time, the parked robot can be considered 
as an obstacle during the move. Thus, the moving robot moves from one position
(the start of the move) to another position (the end of the move) in
a rectilinear environment with obstacles (the original obstacles
plus the parked robot). This can always be done in an optimal
manner---that is, in a manner that minimizes the length
of the motion---with a rectilinear path.
Hence, we can ensure that each robot follows a rectilinear path in $\freesp$ in each move.
We refer to such a plan as a \emph{rectilinear} plan. The breakpoints of $\pi_A$
and $\pi_B$ in a rectilinear plan $\plan = (\pi_A, \pi_B)$
are the parking spots of $A$ and $B$ and the points at which they switch between horizontal and vertical segments.
We thus conclude the following.
\begin{proposition}\label{prop:rectilinear-decoupled}
      For any reachable pair of configurations $\bs, \bt \in \BF$, there is an 
      optimal $(\bs,\bt)$-plan that is rectilinear and decoupled.
\end{proposition}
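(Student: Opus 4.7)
The plan is to take any optimal $(\bs,\bt)$-plan and refine it in two stages: first make it decoupled, then replace each single-robot move by a rectilinear $\ell_1$-shortest path, arguing at each stage that the cost does not increase.

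For the first stage I would invoke results from Agarwal et al.~\cite{AHSS-24}. Since $\envir$ is rectilinear and the mutual-collision constraint $\norm{p_A-p_B}_\infty \geq 1$ is cut out by axis-aligned hyperplanes in $\RR^4$, the free space $\BF$ is a closed polyhedral region. A standard argument then yields a piecewise-linear optimal $(\bs,\bt)$-plan $\plan$, and by the decoupling result of~\cite{AHSS-24} we may further assume that $\plan$ is decoupled, i.e.\ it is a concatenation of moves in which exactly one of $\robA, \robB$ translates while the other is parked. I would take this decoupled $\plan$ as the starting point, since re-deriving those facts is outside the scope of the proposition.

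For the second stage I would process each move independently. Consider a move during which only $\robA$ translates, from $p_A^{\mathrm{start}}$ to $p_A^{\mathrm{end}}$, while $\robB$ is parked at some $p_B \in \freesp$. Throughout the move, the center of $\robA$ must lie in
\[
   \freesp_{p_B} \;:=\; \freesp \cap \{\, x \in \RR^2 : \norm{x - p_B}_\infty \geq 1 \,\},
\]
which is a closed rectilinear polygonal region (the rectilinear region $\freesp$ with an axis-aligned unit square around $p_B$ removed). The sub-path of $\pi_A$ executing this move is an arbitrary curve in $\freesp_{p_B}$ from $p_A^{\mathrm{start}}$ to $p_A^{\mathrm{end}}$, of some $\ell_1$-length $\ell$. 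It is standard that in a rectilinear polygonal environment an $\ell_1$-shortest path from one point to another can be realised by a rectilinear polygonal path, of length at most $\ell$. I would replace the sub-path of $\pi_A$ by such a rectilinear shortest path in $\freesp_{p_B}$. This preserves the endpoints of the move, leaves $\pi_B$ untouched during the move, and does not increase $\cost{\plan}$.

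Applying this replacement to every move produces an $(\bs,\bt)$-plan $\plan'$ that is still decoupled; each robot's path is a concatenation of rectilinear sub-paths with identical endpoints to those in $\plan$, hence globally rectilinear. Because $\cost{\plan'} \leq \cost{\plan}$ and $\plan$ was optimal, $\plan'$ is also optimal, proving the proposition. The only genuinely non-trivial ingredient is the decoupling result of~\cite{AHSS-24} invoked at the start; the rectilinearization of each single-robot move is a routine appeal to the single-robot $\ell_1$-shortest-path theory, so I do not expect any real obstacle beyond citing these facts correctly.
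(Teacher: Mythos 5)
Your proposal is correct and follows essentially the same route as the paper: cite Agarwal et al.~\cite{AHSS-24} for the existence of a piecewise-linear, decoupled optimal plan, then rectilinearize each single-robot move by treating the parked robot as an additional obstacle and replacing the sub-path with an $\ell_1$-shortest rectilinear path of no greater length. No gaps beyond what the paper itself delegates to~\cite{AHSS-24} and standard $\ell_1$ shortest-path facts.
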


\ifsocg
\subparagraph{The canonical grid.}
\else
\paragraph{The canonical grid.}
\fi
For an integer $i \geq 0$, we define an \emph{$i$-line} to be a horizontal 
or vertical line~$\ell$ that lies at distance~$i$ from an edge of $\partial \freesp$ 
parallel to $\ell$, or that lies at distance~$i$ from one of the points~$s_A, s_B, t_A, t_B$.
Note that $0$-lines support an edge of $\partial\freesp$ or pass through
one of the points~$s_A, s_B, t_A, t_B$. Let $L_i$ denote the set of all $i$-lines,
and set $L_{\leq i} := \bigcup_{j=0}^{i} L_j$. We will be mostly interested in $L_{\leq 2}$. 
The \emph{arrangement}~\cite{arrangements} of $L_{\leq 2}$, denoted by $\A(L_{\leq 2})$, is the 
subdivision of $\RR^2$ induced by $L_{\leq 2}$.
Since each edge of $\freesp$ is contained in a $0$-line, each face of $\A(L_{\leq 2})$ is 
a rectangle that is either contained in $\freesp$ or disjoint from $\freesp$.
The vertices and edges of $\arr{\leq 2}$ form a planar grid graph. 
Let $G = (V,E)$ denote the subgraph of this graph that lies in~$\freesp$,
that is, $V$ and $E$ are the sets of vertices and edges, respectively,
of $\arr{\leq 2}$ that lie in $\freesp$. Note that $|V|= O(n^2)$ and $|E| = O(n^2)$. 
We refer to $G$ as a \emph{grid}, to a vertex of $V$ as a \emph{grid point},
to an edge of $E$ as a \emph{grid edge},
and to a line of $L_{\leq 2}$ as a \emph{grid line}. 
We call a rectilinear, decoupled $(\bs,\bt)$-plan $\plan = (\pi_A, \pi_B)$
a \emph{canonical-grid plan} if each breakpoint of $\pi_A$ and $\pi_B$
is a grid point. Thus each segment of the path is the union of a number
of consecutive collinear grid edges.
By definition, each parking spot of $A$ or $B$ in a canonical-grid plan is a grid point.
Our main technical result for the \minsum problem is the following theorem.
\begin{theorem}\label{thm:canonical}
      Let $\envir$ be a closed rectilinear polygonal environment with $n$ vertices, let 
      $\robA,\robB$ be two axis-parallel unit-square robots translating inside $\envir$, 
      and let $\fd{s},\fd{t}$ be source and target configurations of $\robA,\robB$. 
      There is an optimal $(\bs,\bt)$-plan that is a canonical-grid plan.
\end{theorem}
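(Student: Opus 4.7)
The plan is to start from an optimal rectilinear decoupled $(\bs,\bt)$-plan $\plan=(\pi_A,\pi_B)$, which exists by \propref{rectilinear-decoupled}, and transform it by a sequence of cost-nonincreasing ``pushes'' until every maximal axis-parallel segment of $\pi_A$ and $\pi_B$ lies on a line of $L_{\leq 2}$. Once this is achieved, every breakpoint is the intersection of two perpendicular lines of $L_{\leq 2}$ that both lie in $\freesp$, i.e., a vertex of the grid~$G$, so $\plan$ is a canonical-grid plan.

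The basic building block is the single-robot push. Consider a maximal horizontal segment $\sigma$ of $\pi_A$, together with its two vertical neighbor segments $\sigma^-,\sigma^+$ (each of which may be degenerate when an endpoint coincides with $s_A$ or $t_A$, in which case $\sigma$ already lies on a $0$-line). A vertical shift of $\sigma$ by $\delta$ leaves $\cost{\sigma}$ unchanged and alters $\cost{\sigma^-}+\cost{\sigma^+}$ by $\pm 2\delta$ or $0$, depending on whether $\sigma^-$ and $\sigma^+$ extend to the same or opposite sides of the line through~$\sigma$. Consequently, at least one of ``up'' or ``down'' is a cost-nonincreasing direction for $\pi_A$, and pushing $\sigma$ maximally in that direction without invalidating the plan leaves $\sigma$ pinned either against a parallel edge of $\partial\freesp$ (so $\sigma$ lies on a $0$-line), against a point of $\{s_A,t_A\}$ (again a $0$-line), or against robot~$\robB$ at $\ell_\infty$-distance exactly~$1$.

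In the last case the push is \emph{synchronized}: during the time interval in which $\robA$ traverses $\sigma$, the robot $\robB$ is parked at a breakpoint $q$ of $\pi_B$ on a horizontal segment $\tau$ lying at distance $1$ above (or below) $\sigma$, and one continues to move $\sigma$ and $\tau$ together by the same vertical translation. This in turn may be obstructed by a parking point of $\robA$ that sits on another maximal segment of $\pi_A$, and so on, producing a cascade of simultaneously shifting segments. As long as the cascade terminates with one of its segments pinned to $\partial\freesp$ or to a source/target line, every segment in the cascade lies on an $i$-line with $i$ equal to its $\ell_\infty$-distance from the pinning edge; the goal is that the cascade has length at most three, so all segments lie on $0$-, $1$-, or $2$-lines.

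The main obstacle, and the part that requires the structural analysis alluded to in the excerpt, is twofold. First, the direction of push that is cost-nonincreasing for $\pi_A$ may be cost-\emph{increasing} for $\pi_B$, since the cascaded segments of $\pi_B$ have their own vertical neighbors whose lengths change. Here I would establish an exchange identity: in an optimal decoupled plan, at the moment $\robB$ is parked at $q$ blocking $\sigma$, the neighbors of $\tau$ on $\pi_B$ are constrained (by decoupling and collision-avoidance) to extend in directions that make every unit of length gained by $\pi_B$ during the push exactly offset by a unit lost by $\pi_A$, so that $\cost{\plan}$ is preserved. Second, one must rule out cascades of length four or more; this is where global re-routing enters. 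If a would-be length-four cascade arises, I would show that two of its segments can be exchanged in time order — swapping a parking event of $\robA$ with a traversal event of $\robB$ — without violating collision-avoidance and without increasing $\cost{\plan}$, thereby collapsing the cascade. Alternating pushes and re-routings in an order controlled by a monovariant such as the number of non-canonical maximal segments, weighted by their remaining distance to $L_{\leq 2}$, one finally arrives at a plan in which every maximal segment lies on a line of $L_{\leq 2}$, yielding the desired canonical-grid plan of cost no larger than that of the original optimal plan and proving \thmref{canonical}.
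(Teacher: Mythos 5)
Your sketch reproduces the intuition that the paper itself flags as ``highly nontrivial'' to make rigorous, and the two places where you wave your hands are exactly where the real work lies. First, the claimed ``exchange identity'' --- that whenever $\robB$ blocks the push of a segment of $\pi_A$, the neighbors of the blocking segment of $\pi_B$ are automatically oriented so that every unit gained by $\pi_B$ is offset by a unit lost by $\pi_A$ --- is asserted, not proved, and it is not a local fact about one parked configuration. During the time window in which $A$ is strictly inside a corridor, $B$ can be above $A$ at some times and below at others, can enter and leave the neighborhood of the corridor repeatedly, and (crucially) the two robots can \emph{swap their $x$-order}. The paper's proof is dominated by exactly this: it defines unsafe and swap intervals, proves via shortcut arguments and monotonicity properties of shortest paths in the influence region (Lemmas~\ref{lem:xy-shortest path}--\ref{lem:xy-monotone-iff}, \ref{lem:x-sep}--\ref{lem:lambda2}) that an optimal plan can be modified so that there is at most one swap interval with $B$ above $A$ and at most one with $B$ below, and then performs a case-based surgery (Cases I--III, including one case that modifies the plan \emph{outside} the window $[\lambda_1,\lambda_2]$) whose optimality rests on a global accounting that charges the cost increase of a secondary push on $B$ against the cost decrease of the primary push on $A$ (Lemmas~\ref{lem:snapping-case-I}--\ref{lem:snapping-case-III}). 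None of this is replaced by your static picture of $B$ parked at a single breakpoint at distance~$1$; in particular you give no argument that the pushed portion of $\pi_B$ stays in the free space, which in the paper requires the structure of $Z^\pm$ and the influence region.

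Second, your treatment of cascades is not an argument. You say a length-four cascade can be collapsed by ``exchanging two segments in time order,'' but you neither define the exchange nor show it is collision-free or cost-nonincreasing, and the termination measure you propose (distance to $L_{\leq 2}$) can increase under synchronized pushes unless cascades are controlled. The paper avoids cascades by a different mechanism altogether: it classifies grid lines into primary ($0$- and $1$-lines) and secondary ($2$-lines), lets the primary lines define corridors, and always pushes the bad segment of $\pi_A$ onto a corridor edge (a $0$- or $1$-line) while $B$ is pushed to distance~$1$ from it (hence at worst a $2$-line), so no chain of length greater than two ever arises; the two-phase scheme (horizontal then vertical) together with the Vertical Alignment invariant (P4) then guarantees the second phase does not undo the first. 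As it stands, your proposal restates the plan of attack but leaves open the feasibility of the pushes, the control of swap intervals, the cost bookkeeping, and the cascade bound --- i.e., the substance of the proof of Theorem~\ref{thm:canonical}.
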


\subsection{Converting an optimal plan to a canonical-grid plan}\label{sec:convert}
Next we give a high-level overview of how we convert an optimal $(\bs,\bt)$-plan
$\plan = (\pi_A, \pi_B)$ to an optimal $(\bs,\bt)$-plan that is also a canonical-grid plan.
By \propref{rectilinear-decoupled}, we can assume $\plan$ is a rectilinear, decoupled plan.

\ifsocg
\subparagraph{Alternating plans.}
\else
\paragraph{Alternating plans.}
\fi
We can assume that the path $\pi_A$ alternates between vertical and horizontal segments
or, in other words, that each breakpoint of $\pi_A$ is incident to a horizontal and 
a vertical segment and is thus a vertex of $\pi_A$. This \emph{alternation property} 
for $\pi_A$ can be ensured by adding zero-length segments\footnote{The time interval associated with a zero-length segment---the
time during which the segment is traversed---is $[\lambda,\lambda']$, where $\lambda$
is the arrival time at the corresponding parking spot and $\lambda'$ is the departure time.} 
at breakpoints whose
two incident segments have the same orientation. 
The alternation property, which we can also assume for $\pi_B$,
implies that all breakpoints of $\pth_A$ and $\pth_B$ are vertices of the respective paths and thus
so are the parking spots.

We now describe how to convert $\plan$ into a canonical-grid plan without increasing its cost.
We accomplish this conversion in two phases: the first phase modifies the plan $\plan$
such that all horizontal segments of $\pi_A$ and $\pi_B$ lie on grid lines, and the
second phase does the same for the vertical segments. Here we describe the first phase;
the second phase proceeds analogously, with the roles of the $x$ and $y$-coordinate exchanged.

\ifsocg
\subparagraph{The first phase.}
\else
\paragraph{The first phase.}
\fi
We define a \emph{bad horizontal segment} to be a horizontal segment on $\pi_A$ 
or $\pi_B$ that does not lie on a grid line. 
As long as $\plan$ contains a bad horizontal segment $e$, 
we modify $\plan$ into a new plan $\plan^*=(\pi^*_A,\pi^*_B)$
such that $e$ lies on a grid line and the following conditions are satisfied:
\begin{itemize}
    \item (P1) \textsc{Feasibility:} The new plan $\plan^*$ is feasible.
    \item (P2) \textsc{Optimality:} The cost of the plan does not increase,
          that is, $\cost{\plan^*} \leq \cost{\plan}$.
    \item (P3) \textsc{Progress:} The plan $\plan^*$ has at least one fewer bad horizontal segment
    than $\plan$.
    \item (P4) \textsc{Vertical alignment:} Any vertical segment of $\pi^*_A$ 
                is collinear with a vertical  segment of~$\pi_A$ or $\pi_B$,
                or it is contained in a vertical grid line; and
                the same holds for any vertical segment of~$\pi_B^*$.
    \item (P5) \textsc{Alternation:} $\plan^*$ is an alternating, rectilinear, decoupled plan.
\end{itemize}
The \textsc{Progress} property guarantees that 
by applying the procedure finitely many times, starting with the optimal plan~$\plan$,
we obtain a plan without bad horizontal segments. The \textsc{Feasibility} and \textsc{Optimality}
properties imply that $\plan^*$ is an optimal $(\bs,\bt)$-plan.
Furthermore, the \textsc{Alternation} property implies that $\plan^*$
is a rectilinear, decoupled, alternating plan. Hence, the solution resulting
from one iteration satisfies the precondition for the next iteration,
so we can indeed apply the procedure iteratively.

After eliminating all bad horizontal segments in the first phase, 
we eliminate the bad vertical segments in the second phase. The \textsc{Vertical Alignment}
property, which in the second phase applies to horizontal segments, guarantees that 
the horizontal segments are not moved off the grid in the second phase. 
So, all segments of $\plan^*$ lie on grid lines, and since $\plan^*$ is alternating, its breakpoints therefore lie on grid points.
Hence, at the end of the second phase, we obtain an optimal, canonical-grid plan.
\\ \\
\emph{Note:} Our transformation of the original plan~$\plan$ into a $\plan^*$ satisfying (P1)--(P5)
is done via several intermediate steps. When describing these steps, 
we typically denote, with a slight abuse of notation, the current plan by 
$\plan$ and the next plan by $\plan^*$.

\ifsocg
\subparagraph{Corridors.}
\else
\paragraph{Corridors.}
\fi
To be able to push bad horizontal segments onto horizontal grid lines
without creating a cascading effect, we classify the horizontal grid lines into two
classes: \emph{primary} grid lines, which are the horizontal $0$- and $1$-lines,
and \emph{secondary} grid lines, which are the horizontal $2$-lines.
Let $\horF$ be the subdivision of~$\freesp$ induced by the primary grid lines; see \figref{corridors}.
\begin{figure}
      \centering
      \includegraphics{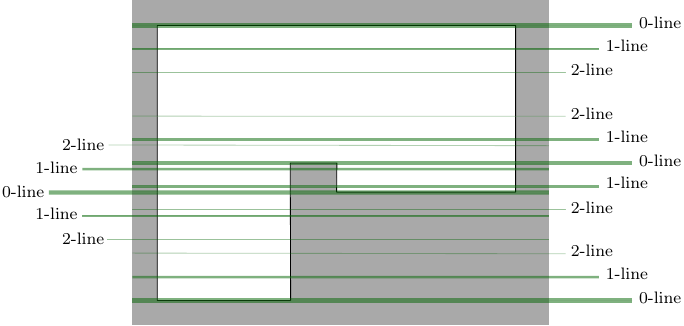}
      \caption{The $0$-lines and $1$-lines define $\horF$, which consist of twelve corridors. 
      The dark gray region is an obstacle. To avoid cluttering the figure, the start and goal 
      positions of the robots and the grid lines they define are omitted.}
      \label{fig:corridors}
      \end{figure}
Each face of $\horF$ is a rectangle because every vertex of $\freesp$
is contained in a horizontal $0$-line. We refer to the faces of $\horF$
as \emph{corridors}. Note that the interior of a corridor may be intersected
by secondary grid lines.
The vertical edges of a corridor are contained in vertical edges of $\freesp$.

By definition, (the relative interior of) every bad horizontal segment 
lies in the interior of a corridor of $\horF$.
At each step, we pick a bad segment $e$, say of $\pi_A$. We perform surgery on
$\pi_A$ and $\pi_B$ so that $e$ is \emph{aligned} with one of the horizontal edges 
of the corridor~$R$ containing~$e$,
and so that the resulting plan $\plan^*$ satisfies (P1)--(P5). 


\subsection{Computing an optimal canonical-grid plan}\label{sec:alg}
With \thmref{canonical} at hand, we are now ready to describe our algorithm.

\ifsocg
\subparagraph{The configuration graph.}
\else
\paragraph{The configuration graph.}
\fi
Let $G = (V,E)$ be the canonical grid defined in \secref{canonical}.
We construct an edge-weighted \emph{configuration graph} $\G = (\V, \E)$
with weight function $w:\E \to \RR_{\geq 0}$ whose nodes correspond to 
free configurations $\bu \in \BF$. More precisely,
\[
\V := \{ \bu = (u_A, u_B) \in V \times V : \cost{u_A - u_B} \geq 1 \}.
\]
By construction, $\bs,\bt \in \V$.
The set $\E$ of edges of the configuration graph is defined as
\[
\E := \Big\{ (\bu,\bv) \in \V \times \V : \left(u_A = v_A \mbox{ and } (u_B, v_B) \in E\right)
      \mbox{ or } \left(u_B = v_B \mbox{ and } (u_A, v_A) \in E\right) 
      \Big\}.
\]
An edge $\left((p, u_B), (p, v_B)\right)\in\E$ corresponds a move in which
robot~$\robA$ is parked at $p$ while robot~$\robB$ moves from $u_B$ to an adjacent grid point~$v_B$.
Similarly, an edge $\left((u_A, q), (v_A, q)\right) \in \E$ corresponds to a move in which
$A$ moves from a grid point $u_A$ to an adjacent grid point $v_A$ while $B$ is parked at $q$.
\ifsocg 
(\lemref{free-segment} in the appendix states that such moves are feasible.)
\else 
(\lemref{free-segment} below states that such moves are feasible.)
\fi
The weight of an edge $(\bu,\bv) \in \E$ is defined to be $w(\bu,\bv) := \|\bu-\bv\|_1$,
which is the $\ell_1$-distance between $\bu$ and $\bv$ in $\RR^4$.

\ifsocg
    \subparagraph{The algorithm.}
\else
    \paragraph{The algorithm.}
\fi
We first compute $\freesp$ in $O(n\log{n})$ time in a standard manner~\cite[Chapter~13]{debergbook}.
Next, we compute $L_{\leq 2}$ in additional $O(n)$ time and then compute the grid graph
$G$ in $O(n^2\log{n})$ time using a sweep-line algorithm. After computing $G$, we compute
the configuration graph $\G$ in $O(n^4)$ time. Finally, we compute a shortest path from
$\bs \in \V$ to $\bt \in \V$  in $\G$ in $O(n^4\log{n})$ time using Dijkstra's algorithm.
If we find that no path from $\bs$ to $\bt$ exists, we report the instance to be infeasible,
otherwise we return the $(\bs,\bt)$-plan corresponding to the shortest path in $\G$.
The correctness of the algorithm follows from \lemref{correctness}.
The total run time is $O(n^4\log{n})$. This completes the proof of \thmref{two-robots-alg}.

\ifsocg
\else
\subsection{Correctness of the Algorithm}

The correctness of the algorithm
follows from the following key properties of $\G$.
\begin{lemma}\label{lem:free-segment}
      For every edge $(\bu,\bv) \in \E$, the segment $\bu\bv$ lies in $\BF$.
\end{lemma}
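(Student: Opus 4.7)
The plan is to fix an edge $(\bu,\bv) \in \E$ and, exploiting the symmetry in the definition of $\E$, to treat the case $u_A = v_A = p$ and $(u_B, v_B) \in E$; the segment $\bu\bv \subset \RR^4$ then parametrises the motion in which $\robA$ remains parked at the grid vertex $p$ while $\robB$ traverses the grid edge from $u_B$ to $v_B$. To conclude $\bu\bv \subset \BF$, I would verify the three defining properties of $\BF$ pointwise along this segment. Two of them are immediate: $p \in V \subseteq \freesp$, and by the definition of $E$ the entire grid edge $u_B v_B$ lies in $\freesp$, so $\robB$ stays in $\freesp$ throughout. The remaining condition, on which the whole lemma hinges, is the non-collision inequality $\|p - q\|_\infty \geq 1$ for every $q$ on $u_B v_B$.

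Parametrise the (say horizontal) grid edge as $u_B = (x_1, y)$, $v_B = (x_2, y)$ with $x_1 < x_2$, and write $q = (x, y)$ for $x \in [x_1, x_2]$. If $|p_y - y| \geq 1$ the inequality is trivial, so the interesting case is $|p_y - y| < 1$, in which the hypotheses $\bu, \bv \in \V$ force $|p_x - x_1| \geq 1$ and $|p_x - x_2| \geq 1$. These endpoint bounds alone are not sufficient: they leave open the possibility that $p_x$ lies in $(x_1, x_2)$, in which case $|p_x - x|$ could fall below $1$ for intermediate $x$.

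The pivotal step is therefore to rule out $p_x \in (x_1, x_2)$ using the combinatorial structure of the canonical grid. Since $p \in V$ is a vertex of $\A(L_{\leq 2})$, the value $p_x$ is the $x$-coordinate of some vertical line in $L_{\leq 2}$; and since $(u_B, v_B) \in E$, the vertices $u_B$ and $v_B$ are consecutive along the horizontal line through them, so no vertical line of $L_{\leq 2}$ can have its $x$-coordinate strictly between $x_1$ and $x_2$. Hence $p_x \leq x_1$ or $p_x \geq x_2$, which combined with $|p_x - x_1| \geq 1$ and $|p_x - x_2| \geq 1$ upgrades to $p_x \leq x_1 - 1$ or $p_x \geq x_2 + 1$; in either case $|p_x - x| \geq 1$ for every $x \in [x_1, x_2]$, completing the argument.

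The remaining cases---a vertical grid edge, or $\robA$ moving while $\robB$ is parked---are handled by identical reasoning with the axes or the robots swapped. I do not anticipate a serious obstacle: the crux of the proof is precisely the combinatorial observation above, which is exactly the lever that converts the two pointwise separation bounds at the endpoints of the edge into a uniform separation bound along its interior.
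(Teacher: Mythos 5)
Your proposal is correct and follows essentially the same argument as the paper: both reduce the lemma to showing that the parked robot's grid line cannot cross the interior of the grid edge being traversed (since its endpoints are adjacent vertices of $\A(L_{\leq 2})$), so the endpoint separations propagate to the whole edge. The only difference is cosmetic---the paper phrases it as a contradiction for a vertical edge, while you argue directly for a horizontal one.
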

\begin{proof}
      Without loss of generality, assume that $(\bu,\bv) = \left((p,u_B), (p, v_B)\right)$ 
      and that $u_B v_B$ is a vertical edge of $G$. Note that $(p, u_B) \in \V$ and $(p, v_B) \in \V$. 
      Hence, the only way for the move corresponding to~$\bu\bv$ to be infeasible,
      is if $B$ would collide with $A$ (which is parked at the grid point $p$) 
      after leaving $u_B$ and before arriving at~$v_B$.
      Since $p$ conflicts with a point in the interior of $u_Bv_B$ and not with
      the points $u_B$ or $v_B$ themselves, the horizontal grid line through~$p$ 
      must intersect the interior of the vertical segment~$u_B v_B$.
      But this would contradict that $u_B$ and $v_B$ are adjacent grid points. 
      Thus $\bu\bv \subset \BF$.
\end{proof}
\begin{lemma}\label{lem:correctness}
      Assuming $\bs, \bt \in \BF$ are reachable, a shortest path in $\G$ from $\bs$ to $\bt$
      corresponds to an optimal, canonical-grid $(\bs,\bt)$-plan.
\end{lemma}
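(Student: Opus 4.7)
The plan is to prove the lemma by a two-sided sandwich argument. Let $w^*$ denote the weight of a shortest $\bs$-to-$\bt$ path in $\G$ and let $\plan^*$ denote an optimal $(\bs,\bt)$-plan. I would show that any walk in $\G$ from $\bs$ to $\bt$ of weight $W$ induces a feasible canonical-grid $(\bs,\bt)$-plan of cost~$W$, yielding $\cost{\plan^*} \leq w^*$; and conversely, invoking \thmref{canonical}, that $\plan^*$ can be taken to be canonical-grid and then converted into a walk in $\G$ of weight $\cost{\plan^*}$, yielding $w^* \leq \cost{\plan^*}$. These two inequalities will force $\cost{\plan^*} = w^*$, so that the plan constructed from any shortest path in $\G$ is both feasible and optimal.

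For the forward direction I would take a shortest path $\bs = \bu_0, \bu_1, \ldots, \bu_k = \bt$ in $\G$. Each edge $(\bu_{i-1}, \bu_i)$ corresponds to one robot translating along a single axis-aligned grid edge while the other robot is fixed at a grid point; by \lemref{free-segment} each segment $\bu_{i-1}\bu_i$ lies in $\BF$, so the concatenation is a feasible $(\bs,\bt)$-path in C-space. The induced plan $\hat\plan$ is rectilinear and decoupled, with every direction change and parking spot at a grid point, hence it is a canonical-grid plan. Its cost satisfies $\cost{\hat\plan} = \sum_{i=1}^{k} \|\bu_i - \bu_{i-1}\|_1 = \sum_{i=1}^{k} w(\bu_{i-1},\bu_i) = w^*$.

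For the backward direction, \thmref{canonical} supplies an optimal $(\bs,\bt)$-plan $\plan^*$ that is canonical-grid. I would decompose $\plan^*$ into moves and, within a single move, assume without loss of generality that $\robA$ travels from a grid point $p$ to a grid point $p'$ along a rectilinear path consisting of collinear grid edges while $\robB$ is parked at a grid point $q$. Refining $\robA$'s trajectory at every grid point it visits produces a sequence $p = p_0, p_1, \ldots, p_m = p'$ with $(p_{j-1}, p_j) \in E$ for every $j$. Each intermediate configuration $(p_j, q)$ lies on $\plan^*$ and is therefore free; together with $p_j, q \in V$ this gives $(p_j, q) \in \V$, and consecutive pairs form edges of $\G$. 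Performing this refinement on every move yields an $\bs$-to-$\bt$ walk in $\G$ whose total weight equals $\cost{\plan^*}$, using that the $\ell_1$-length is additive along a collinear path.

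I expect the main obstacle to be the backward direction. Although the construction is essentially bookkeeping, it is where \thmref{canonical} is invoked in a nontrivial way: one must use that theorem to guarantee that every breakpoint, every parking spot, \emph{and} every intermediate grid point swept out during a move lies in $V$, and then verify that each refined step produces a vertex of $\V$ (a free configuration at two grid points) connected by an edge of $\E$. Once this is settled, the conclusion is immediate: combining the two directions gives $\cost{\hat\plan} = w^* \leq \cost{\plan^*} \leq \cost{\hat\plan}$, so equality holds throughout and the canonical-grid plan $\hat\plan$ obtained from a shortest path in $\G$ is optimal.
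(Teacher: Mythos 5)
Your proposal is correct and follows essentially the same route as the paper's proof: feasibility and cost of the shortest path in $\G$ via \lemref{free-segment}, and the reverse bound by invoking \thmref{canonical} to decompose an optimal canonical-grid plan into moves refined into grid edges, yielding a walk in $\G$ of the same weight. Your extra care in checking that the intermediate configurations $(p_j,q)$ lie in $\V$ is a point the paper leaves implicit, but the argument is the same.
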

\begin{proof}
      Denote the cost of a path~$\plan$ in the configuration graph~$\G$ 
      by $w(\plan) := \sum_{e \in \plan} w(e)$.
      Let $\plan$ be a shortest path in $\G$ from $\bs$ to $\bt$.
      By \lemref{free-segment} we have
      $\plan \subset \BF$, and so $\plan$ is a feasible $(\bs,\bt)$-plan. Furthermore,
      $w(\plan) = \cost{\plan}$ by definition. To bound the cost of $\plan$, let $\plan^* = (\pi_A^*, \pi_B^*)$
      be an optimal $(\bs,\bt)$-plan. By \thmref{canonical}, we can assume that $\plan^*$ is a
      decoupled, canonical-grid plan. Therefore $\plan^*$ can be decoupled into a sequence of moves
      in which one robot is parked at a grid point and the other moves along 
      a segment of a grid line. Suppose $A$ is parked at $p$ and $B$ moves along a grid segment $e = u_B v_B$. Then
      $(p,u_B),(p,v_B) \in \V$ and the move from $(p, u_B)$ to $(p, v_B)$ can be represented as a sequence
      $(p, q_0), (p, q_1), ..., (p,q_k)$ with $q_0 = u_B$ and $q_k = v_B$
      such that each $q_i q_{i+1}$
      is an edge of $G$. Hence, the move from $(p, u_B)$ to $(p, v_B)$ 
      is represented by a path in $\G$ from $(p, u_B)$
      to $(p,v_B)$. Repeating the argument for all moves of $\plan^*$, 
      we see that the entire plan $\plan^*$ is represented by a path in $\G$.
      We conclude that $\plan$, which is the shortest path in $\G$, cannot be 
      longer than~$\plan^*$. Thus, $\plan$ is an optimal $(\bs,\bt)$-plan.
\end{proof}
\fi

\section{Existence of an optimal canonical-grid plan}
\label{sec:grid-snapping}
Let $\plan$ be an optimal plan that contains a bad horizontal 
segment $e$, say of $\pth_A$.
Let $\R = I_x \times I_y$, where $I_x = [x_\R^-, x_\R^+]$ and $I_y = [y_\R^-, y_\R^+]$,
be the corridor that contains~$e$.
This section describes how to modify $\plan$ so that $e$ is aligned
with a grid line and the resulting plan satisfies (P1)--(P5).
We begin by introducing some notation. For an axis-aligned rectangle $\rho = \delta_x \times \delta_y$,
let $\topr$ and $\botr$ be the top and bottom edge of $\rho$, respectively,
and let $\height(\rho) :=  |\delta_y|$ denote its height.
For $0 \leq \mu_1 \leq \mu_2$, let 
$\plan[\mu_1, \mu_2]$ 
denote the subplan of $\plan$ (or its image) during the closed interval $[\mu_1, \mu_2]$,
and let $\plan(\mu_1, \mu_2)$ denote the subplan during the open interval $(\mu_1, \mu_2)$.
We define $\pth_A[\mu_1, \mu_2]$ and $\pth_A(\mu_1, \mu_2)$ 
(resp. $\pth_B[\mu_1, \mu_2]$ and $\pth_B(\mu_1, \mu_2)$) in the same
manner.
For a horizontal segment $g$, let $y(g)$ be its $y$-coordinate.

Let $[\lambda',\lambda'']$ be the time interval associated with~$e$, i.e.,
$e = \pth_A[\lambda',\lambda'']$. 
We define two critical time values related to the bad segment~$e$. 
 \begin{itemize}  
		 \label{def:lambdas}
	\item $\lambda_1 = \max\{\lambda < \lambda' \mid \pth_A(\lambda) \in \topR \cup\botR\}$
	\item $\lambda_2 = \min\{\lambda > \lambda'' \mid \pth_A(\lambda) \in \topR\cup\botR\}$ 
 \end{itemize}
 Thus, 
 $\pth_A(\lambda_1)$ (resp. $\pth_A(\lambda_2)$) is the last (resp. first) point on
 $\pth_A$ before (resp. after) $e$ at which $A$ lies on a horizontal edge of $\R$.
 Note that $e \subseteq \pth_A(\lambda_1, \lambda_2) \subseteq R\setminus (\topR\cup\botR)$ 
 and that $\pth_A(\lambda_1, \lambda_2)$ may touch a vertical edge of $\R$.
 The goal of our surgery is to push $\pth_A[\lambda_1, \lambda_2]$ to
 $\topR$ or $\botR$, to align it with a grid line.
 This may cause collisions with $\pth_B[\lambda_1,\lambda_2]$, so
 we have to proceed carefully. Sometimes we push part of
 $\pth_A[\lambda_1, \lambda_2]$ to $\topR$ and part to $\botR$, 
 and in one case we even modify the paths outside
 the interval~$[\lambda_1, \lambda_2]$. In all cases, 
 we resolve the collisions by pushing $B$ out of the way,
 without increasing the overall length of the paths. 
  \medskip
  
We begin in \subsecref{influence} by defining a neighborhood of $\R$, called the \emph{influence region}
of~$\R$ and denoted by $\I(\R)$, and prove a few key properties of $\I(\R)$. Next, in \subsecref{unsafe}
we define \emph{unsafe} and \emph{swap} time intervals---subintervals of $[\lambda_1, \lambda_2]$
during which the surgery on $\plan$ requires more care---and we prove several structural
properties of an optimal plan during these intervals. With these properties at our disposal,
we describe the surgery of $\plan$ in Section~\ref{sec:surgery},
and prove in \secref{surgery-correct} that the resulting plan
satisfies (P1)--(P5). 
We first state a simple but important property of $\freesp$ 
that we will be using repeatedly.
\begin{observation}\label{obs:free-segments}
	For any axis-aligned line $\ell$, the length of each
	connected component of $\ell\setminus \freesp$ is more than $1$.
	Consequently, for any axis-aligned segment $pq$ of length at most $1$ such that $p,q \in \freesp$,
    we have $pq \subseteq \freesp$.
\end{observation}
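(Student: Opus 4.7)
The plan is to exploit the characterization of $\freesp$ as a Minkowski erosion: a point $c$ lies in $\freesp$ if and only if the closed unit square $c+\Box$ centered at $c$ is contained in $\envir$. I will first prove the statement about the length of each component of $\ell\setminus\freesp$, and then derive the segment statement as an immediate corollary.

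For the length claim, I argue by contradiction: suppose some connected component $C$ of $\ell\setminus\freesp$ has length at most~$1$. Since $\envir$ is closed, so is $\freesp$ (it is the set of centers whose unit square lies in the closed set $\envir$), hence $\ell\setminus\freesp$ is relatively open in $\ell$ and $C$ is an open interval. Take $\ell$ horizontal at height~$y_0$ (the vertical case is symmetric), and let $p=(p_x,y_0)$ and $q=(q_x,y_0)$ with $p_x\leq q_x$ be the two boundary points of $\ol{C}$; both lie in $\freesp$. Hence the two unit squares $S_p := p+\Box$ and $S_q := q+\Box$ are contained in $\envir$. Because $q_x-p_x\leq 1$, these two unit-side squares abut or overlap, so their union is the single axis-aligned rectangle $R = [p_x-\frac12,\,q_x+\frac12]\times[y_0-\frac12,\,y_0+\frac12]\subseteq\envir$. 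For any intermediate $c=(c_x,y_0)$ with $p_x\leq c_x\leq q_x$, the centered unit square $c+\Box$ lies inside $R$ and therefore inside $\envir$, so $c\in\freesp$. This contradicts $C\subseteq\ell\setminus\freesp$.

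For the segment claim, suppose $pq$ is axis-aligned with $|pq|\leq 1$ and $p,q\in\freesp$, but some $c\in pq$ satisfies $c\notin\freesp$. Let $\ell$ be the line through $pq$ and let $C$ be the connected component of $\ell\setminus\freesp$ containing $c$. Since $p,q\in\freesp$, neither lies in $C$, so by connectedness both boundary points of $\ol{C}$ lie inside the closed segment $pq$. Hence $|C|\leq |pq|\leq 1$, contradicting the length claim just established.

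The only geometric fact that makes the threshold sharp at~$1$ is the elementary observation that two closed unit squares whose centers are at axis-aligned distance at most~$1$ abut or overlap, and that their union is then a single axis-aligned rectangle that contains the centered unit square of every intermediate point. Once this is noticed, both statements follow quickly, so I do not anticipate any serious obstacle beyond stating these routine set inclusions carefully.
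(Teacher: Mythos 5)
Your proof is correct. Note that the paper does not actually prove this statement: it is presented as an unproved ``Observation,'' with the intended justification left implicit (morally, that every non-degenerate obstacle feature of the rectilinear environment has width at least the obstacle itself, so the blocked portion of an axis-aligned line has length strictly exceeding the robot's width). Your argument supplies a clean, self-contained justification via the erosion characterization $c\in\freesp \iff c+\Box\subseteq\envir$ (which is valid here because $\envir$ is closed and the paper excludes dangling edges), plus the elementary fact that two axis-aligned unit squares with centers at axis-aligned distance at most $1$ have a rectangle as their union, which then contains every intermediate centered square. Two small remarks: first, your argument never uses rectilinearity of $\envir$, only closedness of $\envir$ and the facts that the robot is an axis-aligned unit square and $\ell$ is axis-aligned, so it is strictly more general than the setting of the paper; second, the strictness of the bound (length \emph{more than} $1$) is exactly where the paper's non-degeneracy assumption enters---a zero-width obstacle segment would yield a blocked component of length exactly $1$---and your treatment of $\envir$ as a closed set correctly encodes that assumption, since a dangling edge is invisible to a closed point set.
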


\subsection{Influence regions}\label{subsec:influence}
\begin{figure}[t]
	\centering
    \includegraphics[width=0.495\textwidth]{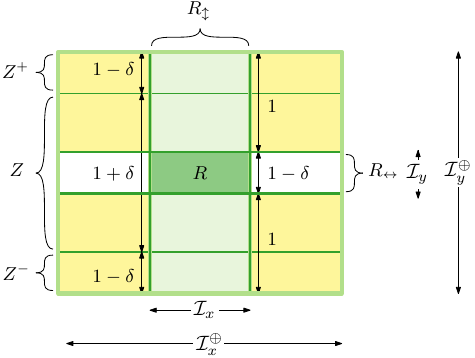}
        \hspace{1cm}
        \includegraphics[width=0.42\textwidth]{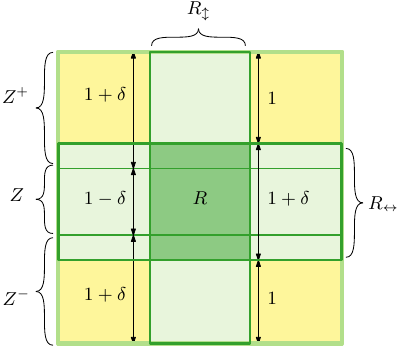}
	\caption{An overview of the various regions inside $\Rsq$ (not to scale). 
	The regions $\Rxy^-, \Rxy, \Rxy^+$ partition $\Rsq$, and so do the regions 
	$\Rud^-, \Rud, \Rud^+$. Corner squares are yellow. 
    (Left) The height of $\rect$ is less than $1$, 
    so the regions $Z^-, Z^+$ are disjoint from~$\R$. 
    (Right) The height of $\rect$ is at least $1$, so the regions $Z^-, Z^+$ overlap~$\R$.
	}
	\label{fig:lines-of-rect}
\end{figure}
\paragraph{Buffer region and its partitions.}
Let $\square$ be the unit square as defined in \secref{intro}.
Let $\Rsq := \R \oplus 2\square$ be the \emph{buffer region} around $\R$.
Thus, $\Rsq = \Ixp \times \Iyp$
where $\Ixp := [x_{\R}^{-}-1, x^{+}_{\R} + 1]$
and $\Iyp := [y_{\R}^{-}-1, y^{+}_{\R} + 1]$; see \figref{lines-of-rect}.
For any pair $p,q \in \RR^2$ such that $p \in \R$ and
$\left(p + \square\right) \cap \left(q + \square \right) \not= \emptyset$, 
we have $q \in \Rsq$.

Next, we define three different partitions of $\Rsq$, each of which plays a role
in our analysis. Let $\Rxy := \Ixp \times I_y$, that is, $\Rxy$ is obtained by extending
$\R$ by unit length to the left and to the right. Similarly, let $\Rud := I_x \times \Iyp$
be the extension of $\R$ by unit length in the upwards and downwards direction. 
Let $\Rxy^- := \Ixp \times [y_\R^{-} - 1, y_\R^-)$ and
$\Rxy^+ := \Ixp \times (y_\R^+, y_\R^{+} + 1]$.
Thus, $\Rxy^-$ and $\Rxy^+$ are unit-height rectangles that lie
immediately above and below $\Rxy$, respectively. 
Note that $\Rxy^-$ does not include its top edge and 
$\Rxy^-$ does not include its bottom edge, so that
$\Rxy^-, \Rxy, \Rxy^+$ form a partition of~$\Rsq$.
We also define $\Rud^- := [x_\R^{-} - 1, x_\R^-) \times \Iyp$
and $\Rud^+ := (x_\R^{+}, x_\R^{+} + 1] \times \Iyp$ as the
unit-width rectangles to the left and right of~$\Rud$, and note that
$\Rud^-, \Rud, \Rud^+$ form a partition of~$\Rsq$ as well.
Observe that $\Rsq \setminus(\Rud \cup \Rxy)$
consists of four semi-open unit squares, which we refer to as the \emph{corner squares} 
of $\Rsq$. Finally, we define 
$Z = \Ixp \times [y_\R^{+} - 1, y_\R^{-} + 1]$,
$Z^- = \Ixp \times [y_\R^{-} - 1, y_\R^{+} - 1)$ and $Z^+ = \Ixp \times (y_\R^{-} + 1, y_\R^{+}+1]$.
The closures of $Z^-, Z^+$ are translates of $\R$ by distance $1$ in the $y$-direction.
Since no primary grid line of $L_{\leq 1}$  intersects $\interior(R)$,
the regions $Z^+$ and $Z^-$ do not contain any vertices of $\freesp$. By \obsref{free-segments}, the connected components
of $Z^+ \cap \freesp$ and $Z^- \cap \freesp$ are rectangles and $\Rud \cap \freesp$ is an $x$-monotone rectilinear
polygon.

\paragraph{Influence region and blocked pairs.}
We define the \emph{influence region} of $R$ to be $\I(R) := \freesp \cap \Rsq$.
If $\height(\rect) \geq 1$ then $\I(\rect) = \Rud = I_x \times \Iyp$, 
because no vertex of $\F$ lies within distance 1 from $\rect$ when $\height(\rect) \geq 1$. 
However, if $\height(\rect) < 1$, then
$\I(R)$ may consist of several connected components--- a \emph{giant} component $\gamma(R)$ that contains
$R$, and several \emph{tiny} components, that each lie inside a corner square of~$\Rsq$.
By \obsref{free-segments}, each tiny component of $\I(R)$ is $xy$-monotone.
The giant component may intersect a corner square. In particular, let $\rho$ be a corner square
that lies in, say, $\Rud^-$. By \obsref{free-segments}, there is at most one connected component $\phi$ of $\rho \cap \freesp$
adjacent to the right edge of $\rho$ (which is contained in the left edge of~$\Rud$). If such a component $\phi$ exists,
then $\phi = \rho \cap \gamma(R)$ and (like the tiny components)
$\phi$ is $xy$-monotone. Other components of $\rho \cap \freesp$ are tiny components of $\I(R)$. Let $\rho'$
be the other corner square in $\Rud^-$. If both $\rho \cap \freesp$ and $\rho' \cap \freesp$ have non-empty connected components
adjacent to the left edge of $R$, denoted by $\phi$ and $\phi'$ respectively, then we refer to pairs of
points in $\phi \times \phi'$ as \emph{blocked pairs}.
\begin{figure}
	\centering
	\includegraphics{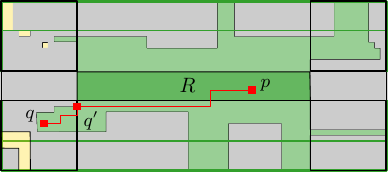}
	\caption{An example influence region within $\Rsq$ (not to scale), with the giant
 component $\gamma(\R)$ drawn in green 
 and the tiny components in the corner squares in yellow. 
	In red, an illustration of a $pq$-path from \protect\lemref{xy-shortest path}.
	}
	\label{fig:tiny-giant}
\end{figure}
 
A crucial property of $\I(R)$, which we prove in the following lemmas, is that any shortest path 
for a pair of points $p,q$ lying in the same connected component of $\I(R)$ is $x$-monotone
if and only if $(p,q)$ is not a blocked pair. Note that any $xy$-monotone path is
trivially a shortest path between its endpoints.
\begin{lemma}\label{lem:xy-shortest path}
	Let $p \in \R$ and $q \in \gamma(\R)$. Then any shortest path from $p$ to $q$ in $\freesp$ is
	$xy$-monotone.
\end{lemma}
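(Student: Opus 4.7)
The plan is to reduce the lemma to producing a single rectilinear $xy$-monotone path from $p$ to $q$ in $\freesp$. Because the $\ell_1$-length of any continuous path from $p$ to $q$ is at least $|p_x - q_x| + |p_y - q_y|$, with equality exactly when both coordinate functions are monotone, the existence of such a path forces every shortest $p$-to-$q$ path in $\freesp$ to have length $\|p-q\|_1$ and hence to be $xy$-monotone.

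To construct the path I will do a two-case split on where $q$ sits in $\gamma(\R)$, using the two structural facts already proved in this subsection: $\Rud\cap\freesp$ is an $x$-monotone rectilinear polygon, and for each corner square $\rho$ the piece $\phi = \rho \cap \gamma(\R)$ (when nonempty) is $xy$-monotone and adjacent to the side of $\R$ facing $\rho$. These two loci already cover $\gamma(\R)$: the only other parts of $\Rsq$ are the non-corner portions of $\Rud^\pm$, and these lie outside $\freesp$ because the vertical edges of the corridor $\R$ are contained in vertical edges of $\partial\freesp$ and so by \obsref{free-segments} the adjacent obstruction spans the full unit-wide strip.

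In the first case, $q\in\Rud\cap\freesp$, I plan to use the L-path $p\to(q_x,p_y)\to q$: the horizontal leg lies in $\R$, and the vertical leg on $\{x=q_x\}$ has both endpoints in the $x$-monotone polygon $\Rud\cap\freesp$, so its slice at $x=q_x$ is a single interval of $\freesp$ containing the leg. The harder case is $q\in\phi$ for some corner square $\rho$; writing out the bottom-left case (the other three follow by symmetry), I plan to use the path $p\to(x_\R^-,p_y)\to(x_\R^-,q_y)\to q$. The first leg lies in $\R$. For the third leg, $xy$-monotonicity of $\phi$ together with its adjacency to the right edge of $\rho$ force the horizontal slice $\phi\cap\{y=q_y\}$ to have the form $[a,x_\R^-)$ with $a\le q_x$, which puts $(x_\R^-,q_y)$ in $\ol\phi\subseteq\freesp$; the leg has length at most $1$, so \obsref{free-segments} then places it in $\freesp$. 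For the second leg, both endpoints lie in $\Rud\cap\freesp$, and $x$-monotonicity closes the gap exactly as in the first case. The full path is $xy$-monotone because $p_x\ge x_\R^->q_x$ and $p_y\ge y_\R^->q_y$. The main obstacle is this corner-square case, and its crux is that the $x$-monotonicity of $\Rud\cap\freesp$ and the $xy$-monotonicity of $\phi$ combine exactly to turn the descent from $\R$ into $\ol\phi$ along $\{x=x_\R^-\}$ into one $y$-monotone interval of $\freesp$; this is also where the hypothesis $p\in\R$, as opposed to $p$ being merely in $\gamma(\R)$, is crucial.
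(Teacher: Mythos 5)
Your overall strategy is the paper's: exhibit a single $xy$-monotone $pq$-path and use the $\ell_1$ lower bound $\|p-q\|_1$ to force every shortest path to be $xy$-monotone, with the case $q\in\Rud\cap\freesp$ handled by an L-shaped path exactly as in the paper. The gap is in your corner-square case. You assert that $xy$-monotonicity of $\phi$ together with its adjacency to the right edge of $\rho$ forces the slice $\phi\cap\{y=q_y\}$ to have the form $[a,x_\R^-)$, i.e.\ that $(x_\R^-,q_y)\in\freesp$. That implication does not hold: adjacency of $\phi$ to the right edge at \emph{some} heights does not propagate to \emph{all} heights of $\phi$, and orthogonal convexity alone permits $\phi$ to pull away from the right edge at lower heights. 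Concretely, when $\height(\R)$ is small the bottom-left corner square can contain a single component of the form of a wide upper slab $[a,x_\R^-)\times[h,y_\R^-)$ sitting on a strictly narrower lower slab $[a,b]\times[y_\R^- -1,h)$ with $b<x_\R^-$ and $h\in[y_\R^+ -1,\,y_\R^-)$. This is one $xy$-monotone component adjacent to the right edge of $\rho$, it is compatible with \obsref{free-segments} (the horizontal gap to the right of $b$ below height $h$ and the vertical gap below $h$ for $x\in(b,x_\R^-)$ can both exceed length $1$), and the step vertex at $(b,h)$ sits at a height whose $0$- and $1$-lines miss $\interior(\R)$, so no primary grid line crosses the corridor. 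For a point $q$ in the lower slab, $(x_\R^-,q_y)\notin\freesp$, so both your second leg (the vertical run along $x=x_\R^-$ down to $q_y$; its lower endpoint is not in $\Rud\cap\freesp$, so the monotonicity argument does not apply) and your third leg (the horizontal run at height $q_y$) leave the free space.

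The lemma is still true for such $q$, but the monotone path must be routed the way the paper does it: first travel up/right \emph{inside} $\phi$ to the top-right corner of $\phi$, i.e.\ to a point of $\ol{\phi}$ on the left edge of $\Rud$ at a height where $\phi$ actually meets it, and only then use the L-shaped path inside the $x$-monotone region $\Rud\cap\freesp$ to reach $p$. Your path instead tries to enter the corner square from the corridor side at $q$'s own height, which need not be possible. (A smaller point: your coverage claim that $\gamma(\R)\subseteq(\Rud\cap\freesp)\cup\bigcup_\rho\phi$ can fail at the heights $y(\botR)$ and $y(\topR)$ exactly, where free space may wrap around a corridor corner into $\Rxy\setminus\Rud$; such points are easy to handle directly and the paper's exposition makes the same tacit assumption, so this is minor compared to the corner-square issue.)
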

\begin{proof}
	Recall that the upper and lower boundaries of $\Rud\cap \freesp$ are $x$-monotone chains. Therefore there is an $L$-shaped
	$pq$-path if $q \in \gamma(\R) \cap \Rud$. Next, assume $q$ lies in a corner square~$\rho$, say,
	the bottom-left corner square of $\Rsq$. Since $q \in \gamma(\R)$, we know that $q$ lies in the component
	$\phi$ of $\freesp \cap \rho$ adjacent to the left edge
	of $\R$ and this component is $xy$-monotone; see above.
	Hence, there is an $xy$-monotone path from $q$ to the top-right corner $q'$ of $\phi$, which can be extended to an $xy$-monotone
	path to $p$, as $q'$ lies in $\gamma(\R) \cap \Rud$. See \figref{tiny-giant}(ii).
\end{proof}
\begin{lemma}\label{lem:x-mono}
	Let $p,q\in\fre$ and let $h\subset\fre$ be a horizontal segment.
    Let $p'$ be the point on $h$ whose Euclidean distance to $p$ is minimum,
    and define $q'$ similarly for~$q$.
    Suppose the following conditions hold.
    \begin{enumerate}[(i)]
    \item $|p_y - y(h)|\leq 1$ and $|q_y - y(h)|\leq 1$.
    \item Any shortest path from $p$ to $p'$ is $xy$-monotone, and any  
          shortest path from $q$ to $q'$ is $xy$-monotone.
    \item The points $p$ and $q$ do not lie to the same side of the
          vertical slab $[x_1(h),x_2(h)]\times[-\infty,+\infty]$,
          where $x_1(h)$ and $x_2(h)$ denote the minimum and maximum $x$-coordinate of~$h$.
          In other words, $\max(p_x,q_x)\geq x_1(h)$ and $\min(p_x,q_x)\leq x_2(h)$. 
    \end{enumerate}  
    Then any shortest path from $p$ to $q$ in $\freesp$ is $x$-monotone.
\end{lemma}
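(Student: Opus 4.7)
The plan is to construct an explicit $x$-monotone path $\sigma$ from $p$ to $q$ that uses $h$ as a bridge, and then to compare its length against any shortest path. Assuming WLOG that $p_x \leq q_x$, condition (iii) together with the definitions of $p'$ and $q'$ gives $p'_x = \max(x_1(h), p_x) \leq \min(x_2(h), q_x) = q'_x$. I define $\sigma$ as the concatenation of (a) an $xy$-monotone shortest $p$-to-$p'$ path (which exists, since by (ii) every shortest such path is $xy$-monotone and therefore realizes the Manhattan distance $|p_x - p'_x| + |p_y - y(h)|$), (b) the sub-segment of $h$ from $p'$ to $q'$, and (c) an $xy$-monotone shortest $q'$-to-$q$ path (similarly by (ii)). Each of (a)--(c) has non-decreasing $x$-coordinate, so $\sigma$ is $x$-monotone, and summing lengths gives $|\sigma| = (q_x - p_x) + |p_y - y(h)| + |q_y - y(h)|$.

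Next the argument splits into two cases according to whether $p$ and $q$ lie on the same side of the line $y = y(h)$ or on opposite sides. In the \emph{opposite-side case}, the identity $|p_y - y(h)| + |q_y - y(h)| = |p_y - q_y|$ shows that $|\sigma|$ equals the Manhattan distance between $p$ and $q$, which is a universal $\ell_1$-length lower bound. Hence $\sigma$ is itself a shortest path, and every other shortest path must also attain the Manhattan distance; but an $\ell_1$-path of Manhattan length is necessarily monotone in both coordinates, so in particular every shortest path is $x$-monotone.

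The \emph{same-side case} is the main obstacle, because there $|\sigma|$ strictly exceeds the Manhattan distance, so $\sigma$ need not itself be a shortest path and the length-matching trick breaks down. I would proceed by contradiction: suppose a shortest $p$-to-$q$ path $\gamma$ is not $x$-monotone. Then $\gamma$ crosses some vertical line $x = c$ at two distinct points $r$ and $r'$, and the goal is to replace the portion of $\gamma$ between $r$ and $r'$ by the vertical segment $rr'$, obtaining a strictly shorter path and contradicting optimality. For this shortcut to lie in $\freesp$, it suffices by \obsref{free-segments} to show $|r_y - r'_y| \leq 1$. The main technical difficulty is selecting the crossing pair $(r,r')$ witnessing the non-monotonicity so that this vertical-distance bound holds; this should combine the length inequality $|\gamma| \leq |\sigma|$ (which forces $\gamma$'s total vertical travel to be strictly less than $|p_y - y(h)| + |q_y - y(h)| \leq 2$ by condition (i)) with the rigidity imposed by condition (ii) on how $\gamma$ can interact with $h$ and with the neighborhoods of $p$ and $q$.
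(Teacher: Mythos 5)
Your construction of the comparison path $\sigma$ and your treatment of the opposite-side case are correct and coincide with the paper's proof (the paper builds the same path $\pth':=\pth(p,p')\circ p'q'\circ\pth(q',q)$). The genuine gap is in the same-side case, which is the heart of the lemma: you explicitly leave the decisive step (``selecting the crossing pair $(r,r')$'' so that $|r_y-r'_y|\leq 1$) as a plan, and the plan as sketched does not close. Your length bookkeeping only shows that the total vertical travel of a non-$x$-monotone shortest path $\gamma$ is strictly less than $|p_y-y(h)|+|q_y-y(h)|\leq 2$; for any single crossing pair this gives $|r_y-r'_y|<2$, not $\leq 1$, so \obsref{free-segments} cannot be invoked to put the vertical shortcut in $\freesp$. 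Worse, the situation is the opposite of what your strategy needs: precisely because $\gamma$ is shortest, \emph{every} crossing pair (two points of equal $x$-coordinate with a genuine horizontal deviation between them) must have $y$-gap strictly greater than $1$, since otherwise the vertical shortcut would already lie in $\freesp$ by \obsref{free-segments} and be strictly shorter. Also, condition (ii) cannot supply the missing ``rigidity'': it is only needed to make $\sigma$ well-defined and $x$-monotone, not to control $\gamma$.

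The paper closes the case by exploiting the forced inequality $|r_y-r'_y|>1$ rather than trying to avoid it: append to $\gamma$ an L-shaped path $\overline{\pth}$ from $q$ back to $p$ (not required to be free), so that $\gamma\circ\overline{\pth}$ is a closed curve through the two crossing points; a closed curve through two points at vertical distance greater than $1$ has vertical length greater than $2$, hence the vertical length of $\gamma$ exceeds $2-|p_y-q_y|$. On the other hand, the vertical length of $\sigma$ is $|p_y-y(h)|+|q_y-y(h)|\leq 2-|p_y-q_y|$ by condition (i), while its horizontal length $q_x-p_x$ is at most that of $\gamma$; hence $\cost{\gamma}>\cost{\sigma}$, contradicting that $\gamma$ is shortest. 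Completing your proposal would essentially amount to rediscovering this counting argument, so as written the proof is incomplete exactly at the step that carries the lemma.
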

\begin{proof}
  Let $\pth(p,p')$ and $\pth(q',q)$ be shortest paths from $p$ to $p'$ 
  and from $q'$ to $q$, respectively. Observe that if $p_x\in[x_1(h),x_2(h)]$ then
  $\pth(p,p')$ is a vertical segment by condition~(i) and \obsref{free-segments};
  a similar statement holds for $\pth(q,q')$. Hence,
  conditions~(ii) and~(iii) of the lemma together imply that $\pth':= \pth(p,p') \circ p'q'\circ \pth(q',q)$,
  which is a $pq$-path in~$\fre$,  is $x$-monotone. To prove the lemma, we will argue that
  any $pq$-path that is not $x$-monotone must be longer than~$\pi'$.
  \medskip

  Let $\ell(h)$ be the horizontal line containing~$h$.
  If $p$ and $q$ lie on opposite sides of~$\ell(h)$ or one of them lies on $\ell(h)$,
  then $\pth'$ is $xy$-monotone, and so any shortest $pq$-path must be $xy$-monotone as well.
  It remains to prove the lemma for the case where $p$ and $q$ lie 
  to the same side of $\ell(h)$, say above.
  
  For a path $\pth$, define $\cost{\pth}_x$ and $\cost{\pth}_y$
  to be the total length of the horizontal and vertical
  segments, respectively, of $\pth$. 
  Note that  $\cost{\pth} = \cost{\pth}_x + \cost{\pth}_y$. 
  Since we assumed that $p$ and $q$ both lie above~$\ell(h)$, we have
  \[
  \cost{\pth'}_y = 2\cdot \max \big\{ \yco{p}- y(h), \yco{q}- y(h) \big\} - |\yco{p} - \yco{q}| \leq 2 - |\yco{p} - \yco{q}|.
  \]
  
Now consider a shortest $pq$-path $\pth$ and suppose for a contradiction that $\pth$ 
is not $x$-monotone. With a slight abuse of notation, we also regard $\pth$ 
as a function $\pth:[0,T]\rightarrow \freesp$ that is a parameterization of the path $\pth$. 
Let $\lambda, \mu \in [0,T]$ be such that 
$\xco{\pth(\lambda)} = \xco{\pth(\mu)}$ and such that there is a time 
$\lambda' \in [\lambda, \mu]$ with $\xco{\pth(\lambda')} \not= \xco{\pth(\lambda)}$.
 If the vertical segment $\pth(\lambda)\pth(\mu)$ is contained in $\freesp$ then 
 we could shortcut~$\pth$, contradicting that $\pth$ is a 
 shortest path. Hence, $\pth(\lambda)\pth(\mu)\not\subset\freesp$ and so
 $|\pth(\lambda)_y - \pth(\mu)_y| > 1$ by \obsref{free-segments}.
 We claim that $\cost{\pth}_y > 2 - |p_y - q_y|$. 
 This will imply that $\cost{\pth'}_y < \cost{\pth}_y$,
  which gives the desired contradiction since the $x$-monotonicity of $\pth'$
  implies that $\cost{\pth'}_x \leq \cost{\pth}_x$. 
 
  To prove the claim, let $\overline{\pth}$ be an L-shaped path that connects $p$ to $q$ 
  (which need not lie completely in the free space~$\fre$). Clearly, $\| \overline{\pth}\|_y = |p_q-q_y|$.
  Then $\pth \circ \overline{\pth}$
  is a (not necessarily simple) cycle that passes through 
  $\pth(\lambda)$ and $\pth(\mu)$. 
  Therefore 
  \[
  \|\pth \circ \overline{\pth}\|_y \geq 2 |\pth(\lambda)_y - \pth(\mu)_y| > 2.
  \]
  On the other hand,
  \[
  \|\pth \circ \overline{\pth}\|_y = \cost{\pth_y} + \cost{\overline{\pth}_y} = \cost{\pth}_y + |p_y - q_y|,
  \]
  and so the claim follows.
\end{proof}
\begin{lemma}\label{lem:x-geodesic}
	Let $p,q\in\gamma(\R)$ be such that $p,q$ both lie in $\Rxy^+$ or both lie in $\Rxy^-$.
    Then any shortest path from $p$ to $q$ in $\freesp$ is $x$-monotone.
\end{lemma}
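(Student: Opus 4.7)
The plan is to reduce to Lemma~\ref{lem:x-mono}. Assume without loss of generality that $p,q \in \Rxy^+$ (the $\Rxy^-$ case is symmetric), and set $h := \topR$, so that $y(h) = y_\R^+$, $x_1(h) = x_\R^-$, and $x_2(h) = x_\R^+$. Condition~(i) of Lemma~\ref{lem:x-mono} holds because $p,q \in \Rxy^+$ gives $0 < p_y - y(h) \leq 1$ and similarly for~$q$. For condition~(ii), the nearest points $p',q'$ of $h$ to $p$ and $q$ respectively both lie on $h \subset \R$, while $p,q \in \gamma(\R)$ by hypothesis, so Lemma~\ref{lem:xy-shortest path} guarantees that any shortest path from $p$ to $p'$ and from $q$ to $q'$ is $xy$-monotone.

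It remains to verify condition~(iii). If $\max(p_x,q_x) \geq x_\R^-$ and $\min(p_x,q_x) \leq x_\R^+$, then~(iii) is satisfied and Lemma~\ref{lem:x-mono} immediately yields that every shortest $pq$-path in $\freesp$ is $x$-monotone. Otherwise either $p_x,q_x < x_\R^-$ or $p_x,q_x > x_\R^+$, and by symmetry I treat only the former. Then $p$ and $q$ both lie in the top-left corner square $\rho := \Rud^- \cap \Rxy^+$ of $\Rsq$. Since $p,q \in \gamma(\R) \cap \rho$, both lie in the component $\phi := \rho \cap \gamma(\R)$, which is $xy$-monotone by the structural properties of $\gamma(\R)$ established in \subsecref{influence}. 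Hence there is an $xy$-monotone $pq$-path inside $\phi \subseteq \freesp$ of $\ell_1$-length exactly $\|p-q\|_1$. Since every path from $p$ to $q$ in $\freesp$ has $\ell_1$-length at least $\|p-q\|_1$, with equality only for paths monotone in both coordinates, every shortest $pq$-path must be $x$-monotone.

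The only real obstacle is precisely this corner-square case, where the opposite-sides condition~(iii) of Lemma~\ref{lem:x-mono} fails. There I exploit the $xy$-monotonicity of the corner-square component of $\gamma(\R)$ to show that the minimum $\ell_1$-length from $p$ to $q$ in $\freesp$ already equals $\|p-q\|_1$, which forces coordinate monotonicity and hence $x$-monotonicity directly, without invoking Lemma~\ref{lem:x-mono}.
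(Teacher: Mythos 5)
Your proof is correct and follows essentially the same route as the paper: apply \lemref{x-mono} with $h:=\topR$ when $p$ and $q$ do not lie on the same side of the slab (equivalently, not in the same corner square), and handle the remaining case via the $xy$-monotonicity of the corner-square component of $\gamma(\R)$, which forces any shortest $pq$-path to be $xy$-monotone. The only difference is cosmetic: you justify condition~(ii) explicitly through \lemref{xy-shortest path} (using $p',q'\in\topR\subseteq\R$ and $p,q\in\gamma(\R)$), where the paper just appeals to the structure of $\Rxy^+\cap\fre$.
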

\begin{proof}
 Without loss of generality, let $p,q \in \Rxy^+$. 
 
 If $p$ and $q$ lie in the same corner square, $\rho$, of~$\Rxy^+$ then we are 
 immediately done. Indeed, $\rho\cap \Rxy^+$ is $xy$-monotone, which implies
 that any shortest $pq$-path must be $xy$-monotone. 
 
 On the other hand, if $p$ and $q$ do not lie in the same corner square,
 then the conditions of \lemref{x-mono} are satisfied for $h:= \topR$,
 and we are done as well. Indeed, condition~(i) is satisfied because $p,q \in \Rxy^+$,
 condition~(ii) is satisfied because of the properties of $\Rxy^+ \cap\fre$,
 and condition~(iii) is satisfied because $p$ and $q$ do not lie in the same corner square.
\end{proof}

\begin{figure}
	\centering
	\includegraphics{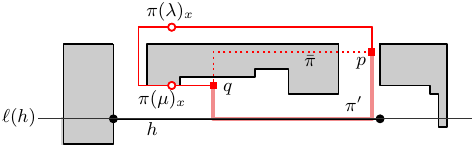}
	\caption{Illustration of the proof of 
   \protect\lemref{x-mono}, using 
   $\overline{\pi}$ and $\pi'$, where $h \subset \F$ is the horizontal segment between the black points.
   \lemref{x-geodesic} follows
   from setting $h = \topR$.
    }
	\label{fig:x-geodesic}
\end{figure}
%

We are now ready to prove the main property of the influence region.
\begin{lemma}
\label{lem:xy-monotone-iff}
	For a pair of points $p,q$ lying in the same connected component of $\I(\R)$, any shortest path from $p$ to $q$
    in $\freesp$ is $x$-monotone if and only if $(p,q)$ is not a blocked pair.
\end{lemma}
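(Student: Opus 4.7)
The plan is to prove the two directions of the iff separately, relying on \lemref{xy-shortest path}, \lemref{x-mono}, and \lemref{x-geodesic} for most configurations, and handling the blocked case by a direct geometric argument about the $x$-coordinates a shortest path must traverse.

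For the ``blocked implies no shortest path is $x$-monotone'' direction, I would assume by symmetry that $p \in \phi$ and $q \in \phi'$ lie in the top-left and bottom-left corner-square components adjacent to $\R$'s left edge. Any $x$-monotone $pq$-path would have all $x$-coordinates in $[\min(p_x,q_x),\max(p_x,q_x)] \subseteq [x_\R^- - 1, x_\R^-)$, so it would lie entirely in the strip $\Rud^-$ and never reach $\R$. Such a path would then connect $\phi$ to $\phi'$ within $\Rud^- \cap \freesp$, contradicting that they were identified as two \emph{distinct} components of $\gamma(\R)$ in their respective corner squares. Hence no $x$-monotone $pq$-path exists in $\freesp$, and in particular no shortest path is $x$-monotone.

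For the ``not blocked implies every shortest path is $x$-monotone'' direction, I would split into cases based on the location of $p$ and $q$. If they share a tiny component $\tau$, then $\tau$ is $xy$-monotone, so the shortest $pq$-path has $\ell_1$-length $|p_x-q_x|+|p_y-q_y|$, and any path achieving this length must be $xy$-monotone. Otherwise $p,q \in \gamma(\R)$, and I would consider: (i)~if one of $p,q$ lies in $\R$, then \lemref{xy-shortest path} gives $xy$-monotonicity; (ii)~if both lie in $\Rxy^+$ or both in $\Rxy^-$, then \lemref{x-geodesic} applies directly; (iii)~if $p \in \Rxy^+$ and $q \in \Rxy^-$, then any shortest path must enter $\R$ through $\topR$ at some first crossing $r_1$ and exit $\R$ through $\botR$ at some last crossing $r_2$, so applying \lemref{xy-shortest path} to the subpaths $p\to r_1$ and $r_2\to q$, together with an $xy$-monotone L-shape inside $\R$ from $r_1$ to $r_2$ and the observation that shortest-path optimality rules out $x$-reversals at the junctions, yields the desired $x$-monotonicity; and (iv)~the remaining mixed cases, which should reduce to (i)--(iii) by picking a witness point on $\topR$ or $\botR$, or, for two points on the same side of $\R$, by invoking the not-blocked assumption to exhibit an $x$-monotone connection within $\Rud^- \cap \freesp$ (or $\Rud^+ \cap \freesp$).

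The main obstacle will be subcase~(iv) together with the blocked-direction argument: both pivot on a precise dichotomy for corner-square components of $\gamma(\R)$ on the same side of $\R$---either they are directly connected inside $\Rud^\pm \cap \freesp$ (allowing an $x$-monotone shortcut) or they are joined only via $\R$ (forcing an $x$-reversal). Establishing this dichotomy cleanly will require \obsref{free-segments} to rule out short free-space gaps and the $x$-monotonicity of $\Rud \cap \freesp$ noted earlier in this subsection to control the shape of $\gamma(\R)$ near $\R$'s left and right edges.
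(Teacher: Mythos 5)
Your case structure for the ``not blocked'' direction (tiny components, one endpoint in $\R$, both endpoints in $\Rxy^+$ or both in $\Rxy^-$, one in each) matches the paper's, but the two places where real work is needed go wrong. In the blocked direction, the contradiction you derive rests on a false premise: $\phi$ and $\phi'$ are \emph{not} distinct components of $\gamma(\R)$ --- by definition $\phi=\rho\cap\gamma(\R)$ and $\phi'=\rho'\cap\gamma(\R)$, i.e.\ both are pieces of the one giant component (which is exactly why a blocked pair satisfies the lemma's hypothesis of lying in a single component of $\I(\R)$), so ``an $x$-monotone path would connect two distinct components of $\gamma(\R)$'' proves nothing. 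What is actually needed, and what the paper's short argument rests on, is that the band $[x_\R^--1,x_\R^-)\times(y_\R^-,y_\R^+)$ separating the two left corner squares contains no free point, so that every component of $\Rud^-\cap\fre$ lies inside a single corner square: this follows because the left edge of the corridor $\R$ is contained in a vertical edge of $\fre$ with the obstacle on its left, and \obsref{free-segments} then forces the non-free gap on every horizontal line at height $y\in(y_\R^-,y_\R^+)$ to extend more than $1$ to the left of $x_\R^-$. Only with this fact can you conclude that a path whose $x$-coordinates stay in $[x_\R^--1,x_\R^-)$ cannot pass from one corner square to the other (and note that confining $x$ does not confine the path to $\Rud^-$, since $y$ may leave $\Iyp$; you need continuity of the $y$-coordinate to force a crossing of the blocked band). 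You defer exactly this to your closing ``dichotomy'', but that dichotomy is off: the horn ``directly connected inside $\Rud^\pm\cap\fre$'' never occurs for these components, and the real content of ``not blocked'' is that one of the two corner-square components is simply absent --- so your same-side subcase of (iv) is vacuous rather than a case needing an $x$-monotone shortcut.

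Your case (iii) also relies on two unsupported steps: (a) nothing a priori confines a shortest $pq$-path to $\Rsq$, so ``it must enter $\R$ through $\topR$ and leave through $\botR$'' is not available without an argument (the path could cross the band $y\in[y_\R^-,y_\R^+]$ outside $\Rsq$; ruling that out needs a length comparison with a through-$\R$ path, i.e.\ the very ingredient below), and (b) ``shortest-path optimality rules out $x$-reversals at the junctions'' is the statement to be proved, not an observation. The paper's route is shorter and supplies what your argument is missing: since $(p,q)$ is not blocked, $p$ and $q$ do not both lie in $\Rud^-$ nor both in $\Rud^+$, so by \lemref{xy-shortest path} one can concatenate $xy$-monotone paths $p\to r$ and $r\to q$ through a point $r\in\R$ into an $xy$-monotone $pq$-path; hence every shortest $pq$-path has length $\|p-q\|_1$ and must itself be $xy$-monotone, in particular $x$-monotone. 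With that existence statement, the decomposition machinery in your (iii) and the unspecified ``remaining mixed cases'' in (iv) become unnecessary.
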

\begin{proof} 
    If $(p,q)$ is a blocked pair lying in $\Rud^-$, then any path from $p$ to $q$ must leave $\Rud^-$
	because each connected component of $\Rud^- \cap \freesp$ either lies in the top or bottom corner square of $\Rud^-$,
	and any such path is not $x$-monotone.
\medskip

	Next, assume that $(p,q)$ is not a blocked pair, and let $\pth$ be a shortest path from $p$ to~$q$. We argue that $\pth$ is $x$-monotone.
	If $p,q$ lie in a tiny component of $\I(\R)$, then $\pth$ is obviously $xy$-monotone because, as mentioned above,
	each tiny component of $\I(\R)$ is $xy$-monotone.
	We can therefore assume that $p,q \in \gamma(\R)$.

	If at least one of $p,q$ lies in $\R$, then \lemref{xy-shortest path} implies the claim. So, assume
	that neither $p$ nor $q$ lies in $\R$. If both $p$ and $q$ lie in $\Rxy^+$ or both lie in $\Rxy^-$,
	then \lemref{x-geodesic} implies the claim. We now assume wlog that $p \in \Rxy^+$
	and $q \in \Rxy^-$. Since $(p,q)$ is not a blocked pair, $p$ and $q$ cannot both lie in
	$\Rud^-$, and they cannot both lie in $\Rud^+$. Therefore, $p$ and $q$ lie in opposing quadrants of $\gamma(R)$.
    Let $r$ be an arbitrary point in $R$, and let $\pth(p,r)$ and $\pth(r,q)$ be shortest
    paths from $p$ to $r$ and from $r$ to $q$, respectively. 
    Then $\pth(p,r)$ and $\pth(r,q)$ are $xy$-monotone by \lemref{xy-shortest path}, 
    and so $\pth(p,r)\cup\pth(r,q)$ 
    is an $xy$-monotone path from $p$ to~$q$.
	Hence, if $\pth$ is not $xy$-monotone then it is not a shortest path.
    This completes the proof of the lemma.
\end{proof}

\subsection{Unsafe and swap intervals}\label{subsec:unsafe}
Recall that $R$ is a corridor containing a bad horizontal segment~$e$ of $\pi_A$, 
that $\pi_A(\lambda_1)$ is the last point on $\pi_A$ before $e$ that lies on 
$\topR$ or $\botR$, and that $\pi_A(\lambda_2)$ is the first point on $\pi_A$ after $e$ 
that lies on $\topR$ or $\botR$.
We now define critical subintervals of the time interval $[\lambda_1, \lambda_2]$ during which
the surgery of $\plan$ requires more care, and then prove some structural properties of $\plan$ during
these intervals. Again, we begin with a few definitions. A configuration $\bp = (p_A, p_B)$
is called \emph{$x$-separated} if $|(p_A)_x - (p_B)_x| \geq 1$ and \emph{$y$-separated}
if $|(p_{A})_y- (p_{B})_y| \geq 1$. We say that two $x$-separated
configurations $\bp,\bq$  have the \emph{same $x$-order} if
$\sign((p_{A})_x - (p_{B})_x) = \sign((q_{A})_x - (q_{B})_x)$;
otherwise the $x$-order is \emph{swapped}.
An interval $[\nu_1,\nu_2]$ is called 
\emph{$x$-separated} if $\plan(\nu_1)$ and $\plan(\nu_2)$ are both $x$-separated; note that $\plan(\nu)$
is not necessarily $x$-separated for all $\nu\in [\nu_1,\nu_2]$.

\begin{figure}
	\centering
    \includegraphics{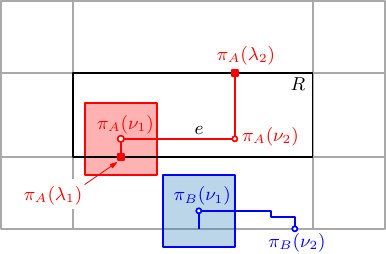}
    \hspace{1cm}
    \includegraphics{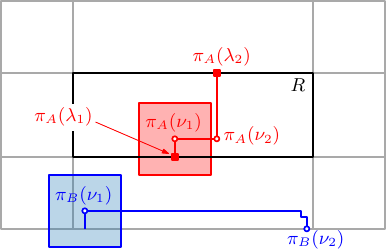}
	\caption{(Left) An unsafe swap interval $[\nu_1,\nu_2] \subseteq [\lambda_1, \lambda_2]$
                    The robots become $y$-separated at time $\nu_1$ due to the vertical movement of~$\robA$. 
                    At time~$\nu_2$, robot~$\robB$ leaves $\I(\R)$.
    (Right) A swap interval, which is an unsafe interval with different $x$-orders at its endpoints.}
	\label{fig:unsafe}
\end{figure}
\begin{defn}
An interval $[\nu_1,\nu_2]\subseteq [\lambda_1,\lambda_2]$ is \emph{unsafe} if 
\begin{enumerate}[(i)]
\item $[\nu_1,\nu_2]$ is a maximal interval such that  $\pth_B(\nu)\in \I(\R)$ and $\plan(\nu)$ is $y$-separated
		for all $\nu\in [\nu_1,\nu_2]$; and 
\item there is a time $\nu\in [\nu_1,\nu_2]$ such that $\plan(\nu)$ is not $x$-separated.
\end{enumerate}
\end{defn}
\smallskip
Observe that for any unsafe interval~$[\nu_1,\nu_2]$, robot~$B$ either lies above $A$ throughout the
interval---that is, $\yco{\pth_B(\nu)} \geq \yco{\pth_A(\nu)}$ for all $\nu\in[\nu_1,\nu_2]$---or 
$B$ lies below $A$ throughout the interval. In fact, we can make the following stronger 
observation.
\begin{lemma} \label{lem:BinZ}
If $[\nu_1,\nu_2]$ is an unsafe interval then $\pth_B[\nu_1,\nu_2]$ lies in a 
connected component of $Z^+\cap\fre$ or in a connected component of~$Z^-\cap\fre$.
\end{lemma}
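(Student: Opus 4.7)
The plan is to show that throughout the unsafe interval $[\nu_1,\nu_2]$, robot $B$ stays consistently on one side (above or below) of robot $A$ in the $y$-direction; combined with $\pth_A \in \R$ and $\pth_B \in \I(\R)$, this will force $\pth_B[\nu_1,\nu_2]$ into $Z^+\cap\fre$ or $Z^-\cap\fre$, and connectedness then yields the single-component conclusion.

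First, I would apply the intermediate value theorem to the continuous function $\delta(\nu) := \yco{\pth_B(\nu)} - \yco{\pth_A(\nu)}$. The $y$-separation clause in the definition of an unsafe interval yields $|\delta(\nu)|\geq 1$ on $[\nu_1,\nu_2]$, so $\delta$ cannot vanish and its sign is constant. Assume $\delta(\nu)\geq 1$ throughout; the opposite case is fully symmetric.

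Next, I would use the given conditions to locate $\pth_B(\nu)$ inside $Z^+$. Because $[\nu_1,\nu_2]\subseteq[\lambda_1,\lambda_2]$ and, by construction, $\pth_A[\lambda_1,\lambda_2]\subseteq\R$, we have $\yco{\pth_A(\nu)}\in[y_\R^-,y_\R^+]$. Combining with $\delta(\nu)\geq 1$ gives $\yco{\pth_B(\nu)} \geq y_\R^- + 1$. The unsafe-interval clause also yields $\pth_B(\nu)\in\I(\R)\subseteq\Rsq$, so $\xco{\pth_B(\nu)}\in\Ixp$ and $\yco{\pth_B(\nu)}\leq y_\R^+ + 1$. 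Hence $\pth_B[\nu_1,\nu_2]\subseteq Z^+\cap\fre$; the boundary value $\yco{\pth_B(\nu)}=y_\R^- + 1$ forces $\pth_A(\nu)\in\botR$, which by the definition of $\lambda_1,\lambda_2$ can only occur at the interval endpoints and is handled by continuity.

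Finally, $\pth_B[\nu_1,\nu_2]$ is the continuous image of a connected interval, hence connected; being contained in $Z^+\cap\fre$, it lies in a single connected component of $Z^+\cap\fre$. The main obstacle is the seemingly innocuous first step: one must note that the unsafe-interval definition guarantees $|\delta|\geq 1$ rather than merely $\delta\neq 0$, which is precisely what permits the sign-constancy argument via continuity and keeps $B$ from crossing through $A$ in the $y$-direction mid-interval. The half-open nature of $Z^+$ is only a cosmetic concern that does not affect connectedness.
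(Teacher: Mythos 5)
Your route is essentially the paper's: fix the $y$-order of the two robots by continuity of $\yco{\pth_B}-\yco{\pth_A}$ (which the $y$-separation clause keeps away from $(-1,1)$), use $\pth_A[\lambda_1,\lambda_2]\subseteq\R$ together with $\pth_B[\nu_1,\nu_2]\subseteq\I(\R)\subseteq\Rsq$ to trap $B$ in $Z^+$ (or $Z^-$), and finish by connectedness of the path. The gap is in how you dispose of the endpoint case. Continuity only shows that $\pth_B(\nu_1)$ (or $\pth_B(\nu_2)$) is a limit of points of $Z^+$, i.e.\ lies in the \emph{closure} of $Z^+$; but $Z^+$ is half-open and excludes exactly the line $y=y_\R^-+1$ on which your boundary configuration sits. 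So ``handled by continuity'' proves nothing here, and your closing remark that the half-openness is ``only a cosmetic concern'' dismisses precisely the one point where an argument is needed: if $\pth_B(\lambda_1)$ could lie on that line (with $\pth_A(\lambda_1)\in\botR$ and $B$ exactly one unit above), the lemma as stated would be false for that plan, so the configuration must be ruled out rather than absorbed.

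The paper rules it out using the decoupled structure of the plan, which your proof never invokes. Since $\pth_A(\nu)\notin\topR\cup\botR$ for all $\nu\in(\lambda_1,\lambda_2)$, robot $A$ is moving vertically immediately after $\lambda_1$, so $B$ is parked then; hence $\pth_B(\lambda_1)=\pth_B(\nu)$ for all $\nu$ slightly larger than $\lambda_1$. For such $\nu$ (still in the unsafe interval, so still $y$-separated, and in your orientation with $B$ above $A$) we have $\yco{\pth_A(\nu)}>y_\R^-$, whence $\yco{\pth_B(\lambda_1)}=\yco{\pth_B(\nu)}\geq\yco{\pth_A(\nu)}+1>y_\R^-+1$, so the endpoint lies strictly inside $Z^+$ after all; a symmetric argument handles $\nu_2=\lambda_2$. (If the unsafe interval were the single instant $\lambda_1$, the offending configuration cannot be non-$x$-separated either, since $A$ lifting off $\botR$ against a touching, $x$-overlapping $B$ would violate feasibility.) With this repair your argument coincides with the paper's proof.
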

\begin{proof}
Because the robots must remain $y$-separated during $[\nu_1,\nu_2]$
and $\pth_B[\nu_1, \nu_2] \subset \I(\rect)$,
they cannot change 
their $y$-order. Now assume wlog that $\pth_B(\nu)_y \leq \pth_A(\nu)_y-1$ for 
all $\nu\in [\nu_1,\nu_2]$. If $\pth_A(\nu) \not\in\topR$, then this immediately
implies that $\pth_B(\nu) \in Z^-$. Thus, the only possibility for $\pth_B(\nu)$
not to be in $Z^-$ is when $\nu=\nu_1=\lambda_1$ or $\nu=\nu_2=\lambda_2$.
By definition of~$[\lambda_1,\lambda_2]$ we know that $\pth_A(\nu) \not\in\topR$ 
for any $\nu\in (\lambda_1,\lambda_2)$. Since the plan is decoupled,
robot~$\robB$ is parked just after after time $\lambda_1$ and just before time~$\lambda_2$,
and so $\pth_B(\nu)$ must be in $Z^-$ at times $\nu=\nu_1=\lambda_1$ and $\nu=\nu_2=\lambda_2$ as well.
\end{proof}
\begin{lemma}\label{lem:unsafe-vertical-edge}
	Let $[\nu_1, \nu_2]$ be an unsafe interval. For $X \in \{A,B\}$ and 
    for $i\in\{1,2\}$, the point~$\pth_X(\nu_i)$ lies on a (possibly zero-length) 
    vertical segment of $\pth_X$ or on a vertical 1-line.
\end{lemma}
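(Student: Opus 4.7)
The plan is to establish the claim at $\nu_1$; the argument for $\nu_2$ will be entirely symmetric. By maximality of the unsafe interval $[\nu_1, \nu_2] \subseteq [\lambda_1, \lambda_2]$, together with continuity of the paths and the fact that ``$\pth_B(\nu) \in \I(\R)$'' and ``$\plan(\nu)$ is $y$-separated'' are closed conditions in $\nu$, the time $\nu_1$ must fall into one of three cases: (a) $\nu_1 = \lambda_1$; (b) $\pth_B$ enters $\I(\R)$ at $\nu_1$, i.e., $\pth_B(\nu) \notin \I(\R)$ for $\nu$ slightly less than $\nu_1$; or (c) $|\yco{\pth_A(\nu)} - \yco{\pth_B(\nu)}|$ transitions from strictly less than $1$ to exactly $1$ at $\nu_1$. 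Throughout I will use three structural facts about rectilinear, decoupled, alternating plans: at any given time at most one robot is moving; every parking spot of a robot is a breakpoint of its path; and by alternation every breakpoint of $\pth_A$ or $\pth_B$ is incident to both a horizontal and a vertical segment (possibly of zero length). Consequently, whenever a robot is parked at some time $\nu$, its position automatically lies on a vertical segment of its own path.

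In Case (a), the definition of $\lambda_1$ gives $\pth_A(\lambda_1) \in \topR \cup \botR$ while $\pth_A(\lambda_1 + \eps) \notin \topR \cup \botR$ for small $\eps > 0$; since $\topR$ and $\botR$ are horizontal and $\pth_A$ is rectilinear, $A$ moves vertically immediately after $\lambda_1$, so $\pth_A(\lambda_1)$ lies on a vertical segment of $\pth_A$, and by decoupling $B$ is parked so the claim for $B$ follows. In Case (b), since $\pth_B \subseteq \fre$ the path cannot cross $\partial \fre$, so $B$ must cross $\partial \Rsq$ at time $\nu_1$. The four sides of $\partial \Rsq$ lie on the vertical lines $x = x_\R^{-} - 1$ and $x = x_\R^{+} + 1$ and the horizontal lines $y = y_\R^{-} - 1$ and $y = y_\R^{+} + 1$---two vertical $1$-lines and two horizontal $1$-lines. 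If $\pth_B(\nu_1)$ lies on a vertical $1$-line we are done; otherwise $\pth_B$ is crossing a horizontal $1$-line at $\nu_1$, which forces $\pth_B$ to be moving vertically there and so $\pth_B(\nu_1)$ lies on a vertical segment of $\pth_B$. In either sub-case $A$ is parked and the claim for $A$ follows as in Case (a). In Case (c), the vertical distance between the robots crosses the value $1$ at $\nu_1$, so one of the robots must be moving vertically at $\nu_1$, placing its position on a vertical segment of its path; the other robot is parked and therefore satisfies the claim by the observation above.

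The proof is essentially a routine case analysis, and I do not expect any step to be a major obstacle. The only mildly delicate point is verifying that (a)--(c) exhaust all possibilities for the left endpoint of an unsafe interval, which follows from the maximality of $[\nu_1,\nu_2]$, the continuity of $\pth_A$ and $\pth_B$, and the closedness of the two defining conditions.
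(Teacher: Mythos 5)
Your proof is correct and follows essentially the same three-case analysis as the paper (robot $B$ entering $\I(\R)$ at $\nu_1$, $\nu_1=\lambda_1$, or the robots becoming $y$-separated), using decoupling plus alternation to handle the parked robot in exactly the same way. One tiny nitpick: the horizontal edges of $\Rsq$ need not lie on $1$-lines (they may lie on $2$-lines, depending on whether the corresponding edge of $\R$ is a $0$- or $1$-line), but this is harmless since you resolve that sub-case via the vertical motion of $B$ rather than via the grid line.
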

\begin{proof}
Let $i=1$; the argument for $i=2$ is symmetric.
There are three cases for $\nu_1$ to be an endpoint of an unsafe interval.

The first case is that $\pth_B$ enters $\I(\R)$ at time $\nu_1$.
If it does so through a horizontal edge of $\I(\R)$ then clearly
$\pth_B(\nu_1)$ lies on a vertical segment of~$\pth_B$, and if
$\pth_B$ enters through a vertical edge of $\I(\R)$ then 
$\pth_B(\nu_1)$ lies on a vertical 1-line.
Furthermore, $\robA$ is parked when $\pth_B$ enters $\I(\R)$
because the plan $\plan$ is decoupled. Since $\plan$ is alternating,
this implies that $\pth_A(\nu_1)$ lies on a vertical segment of~$\pth_A$.

The second case is that $\nu_1=\lambda_1$. Then robot~$\robA$ must be moving vertically
just after time~$\nu_1$, and we can apply the same argument as above.

The third case is that $\robA$ and $\robB$ become $y$-separated at time~$\nu_1$. 
Then one of the two robots, say $\robA$, must be moving vertically 
immediately before time $\nu_1$. Hence, $\pth_A(\nu_1)$ lies on a vertical segment of~$\pth_A$.
The other robot, $\robB$, must then be parked at time $\nu_1$, which again implies that
it lies on a vertical segment of its path.
\end{proof}
Consider an unsafe interval~$[\nu_1,\nu_2]$.
If $\nu_1 \not= \lambda_1$ and $\nu_2 \not= \lambda_2$, and
$\robB$ does not enter $\I(\R)$ at time $\nu_1$ and
and $\robB$ does not leave $\I(\R)$ at time~$\nu_2$,
then by the maximality condition,
$\plan(\nu_1)$ and $\plan(\nu_2)$ are $x$-separated configurations---this is true
because a configuration that is not $y$-separated must be $x$-separated. 
An important type of unsafe intervals are so-called swap intervals, as defined next.
\begin{defn}
A \emph{swap interval} is an $x$-separated unsafe interval such that the $x$-orders
at $\plan(\nu_1)$ and $\plan(\nu_2)$ are different.
\end{defn}
%
Roughly speaking, the role of unsafe intervals is as follows.
If a time $\lambda\in [\lambda_1,\lambda_2]$ does not lie in an 
unsafe interval, translating $\pth_A(\lambda)$ vertically within $\R$ 
does not cause the point to conflict with $\pth_B(\lambda)$ because $\plan(\lambda)$ is $x$-separated. 
If an unsafe interval is not $x$-separated, 
it turns out that we can modify $\pth_B$ to avoid conflicts 
after translating $\pth_A(\lambda)$ vertically without increasing the length of~$\pth_B$.
If a swap interval~$I$ 
is $x$-separated and the two endpoints have the same $x$-order
then we show in \lemref{plan-between-swap-endpoints} that one can re-parametrize the plan so that $I$ is not an unsafe
interval and we can do surgery as above. 
So the challenging case is when $I$ is
a swap interval. In this case, the surgery is considerably more involved, but
we will keep the situation under control by proving some 
desirable properties of unsafe and swap intervals.
To prove the existence of an optimal plan with these properties,
we modify a given optimal plan in a controlled manner, as defined
below. 

\begin{defn}
Consider an interval $[\lambda, \mu]$. We call $\plan^*[\lambda, \mu]$ a 
\emph{compliant modification} of $\plan[\lambda, \mu]$ if the 
{\sc Feasibility}~(P1), {\sc Optimality}~(P2), {\sc Vertical alignment}~(P4) 
and {\sc Alternation}~(P5) properties hold, and the following {\sc No Regress} property:
each bad segment of $\plan^*$ is already present in $\plan$ or 
it is a segment of $\pi^*_A[\lambda_1,\lambda_2]$.
\end{defn}
The fact that a compliant modification has the {\sc No Regress} property instead
of the {\sc Progress} property is not problematic, because when handling a bad segment~$e$
of the modified plan $\plan^*$,
we will actually get rid of all bad segments of $\pi^*_A[\lambda_1,\lambda_2]$.

A special type of compliant modification is a \emph{compliant re-parametrization}.
Here $\plan^*[\lambda, \mu]$ is a re-parametrization of $\plan^*[\lambda, \mu]$, that is, 
the paths traced by $A$ and $B$ remain the same but the
parametrization of $\plan^*$ is different from $\plan$
and thus the breakpoints and parking spots in $\plan^*$ may change. Note that
a re-parametrization will never introduce a collision of a robot with an obstacle.
Thus, to check the {\sc Feasibility} property, we only need to ensure that the robots do not collide with each other.
Also note that {\sc Optimality} is automatically satisfied in a re-parametrization.
We can ensure the {\sc Alternation}
property by adding zero-length segments at breakpoints, where
necessary.
The remaining properties, {\sc Vertical alignment} and {\sc No Regress},
are only violated if these zero-length segments are unaligned.
To verify that such violations do not happen,
it suffices to prove that 
the parking spots in a compliant re-parametrization~$\plan^*[\lambda, \mu]$ are of one of the following types:
{
\renewcommand{\labelenumi}{(T\arabic{enumi}):}
\begin{enumerate}
\item grid points,
\item vertices of $\pth_A[\lambda, \mu]$ or $\pth_B[\lambda, \mu]$, or
\item intersection points of horizontal (resp. vertical) grid lines with vertical (resp. horizontal) segments of
$\pth_A[\lambda, \mu]$ or $\pth_B[\lambda, \mu]$.
\end{enumerate}
}
Note that compliant modification is composable: if $\plan'$ is a compliant modification of $\plan$
and $\plan''$ is a compliant modification of $\plan'$, then
$\plan''$ is a compliant modification of~$\plan$. Thus we can apply
compliant modifications repeatedly while preserving the desired properties
with respect to the original path.
\medskip

In \lemref{unsafe-is-swap}, we perform a sequence of compliant modifications to $\plan$
to ensure that the unsafe swap intervals in the resulting plan
have certain desirable properties.
We begin with the following technical lemma that we will
use repeatedly.
\begin{lemma}\label{lem:x-sep}
	Let $\bp = (p_A, p_B)\in \BF$ and $\bq = (q_A,q_B) \in \BF$ be two $x$-separated configurations that
	have the same $x$-order. If there is an $x$-monotone path $\varphi_A \subset \freesp$ from $p_A$ to $q_A$
	and an $x$-monotone path $\varphi_B \subset \freesp$ from $p_B$ to $q_B$,
	then there is a decoupled, feasible plan $\varphi:[0,1] \to \BF$ in which $A$ 
	first follows $\varphi_A$ and then $B$ follows $\varphi_B$, or vice versa.
    Moreover, $\varphi(\lambda)$ is $x$-separated
	for all $\lambda \in [0,1]$ in the plan. The claim also holds if $x$-direction is replaced with the $y$-direction.
\end{lemma}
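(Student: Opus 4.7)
\emph{Proof plan.} I will show that at least one of the two orderings --- move $A$ first along $\varphi_A$ and then $B$ along $\varphi_B$, or move $B$ first and then $A$ --- yields a feasible decoupled plan that stays $x$-separated throughout. Without loss of generality I assume that $A$ lies to the left of $B$ in both configurations, i.e.\ $(p_A)_x + 1 \leq (p_B)_x$ and $(q_A)_x + 1 \leq (q_B)_x$; the case where $A$ lies to the right is symmetric, and the $y$-direction statement follows by swapping the roles of the two coordinates.

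The key combinatorial observation is that at least one of the inequalities $(p_A)_x + 1 \leq (q_B)_x$ or $(q_A)_x + 1 \leq (p_B)_x$ must hold. Indeed, if both failed, then combined with the same $x$-order hypothesis I would obtain $(q_B)_x < (p_A)_x + 1 \leq (p_B)_x$, giving $(q_B)_x < (p_B)_x$, and $(p_B)_x < (q_A)_x + 1 \leq (q_B)_x$, giving $(p_B)_x < (q_B)_x$, a contradiction. The correct ordering is determined by which of these inequalities is available.

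Suppose first that $(p_A)_x + 1 \leq (q_B)_x$ holds. Then I move $B$ first along $\varphi_B$ while $A$ is parked at $p_A$, and afterwards $A$ along $\varphi_A$ while $B$ is parked at $q_B$. During $B$'s move, $x$-monotonicity of $\varphi_B$ confines $B$'s $x$-coordinate to the closed interval with endpoints $(p_B)_x$ and $(q_B)_x$; both endpoints are at least $(p_A)_x+1$ (the first by the same-$x$-order hypothesis, the second by the case assumption), so the configurations stay $x$-separated. During $A$'s move, $x$-monotonicity of $\varphi_A$ confines $A$'s $x$-coordinate to the closed interval with endpoints $(p_A)_x$ and $(q_A)_x$; both endpoints are at most $(q_B)_x-1$ (the first by the case assumption, the second by the same-$x$-order hypothesis), so $x$-separation is again maintained. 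The other case, $(q_A)_x + 1 \leq (p_B)_x$, is symmetric and motivates moving $A$ first.

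Since $\varphi_A, \varphi_B \subset \freesp$ by hypothesis, each robot's path is free of collisions with the workspace boundary, and robot--robot collisions are ruled out by the $x$-separation maintained at every instant, so the constructed plan lies in $\BF$ and is therefore feasible. The $y$-direction version is obtained by applying the same argument to the swapped coordinates. The only delicate step is the combinatorial dichotomy isolating which robot should move first; once that choice is fixed, feasibility and persistent $x$-separation reduce to routine interval checks using $x$-monotonicity of the two given paths.
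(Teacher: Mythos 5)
Your proof is correct and follows essentially the same route as the paper: a direct decoupled construction that decides which robot moves first and verifies persistent $x$-separation via interval bounds coming from the $x$-monotonicity of $\varphi_A$ and $\varphi_B$. The only difference is organizational---the paper splits cases by whether $B$ moves in the same or the opposite direction as $A$, whereas you split by which of the two ``move-first'' inequalities holds---but the underlying argument is the same.
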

\begin{proof}
	We prove the lemma for the $x$-direction; the case of $y$-direction is symmetric.
	Without loss of generality, assume that $(q_{A})_x \geq (p_{A})_x$. There are two cases,
    illustrated in \figref{xmon-order}. 
    \begin{figure}
	\centering
	\includegraphics{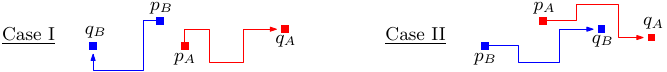}
		\caption{\emph{Case~I:} $A$ and $B$ move in opposite directions. Since $\bp$ and $\bq$ have the same
        $x$-order, the movements do not interfere with each other. \emph{Case~II:} $A$ and $B$ move in the same direction. Since $\bp$ and $\bq$ have the same
        $x$-order and we assumed $(q_{A})_x \geq (p_{A})_x > (p_B)_x$, robot~$A$ can be moved before~$B$.}
		\label{fig:xmon-order}
	\end{figure}
	\\[2mm]
	 \emph{Case~I:  $(q_{B})_x < (p_{B})_x$.} Since $\bp$ and $\bq$ have the same $x$-order, $\bp,\bq \in \BF$,
     and they are $x$-separated,
		it is easily seen that the $x$-intervals $[(p_{A})_x, (q_{A})_x]$ and $[(q_{B})_x, (p_{B})_x]$
		are disjoint and at least distance~1 apart. Therefore, we can first move $A$ from $p_A$
		to $q_A$ along $\varphi_A$ and then $B$ from $p_B$ to $q_B$ along $\varphi_B$.
		Since the two intervals are at least distance~1 apart, the resulting plan $\varphi$ is feasible and
		$\varphi(\lambda)$ is $x$-separated for all $\lambda \in [0,1]$.
	\\[2mm]	
	\emph{Case~II: $(q_{B})_x \geq (p_{B})_x$.} Without loss of generality, assume that $(p_{A})_x > (p_B)_x$;
		otherwise, we can switch the roles of $A$ and $B$. We first move $A$ along $\varphi_A$
		and then $B$ along $\varphi_B$. Note that for any point $z \in \varphi_A$, we have
		$z_x - (p_{B})_x \geq (p_{A})_x - (p_{B})_x \geq 1$. Similarly, for any point $z \in \varphi_B$,
		we have $(q_{A})_x - z_x \geq (q_{A})_x - (q_{B})_x \geq 1$. Hence, the resulting plan $\varphi$ is feasible
		and $\varphi(\lambda)$ is $x$-separated for all $\lambda \in [0,1]$.
\end{proof}
\begin{lemma}\label{lem:plan-between-swap-endpoints}
	Let $\lambda,\mu\in [\lambda_1,\lambda_2]$ be endpoints of $x$-separated unsafe intervals,
	not necessarily endpoints of the same unsafe interval, such that $\lambda < \mu$ and
	$\plan(\lambda), \plan(\mu)$ have the same $x$-order. If $\pth_B(\lambda)$, $\pth_B(\mu)$ 
	lie in the same component of $\I(\R)$ and are not a blocked pair,
	then there is a compliant re-parametrization $\plan^*[\lambda,\mu]$ of $\plan[\lambda, \mu]$ such that 
	the configuration $\plan^*(\nu)$ is $x$-separated for all $\nu\in [\lambda,\mu]$.
\end{lemma}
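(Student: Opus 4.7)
The plan is to show that, by the global optimality of $\plan$, both $\pth_A[\lambda,\mu]$ and $\pth_B[\lambda,\mu]$ must already be $x$-monotone paths (in fact, rectilinear shortest paths between their endpoints); once this is established, one invokes \lemref{x-sep} directly on these two sub-paths to obtain a decoupled plan that keeps each robot on exactly its original geometric trace but synchronizes them in an $x$-separated way. Because the geometric images of $\pth_A$ and $\pth_B$ on $[\lambda,\mu]$ are preserved, the resulting plan is a bona-fide re-parametrization of $\plan[\lambda,\mu]$.

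The heart of the argument is the $x$-monotonicity claim. First, since $[\lambda,\mu] \subseteq [\lambda_1,\lambda_2]$ and $\pth_A[\lambda_1,\lambda_2] \subseteq \R$, an L-shaped $xy$-monotone path of length $\|\pth_A(\lambda)-\pth_A(\mu)\|_1$ joining $\pth_A(\lambda)$ to $\pth_A(\mu)$ exists inside the rectangle~$\R$. Second, since by hypothesis $\pth_B(\lambda),\pth_B(\mu)$ lie in the same component of $\I(\R)$ and do not form a blocked pair, \lemref{xy-monotone-iff} asserts that every shortest $\freesp$-path between them is $x$-monotone. Suppose, toward a contradiction, that $\pth_A[\lambda,\mu]$ is not $xy$-monotone or that $\pth_B[\lambda,\mu]$ is not a shortest $\freesp$-path; then the sum $\|\pth_A[\lambda,\mu]\| + \|\pth_B[\lambda,\mu]\|$ strictly exceeds the sum of the corresponding $xy$-monotone and shortest $\freesp$-path lengths. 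Because $\plan(\lambda)$ and $\plan(\mu)$ are $x$-separated with the same $x$-order, \lemref{x-sep} applies to these two strictly shorter $x$-monotone paths $\varphi_A,\varphi_B$ and yields a decoupled feasible $(\plan(\lambda),\plan(\mu))$-plan whose cost is strictly less than $\|\plan[\lambda,\mu]\|$. Splicing it into $\plan$ contradicts the optimality of~$\plan$, so in fact $\pth_A[\lambda,\mu]$ is $xy$-monotone and $\pth_B[\lambda,\mu]$ is a shortest $\freesp$-path; in particular both are $x$-monotone.

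With $x$-monotonicity in hand, a second application of \lemref{x-sep}, this time with $\varphi_A := \pth_A[\lambda,\mu]$ and $\varphi_B := \pth_B[\lambda,\mu]$, produces a decoupled plan $\plan^*[\lambda,\mu]$ in which one robot traverses its entire original sub-path while the other is parked, and then the roles swap, with $\plan^*(\nu)$ being $x$-separated for every $\nu \in [\lambda,\mu]$. Since $\plan^*[\lambda,\mu]$ traces precisely the images of $\pth_A[\lambda,\mu]$ and $\pth_B[\lambda,\mu]$, it is a re-parametrization of $\plan[\lambda,\mu]$ and has the same cost, so \textsc{Feasibility} and \textsc{Optimality} are immediate. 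The only parking spots introduced lie among $\pth_A(\lambda),\pth_A(\mu),\pth_B(\lambda),\pth_B(\mu)$, each of which is an endpoint and hence a vertex of $\pth_A[\lambda,\mu]$ or $\pth_B[\lambda,\mu]$ (type T2); the subsequent bookkeeping therefore guarantees that any zero-length segments inserted to enforce \textsc{Alternation} are aligned, preserving \textsc{Vertical Alignment} and \textsc{No Regress}. The principal obstacle is the optimality argument in the previous paragraph, which relies critically on the non-blocked-pair hypothesis to invoke \lemref{xy-monotone-iff}: without it, shortest $\freesp$-paths from $\pth_B(\lambda)$ to $\pth_B(\mu)$ need not be $x$-monotone, and the competing plan assembled through \lemref{x-sep} would not be available to force local $x$-monotonicity out of global optimality.
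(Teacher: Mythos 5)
Your overall strategy is the same as the paper's: use the optimality of $\plan$ (together with \lemref{xy-monotone-iff} and the trivial $xy$-monotone path for $A$ inside $\R$) to argue that $\pth_A[\lambda,\mu]$ and $\pth_B[\lambda,\mu]$ are themselves $x$-monotone shortest paths, and then invoke \lemref{x-sep} on these very subpaths so that the result is a re-parametrization that is $x$-separated throughout. That part of your argument is sound and matches the paper's reasoning (which appears there as a footnote). The gap is in your last step, where you assert that the new parking spots $\pth_A(\lambda),\pth_A(\mu),\pth_B(\lambda),\pth_B(\mu)$ are ``endpoints and hence vertices (type T2).'' An endpoint of the restricted subpath $\pth_X[\lambda,\mu]$ need not be a vertex (breakpoint) of the path at all. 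Concretely, $\lambda$ can be the endpoint of an unsafe interval because the robots start or stop being $y$-separated while $B$ is moving vertically: then $\pth_B(\lambda)$ is an interior point of a vertical segment of $\pth_B$, at height $\yco{\pth_A(\lambda)}\pm 1$, which is neither a vertex of $\pth_B$ nor on a horizontal grid line. Parking $B$ there forces (to restore \textsc{Alternation}) a zero-length \emph{horizontal} segment at a non-grid height on $\pth_B^*$, i.e.\ a new bad segment of $\pth_B^*$ that is neither present in $\plan$ nor a segment of $\pi^*_A[\lambda_1,\lambda_2]$ — this violates \textsc{No Regress}, so the modification is not compliant. (For $A$ the analogous situation is harmless, since new bad segments on $\pi^*_A[\lambda_1,\lambda_2]$ are explicitly allowed; the problem is specifically $B$'s parking spots.)

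This is exactly the difficulty the paper's proof is built around: it does \emph{not} apply \lemref{x-sep} at $\lambda$ and $\mu$, but at shifted times $\lambda',\mu'$, defined (when the endpoint is a $y$-separation event caused by a robot $X$ moving vertically) as the first/last time at which $\pth_X$ reaches a vertex of its path or a horizontal grid line. The pieces $\plan[\lambda,\lambda']$ and $\plan[\mu',\mu]$ are kept unchanged; they are single vertical moves, so $x$-separation is preserved on them because $\plan(\lambda)$ and $\plan(\mu)$ are $x$-separated and $x$-coordinates do not change. One then verifies, case by case (using \lemref{unsafe-vertical-edge} and the fact that the edges of $\I(\R)$ lie on grid lines), that the parking spots at $\lambda'$ and $\mu'$ are of types (T1)--(T3). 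Your proof is missing this device, and without it the claimed compliant re-parametrization does not follow; to repair it you would have to reproduce essentially this $\lambda',\mu'$ construction and the accompanying parking-spot verification.
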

\begin{proof}
Applying \lemref{x-sep} to the points $\bp := \plan(\lambda)$ 
and $\bq := \plan(\mu)$ to obtain a plan~$\plan^*[\lambda,\mu]$ that
is $x$-separated throughout
would create new parking spots in $\plan^*[\lambda,\mu]$
at the time instances $\lambda$ and $\mu$.
These parking spots may violate the properties of a
compliant re-parametrization: they are by definition
breakpoints of $\plan^*[\lambda,\mu]$, but they need not be breakpoints
of $\plan[\lambda,\mu]$ and thus may not lie on a grid line or be aligned with a vertical 
	segment of $\pth_A\cup\pth_B$. We therefore apply 
\lemref{x-sep} at carefully chosen time instances, $\lambda'$ and $\mu'$, defined as follows.
\begin{itemize}
\item If $\pth_B$ enters or leaves $\I(\R)$ at time $\lambda$
     or $\lambda=\lambda_1$,
    then we set $\lambda':=\lambda$. Otherwise, $\plan$ starts or stops 
    being $y$-separated at time~$\lambda$.
    Let $X\in \{A,B\}$ be the robot that moves at time~$\lambda$,
	thus causing the robots to start or stop being $y$-separated; $\pth_X(\lambda)$ lies
		on a vertical segment of $\pth_X$.
    We now set $\lambda'$ to be
    the first time after $\lambda$ such that $\pth_X(\lambda')$ is a vertex 
    of $\pth_X$ or $\pth_X(\lambda')$ lies on a horizontal grid line. 
    See Figure~\ref{fig:swap-lem-breakpoints}.
    Since the top and bottom edges of $\I(\rect)$ are horizontal grid lines, we conclude that
    $\pth_X(\lambda') \in \I(\rect)$, that
    $\pth_X(\lambda)\pth_X(\lambda')$ is a vertical segment, and that
    $\pth(\lambda)$ and $\pth(\lambda')$ have the same $x$-order.
\item The time $\mu'\in [\lambda,\mu]$ is defined symmetrically. 
      More precisely, if $\pth_B$ enters or leaves $\I(\R)$ at time $\mu$
    or $\mu=\lambda_2$,
      then we set $\mu':=\mu$. Otherwise, let $X\in \{A,B\}$ be the robot that 
      moves at time~$\mu$, causing the robots to start or stop being $y$-separated.
      We set $\mu'$ to the last time before $\mu$ such that $\pth_X(\mu')$ is a vertex of $\pth_X$ or $\pth_X(\mu')$ lies on a horizontal grid line. Again, we can argue that $\pth_x(\mu') \in \I(\rect)$, that $\pth_x(\mu)\pth_x(\mu')$
      is a vertical segment, and that $\plan(\mu)$ and $\plan(\mu')$ have
      the same $x$-order.
\end{itemize}
\begin{figure}
	\centering
	\includegraphics{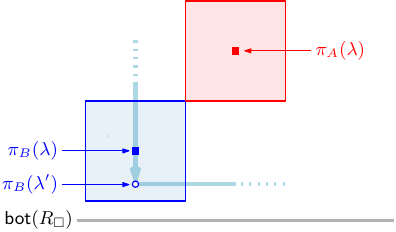}
	\caption{An example of the definition of $\lambda'$ in Lemma~\ref{lem:plan-between-swap-endpoints}. 
    Since $\pth_B$ does not enter or leave $\I(\rect)$ at time $\lambda$ and $\lambda\neq \lambda_1$, 
    we take the robot $X$ that moves at time $\lambda$ (here: $X=B$), and set $\lambda'$ to the first time
    after $\lambda$ where $\pth_B$ reaches a vertex or a grid line (here $\pth_B(\lambda')$ is a vertex of $\pth_B$).}
	\label{fig:swap-lem-breakpoints}
\end{figure}

Note that if $\lambda'> \mu'$, then $\plan[\lambda,\mu]$ is a plan in which one
robot moves along a vertical segment while the other is parked,
and setting $\plan^*:=\plan$ proves the claim.
So, assume $\lambda'\leq \mu'$ in the remainder of the proof.

Note that $\plan(\lambda)$ and $\plan(\lambda')$ have the same $x$-order,
and that the same holds for $\plan(\mu)$ and $\plan(\mu')$. This implies that
$\plan(\lambda')$ and $\plan(\mu')$ have the same $x$-order.
Since $\lambda',\mu'\in[\lambda_1,\lambda_2]$ and 
$\pth_A[\lambda_1,\lambda_2]\subset \R$,  we have $\pth_A(\lambda'), \pth_A(\mu') \in \rect$.
Hence, there trivially exists an $x$-monotone path $\varphi_A\subset\fre$ 
from $\pth_A(\lambda')$ to $\pth_A(\mu')$. 
By construction, as argued above,
$\pth_B(\lambda')\pth_B(\lambda)$ and $\pth_B(\mu')\pth_B(\mu)$ are
(possibly zero-length) vertical segments in $\I(\rect)$. 
Therefore $\pth_B(\lambda')$ and $\pth_B(\mu')$ lie in the same component of $\I(\rect)$
and are not a blocked pair, since these properties hold for $\lambda$ and $\mu$ by assumption.
Hence, by \lemref{xy-monotone-iff}, any shortest path from $\pth_B(\lambda')$ to $\pth_B(\mu')$ is $x$-monotone.
We can thus apply \lemref{x-sep} to $\bp := \plan(\lambda')$ 
and $\bq := \plan(\mu')$. This gives us a $(\bp, \bq)$-plan
$\plan^*[\lambda',\mu']$ in which $A$ only parks at $\pth_A(\lambda')$ or $\pth_A(\mu')$ and
$B$ only parks at $\pth_B(\lambda')$ or $\pth_B(\mu')$.\footnote{This plan is not 
necessarily a re-parametrization. However, since there exists a valid plan from 
$\plan(\lambda')$ to $\plan(\mu')$ that uses shortest paths in $\fre$, 
the paths $\pth_A[\lambda',\mu']$ and $\pth_B[\lambda',\mu']$ must be 
shortest paths in $\fre$, because $\plan$ is optimal. 
This means $\pth_A[\lambda',\mu']$ is $xy$-monotone and that, by \lemref{x-sep}, 
the path $\pth_B[\lambda',\mu']$ is $x$-monotone. So, we apply \lemref{x-sep} 
with these paths to obtain a plan with the required properties 
that is also a re-parametrization of $\plan$.}  
We extend the domain of $\plan^*$ to $[\lambda,\mu]$ by setting 
$\plan^*[\lambda,\lambda']:= \plan[\lambda,\lambda']$ and $\plan^*[\mu',\mu]:=\plan[\mu',\mu]$.
Observe that $\plan^*(\nu)$ is $x$-separated for all $\nu\in [\lambda,\mu]$.
Finally, we argue that the parking spots of $\plan^*[\lambda,\mu]$
are of type (T1)--(T3). We do this for the possible parking spots
at time $\lambda'$; the argument for $\mu'$ is similar. We consider three cases.
\begin{itemize}
\item 
    If $\pth_B$ enters or leaves $\I(\rect)$ at time $\lambda$, then $\lambda'=\lambda$.
    Hence, $\pth_A(\lambda')$ is a parking spot of~$\plan$ (which is allowed) and $\pth_B(\lambda')$ is
    the intersection of a segment of $\pth_B$ and an edge of $\I(\rect)$.
    Since the edges of $\I(\rect)$ are contained in grid lines, we conclude
    that the potential parking spot $\pth_B(\lambda')$ is also allowed.
\item
    Second, we may have $\lambda'=\lambda = \lambda_1$. 
    Since $\pth_A$ enters $\R$ at time $\lambda_1$, the point $\pth_A(\lambda_1)$ 
    lies on a vertical segment of $\pth_A$ and a horizontal grid line. 
    Additionally, since $A$ is moving at time $\lambda_1$, 
    the point $\pth_B(\lambda_1)$ is a breakpoint of $\pth_B$, and thus is allowed.
\item
    Finally, suppose that $\pth_B$ does not enter or leave $\I(\rect)$ at time~$\lambda$
    and that $\lambda\neq\lambda_1$.
    Then $\plan$ starts or stops being $y$-separated at time $\lambda$
    due to the movement of a robot~$X\in \{A,B\}$.
    Thus, the other robot is at a breakpoint of $\plan$ at time $\lambda'$. 
    Moreover, by construction, $\pth_X(\lambda')$ lies on a vertex of $\pth_X$, 
    or on the intersection of a vertical segment of $\pth_X$ and a horizontal grid line. 
    Thus, in this case, the parking spots at time~$\lambda'$ are allowed as well.
\end{itemize}
This concludes the proof of the lemma.
\end{proof}
By applying \lemref{plan-between-swap-endpoints} repeatedly we obtain the following lemma.
\begin{lemma}\label{lem:unsafe-is-swap}
	There is a compliant re-parametrization $\plan^*[\lambda_1, \lambda_2]$ of $\plan[\lambda_1, \lambda_2]$
	such that any $x$-separated unsafe interval is a swap interval.
\end{lemma}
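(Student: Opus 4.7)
The strategy will be to iteratively invoke \lemref{plan-between-swap-endpoints} to eliminate, one at a time, every $x$-separated unsafe interval that is not a swap interval. Since compliant re-parametrizations compose, the cumulative result is itself a compliant re-parametrization of $\plan[\lambda_1,\lambda_2]$, which is exactly what the lemma demands.

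Consider an $x$-separated unsafe interval $[\nu_1,\nu_2]$ in the current plan that is not a swap interval. I will verify the hypotheses of \lemref{plan-between-swap-endpoints} for the pair $(\nu_1,\nu_2)$. The conditions that $\nu_1$ and $\nu_2$ are endpoints of $x$-separated unsafe intervals and that $\plan(\nu_1),\plan(\nu_2)$ share the same $x$-order hold immediately by assumption. By \lemref{BinZ}, $\pth_B[\nu_1,\nu_2]$ lies in a single connected component of $Z^+\cap\fre$ or $Z^-\cap\fre$, and since $Z^+\cap\fre$ and $Z^-\cap\fre$ are both contained in $\I(\R)$, the two endpoints $\pth_B(\nu_1)$ and $\pth_B(\nu_2)$ lie in the same component of $\I(\R)$. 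Finally, $(\pth_B(\nu_1),\pth_B(\nu_2))$ is not a blocked pair: by definition a blocked pair consists of one point in a top corner square of $\Rsq$ (with $y$-coordinate in $(y_\R^+, y_\R^+ + 1]$) and one in a bottom corner square (with $y$-coordinate in $[y_\R^- - 1, y_\R^-)$), both inside $\Rud^-$ or both inside $\Rud^+$; but $Z^+$ lies strictly above $y_\R^- + 1 > y_\R^-$ and so is disjoint from every bottom corner square, and symmetrically $Z^-$ is disjoint from every top corner square, so two points both in $Z^+\cap\fre$, or both in $Z^-\cap\fre$, cannot form a blocked pair.

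Applying \lemref{plan-between-swap-endpoints} then produces a compliant re-parametrization $\plan^*[\nu_1,\nu_2]$ in which $\plan^*(\nu)$ is $x$-separated for every $\nu\in[\nu_1,\nu_2]$. No subinterval of $[\nu_1,\nu_2]$ can then be an unsafe interval of $\plan^*$, because clause~(ii) of the definition of unsafe (existence of a non-$x$-separated configuration) fails throughout; and since the plan is unchanged outside $[\nu_1,\nu_2]$, unsafe intervals disjoint from $[\nu_1,\nu_2]$ survive intact.

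The main obstacle will be termination. I will argue that each iteration strictly decreases the number of non-swap $x$-separated unsafe intervals. The treated interval vanishes, and no unsafe subinterval of $\plan^*$ can lie strictly within $[\nu_1,\nu_2]$; any unsafe interval of $\plan^*$ that straddles $\nu_1$ or $\nu_2$ must have its non-$x$-separated witness outside $[\nu_1,\nu_2]$ (since the interior is $x$-separated throughout), and thus corresponds to an extension of an unsafe interval already present in the old plan rather than a genuinely new one. Since the combinatorial structure of unsafe intervals is bounded by the finite number of breakpoints of $\plan$ together with the entries and exits of $\pth_B$ into $\I(\R)$, the iteration halts after finitely many applications, producing the required plan $\plan^*[\lambda_1,\lambda_2]$.
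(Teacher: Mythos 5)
Your proof is correct and takes essentially the same approach as the paper: repeatedly apply \lemref{plan-between-swap-endpoints} to any $x$-separated unsafe interval whose endpoints share an $x$-order (i.e., a non-swap one), so that the interval becomes $x$-separated throughout and hence no longer unsafe. Your verification of the not-a-blocked-pair hypothesis (via \lemref{BinZ} and the disjointness of $Z^\pm$ from the opposite corner squares) and your explicit termination argument are more careful than the paper, which dispenses with both in a sentence, but the content is the same.
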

\begin{proof}
	Let $[\mu_0, \mu_1]$ be an $x$-separated unsafe interval with $\plan(\mu_0)$ 
    and $\plan(\mu_1)$ having the same $x$-order. The points $\pth_B(\mu_0), \pth_B(\mu_1)$ 
    lie in the same component of $\I(\rect)$, since $\pth_B[\mu_0,\mu_1] \subset \I(\rect)$. 
    Additionally, $\plan$ is $y$-separated during $[\mu_0, \mu_1]$,
	so $B$ cannot completely cross $\R$ during $[\mu_0, \mu_1]$.
    This implies that $(\pth_B(\mu_0), \pth_B(\mu_1))$
	is not a blocked pair. Hence, by \lemref{plan-between-swap-endpoints}, 
    there is a compliant re-parametrization $\plan^*[\mu_0, \mu_1]$ of 
    $\plan[\mu_0, \mu_1]$ such that $\plan^*[\mu_0, \mu_1]$ is $x$-separated
	at all times. We now modify $\plan$ by replacing $\plan[\mu_0, \mu_1]$ 
    with $\plan^*[\mu_0, \mu_1]$. Since $\plan^*[\mu_0, \mu_1]$ is $x$-separated, 
    the interval $[\mu_0, \mu_1]$ is no longer unsafe.
	We repeat this procedure for every unsafe interval
	where the configuration at the endpoints have the same $x$-order. 
    After doing so, any $x$-separated unsafe interval is a swap interval. 
\end{proof}
For the remainder of this section, we assume that we have applied the compliant 
re-parametrizations of \lemref{unsafe-is-swap} to $\plan$. We now, finally, prove the main property of swap intervals.
Recall that during an unsafe interval, the robots are always $y$-separated.
Hence, either $A$ is above $B$ throughout an unsafe interval, or vice versa.
\begin{lemma}\label{lem:limit-swap-intervals}
    There is a compliant re-parametrization $\plan^*[\lambda_1, \lambda_2]$ of 
    $\plan[\lambda_1,\lambda_2]$ such that
    $\plan^*[\lambda_1, \lambda_2]$ has at most one swap interval with $B$ above $A$ 
    and at most one swap interval with $A$ above $B$.
\end{lemma}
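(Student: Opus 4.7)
The plan is to iteratively apply \lemref{plan-between-swap-endpoints} to merge pairs of swap intervals of the same type, relying on the composability of compliant re-parametrizations. Since $\plan$ is piecewise linear, the number of swap intervals is finite, so this procedure terminates with at most one swap interval of each type.

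Suppose $\plan[\lambda_1,\lambda_2]$ contains two swap intervals of the same type, say $I_1=[\alpha_1,\beta_1]$ and $I_2=[\alpha_2,\beta_2]$ with $\beta_1 \le \alpha_2$, both with $B$ above $A$. Since the two endpoints of any swap interval have opposite $x$-orders, I would choose $\nu_1 := \alpha_1$ and $\nu_2 \in \{\alpha_2,\beta_2\}$ so that $\plan(\nu_1)$ and $\plan(\nu_2)$ have the same $x$-order. Applying \lemref{plan-between-swap-endpoints} at $\nu_1,\nu_2$ then yields a compliant re-parametrization $\plan^*[\nu_1,\nu_2]$ that is $x$-separated throughout and hence has constant $x$-order on this interval, so it contains no swap interval; in particular the original $I_1$, which is contained in $[\nu_1,\nu_2]$, is eliminated, strictly reducing the count of swap intervals of the chosen type.

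It remains to verify the geometric hypotheses of \lemref{plan-between-swap-endpoints}. By \lemref{BinZ}, $\pth_B$ is confined to a single connected component of $Z^+\cap\fre$ throughout each of $I_1$ and $I_2$. Since a blocked pair consists of one point above $\R$ and one below $\R$ within the same column ($\Rud^-$ or $\Rud^+$), the pair $(\pth_B(\nu_1),\pth_B(\nu_2))$---with both points in $Z^+$---cannot be a blocked pair.

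The hardest part will be showing that $\pth_B(\nu_1)$ and $\pth_B(\nu_2)$ lie in the same connected component of $\I(\R)$. I would prove the stronger claim $\pth_B[\alpha_i,\beta_i] \subset \gamma(\R)$ for $i \in \{1,2\}$ by contradiction: if $\pth_B[\alpha_1,\beta_1]$ were contained in a tiny component of $\I(\R)$, then by \lemref{BinZ} this tiny component must lie in a top corner square of $\Rsq$, so $\xco{\pth_B}$ is confined to $[x_\R^- - 1, x_\R^-]$ in the top-left case or to $[x_\R^+, x_\R^+ + 1]$ in the top-right case throughout $I_1$. Combined with $\pth_A \in \R$, which forces $\xco{\pth_A} \in [x_\R^-, x_\R^+]$, this gives $\xco{\pth_A}(\nu) \geq \xco{\pth_B}(\nu)$ (resp.\ $\leq$) for every $\nu \in I_1$, so the $x$-order cannot strictly flip on $I_1$, contradicting that $I_1$ is a swap interval. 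Hence $\pth_B(\nu_1),\pth_B(\nu_2) \in \gamma(\R)$, the final hypothesis holds, and iterating the merge---separately for each of the two types---completes the proof.
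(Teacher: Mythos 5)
Your proposal is correct and follows essentially the same route as the paper: the same key lemma (\lemref{plan-between-swap-endpoints}), the same argument that $B$ must lie in the giant component during a swap interval (a tiny component would pin $B$ to one side of $\R$ in $x$, preventing the order flip), and the same observation that two points on the same side of $\R$ cannot form a blocked pair. The only difference is bookkeeping: the paper applies the lemma once per type, to the first and last swap interval of that type, whereas you iterate pairwise merges and invoke finiteness for termination---an inessential variation.
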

\begin{proof}
Suppose $\plan[\lambda_1,\lambda_2]$ has more than one swap interval with $A$ above $B$. 
Let $[\mu_0, \mu_1]$ and $[\mu_2,\mu_3]$ be the first and the last such intervals.
During both swap intervals, $\pth_B$ lies in the giant 
component $\gamma(\R)$ of $\I(\rect)$. Indeed, if $\pth_B[\mu_0,\mu_1]$
is contained in a tiny component of $\I(\rect)$, then $\pth_B[\mu_0,\mu_1]$
either lies entirely to the left or entirely to the right of $\R$, and
$\plan(\mu_0)$ and $\plan(\mu_1)$ have the same $x$-order,
contradicting that $[\mu_0, \mu_1]$ is a swap interval.
A similar argument applies to~$[\mu_2,\mu_3]$.

Since $[\mu_2, \mu_3]$ is a swap interval,
there is an $i\in \{2,3\}$ such that $\plan(\mu_0)$ and 
$\plan(\mu_i)$ have the same $x$-order.
Furthermore, $B$ lies below $A$ in both swap intervals and, hence, 
$(\pth_B(\mu_0),\pth_B(\mu_i))$ is not a blocked pair. 
By \lemref{plan-between-swap-endpoints}, there is a compliant
re-parametrization $\plan^*[\mu_0,\mu_i]$ of $\plan[\mu_0,\mu_i]$ such
that $\plan^*[\mu_0,\mu_i]$ is $x$-separated at all times, and thus
there are no unsafe intervals during $[\mu_0, \mu_i]$. 
Hence, $\plan^*$ contains at most one swap interval with $A$ lying above $B$.	
(Recall that $[\mu_2,\mu_3]$ was defined to be the last interval with $A$ above~$B$.)
A symmetric argument implies that there is a compliant re-parametrization $\plan^*$
that, in addition to having at most one swap interval with $A$ above $B$,
also has at most one swap interval with $A$ below~$B$.
\end{proof}
Next, we show that if $\plan[\lambda_1, \lambda_2]$ has two swap intervals, then we can either eliminate at least one of them or $\plan[\lambda_1, \lambda_2]$
has additional structure, which will help with the surgery and with proving
its correctness.
\begin{lemma}\label{lem:central-move-no-swap}
    Suppose there are exactly two swap intervals $[\mu_0,\mu_1]$ and $[\mu_2,\mu_3]$ in $[\lambda_1,\lambda_2]$. 
    If $\plan(\mu_1)$ and $\plan(\mu_2)$ have the same $x$-order and $(\pth_B(\mu_1),\pth_B(\mu_2))$ 
    is not a blocked pair, then there is a compliant modification $\plan^*$ such that $\plan^*[\lambda_1,\lambda_2]$ has no swap intervals.
\end{lemma}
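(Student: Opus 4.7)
The plan is to apply \lemref{plan-between-swap-endpoints} to the combined interval $[\mu_0,\mu_3]$, which in one stroke eliminates both swap intervals. The resulting compliant re-parametrization is $x$-separated throughout $[\mu_0,\mu_3]$, so any unsafe interval of $\plan^*$ that lies inside $[\mu_0,\mu_3]$ violates condition~(ii) of the unsafe-interval definition; outside $[\mu_0,\mu_3]$ the plan is unchanged and by hypothesis contained no swap intervals to begin with. Hence $\plan^*[\lambda_1,\lambda_2]$ has no swap intervals.

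To verify the hypotheses of \lemref{plan-between-swap-endpoints} at $\mu_0,\mu_3$: they are endpoints of $x$-separated unsafe intervals, and $\plan(\mu_0)$ and $\plan(\mu_3)$ share an $x$-order because each of the two swap intervals flips the order while, by assumption, the order is preserved between $\mu_1$ and $\mu_2$, so the three changes cancel. To place $\pth_B(\mu_0)$ and $\pth_B(\mu_3)$ in the same component of $\I(\R)$, note that by \lemref{limit-swap-intervals} the two swap intervals are of opposite types; \lemref{BinZ} then confines each of $\pth_B[\mu_0,\mu_1]$ and $\pth_B[\mu_2,\mu_3]$ to a single rectangular component of $Z^-\cap\fre$ or $Z^+\cap\fre$. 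Since $\pth_A$ stays in $\R$ and the endpoints are $x$-separated, executing the $x$-swap requires $\pth_B$'s $x$-extent within its rectangle to exceed that of any corner square of $\Rsq$, so neither rectangle can be a tiny component and both are contained in the giant component $\gamma(\R)$.

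The main obstacle is the fourth hypothesis, that $(\pth_B(\mu_0),\pth_B(\mu_3))$ is not a blocked pair. I would argue by contradiction: suppose the pair were blocked in, say, $\Rud^+$, so that both points lie in corner-square components of $\gamma(\R)$ inside $\Rud^+$; then, tracing $\pth_B$'s motion across each swap interval to the opposite $x$-side of $\R$, the endpoints $\pth_B(\mu_1)$ and $\pth_B(\mu_2)$ would be forced into the analogous corner-square components of $\gamma(\R)$ inside $\Rud^-$, making $(\pth_B(\mu_1),\pth_B(\mu_2))$ a blocked pair in $\Rud^-$ and contradicting the hypothesis. A symmetric argument rules out blocking in $\Rud^-$. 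The delicate bookkeeping is checking that the $x$-order structure and the unit-distance separations indeed force the corner-square membership claimed above, which follows from $\pth_A[\mu_0,\mu_3]\subseteq\R$ via a short geometric calculation.
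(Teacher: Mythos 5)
Your overall strategy---verify the hypotheses of \lemref{plan-between-swap-endpoints} at $\mu_0,\mu_3$ and apply it once to kill both swap intervals---is exactly how the paper handles the \emph{easy} half of this lemma, and your bookkeeping for the $x$-order at $\mu_0,\mu_3$ and for membership in the giant component $\gamma(\R)$ is fine. The gap is your final step: the claim that $(\pth_B(\mu_0),\pth_B(\mu_3))$ cannot be a blocked pair. Your contradiction argument assumes that swapping the $x$-order forces $B$ to the opposite side of the corridor $\R$, so that blocking of $(\pth_B(\mu_0),\pth_B(\mu_3))$ in one of $\Rud^-,\Rud^+$ would force $\pth_B(\mu_1),\pth_B(\mu_2)$ into the corner squares on the other side. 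But a swap only requires $B$ to end up at horizontal distance at least $1$ on the other side of $A$, and $A$ may sit anywhere inside $\R$, which can be arbitrarily wide. So, for instance, $\pth_B(\mu_0)$ can lie in the bottom-left corner square, $\pth_B(\mu_3)$ in the top-left corner square (a blocked pair), while $\pth_B(\mu_1)\in Z^-\cap\Rud$ and $\pth_B(\mu_2)\in Z^+\cap\Rud$ lie well inside the $x$-range of $\R$, just to the right of $A$; then $(\pth_B(\mu_1),\pth_B(\mu_2))$ is not blocked and no contradiction arises. The blocked case for $(\mu_0,\mu_3)$ is therefore genuinely possible and cannot be argued away.

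That case is in fact the substance of the paper's proof, and it cannot be handled by a re-parametrization at all: the paper performs an explicit compliant \emph{modification} (a ``U-turn switch''), choosing a time $\mu'\in\{\mu_1,\mu_2\}$ with $\pth_B(\mu')\in\Rud$ (this is where the hypothesis that $(\pth_B(\mu_1),\pth_B(\mu_2))$ is not blocked is actually used), rerouting $A$ horizontally to the point $q=(\pth_B(\mu')_x,\pth_A(\xi_1)_y)$ and back, and shortcutting $B$ along the left edge of $\Rud$ between carefully chosen times $\nu_1,\nu_2$, with a cost argument showing the increase $2|\pth_A(\mu')_x-\pth_B(\mu')_x|$ in $A$'s path is compensated by a decrease of at least $2(\pth_B(\mu')_x - x)$ in $B$'s path, where $x$ is the $x$-coordinate of that left edge. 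Without this surgery (or some substitute for it), your proof only establishes the lemma under the additional assumption that $(\pth_B(\mu_0),\pth_B(\mu_3))$ is not blocked, which is strictly weaker than the statement.
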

\begin{proof}
    Note that $\plan(\mu_0)$ and $\plan(\mu_3)$ have the same $x$-order,
    and that $\pth_B(\mu_1),\ldots,\pth_B(\mu_3)$ all lie in the giant component 
    of~$\I(\rect)$ because a swap interval can happen only when $B$
    is inside the giant component.
    If $(\pth_B(\mu_0),\pth_B(\mu_3))$ is not a blocked pair, then we can apply
    \lemref{plan-between-swap-endpoints} to $\mu_0,\mu_3$ to obtain a 
    compliant modification $\plan^*$ without swap intervals, so from now on
    we assume that $(\pth_B(\mu_0),\pth_B(\mu_3))$ is a blocked pair. 
    We furthermore assume wlog that $\pth_B(\mu_0)$ lies in the bottom-left corner square of $\Rsq$
    and that $\pth_B(\mu_3)$ lies in the top-left corner square.
    We define five time instances $\mu', \xi_1, \xi_2, \nu_1,\nu_2$
    such that
    \[\mu_0 \leq \xi_1 \leq \nu_1 \leq \mu_1 \leq \mu' \leq \mu_2 \leq \nu_2 \leq \xi_2 \leq \mu_3\]
    and we construct a new plan $\plan^*$ by performing a compliant modification 
    on $\plan[\xi_1, \xi_2]$ such that $\plan^*(\mu)$ is $x$-separated for all
    $\mu \in [\mu_0, \mu_3]$, thereby implying that
    $\plan^*[\lambda_1, \lambda_2]$ does not have any swap intervals.
    \begin{itemize}
    \item 
    We first define $\mu'$.
    Observe that $\pth_B(\mu_0)$ lying in the bottom-left corner square 
    implies that $\pth_B(\mu_1)\in \Rxy^-$ because $[\mu_0,\mu_1]$ is an unsafe interval
    and thus $\pth_B[\mu_0,\mu_1]$ cannot cross~$\R$.
    Similarly, $\pth_B(\mu_3)$ lying in the top-left corner square 
    implies that $\pth_B(\mu_2)\in \Rxy^+$.
	Since $\pth_B(\mu_1)$ and $\pth_B(\mu_2)$ lie to the right of $\pth_A(\mu_1)$ and $\pth_A(\mu_2)$ respectively---this is true because the $x$-orders of $\plan(\mu_1)$ and $\plan(\mu_2)$
	are different from $\plan(\mu_0)$ and $\plan(\mu_3)$ respectively---we have
    $\pth_B(\mu_1)\notin \Rud^-$ and $\pth_B(\mu_2)\not\in \Rud^-$. 
    Because $(\pth_B(\mu_1),\pth_B(\mu_2))$ is not a blocked pair, there
    is a $\mu'\in \{\mu_1,\mu_2\}$ such that $\pth_B(\mu')\not\in\Rud^+$.
    Thus, $\pth_B(\mu')\in \Rud$.
    \item 
    Next, we define $\xi_1, \xi_2$.
    Let $V^*$ denote the intersection points of a horizontal segment of
    $\pth_A \cup \pth_B$ or a horizontal grid line with a vertical segment of $\pth_A \cup \pth_B$ or a vertical grid line. Then,
    \[
    \xi_1:= \min\{\xi\in [\mu_0,\mu_1]: \{\pth_A(\xi), \pth_B(\xi)\} \subset V^* \}.
    \]
    Note that $\xi_1$ is well defined. Indeed, one of the robots, say $X$, is 
    moving immediately after time $\mu_0$ while the other is parked, and, by definition, a 
    parking spot is a point of~$V^*$. 
    By \lemref{unsafe-vertical-edge}, $\pth_X(\mu_0)$ lies on a vertical edge
    of $\pth_X$ or on a vertical 1-line. If $\pth_X(\mu_0)$
    lies on a horizontal segment of~$\pth_X$, 
	then $\pth_X(\mu_0) \in V^*$ (because it is an intersection point of a horizontal 
    segment of $\pth_X$ with a vertical grid line) and $\xi_1=\mu_0$.
    Otherwise, $X$ moves along a vertical segment~$g$ immediately after time~$\mu_0$
    and the time $\xi_1$ is the first time at which $X$
    reaches a point of $V^*$ lying on~$g$. Observe that 
    $\xi_1 \leq \mu_1$ because the robots remain $x$-separated as $X$ moves
    along $g$, while $\plan[\mu_0,\mu_1]$ must contain a configuration
    that is not $x$-separated
    by the definition of an unsafe interval.
    So, we conclude that $\xi_1 = \mu_0$ or $\plan[\mu_0,\xi_1]$ consists of one 
    robot~$X$ moving along a vertical segment while the other is parked at a 
    position of $V^*$. Similarly, let
    \[
    \xi_2:= \max\{\xi\in [\mu_2,\mu_3]: \{\pth_A(\xi),\pth_B(\xi)\}\subset V^* \}
    \]
     and note that $\xi_2 = \mu_3$ or $\plan[\xi_2,\mu_3]$ 
    consists of one robot~$X$ moving along a vertical segment
    while the other robot is parked at a point of $V^*$.
    \item 
    Finally, we define $\nu_1, \nu_2$.
    Let $\edge$ denote the left edge of $\Rud$ and define
    $\nu_1 := \min \{ \nu\in [\xi_1,\mu_1] : \pth_B(\nu)\in \edge\}$ and
    $\nu_2 := \max \{ \nu\in [\mu_2,\xi_2] : \pth_B(\nu)\in \edge\}$;
    this is well defined since $\pth_B[\xi_1,\mu_1]$ and
    $\pth_B[\mu_2,\xi_2]$ must cross~$\edge$.
    \end{itemize}
    We now explain how to construct the new plan $\plan^*$.
    Let $q:= (\pth_B(\mu')_x, \pth_A(\xi_1)_y)$ and note that $q\in R$.
    Moreover, $q$ has distance at least~$1$ to $\psi$
    because $\plan(\mu')$ is $x$-separated,
    with $\pth_B(\mu')$ lying to the right of~$\pth_A(\mu')$.
    We construct $\plan^*$ from $\plan$ by replacing $\plan[\xi_1,\xi_2]$ 
    with $\plan^*[\xi_1,\xi_2]$ consisting of the following three moves. 
    See also Figure~\ref{fig:switch-U-turn}.
    \begin{enumerate}[M1.]
    \item Move $A$ horizontally from $\pth_A(\xi_1)$ to~$q$. 
    \item We first let $B$ follow the subpath $\pth_B[\xi_1,\nu_1]$,
          then we move $B$ to $\pth_B(\nu_2)$ along~$\edge$, and finally we
          let $B$ follow $\pth_B[\nu_2,\xi_2]$.
    \item Move $A$ horizontally from $q$ to $(\pth_A(\xi_2)_x,q_y)$
          and then vertically to~$\pth_A(\xi_2)$.
    \end{enumerate}
    \begin{figure}
    	\centering
    	\includegraphics{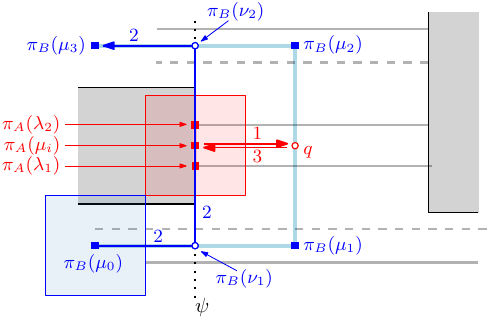}
	    \caption{The modification $\plan^*$ in the proof of Lemma~\ref{lem:central-move-no-swap} when $(\pth_B(\mu_0),\pth_B(\mu_3))$ is a blocked pair. In this example, we have $\xi_1 =\mu_0$, $\xi_2 = \mu_3$, and $\pth_A(\mu_0) = \pth_A(\mu_3)$ (denoted by $\pth_A(\mu_i)$). The new plan $\plan^*$ is dark colored, with the moves labeled by their number, while the trace of the original plan $\plan$ is light colored.}
    	\label{fig:switch-U-turn}
    \end{figure}
    
    Note that $\plan^*(\mu)$ is $x$-separated for all $\mu\in [\mu_0,\mu_3]$;
    this is true for $\plan^*[\mu_0,\xi_1]$ because $\plan(\mu_0)$ is $x$-separated, 
    for $\plan^*[\xi_1, \xi_2]$ 
    because $q$ has distance at least~$1$ to~$\psi$,
    for $\plan^*[\xi_2, \mu_3]$ because $\plan(\mu_3)$ is $x$-separated.
    Hence, there are no swap intervals left in~$\plan^*[\lambda_1,\lambda_2]$. 
    It remains to show that the modification is compliant. 
    \smallskip
    \begin{itemize}
    \item The three moves of $\plan^*[\xi_1,\xi_2]$ obviously lie in $\freesp$,
    and there are no conflicts since $\plan^*[\xi_1,\xi_2]$ is $x$-separated throughout.
    Thus, $\plan^*$ is feasible. 
    \item
    Note that $\xi_1\leq \mu'\leq \xi_2$. Hence, the modification of $\pth_A$ increases 
    its length of by at most $2\cdot|\pth_A(\mu')_x-q_x|=2\cdot|\pth_A(\mu')_x-\pth_B(\mu')_x|$. 
    On the other hand, the modification of $\pth_B$ decreases its length by at 
    least $2(\pth_B(\mu')_x - x(\edge))$, since $B$ no longer has to go from $\edge$ to $\pth_B(\mu')$ 
    and back. Because $x(\edge) \leq \pth_A(\mu')_x < \pth_B(\mu')_x$, 
    it follows that $\plan^*$ is still optimal.
    \item
    Each vertex of $\pi^*_A[\xi_1,\xi_2]$ is either a point of~$V^*$, or a vertex of~$\pth_A$,
    or the point~$q$. The latter lies on a vertical line through a vertex of $\pth_B$, 
    so we have the
    {\sc Vertical Alignment} property for $\pi^*_A$. Furthermore, the only new segment
    of $\pi^*_B$ lies on $\edge$, which implies  the {\sc Vertical Alignment} property for $\pi^*_B$.
    \item
    The parking spots of $A$ and $B$ are points of $V^*$ by construction. 
    We can add zero-length 
    segments to make $\plan^*$ alternating, while still satisfying alignment.
    \end{itemize}
    \smallskip
    Finally, it is clear that all bad segments of $\plan^*$ are either 
    present in $\plan$ or are a segment of $\pth_A[\lambda_1,\lambda_2]$.
    We conclude that $\plan^*$ is a compliant modification of $\plan$.
\end{proof}
\begin{lemma}\label{lem:tiny-one-swap}
    If $\I(\rect)$ has a tiny component, there is a compliant modification~$\plan^*$ 
    of $\plan$ with at most one swap interval.
\end{lemma}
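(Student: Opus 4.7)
The plan is to reduce the number of swap intervals to at most one by case analysis. By Lemma~\ref{lem:limit-swap-intervals}, we may assume that $\plan[\lambda_1,\lambda_2]$ contains at most two swap intervals, one with $B$ below $A$ and one with $B$ above $A$; if at most one such interval exists, there is nothing to prove. So the interesting case is when $\plan[\lambda_1,\lambda_2]$ has exactly two swap intervals $[\mu_0,\mu_1]$ (say with $B$ below $A$) and $[\mu_2,\mu_3]$ (with $B$ above $A$), and the goal is to eliminate at least one of them by a compliant modification.

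First I would attempt to invoke Lemma~\ref{lem:central-move-no-swap}, which removes both swap intervals whenever $\plan(\mu_1)$ and $\plan(\mu_2)$ have the same $x$-order and $(\pth_B(\mu_1),\pth_B(\mu_2))$ is not a blocked pair. If either hypothesis fails, we are in one of two obstructed scenarios: either the $x$-orders at $\mu_1$ and $\mu_2$ differ, or $(\pth_B(\mu_1),\pth_B(\mu_2))$ is a blocked pair lying in the two corner squares on the same vertical side of $\rect$, with both points in $\gamma(\rect)$.

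Next, I would exploit the hypothesis that $\I(\rect)$ has a tiny component in some corner square $\rho$ of $\Rsq$. The very existence of $\rho$ means that an obstacle feature whose bounding edges lie on primary grid lines separates $\rho$ from $\rect$. In the blocked pair sub-case, this feature restricts which pairs of corner squares can contribute to a blocked pair: a tiny component in, say, the top-left corner square rules out blocked pairs in the left column, so $(\pth_B(\mu_1),\pth_B(\mu_2))$ must lie in the right column, and a rerouting of $\pth_B$ that hugs the obstacle edge bounding $\rho$ (or a symmetric obstacle on the opposite side) breaks the block without increasing $\cost{\pth_B}$. In the mismatched $x$-order sub-case, the idea is that between $\mu_1$ and $\mu_2$ the robot $B$ must detour around the obstacle responsible for the tiny component, and this detour can be absorbed into a three-move surgery analogous to the one used in the proof of Lemma~\ref{lem:central-move-no-swap}: move $A$ horizontally to a point above or below where $B$ detours, route $B$ along the grid-aligned edge bounding $\rho$, and then move $A$ to its destination within the corridor.

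The main obstacle will be the bookkeeping needed to verify that each modification is compliant. The surgery must preserve feasibility, route $\pth_B$ so that any increase in $\cost{\pth_A}$ is offset by an equal or larger decrease in $\cost{\pth_B}$ (and vice versa), and introduce only new breakpoints of type (T1)--(T3) so that the {\sc Vertical alignment} and {\sc No Regress} properties continue to hold. The crucial geometric ingredient that makes this cost accounting work is that the obstacle bounding the tiny component has boundary edges on primary grid lines, so $B$ can be pushed against this wall, and the corresponding movement of $A$ can be aligned with a neighboring grid line, at no extra cost.
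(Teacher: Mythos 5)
There is a genuine gap: your proposal never extracts any concrete consequence from the hypothesis that $\I(\rect)$ has a tiny component, and the two places where you invoke it do not hold up. The paper's proof uses the tiny component exactly once, and in a specific way: assuming (wlog) the tiny component $\tau$ lies in the bottom-right corner square and $B$ lies below $A$ during the swap interval $[\mu_0,\mu_1]$, it shows that the rectangle $Z^*$ of $Z^-\cap\freesp$ containing $\pth_B[\mu_0,\mu_1]$ is disjoint from $\Rud^+$ --- otherwise the right edge of $\Rud\cap\Rxy^-$ would be free and, by \obsref{free-segments}, $\tau$ could be joined to it by a horizontal free segment of length at most $1$, contradicting that $\tau$ is tiny. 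Since $[\mu_0,\mu_1]$ is a swap interval, some endpoint satisfies $\pth_B(\mu_i)_x>\pth_A(\mu_i)_x$, and together with $Z^*\cap\Rud^+=\emptyset$ this forces $\pth_B(\mu_i)\in\Rud$, so $\pth_B(\mu_i)$ cannot belong to any blocked pair. After that, no new surgery is needed at all: one picks $j\in\{2,3\}$ with $\plan(\mu_j)$ having the same $x$-order as $\plan(\mu_i)$; if $j-i>1$ apply \lemref{plan-between-swap-endpoints} to $(\mu_i,\mu_j)$, and if $i=1,j=2$ apply \lemref{central-move-no-swap}. Your proposal is missing this pivotal geometric step entirely.

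In its place you assert that a tiny component in a corner square ``rules out blocked pairs in the same column,'' which is unjustified and, as far as I can see, not true: a single corner square can contain both a tiny component and a piece of $\gamma(\rect)$ adjacent to the edge of $\Rud$ (their $x$- and $y$-ranges merely have to be disjoint, which \obsref{free-segments} permits), so blocked pairs on that side are not excluded. Moreover, ``rerouting $\pth_B$ to break the block'' is not meaningful: being a blocked pair is a property of the two endpoint positions $(\pth_B(\mu_1),\pth_B(\mu_2))$, not of the route between them, so no rerouting with the same endpoints removes the obstruction. Finally, in the mismatched-$x$-order case you propose an unspecified ``three-move surgery analogous to \lemref{central-move-no-swap}'' with no feasibility or cost accounting; the point you miss is that when the $x$-orders at $\mu_1$ and $\mu_2$ differ, the pairs $(\mu_0,\mu_2)$, $(\mu_1,\mu_3)$, or $(\mu_0,\mu_3)$ automatically have matching $x$-orders, so the existing \lemref{plan-between-swap-endpoints} suffices --- but only once non-blockedness of the relevant pair has been secured, which is precisely what the tiny component is needed for. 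As it stands, the proposal is a plan of attack rather than a proof, and its two key claims would need to be replaced by the $Z^*\cap\Rud^+=\emptyset$ argument (or something equivalent) to go through.
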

\begin{proof}
    If $\plan$ already has only one swap interval then we are done. 
    Otherwise, we can assume wlog that $\plan$ has two swap intervals $[\mu_0,\mu_1]$ 
    and $[\mu_2,\mu_3]$ with the properties of \lemref{limit-swap-intervals} and
    such that $B$ lies below $A$ during $[\mu_0,\mu_1]$. 
    
    Let $\tau$ be a tiny component of~$\I(\rect)$, which we can assume, without loss
    of generality, lies in the bottom-right corner square of~$\I(\rect)$.
    The subpath $\pth_B[\mu_0,\mu_1]$ lies inside a single connected component
    $Z^*$ of $Z^-\cap \freesp$, which is a rectangle.
    We claim that $Z^* \cap \Rud^+ = \emptyset$. Assume for a contradiction
    this is not the case. Then the right edge $\eta$
	of $\Rud\cap \Rxy^-$ lies in~$\freesp$. (Indeed, $Z^*$ intersects $\eta$, and $\fre\cap\Rud$ is a 
	$x$-monotone polygon, and the upper endpoint of $\eta$ lies on $\bd\fre\cap\Rud$.)
    But then, by Observation~\ref{obs:free-segments}, we can connect $\tau$ to $\eta$ by a horizontal segment
	that lies in $\fre$, contradicting that $\tau$ is a tiny component.
    
    Let  $i\in \{0,1\}$ be such that $\pth_B(\mu_i)_x > \pth_A(\mu_i)_x$;
    such an $i$ exists because $[\mu_0,\mu_1]$ is a swap interval. Thus,
    $\pth_B(\mu_i)_x\notin \Rud^-$. Because $Z^* \cap \Rud^+ = \emptyset$
    and $\pth_B(\mu_i)\in Z^*$,
    we know that $\pth_B(\mu_i)\in \Rud$.
    Hence, $\pth_B(\mu_i)$ cannot be part of a blocked pair.
    Now let $j\in \{2,3\}$ be such that $\plan(\mu_i)$ and $\plan(\mu_j)$ have the same $x$-order. 
    If $j-i > 1$, then applying \lemref{plan-between-swap-endpoints} to $\mu_i,\mu_j$ gives 
    a compliant re-parametrization with at most one swap interval.
    (We can apply \lemref{plan-between-swap-endpoints} because $(\pth_B(\mu_i),\pth_B(\mu_j))$ 
    is not a blocked pair and $\pth_B$ lies in the giant component during a swap interval.) 
    Otherwise, we have $j-i = 1$, and therefore $\plan(\mu_1)$ and $\plan(\mu_2)$ have the same $x$-order 
    and $(\pth_B(\mu_1),\pth_B(\mu_2))$ is not a blocked pair. Hence, \lemref{central-move-no-swap} gives a complaint modification without swap intervals.
\end{proof}
We can now prove the main structural property of an optimal plan
in which $\plan[\lambda_1, \lambda_2]$ has two swap intervals.
    \begin{lemma}
	    \label{lem:mu1inR}
	    If $\plan$ has two swap intervals $[\mu_0,\mu_1]$ and $[\mu_2,\mu_3]$,
    and $\pth_B[\mu_0,\mu_3]$ does not leave $\I(\rect)$, then
    $\pth_B(\mu_1)\in \Rud$.
    \end{lemma}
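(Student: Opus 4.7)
The plan is to argue by contradiction. Suppose $\pth_B(\mu_1)\notin\Rud$; by left-right symmetry we may take $\pth_B(\mu_1)\in\Rud^+$. By \lemref{limit-swap-intervals}, the two swap intervals cannot both place $B$ on the same vertical side of $A$, so up to relabeling \lemref{BinZ} yields $\pth_B[\mu_0,\mu_1]\subset Z^-$ and $\pth_B[\mu_2,\mu_3]\subset Z^+$. Consequently $\pth_B(\mu_1)$ lies in the bottom-right region $\Rud^+\cap Z^-$ of $\Rsq$, one unit below $\pth_A(\mu_1)$ and strictly to the right of~$\rect$.

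Since $\plan$ retains two swap intervals, the hypothesis of \lemref{central-move-no-swap} must fail, so either $\plan(\mu_1)$ and $\plan(\mu_2)$ have different $x$-orders, or $(\pth_B(\mu_1),\pth_B(\mu_2))$ is a blocked pair. I would treat these as two cases. In the blocked-pair case, both $\pth_B(\mu_1)$ and $\pth_B(\mu_2)$ must lie in the two right-hand corner squares of $\Rsq$, in components of $\fre$ adjacent to the right edge of $\rect$, forcing $\pth_B[\mu_1,\mu_2]\subset\I(\rect)$ to detour through $\Rud$ to connect them. I would then build a compliant modification, modeled on the U-turn-switch used in the proof of \lemref{central-move-no-swap}, that reroutes $\pth_B[\mu_1,\mu_2]$ to lie entirely inside $\Rud$; this saves roughly twice the $x$-excess of $\pth_B(\mu_1)$ beyond $x_\rect^+$, more than enough to offset any extra horizontal motion forced on $A$, contradicting the optimality of $\plan$. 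In the different-$x$-order case, $\plan(\mu_1)$ and $\plan(\mu_3)$ share an $x$-order (since $[\mu_2,\mu_3]$ is itself a swap), and I would apply \lemref{plan-between-swap-endpoints} to the pair $(\mu_1,\mu_3)$---verifying that they are not a blocked pair, or otherwise handling that sub-case by the same optimality argument as before---to obtain a compliant re-parametrization that is $x$-separated throughout $[\mu_1,\mu_3]$ and thereby kills the second swap interval, contradicting the two-swap-interval assumption.

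The hard part will be the optimality/exchange argument that rules out the blocked-pair sub-cases: one must compare the $\ell_1$-length of the forced detour with that of the rerouted path through $\Rud$, using \obsref{free-segments} together with the rectangular structure of the components of $\fre\cap Z^\pm$ and $\fre\cap \Rud$. This step is analogous to, but not directly covered by, the constructions in \lemref{central-move-no-swap} and \lemref{tiny-one-swap}, and it is precisely what pins $\pth_B(\mu_1)$ to the central strip $\Rud$ despite the geometric freedom allowed by $\pth_B[\mu_0,\mu_3]\subset\I(\rect)$.
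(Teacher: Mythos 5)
There is a genuine gap, and it is structural rather than a missing detail. Your case split rests on the claim that, because $\plan$ has two swap intervals, ``the hypothesis of \lemref{central-move-no-swap} must fail,'' and in the different-$x$-order case you conclude by saying that \lemref{plan-between-swap-endpoints} ``kills the second swap interval, contradicting the two-swap-interval assumption.'' Neither step is valid: those lemmas only assert the \emph{existence of a different, compliantly modified plan}; the fact that such a modification exists neither contradicts the fixed plan $\plan$ having two swap intervals nor constrains where $\pth_B(\mu_1)$ lies in $\plan$. The statement to be proved is a geometric property of $\plan$ itself, so the different-$x$-order case is simply not handled, and the blocked-pair case is the only one in which you even attempt an argument about $\plan$ --- and that argument is exactly the part you defer as ``the hard part.'' Moreover, your proposed reroute of $\pth_B[\mu_1,\mu_2]$ ``entirely inside $\Rud$'' ignores collisions with $A$, which occupies $R\subset\Rud$ throughout $(\lambda_1,\lambda_2)$; collision avoidance is the delicate issue here, not just the length accounting.

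The paper's proof needs no case distinction on blocked pairs or $x$-orders. It uses the hypothesis that $\pth_B[\mu_0,\mu_3]$ stays in $\I(\rect)$ directly: since $B$ must pass from $Z^-$ to $Z^+$ without leaving $\I(\rect)$, the subpath $\pth_B[\mu_1,\mu_2]$ must return to $\Rud$ at height at least $y(\topx{Z^-})$; let $\mu'$ be the first such time. Then $\plan[\mu_0,\mu']$ is replaced by three interleaved moves: $B$ moves horizontally inside the rectangle $Z^*$ of $Z^-\cap\fre$ containing $\pth_B[\mu_0,\mu_1]$ (which reaches into $\Rud^+$ precisely because $\pth_B(\mu_1)\in\Rud^+$), then $A$ moves along an $L$-shaped monotone path inside $R$, then $B$ moves vertically inside $\Rud\cap\fre$. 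Feasibility follows because $\plan(\mu_0)$ is $y$-separated (so the horizontal moves are safe) and $\plan(\mu')$ is $x$-separated (so the vertical moves are safe). Strict improvement follows because $\pth_B(\mu_0)$ and $\pth_B(\mu')$ lie outside $\Rud^+$ while $\pth_B(\mu_1)\in\Rud^+$, so the original $\pth_B[\mu_0,\mu']$ crosses the right edge of $\Rud$ twice and is not $x$-monotone, whereas the new $B$-path is $xy$-monotone and the new $A$-path is no longer than the old one. This contradiction with optimality is what pins $\pth_B(\mu_1)$ to $\Rud$; if you want to salvage your write-up, replace the reduction to \lemref{central-move-no-swap}/\lemref{plan-between-swap-endpoints} with an explicit shortcut of this kind on $[\mu_0,\mu']$, including the interleaving that guarantees feasibility.
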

    \begin{proof}
    Assume wlog that $B$ lies below $A$ during $[\mu_0,\mu_1]$ and above $A$ during~$[\mu_2,\mu_3]$.
    Suppose for a contradiction that $\pth_B(\mu_1)\notin \Rud$ and assume wlog 
    that $\pth_B(\mu_1)\in \Rud^+$. Let $Z^*$ be the rectangle of $Z^- \cap \freesp$ that 
    contains $\pth_B[\mu_0,\mu_1]$. 
    Since $\pth_B[\mu_0,\mu_3]$ does not leave $\I(\rect)$, 
	the subpath~$\pth_B[\mu_1,\mu_2]$ visits~$\rect$.
	Thus we can define 
    \[
	    \mu':= \min\{\mu\in [\mu_1,\mu_2]: \pth_B(\mu)\in \Rud \mbox{ and } \pth_B(\mu)_y\geq y(\mytop(Z^-))\}.
    \]
    We now give a feasible plan~$\plan'$ 
    from $\plan(\mu_0)$ to $\plan(\mu')$ and show that 
    $\cost{\plan'[\mu_0,\mu']} < \cost{\plan[\mu_0,\mu']}$,
    thus contradicting the optimality of~$\plan$. 
    The new subplan~$\plan'[\mu_0,\mu']$ is as follows. See also Figure~\ref{fig:shortcut-below-Z}.
    \begin{quotation} \noindent \vspace*{-3mm}
    \begin{enumerate}[M1.] 
        \item Move $B$ horizontally from $\pth_B(\mu_0)$ to $(\pth_B(\mu')_x, \pth_B(\mu_0)_y)$.
        \item Move $A$ horizontally from $\pth_A(\mu_0)$ to $(\pth_A(\mu')_x, \pth_A(\mu_0)_y)$, then move $A$ vertically to $\pth_A(\mu')$.
        \item Move $B$ vertically to $\pth_B(\mu')$.
    \end{enumerate}
    \end{quotation}
    \begin{figure}
    	\centering
    	\includegraphics{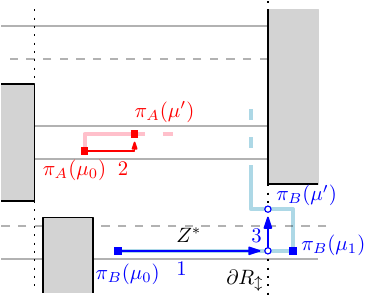}
    	\caption{The modification $\plan'$ of $[\mu_0,\mu']$ in Lemma~\ref{lem:mu1inR}. In this example, $\pth_B[\mu_0,\mu']$ enters $Z\cap \Rud$ via the right edge of $\Rud$.}
    	\label{fig:shortcut-below-Z}
    \end{figure}
	Since $Z^*$ intersects $\Rud^+$ and $\mu_0\in Z^*$, we conclude that the horizontal 
	segment from $\pi_B(\mu_0)$ to $(\pth_B(\mu')_x, \pth_B(\mu_0)_y)$ lies in $Z^*$.
	Hence, move~M1 lies in $Z^*\subset\freesp$. 
    Moreover, move~M3 lies in $\fre$ because $\Rud\cap\freesp$ is $x$-monotone.
    Finally, the moves of~$A$ do not leave $\R$ so they lie in $\fre$ as well.
    The horizontal movements in M1 and M2 are without conflict
    because $\plan(\mu_0)$ is $y$-separated. Moreover, the vertical movements
    in M2 and M3 are without conflict because $\plan(\mu')$ is $x$-separated.
    (The fact that $\plan(\mu')$ is $x$-separated follows from $\pth_B(\mu')\in Z$
    together with $\lambda_1 < \mu'<\lambda_2$ (so that $\pth_A(\mu')\not\in \topR$).
    Hence, $\plan'$ is a valid $\plan(\mu_0)\plan(\mu')$-plan. 
    
    Note that $\pi'_A$ and $\pi'_B$ are $xy$-monotone, while $\pth_B[\mu_0,\mu']$ is not; 
    indeed, $\pth_B(\mu_0),\pth_B(\mu')\not\in \Rud^+$ while $\pth_B[\mu_0,\mu']$ 
    is in $\Rud^+$ at time $\mu_1$, and so $\pth_B[\mu_0,\mu']$ properly 
    crosses the right edge of $\Rud$ at least twice. 
    So, we have a shortcut plan, which contradicts the optimality of~$\plan$. 
    \end{proof}
\begin{lemma}\label{lem:two-swap-leaves-influence}
    If $\plan$ has two swap intervals $[\mu_0,\mu_1]$ and $[\mu_2,\mu_3]$,
    and $\pth_B[\mu_0,\mu_3]$ does not leave $\I(\rect)$, then there is a 
    compliant modification $\plan^*$ with at most one swap interval.
\end{lemma}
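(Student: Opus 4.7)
The plan is to exploit Lemma \ref{lem:mu1inR} to pin down the location of $\pth_B$ at both $\mu_1$ and $\mu_2$, and then dispatch to either Lemma \ref{lem:central-move-no-swap} or Lemma \ref{lem:plan-between-swap-endpoints}. First I would apply Lemma \ref{lem:mu1inR} directly to conclude $\pth_B(\mu_1) \in \Rud$. By the symmetry of that lemma's proof---interchanging the roles of the two swap intervals and using $Z^+$ in place of $Z^-$---I would obtain the analogous conclusion $\pth_B(\mu_2) \in \Rud$. Since no point of $\Rud$ lies in a corner square of $\Rsq$, any pair involving $\pth_B(\mu_1)$ or $\pth_B(\mu_2)$ is automatically not a blocked pair. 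Moreover, because a swap interval requires a change of $x$-order, which is impossible when $\pth_B$ is confined to a tiny component, all four points $\pth_B(\mu_0), \ldots, \pth_B(\mu_3)$ lie in the giant component $\gamma(\R)$ of $\I(\R)$.

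Next I would split into cases according to whether $\plan(\mu_1)$ and $\plan(\mu_2)$ have the same $x$-order. If they do, then the hypotheses of Lemma \ref{lem:central-move-no-swap} are met, and that lemma immediately yields a compliant modification with no swap intervals at all. If they do not, then combining this with the swap property of $[\mu_0,\mu_1]$ forces $\plan(\mu_0)$ and $\plan(\mu_2)$ to share the same $x$-order. In this case I would apply Lemma \ref{lem:plan-between-swap-endpoints} to the pair $\mu_0,\mu_2$: both are endpoints of $x$-separated unsafe intervals, the points $\pth_B(\mu_0)$ and $\pth_B(\mu_2)$ lie in the same connected component $\gamma(\R)$, and the pair is not blocked because $\pth_B(\mu_2) \in \Rud$. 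The resulting compliant re-parametrization is $x$-separated throughout $[\mu_0,\mu_2]$, so the $x$-order is constant there and the swap interval $[\mu_0,\mu_1]$ has disappeared. Since the re-parametrization leaves $\plan$ unchanged outside $[\mu_0,\mu_2]$, the only swap interval that can remain is $[\mu_2,\mu_3]$, yielding a compliant modification with at most one swap interval as required.

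The main obstacle I expect is justifying the symmetric reapplication of Lemma \ref{lem:mu1inR} to obtain $\pth_B(\mu_2) \in \Rud$. The original proof is written asymmetrically in terms of the first swap interval and $Z^-$, so I need to verify that the corresponding time-reversed argument using $Z^+$ and the second swap interval really does go through---in particular, that the hypothesis $\pth_B[\mu_0,\mu_3] \subset \I(\R)$ is preserved under the swap and that the shortcut construction used there is symmetric in the $y$-direction. Once this symmetry is confirmed, the remainder of the proof is a routine case split that dispatches directly to the two existing lemmas.
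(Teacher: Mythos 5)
Your proposal is sound and follows the same basic dispatch structure as the paper (use \lemref{mu1inR} to rule out blocked pairs, then hand off to \lemref{central-move-no-swap} or \lemref{plan-between-swap-endpoints}), but your case split forces an extra ingredient the paper never needs. The paper splits on whether $\plan(\mu_1)$ and $\plan(\mu_3)$ have the same $x$-order: if yes, it applies \lemref{plan-between-swap-endpoints} to the pair $(\mu_1,\mu_3)$; if no, then $\plan(\mu_1)$ and $\plan(\mu_2)$ agree and \lemref{central-move-no-swap} applies. In both branches the only non-blockedness fact needed is $\pth_B(\mu_1)\in\Rud$, i.e.\ a single application of \lemref{mu1inR}. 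You instead split on $\mu_1$ versus $\mu_2$, and in your second branch you route through the pair $(\mu_0,\mu_2)$, which is why you need the symmetric statement $\pth_B(\mu_2)\in\Rud$. That symmetric claim is indeed true---time-reversing the plan (together with the up/down reflection already built into the ``wlog'' of \lemref{mu1inR}) preserves optimality, the decoupled/alternating/rectilinear structure, the corridor $R$, the interval $[\lambda_1,\lambda_2]$, and the conclusions of Lemmas~\ref{lem:unsafe-is-swap} and~\ref{lem:limit-swap-intervals}, and it swaps the two swap intervals, so \lemref{mu1inR} applied to the reversed plan yields $\pth_B(\mu_2)\in\Rud$---so your argument does go through. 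But the detour is avoidable: in your second case, since $\plan(\mu_1)\neq\plan(\mu_2)$ and $\plan(\mu_2)\neq\plan(\mu_3)$ in $x$-order, the configurations $\plan(\mu_1)$ and $\plan(\mu_3)$ automatically share the same $x$-order, so you can apply \lemref{plan-between-swap-endpoints} to $(\mu_1,\mu_3)$ using only $\pth_B(\mu_1)\in\Rud$, which is exactly what the paper does and spares you the mirrored re-derivation of \lemref{mu1inR}. The remaining ingredients of your write-up (both swap intervals forcing $\pth_B$ into the giant component, points of $\Rud$ never belonging to blocked pairs, and the re-parametrized $x$-separated stretch killing the first swap interval while leaving the rest of the plan untouched) match the paper's reasoning and are at the same level of rigor.
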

\begin{proof}
    If there are two swap intervals, then we can assume by \lemref{limit-swap-intervals}
    that $A$ is above $B$ in $[\mu_0,\mu_1]$ and $B$ is above $A$ in $[\mu_2,\mu_3]$.
	By \lemref{mu1inR}, $\pth_B(\mu_1)\in \Rud$, so 
	the point $\pth_B(\mu_1)$ is not part of a blocked pair. 
    If the $x$-orders of $\plan(\mu_1)$ and $\plan(\mu_3)$ are identical, 
    then \lemref{plan-between-swap-endpoints} implies there is compliant 
    re-parametrization such that $\plan[\lambda_1,\lambda_2]$ has 
    at most one swap interval.
    Otherwise, the $x$-orders of $\plan(\mu_1)$ and $\plan(\mu_2)$ are identical. 
    Then, by \lemref{central-move-no-swap}, there is a compliant modification 
    without swap intervals.
\end{proof}
The following lemma summarizes the main properties of swap intervals in an 
optimal plan, proved in
Lemmas \ref{lem:unsafe-is-swap}--\ref{lem:two-swap-leaves-influence}.

\begin{restatable}{lemma}{mainprop}
    \label{lem:main-properties}
    Let $\plan$ be an optimal \stplan, let $e$ be a bad horizontal segment on~$\pth_A$,
    and let $[\lambda_1,\lambda_2]$ be as defined in (\ref{def:lambdas}).
    Then there is a compliant modification $\plan^*$
    of $\plan$ such that any unsafe $x$-separated interval is a swap interval
    with the following properties.
    \begin{enumerate}[(i)]
    \item The plan $\plan^*[\lambda_1,\lambda_2]$ has at most one swap interval with $B$ above $A$
          and at most one swap interval with $B$ below~$A$.
    \item If $\plan^*[\lambda_1,\lambda_2]$ has two swap intervals $[\mu_0,\mu_1]$ 
    and $[\mu_2,\mu_3]$, then $\I(\rect)$ does not have tiny components, $\pi^*_B[\lambda_1,\lambda_2]$ contains a blocked pair, and
    $\plan^*_B[\mu_0,\mu_3]$ leaves~$\I(\rect)$.
    \end{enumerate}
\end{restatable}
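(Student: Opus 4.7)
The plan is to derive \lemref{main-properties} by sequentially composing the compliant modifications already established in this subsection, using the fact that compliant modifications are composable. First I would apply \lemref{unsafe-is-swap} to obtain a compliant re-parametrization of $\plan[\lambda_1,\lambda_2]$ in which every $x$-separated unsafe interval is a swap interval; this is exactly the main structural claim stated before (i) and (ii). To the resulting plan I would then apply \lemref{limit-swap-intervals}, yielding a further compliant re-parametrization that contains at most one swap interval with $B$ above $A$ and at most one with $B$ below~$A$. This directly establishes property~(i), and by composability the overall map from $\plan$ to the current plan is a compliant modification.

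For property~(ii), I plan to argue by contradiction on the current plan~$\plan^*$. Assume $\plan^*[\lambda_1,\lambda_2]$ has exactly two swap intervals $[\mu_0,\mu_1]$ and $[\mu_2,\mu_3]$; by (i) we may take $B$ below $A$ during the first and above $A$ during the second. If $\I(\rect)$ had a tiny component, \lemref{tiny-one-swap} would supply a compliant modification with at most one swap interval, contradicting our assumption; hence $\I(\rect)$ has no tiny components. Likewise, if $\pth_B^*[\mu_0,\mu_3]$ did not leave $\I(\rect)$, \lemref{two-swap-leaves-influence} would reduce the swap count to at most one, so $\pth_B^*[\mu_0,\mu_3]$ must leave $\I(\rect)$.

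For the remaining blocked-pair assertion, I would again reason by contradiction: suppose $\pi_B^*[\lambda_1,\lambda_2]$ contains no blocked pair. Then in particular none of the pairs $(\pth_B^*(\mu_i),\pth_B^*(\mu_j))$ with $i\in\{0,1\}$, $j\in\{2,3\}$ is blocked. Case analysis on the $x$-orders of $\plan^*(\mu_1)$ and $\plan^*(\mu_2)$: if they agree, \lemref{central-move-no-swap} applies (its two hypotheses are exactly matching $x$-order and non-blocked pair) and eliminates both swap intervals; if they differ, then, because each swap interval itself reverses $x$-order, some pair $\plan^*(\mu_i),\plan^*(\mu_j)$ with $i<j$ from different swap intervals has identical $x$-order, and \lemref{plan-between-swap-endpoints} applies (the corresponding endpoints lie in $\gamma(\R)$ since swap intervals force $\pth_B$ into the giant component, and are not a blocked pair by hypothesis) to re-parametrize $\plan^*[\mu_i,\mu_j]$ as $x$-separated throughout, again dropping the swap count. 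Either way we contradict that two swap intervals remain.

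The main obstacle is making the iteration terminate cleanly while preserving all previously obtained properties. The resolution is that every application of \lemref{tiny-one-swap}, \lemref{two-swap-leaves-influence}, \lemref{central-move-no-swap}, or \lemref{plan-between-swap-endpoints} is a compliant modification that strictly decreases the number of swap intervals in $[\lambda_1,\lambda_2]$; since this count is a non-negative integer bounded by the initial plan, the process halts after finitely many steps. At termination either at most one swap interval remains, or all three conditions in (ii) hold for the two surviving swap intervals, completing the proof.
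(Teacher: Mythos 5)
Your proposal is correct and follows essentially the same route as the paper: the paper gives no standalone proof of this lemma but states that it summarizes Lemmas~\ref{lem:unsafe-is-swap}--\ref{lem:two-swap-leaves-influence}, and your composition (unsafe-is-swap, then limit-swap-intervals for (i), then tiny-one-swap, two-swap-leaves-influence, and the blocked-pair contradiction via central-move-no-swap or plan-between-swap-endpoints for (ii)) is exactly the intended combination, with composability of compliant modifications and a decreasing swap-interval count handling termination just as the paper implicitly does.
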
    
The following property of an optimal plan with two swap intervals will be useful 
to prove the optimality of the plan resulting from the surgery described later. 
\begin{lemma}\label{lem:lambda2}
	Suppose $\plan[\lambda_1,\lambda_2]$ has a swap interval $[\mu_0,\mu_1]$ with $B$ below $A$, 
    and a swap interval $[\mu_2,\mu_3]$ with $B$ above $A$,
    where $\mu_1<\mu_2$.
    If $\lambda_1$ is an endpoint of an unsafe interval 
    then $\pth_B(\lambda_1)_y < \pth_A(\lambda_1)_y$, 
    and if $\lambda_2$ is an endpoint of
    an unsafe interval then $\pth_B(\lambda_2)_y > \pth_A(\lambda_2)_y$.
\end{lemma}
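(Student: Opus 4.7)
\bigskip
\noindent\textit{Proof plan.}
The plan is to prove the claim for $\lambda_1$ by contradiction; the argument for $\lambda_2$ is symmetric by time reversal. Suppose for contradiction that $\lambda_1$ is the left endpoint of an unsafe interval $U_1 = [\lambda_1,\nu_1]$ in which $B$ lies above~$A$, i.e., $\pth_B(\lambda_1)_y \geq \pth_A(\lambda_1)_y + 1$. Since the robots are $y$-separated throughout $U_1$, the $y$-order is constant on $U_1$, so $B$ lies above~$A$ during the whole of~$U_1$.

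Next, I would locate $U_1$ relative to the two swap intervals. Since $[\mu_0,\mu_1]$ has $B$ below~$A$ and $[\mu_2,\mu_3]$ satisfies $\mu_0<\mu_2$, and since unsafe intervals are maximal (hence either identical or disjoint), $U_1$ can coincide with neither swap interval; combined with $\lambda_1\leq\mu_0$, this forces $\nu_1<\mu_0$. Then the function $t\mapsto \pth_A(t)_y - \pth_B(t)_y$ is continuous, is at most $-1$ on $U_1$, and is at least $1$ at time~$\mu_0$, so there is a time $\nu^*\in(\nu_1,\mu_0)$ at which $|\pth_A(\nu^*)_y - \pth_B(\nu^*)_y| < 1$. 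Feasibility then forces $|\pth_A(\nu^*)_x - \pth_B(\nu^*)_x|\geq 1$, so $B$ must pass completely to one side of $A$ between~$\nu_1$ and~$\mu_0$; an analogous situation occurs between $\mu_1$ and $\mu_2$, where by \lemref{main-properties}(ii) $\pth_B$ must even leave $\I(\R)$.

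The main step is then to derive a contradiction with optimality by building a compliant shortcut~$\plan'$ that keeps $B$ consistently above $A$ on the relevant time window. Concretely, I would replace $\plan$ on an interval enclosing $[\lambda_1,\mu_0]$ (and symmetrically on an interval ending near $\mu_3$) by a plan in which $B$ remains in the component of $\I(\R)\cap \{y > \pth_A(\cdot)_y\}$ throughout, and $A$ follows an $xy$-monotone path inside~$\R$ between the corresponding endpoints. The $xy$-monotonicity / $x$-monotonicity of shortest paths in $\I(\R)$ given by \lemref{xy-shortest path}, \lemref{x-mono}, \lemref{x-geodesic} and \lemref{xy-monotone-iff}, together with \obsref{free-segments}, supply the free-space paths needed for the construction, and \lemref{x-sep} produces a decoupled $x$-separated plan joining them. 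The cost saving comes from the fact that in the original plan $\pth_B$ is not $y$-monotone between $\lambda_1$ and $\mu_0$ (it enters $\{y<\pth_A(\cdot)_y -1\}$ from above) while $\pth_B'$ can be chosen $y$-monotone on that window; the vertical saving of~$\pth_B$ is at least $2$, whereas $\pth_A$ is lengthened by at most the horizontal distance already covered by~$B$ in going around~$A$. Comparing $\|\plan'\|$ with $\|\plan\|$ via the same accounting as in the proof of \lemref{mu1inR} yields $\|\plan'\|<\|\plan\|$, contradicting the optimality of~$\plan$.

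The main obstacle is the last paragraph: one must choose the endpoints of the surgery interval so that the resulting plan is a compliant modification (i.e.\ parking spots of type (T1)--(T3), alternation, vertical alignment, and no regress), and then carefully bound the change in $\|\pth_A\|+\|\pth_B\|$. In particular one needs to handle two sub-cases depending on whether, between $\mu_1$ and $\mu_2$, $\pth_B$ exits $\I(\R)$ through the left side or the right side of~$\R$, since this determines on which side of~$A$ one re-routes $B$ in~$\plan'$. I expect the case analysis there, rather than the high-level idea, to be where most of the work lies.
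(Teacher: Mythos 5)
Your high-level idea---contradicting optimality because, under the negated conclusion, $\pth_B$ needlessly switches sides of $A$ and back---is the same idea the paper uses (phrased there for $\lambda_2$: a shortcut from a time in the swap interval $[\mu_0,\mu_1]$ to $\lambda_2$ in which $B$ never visits $Z^+$). However, the step you yourself flag as ``the main obstacle'' is precisely the step that is missing, and what you do sketch of it does not work as stated. First, the surgery window is wrong: you propose to replace $\plan$ ``on an interval enclosing $[\lambda_1,\mu_0]$'' by a plan in which $B$ stays above $A$ throughout, but the right endpoint of any such window must be a time at which, in the original plan, $B$ is above $A$ --- i.e.\ a time in (or tied to) the second swap interval $[\mu_2,\mu_3]$ --- otherwise the endpoint configurations cannot coincide; at or near $\mu_0$ robot $B$ is below $A$, so ``$B$ remains above $A$ throughout'' is impossible on the window you describe, and the parenthetical second surgery ``ending near $\mu_3$'' does not repair this (a single strictly improving feasible replacement is what is needed). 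Second, the cost accounting is not valid: comparing a ``vertical saving of at least $2$'' for $B$ against a lengthening of $\pth_A$ ``by at most the horizontal distance covered by $B$'' is no domination argument, since that horizontal distance is not bounded by $2$. In the paper's proof no lengthening of $\pth_A$ is incurred at all: $\pth_A$ is replaced on the window by an $xy$-monotone or L-shaped path inside $\R$ between the same endpoints (hence no longer than the original), and $\pth_B$ is replaced by a shortest path between the same endpoints (obtained via \lemref{xy-monotone-iff} and stitched into a feasible decoupled plan via \lemref{x-sep}), which is strictly shorter because the original $\pth_B$ on that window detours into the opposite $Z$-region; the relevant case split is whether $B$'s two endpoints lie in the same rectangle of $Z^\pm\cap\fre$ or in different ones, not through which side $B$ exits $\I(\R)$.

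Two smaller points. You do not need a \emph{compliant} modification here: the modified plan is used only to contradict the optimality of $\plan$, so feasibility together with strictly smaller cost suffices, and insisting on parking-spot types (T1)--(T3), alternation, and no-regress only obscures the argument. Conversely, the feasibility of the replacement plan does require the explicit move-by-move separation checks ($x$-separation while $B$ traverses its monotone path, $x$-/$y$-separation during the two links of $A$'s L-shaped move) that the paper carries out and that your sketch leaves implicit.
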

    \begin{figure}
    	\centering
    \includegraphics{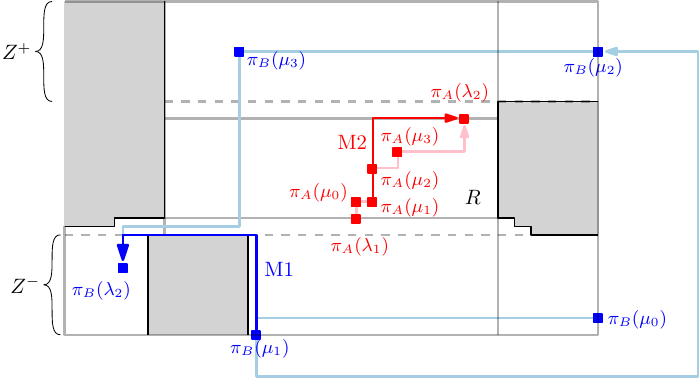}
 	    \caption{Illustration of the proof of \protect\lemref{lambda2}, using the shortcuts defined by M1 and M2. 
        }
    
    	\label{fig:lambda2}
    \end{figure}
\begin{proof}
    We prove the lemma for $\lambda_2$; the proof for $\lambda_1$ is analogous.
    Assume for a contradiction that $\lambda_2$ is an endpoint
    of an unsafe interval and that $\pth_B(\lambda_2)_y \leq \pth_A(\lambda_2)_y$. 
    Note that $\plan(\lambda_2)$ is $y$-separated and that $\pth_B(\lambda_2)\in Z^-$. 
    We will show there is a shortcut plan from a time in $[\mu_0,\mu_1]$ to $\lambda_2$ in which $\pth_B$ avoids visiting $\Rxy^+$.
    
    If $\pth_B(\lambda_2)$ lies in the same rectangle of $Z^-\cap \freesp$ as $\pth_B[\mu_0,\mu_1]$, then there are $xy$-monotone paths in $\freesp$ from $\pth_B(\mu_0)$ to $\pth_B(\lambda_2)$ and from $\pth_A(\mu_0)$ to $\pth_A(\lambda_2)$. 
    Note that $\plan(\mu_0)$ and $\plan(\lambda_2)$ are both $y$-separated configurations with the same $y$-order. 
    Hence, \lemref{x-sep} implies there is a valid plan from the configuration
    $\plan(\mu_0)$ to the configuration~$\plan(\lambda_2)$ in which $A$ and $B$ use $xy$-monotone paths.
    This plan is strictly shorter than $\plan[\mu_0,\lambda_2]$, as $\pth_B[\mu_0,\lambda_2]$ visits $Z^+$.
    
    Otherwise, $\pth_B(\lambda_2)$ lies in a different rectangle 
    of $Z^-\cap \freesp$ than $\pth_B[\mu_0,\mu_1]$. Assume wlog that 
    the rectangle of $\pth_B(\lambda_2)$ lies to the left of $\pth_B[\mu_0,\mu_1]$, 
    so we have $\pth_B(\lambda_2)_x < \pth_B(\mu_0)_x$ and $\pth_B(\lambda_2)_x < \pth_B(\mu_1)_x$. 
    Let $i\in \{0,1\}$ be such that $\pth_B(\mu_i)_x < \pth_A(\mu_i)_x$.
    By \lemref{xy-monotone-iff}, there is an $x$-monotone path~$\varphi_B$
    from $\pth_B(\mu_i)$ to $\pth_B(\lambda_2)$ that is shortest in~$\freesp$. 
    Now consider the following plan $\plan'$ from $\plan(\mu_i)$ to $\plan(\lambda_2)$,
    illustrated in \figref{lambda2}.
    \smallskip
    \begin{enumerate}[M1.]
        \item Move $B$ along $\varphi_B$ from $\pth_B(\mu_i)$ to $\pth_B(\lambda_2)$.
        \item Move $A$ from $\pth_A(\mu_i)$ to $\pth_A(\lambda_2)$ along an
              L-shaped path whose first link is vertical.
    \end{enumerate}
    \smallskip
    The plan~$\plan'$ clearly lies in $\freesp$. Move~M1 is without conflict, 
    as the configuration remains $x$-separated because $\varphi_B$ is $x$-monotone. 
    Move~M2 is without conflict, as the configuration is $x$-separated during 
    the vertical move, and $y$-separated during the horizontal move.
    So, $\plan'$ is a valid plan of which the paths are shortest paths in~$\freesp$.
    The plan $\plan$ does not consist of two shortest paths,
    however, since $\pth_B[\mu_i,\lambda_2]$ visits $Z^+$.
    
    Thus, we obtain a shortcut in both cases, which contradicts the optimality of~$\plan$.
    Hence, if $\lambda_2$ lies in an unsafe interval, then $\pth_B(\lambda_2)_y > \pth_A(\lambda_2)_y$. 
\end{proof}

Next, we prove two lemmas that narrow down the possible locations of $A$ before time~$\lambda_1$.
These lemmas will be useful in proving the feasibility
of the modified plan if we perform surgery on $\pth_B$ outside
$[\lambda_1, \lambda_2]$. 
For technical reasons that will become clear later, we not only need to work 
with $\horF$ but also with $\horzF$, the rectangular subdivision of $\F$ induced by the 
horizontal lines of $L_0$. (Recall that these are the horizontal lines containing the horizontal edges of $\F$
and the horizontal lines containing the points $s_A,s_B,t_A,t_B$.)
\begin{lemma}\label{lem:before-lambda0-not-other-side}
	Suppose that $\pth_A(\lambda_1)\in \botR$ and $\pth_A(\lambda_2)\in \topR$, 
    that there is a swap interval $[\mu_0,\mu_1]$ with $B$ below $A$, and that
	$y(\botR)\geq y(\topQ)$ for the rectangle $Q$ of $\horzF$ that contains $\pi_B(\mu_1)$.
	Let $\nu_0<\lambda_1$ be the first time such that
    $\pth_B[\nu_0,\mu_1]\subset Q$. Then 
	for any $\nu\in[\nu_0,\lambda_1]$, we have $\pth_A(\nu)\notin \interior(\rect) \cup (\Rxy^+\cap \gamma(\rect))$.
\end{lemma}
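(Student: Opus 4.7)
My plan is to prove this lemma by contradiction. Suppose there exists $\nu^* \in [\nu_0, \lambda_1]$ with $\pi_A(\nu^*) \in \interior(R) \cup (\Rxy^+ \cap \gamma(R))$; I will construct a strictly shorter feasible plan $\plan^*$, contradicting the optimality of $\plan$.

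The key observation is that $\pi_A$'s $y$-coordinate on $[\nu^*, \mu_1]$ contains a wasteful ``dip'' down to level $y(\botR)$. Indeed, the assumption on $\nu^*$ gives $y(\pi_A(\nu^*)) > y(\botR)$; the hypothesis $\pi_A(\lambda_1) \in \botR$ gives $y(\pi_A(\lambda_1)) = y(\botR)$; and since a swap interval has distinct $x$-orders at its endpoints we have $\mu_1 > \mu_0 \geq \lambda_1$, which, combined with $\pi_A(\lambda_1,\lambda_2) \subseteq R \setminus (\topR \cup \botR)$ and the hypothesis $\pi_A(\lambda_2) \in \topR$, yields $y(\pi_A(\mu_1)) > y(\botR)$. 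I propose to eliminate this dip by \emph{lifting} $\pi_A$ near $\lambda_1$ by a small $\epsilon > 0$, while leaving $\pi_B$ untouched. For $\epsilon$ sufficiently small, continuity of $\pi_A$ at $\pi_A(\lambda_1) \in \botR \subseteq R$ together with $\nu_0 < \lambda_1 < \mu_1$ ensures that the connected component $[\lambda^-, \lambda^+]$ of $\{t : y(\pi_A(t)) \leq y(\botR) + \epsilon\}$ containing $\lambda_1$ is a small interval inside $[\nu_0, \mu_1]$ on which $\pi_A$ remains in $R$. I then define
\[
\pi^*_A(t) := \bigl(\pi_A(t)_x,\ \max\{\pi_A(t)_y,\ y(\botR) + \epsilon\}\bigr)
\]
on $[\lambda^-, \lambda^+]$, with $\pi^*_A := \pi_A$ elsewhere and $\pi^*_B := \pi_B$.

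The new plan $\plan^* := (\pi^*_A, \pi^*_B)$ has cost at least $2\epsilon$ less than $\plan$: the lift reduces the total vertical variation of $\pi_A$ on $[\lambda^-, \lambda^+]$ by at least $2\epsilon$ (the original path descends at least $\epsilon$ from $y(\botR) + \epsilon$ to reach $y(\botR)$ and ascends at least $\epsilon$ back), while the $x$-coordinate profile is unchanged. For feasibility, the lifted path stays in $R \subseteq \F$ by our choice of lift interval, and non-collision with $\pi_B$ is preserved because for $t \in [\lambda^-, \lambda^+]$ we have $\pi_B(t) \in Q$, so $\pi_B(t)_y \leq y(\topQ) \leq y(\botR) \leq \pi^*_A(t)_y$ while $\pi^*_A(t)_x = \pi_A(t)_x$, giving
\[
\|\pi^*_A(t) - \pi_B(t)\|_\infty \geq \|\pi_A(t) - \pi_B(t)\|_\infty \geq 1.
\]
Hence $\plan^*$ is feasible and strictly shorter than $\plan$, the desired contradiction.

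The main obstacle is the careful definition of the lift interval $[\lambda^-, \lambda^+]$: it must be contained in $[\nu_0, \mu_1]$ so that $B$ is provably in $Q$ throughout the lift, and $\pi_A$ must stay inside $R$ during the lift so the modified path remains in $\F$. Both are handled by taking $\epsilon$ small enough and invoking continuity of $\pi_A$ at $\lambda_1$, together with the strict inequalities $\nu_0 < \lambda_1 < \mu_1$. The hypothesis $y(\botR) \geq y(\topQ)$ is essential: it places $Q$ weakly below $R$, which is what makes the upward lift of $\pi_A$ preserve non-collision with $\pi_B$.
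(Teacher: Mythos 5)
There is a genuine gap. Your high-level idea---that $\pth_A$ makes a wasteful dip down to $\botR$ at $\lambda_1$ while sitting above $y(\botR)$ both at $\nu^*$ and at $\mu_1$, contradicting optimality---is indeed the engine behind the paper's proof as well, but your local $\epsilon$-lift of $\pth_A$ alone does not go through. First, feasibility with respect to the obstacles is not established: the connected component $[\lambda^-,\lambda^+]$ of the sublevel set $\{t: \yco{\pth_A(t)}\leq y(\botR)+\epsilon\}$ is \emph{not} made small by continuity at $\lambda_1$, and before $\lambda_1$ robot $A$ need not be in $\rect$ at all. For example, $A$ may approach $\botR$ from below after travelling a long stretch at height just below $y(\botR)$, possibly outside the $x$-range of $\rect$; since the vertical edges of the corridor $\rect$ lie on obstacle edges, the lifted points $(\xco{\pth_A(t)},\,y(\botR)+\epsilon)$ can land inside an obstacle, so the claim ``the lifted path stays in $R\subseteq\F$ by our choice of lift interval'' is unsupported. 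Second, the non-collision inequality $\norm{\pth_A^*(t)-\pth_B(t)}_\infty \geq \norm{\pth_A(t)-\pth_B(t)}_\infty$ only holds if $A$ is (weakly) above $B$ at time $t$; on $[\nu_0,\lambda_1]$ nothing in the hypotheses prevents $A$ from being below $B$ (e.g.\ below $Q$) with $x$-distance less than $1$, in which case lifting $A$ up to $y(\botR)+\epsilon$ can shrink the $y$-gap (note $y(\topQ)$ can equal $y(\botR)$) and create a collision. Both problems are symptoms of the same issue: an $A$-only, purely local perturbation cannot be certified feasible without structural information about where $A$ and $B$ are during the lift window.

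The paper avoids these difficulties by a joint, non-local re-plan: it picks a time $\mu^-$ inside the swap interval with $\xco{\pth_A(\mu^-)}=\xco{\pth_B(\mu^-)}$ and $\pth_B(\mu^-)\in\Rud\cap Z^-$, observes that $\pth_A[\nu,\mu^-]$ cannot be $xy$-monotone (it is above $y(\botR)$ at both ends but on $\botR$ at $\lambda_1$), and then replaces the whole subplan on $[\nu,\mu^-]$ by an explicitly ordered decoupled plan in which $B$ follows an L-shaped path inside $Q$ and $A$ follows an $xy$-monotone path (via \lemref{xy-shortest path}), with collision-freeness argued from the $y$-separation at $\mu^-$ and the direction of the vertical moves. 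If you want to salvage your approach, you would need to restrict the lift to times when $A$ provably lies in $\overline{\rect}$ and handle the case where $A$ enters $\rect$ through $\botR$ exactly at $\lambda_1$ (where the local savings vanish), which essentially forces you back to a global argument of the paper's type.
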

\begin{proof}
	For the sake of contradiction, suppose there is a time $\nu \in [\nu_0,\lambda_1]$
	such that $\pth_A(\nu)\in \interior(\rect) \cup (\Rxy^+ \cap \gamma(R))$. 
	Let $\mu^- \in [\mu_0,\mu_1]$ be a time such that 
    $\xco{\pth_A(\mu^-)} = \xco{\pth_B(\mu^-)}$ and $\pth_B(\mu^-) \in \Rud \cap Z^-$, 
    which must exist since $[\mu_0,\mu_1]$ is a swap interval with $B$ below $A$ 
    and $\pth_A[\mu_1,\mu_2]\subset R$. Note that $\lambda_1\leq\mu_0<\mu^-<\mu_1$;
	indeed, since $\plan(\mu_0)$ and $\plan(\mu_1)$ are $x$-separated, we cannot have $\mu^-=\mu_0$ or $\mu^-=\mu_1$.
	
	We now show how to create a shortcut $\plan'$ of $\plan[\nu,\mu^-]$, thus obtaining the desired contradiction.
	Note that $\pth_A(\nu)_y> y(\botR)$ because $\pth_A(\nu)\in \interior(\rect) \cup \Rxy^+$.
    Moreover, $\pth_A(\mu^-)_y > y(\botR)$ 
    because $\lambda_1<\mu^-<\lambda_2$. Since $\nu<\lambda_1<\mu^-$ and $\pth_A(\lambda_1)\in \botR$,
    the path $\pth_A[\nu,\mu^-]$ is not $xy$-monotone. 
	Thus, it suffices to show that there exists a feasible 
	$(\plan(\nu),\plan(\mu^-))$-plan $\plan'=(\pth'_A,\pth'_B)$ such that $\pth'_A$ and $\pth'_B$ are 
	both $xy$-monotone; see \figref{shortcut-above}.
	Let $p_A=(\xco{\pth_A(\nu)}, \yco{\pth(\mu^-)})$ and 
	$p_B=(\xco{\pth_B(\nu)},\yco{\pth_B(\mu^-)})$.
	We have two cases.
    \begin{itemize}
    \item 
    If $\yco{\pth_A(\nu)} \geq \yco{\pth_A(\mu^-)}$ then $\plan'$ consists of the following moves:
		\begin{enumerate}[M1.]
			\item Move $B$ along an L-shaped path from $\pth_B(\nu)$ via $p_B$ to $\pth_B(\mu^-)$.
			\item Move $A$ along an $xy$-monotone path from $\pth_A(\nu)$ 
				to $\pth_A(\mu^-)$.
		\end{enumerate}
	\item 
    Otherwise---that is, if $\yco{\pth_A(\nu)} < \yco{\pth_A(\mu^-)}$---then~$\plan'$ consists of the 
	following moves:
		\begin{enumerate}[M1.]
			\item Move $A$ along a vertical segment from $\pth_A(\nu)$ to $p_A$.
			\item Move $B$ along an L-shaped path from $\pth_B(\nu)$ via $p_B$ to $\pth_B(\mu^-)$.
			\item Move $A$ along a horizontal segment from $p_A$ to to $\pth_A(\mu^-)$.
		\end{enumerate}
    \end{itemize}
	\begin{figure}
		\centering
        \includegraphics{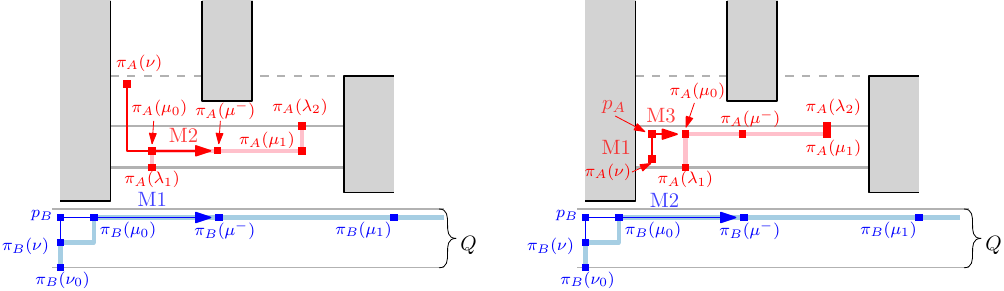}
		\caption{The shortcut plan from $\plan(\nu)$ to $\plan(\mu^-)$ in \lemref{before-lambda0-not-other-side}
        when $\yco{\pth_A(\nu)} \geq \yco{\pth_A(\mu^-)}$ (left) 
        and when $\yco{\pth_A(\nu)} < \yco{\pth_A(\mu^-)}$
        (right).}
        \label{fig:shortcut-above}
	\end{figure}
    We first argue that the new paths $\pth'_B$ and $\pth'_A$ lie in~$\fre$.
    Note that $\pth'_B$ is the same in both cases, and that it lies in $Q$,
    which is contained in~$\fre$. We now consider $\pth'_A$.
    By \lemref{xy-shortest path},
    there is an $xy$-monotone $(\pth_A(\nu),\pth_A(\mu^-))$-path in $\fre$,
    so we can ensure that $\pth'_A\subset \fre$ in the first case.
    Furthermore, 
	if $\yco{\pth_A(\nu)} < \yco{\pth_A(\mu^-)}$ then $\pth_A(\nu) \in R$ and thus $\pth'_A\subset \fre$
    in the second case as well. 
    Hence, both $\pth_A', \pth'_B$ lie in $\fre$.
	By construction, both $\pth_A'$ and $\pth_B'$ are $xy$-monotone. 
    
    It remains to show that the two robot do not collide in $\plan'$.
	Recall that 
	$\plan(\mu^-)$ is $y$-separated.
    If a robot moves along a horizontal segment in $\plan'$ at 
	some time $\lambda$, then we always have $\yco{\pth_A(\lambda)} \geq \yco{\pth_A(\mu^-)}$ and 
	$\yco{\pth_B(\lambda)} \leq \yco{\pth_B(\mu^-)}$, which implies that
    $\plan'(\lambda)$ is 
	$y$-separated. Hence,  horizontal motions of $\plan'$ are conflict-free. If a robot 
	moves along a vertical segment at some time $\lambda$ in $\plan'$, then either the robot is 
	moving away from the other robot 
    or they remain $y$-separated along the segment because the robots are $y$-separated
    at the end of the vertical move. Thus, the 
	vertical moves are also conflict free. 
    Hence, the two robot do not collide in $\plan'$. This completes the proof of the lemma.
\end{proof}
\begin{lemma}\label{lem:before-lambda1-x-sep-below}
	Suppose that $\pth_A(\lambda_1)\in \botR$ and $\pth_A(\lambda_2)\in \topR$, 
	that there is a swap interval $[\mu_0,\mu_1]$ with $B$ below $A$, and that
	$y(\botR)\geq y(\topQ)$ for the rectangle $Q$ of $\horzF$ that contains $\pi_B(\mu_1)$.
    Let $Q^+$ be the closed rectangle with bottom edge $\topQ$ and height~$1$.
    Suppose there is a time $\nu < \lambda_1$ at which  $\pth_B[\nu,\mu_1]\subset Q$ and $\pth_A(\nu)\in Q\cup Q^+\cup \gamma(R)$ and $\plan(\nu)$ is $x$-separated. 
    Then there is a compliant 
    modification $\plan^*[0,\mu_1]$ of $\plan[0,\mu_1]$ such that 
    $\plan^*(\nu')$ is $x$-separated for all $\nu'\in [\nu,\mu_0]$.
\end{lemma}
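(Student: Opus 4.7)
My strategy is to adapt the argument of \lemref{plan-between-swap-endpoints} to the present setting, constructing $\plan^*$ by replacing $\plan[\nu,\mu_0]$ with a decoupled subplan that is $x$-separated throughout $[\nu,\mu_0]$, while leaving $\plan[0,\nu]$ and $\plan[\mu_0,\mu_1]$ unchanged. The first step is to show that $\plan(\nu)$ and $\plan(\mu_0)$ have the same $x$-order. Since $\pth_B$ is confined to the rectangle $Q$ throughout $[\nu,\mu_1]$, and $\pth_A$ must traverse from $Q\cup Q^+\cup\gamma(R)$ at time~$\nu$ into~$R$ at time~$\mu_0$, any disagreement in the $x$-order would force $\pth_A$ and $\pth_B$ to cross in $x$-coordinate during $[\nu,\mu_0]$. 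I would then derive a contradiction by constructing a shortcut in the style of \lemref{mu1inR} and \lemref{lambda2}: reroute $A$ along an $xy$-monotone path---provided by \lemref{xy-shortest path} when $\pth_A(\nu)\in\gamma(R)$, and by an explicit vertical lift out of $Q$ or $Q^+$ otherwise---while routing $B$ along an $x$-monotone path inside~$Q$. The structural result \lemref{before-lambda0-not-other-side} helps pin down where $\pth_A$ can be near time~$\nu$ and rules out obstructing placements.

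Given that the $x$-orders agree, I would construct the required $x$-monotone paths for each robot and invoke \lemref{x-sep}. For $B$, since $\pth_B(\nu),\pth_B(\mu_0)\in Q$ and $Q$ is a rectangle in $\freesp$, any L-shaped path within $Q$ is $x$-monotone and of shortest $\ell_1$-length. For $A$, \lemref{xy-shortest path} yields an $xy$-monotone shortest path from $\pth_A(\nu)$ to $\pth_A(\mu_0)$ when $\pth_A(\nu)\in\gamma(R)$; when $\pth_A(\nu)\in Q\cup Q^+$, I would first lift $A$ vertically through the side edge of $Q$ (or through $Q^+$) into $\gamma(R)$, then apply \lemref{xy-shortest path} to finish reaching $\pth_A(\mu_0)\in R$, producing a concatenation that is $x$-monotone overall. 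Applying \lemref{x-sep} with $\bp:=\plan(\nu)$ and $\bq:=\plan(\mu_0)$ then yields a decoupled feasible plan that is $x$-separated throughout $[\nu,\mu_0]$.

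Finally, I would verify that replacing $\plan[\nu,\mu_0]$ with this new subplan produces a compliant modification of $\plan[0,\mu_1]$, mirroring the verification at the end of \lemref{plan-between-swap-endpoints}. Feasibility is immediate from the construction; optimality holds because each new subpath is a shortest path in $\freesp$, so $\cost{\plan^*}\leq\cost{\plan}$; vertical alignment and alternation are guaranteed by placing parking spots at points of type (T1)--(T3) and inserting zero-length segments where needed at the two new breakpoints (at times~$\nu$ and~$\mu_0$ of the decoupled plan); and no regress holds because the new subpath for~$A$ is $xy$-monotone and does not lie in the window $[\lambda_1,\lambda_2]$, so no new bad horizontal segments are introduced. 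I expect the main obstacle to be the first step---establishing that the $x$-orders at $\nu$ and $\mu_0$ agree---as this requires a careful shortcut argument through several subcases based on where $\pth_A(\nu)$ lies and leverages the optimality of $\plan$ together with the structural results already at our disposal.
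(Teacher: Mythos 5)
Your first step is where the proposal breaks: you set out to prove that $\plan(\nu)$ and $\plan(\mu_0)$ have the same $x$-order, and to derive a contradiction via a shortcut otherwise. That statement is not true in general, and no contradiction is available: since $[\mu_0,\mu_1]$ is a swap interval, the $x$-orders at its two endpoints \emph{differ}, so $\plan(\nu)$ may share its $x$-order with $\plan(\mu_1)$ rather than with $\plan(\mu_0)$ (e.g., $A$ left of $B$ at times $\nu$ and $\mu_1$, but right of $B$ at $\mu_0$); such a plan can be feasible and optimal, so optimality-based shortcutting cannot rule it out. The paper's proof avoids this entirely: it observes that $\plan(\nu^+)$ (for a slightly adjusted time $\nu^+\geq\nu$) has the same $x$-order as $\plan(\mu_i)$ for \emph{some} $i\in\{0,1\}$, and re-parametrizes $\plan[\nu^+,\mu_i]$ up to whichever endpoint matches; since $\mu_0\leq\mu_i$, $x$-separation on $[\nu^+,\mu_i]$ still yields the claimed $x$-separation on $[\nu,\mu_0]$. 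Without this pivot your argument has no way forward in the mismatched case, and you yourself flag this step as the main obstacle.

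Two further points are under-argued, though secondary. First, applying \lemref{x-sep} directly at times $\nu$ and $\mu_0$ creates parking spots at $\plan(\nu)$ and $\plan(\mu_0)$ that need not be of types (T1)--(T3); the paper replaces $\nu$ by a carefully chosen $\nu^+$ and the swap-interval endpoint by $\mu'$ (as in \lemref{plan-between-swap-endpoints}) precisely so the new parking spots are allowed. Second, your construction of the $x$-monotone path for $A$ (``lift vertically out of $Q$ or $Q^+$ into $\gamma(R)$'') is neither always available nor always shortest; the paper's claim inside this proof splits into the cases $\pth_A(\nu^+)_x\in I_x$ or not, and $\pth_A(\nu^+)\in Q$ or $Q^+$, and in the $Q^+$ case it only obtains an $x$-monotone (not $xy$-monotone) path via \lemref{x-mono} with $h:=\topQ$, whose length is bounded by $\|\pth_A[\nu^+,\mu']\|$ using the fact that $\pth_A$ must visit $\botR$. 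That length bound is what allows one to conclude, from the optimality of $\plan$, that the original subpaths were already monotone, so \lemref{x-sep} can be applied to them and the modification is in fact a re-parametrization---which is how {\sc No Regress} and {\sc Vertical Alignment} are actually secured. Your remark that the new subpath of $A$ ``does not lie in the window $[\lambda_1,\lambda_2]$'' is also inaccurate, since $\mu_0\in[\lambda_1,\lambda_2]$.
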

	
    \begin{figure}
		\centering
        \makebox[\textwidth][c]{\includegraphics[width=1.2\textwidth]{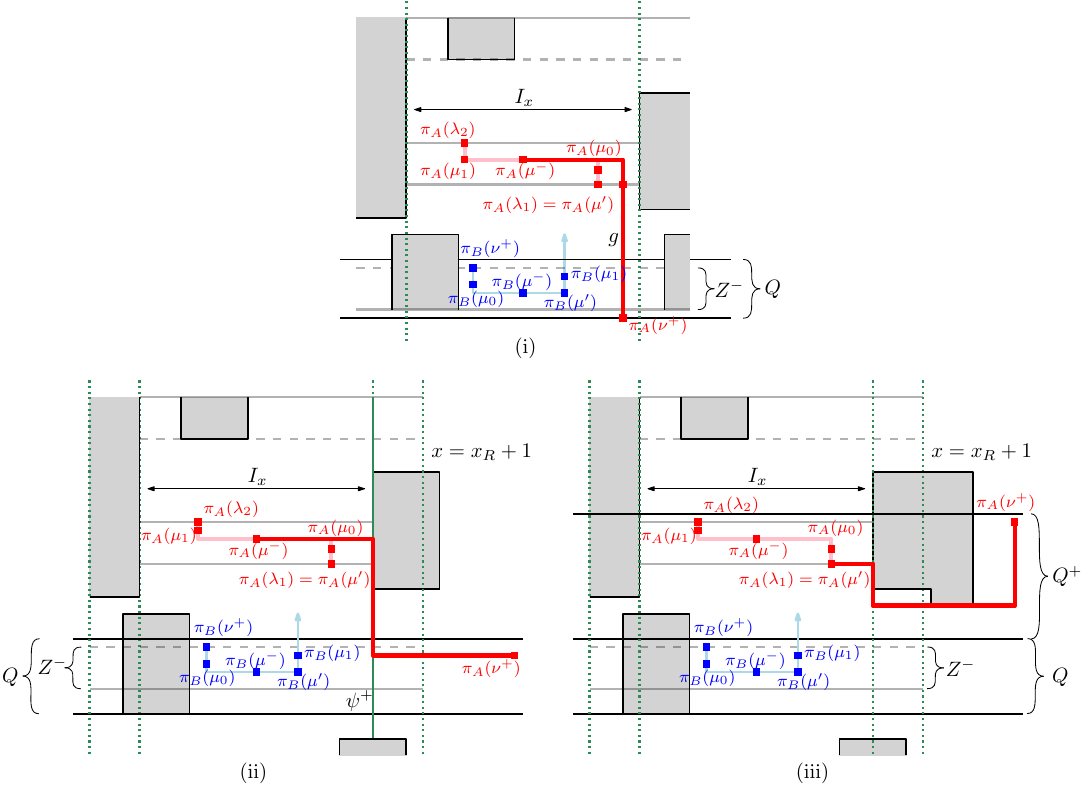}}
		\caption{Illustrations of the proof of \lemref{before-lambda1-x-sep-below}. For readability, the original path $\pth_A[\nu^+,\lambda_1]$ is not shown and the horizontal axes are compressed. The dotted lines support the vertical edges of $R$ and $R_\Box$. (i) $\pth_A(\nu^+)_x \in I_x$; (ii) $\pi_A(\nu^+)_x \notin I_x$; (iii) $\pth_A(\nu^+) \in Q^+$.}
		\label{fig:before-lambda1-x-sep}
	\end{figure}
\begin{proof}
      We will construct a compliant re-parametrization $\plan^*[\nu^+,\mu_i]$ of $\plan[\nu^+,\mu_i]$,
       for a suitable $\nu^+\geq \nu$ and for some $i\in\{0,1\}$ so that
       $\plan^*(\nu')$ is $x$-separated for all $\nu' \in [\nu, \mu_i]$.
       By taking $\plan^*[0,\nu^+]=\plan[0,\nu^+]$
       and $\plan^*[\mu_i,\mu_1]= \plan[\mu_i,\mu_1]$ we will obtain the desired 
       re-parametrization of $\plan[0,\mu_1]$. We start by defining~$\nu^+$, for
       which we have two cases.
       \begin{itemize}
       \item If $\pth_B$ enters $Q$ or $\pth_A$ enters $Q\cup Q^+\cup \gamma(R)$ at time $\nu$,
             then set $\nu^+:= \nu$.
       \item Otherwise $\plan$ is $x$-separated at time $\nu$.
       If $\plan(\nu)$ is a breakpoint of  $\plan$ (i.e., $\pth_A(\nu),\pth_B(\nu)$ are vertices of $\pth_A,\pth_B$, respectively), set $\nu^+ \assign \nu$. Otherwise,
             let $X$ be the robot that is moving at time $\nu$, and set 
             \[
            \nu^+:=\min\{\nu'\in [\nu,\mu_0]: \text{$\pth_X(\nu')$ lies on a vertical grid line or is a vertex of $\pth_X$}\}.
            \] 
       \end{itemize}
     Note that the three assumptions on $\nu$ stated in the lemma also hold for~$\nu^+$.
     This is trivial when $\nu^+=\nu$. In the other case, we clearly 
     have $\pth_B[\nu^+,\mu_1]\subset Q$ since $\nu^+\geq \nu$.  
     We also have $\pth_A(\nu^+)\in Q\cup Q^+\cup \gamma(R)$. Indeed, if $X=B$ 
     (that is, $B$ is the moving robot) this trivially holds, 
     and if $X=A$ then $A$ is moving horizontally and does not cross a grid line during $[\nu,\nu^+)$, 
     so it cannot move out of~$Q\cup Q^+\cup \gamma(R)$.
     Finally, $\plan(\nu^+)$ is $x$-separated because the robots become $x$-separated 
     at time~$\nu$; thus, $X$ moves horizontally at time $\nu$, and $X$ continues
     to move away from the other robot during $[\nu,\nu^+]$ because this time interval
     does not contain a vertex of $\pth_X$ by definition.
     Note that this argument actually shows that $\plan[\nu,\nu^+]$ is $x$-separated at all times.
	
	Since $[\mu_0,\mu_1]$ is a swap interval, the configuration $\plan(\nu^+)$ has 
    the same $x$-order as $\plan(\mu_i)$ for some $i\in \{0,1\}$. We will re-parametrize
    $\plan[\nu^+,\mu_i]$ for this value of~$i$, as described next.
    \medskip
    
    Define $\mu'$ in the same way as in \lemref{plan-between-swap-endpoints}, as follows:
	If $\pth_B$ enters or leaves $\I(\rect)$ at time $\mu_i$, or $\mu_i=\lambda_2$, then we 
    set $\mu':= \mu_i$. Otherwise, let $X\in \{A,B\}$ be the robot that moves at time~$\mu_i$,
    causing the robots to start or stop being $y$-separated. We set $\mu'$ to the last time
    before $\mu_i$ such that $\pth_X(\mu')$ is a vertex of $\pth_X$ or $\pth_X(\mu')$ 
    lies on a horizontal grid line.
	
    We now construct a compliant re-parametrization $\plan^*$ of $\plan[\nu^+,\mu_i]$ 
    such that $\plan^*(\nu')$ is $x$-separated for all $\nu'\in [\nu^+,\mu_i]$. 
    Since, as observed above, $\plan[\nu,\nu^+]$ is $x$-separated at all times, 
    this will finish the proof.    
    Our approach is similar to that in the proof of \lemref{plan-between-swap-endpoints}.
    We wish to use \lemref{x-sep} to obtain a $(\plan(\nu^+),\plan(\mu'))$-plan 
    $\plan^*[\nu^+,\mu']$ in which $A$ only parks at $\pth_A(\nu^+)$ or $\pth_A(\mu')$ 
    and $B$ only parks at $\pth_B(\nu^+)$ or $\pth_B(\mu')$. To be able to apply
    \lemref{x-sep}, we first note that $\plan(\mu')$ has the same $x$-order as $\plan(\mu_i)$
    because both robots have the same $x$-positions at these times.
    Hence, $\plan(\mu')$ has the same $x$-order as $\plan(\nu^+)$.
    Next, we note that there is an $xy$-monotone $(\pth_B(\nu^+),\pth_B(\mu'))$-path 
    $\varphi_B$ in $\fre$, as $\pth_B[\nu,\mu_1]\subseteq Q$.
    The following claim asserts that we also have a short $x$-monotone $(\pth_A(\nu^+),\pth_A(\mu'))$-path in~$\fre$.
    \begin{claim}\label{clm:two-swap-lemma2}
    There exists an $x$-monotone $(\pth_A(\nu^+),\pth_A(\mu'))$-path $\varphi_A$ in~$\fre$ with $\|\varphi_A\|\leq \|\pi_A[\nu^+,\mu']\|$.
    \end{claim}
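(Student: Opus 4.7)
The plan is to reduce the claim to exhibiting an $xy$-monotone path in $\fre$ between $\pth_A(\nu^+)$ and $\pth_A(\mu')$. Indeed, in the $\ell_1$-metric, any rectilinear path from $p$ to $q$ has length at least $|p_x-q_x|+|p_y-q_y|$, and any $xy$-monotone path achieves this bound with equality. Since the original path $\pi_A[\nu^+,\mu']$ has length at least $|\pth_A(\nu^+)_x-\pth_A(\mu')_x|+|\pth_A(\nu^+)_y-\pth_A(\mu')_y|$, any $xy$-monotone path that we can construct in $\fre$ will automatically be $x$-monotone and at most as long as $\pi_A[\nu^+,\mu']$.

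I would case-split on the location of $\pth_A(\nu^+)\in Q\cup Q^+\cup\gamma(R)$, recalling that $\pth_A(\mu')\in R\subset\gamma(R)$ because $\mu'\in[\lambda_1,\lambda_2]$:

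\textbf{Case 1:} $\pth_A(\nu^+)\in\gamma(R)$. Apply \lemref{xy-shortest path} directly with $p=\pth_A(\nu^+)$ and $q=\pth_A(\mu')$ to obtain an $xy$-monotone shortest path in $\fre$; this is our $\varphi_A$. Since $\pi_A[\nu^+,\mu']$ is a feasible $(\pth_A(\nu^+),\pth_A(\mu'))$-path in $\fre$, the shortest path is no longer.

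\textbf{Case 2:} $\pth_A(\nu^+)\in Q$. Since $Q$ is a face of $\horzF$ contained in $\fre$, it is a rectangle in $\fre$ and $\pth_A(\nu^+)$ can be connected to any point of $\topQ$ by an $xy$-monotone path inside $Q$. The hypothesis $y(\botR)\geq y(\topQ)$ says $\topQ$ lies weakly below $\botR$. I would subdivide further depending on whether $\pth_A(\nu^+)_x\in I_x$ (figure case~(i)) or $\pth_A(\nu^+)_x\notin I_x$ (figure case~(ii)); in the former case, a vertical-then-horizontal L-shape from $\pth_A(\nu^+)$ via $(\pth_A(\nu^+)_x,\pth_A(\mu')_y)$ to $\pth_A(\mu')$ works, and in the latter, I first move horizontally inside $Q$ to align with $I_x$ before going up. Each constructed path is $xy$-monotone.

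\textbf{Case 3:} $\pth_A(\nu^+)\in Q^+$ (figure case~(iii)). Here $Q^+$ has height $1$ with bottom edge $\topQ$. A vertical segment of length at most $1$ from $\pth_A(\nu^+)$ toward $R$ stays in $\fre$ by \obsref{free-segments}, depositing us either inside $\gamma(R)$ (reducing to Case~1) or inside $R$ itself, after which a horizontal segment inside $R$ reaches $\pth_A(\mu')$.

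The main obstacle will be the free-space verification in Case~2 and Case~3: $Q^+$ and the vertical corridor connecting $\topQ$ to $\botR$ need not be entirely in $\fre$, so some short vertical segment might in principle be blocked. The key tool is \obsref{free-segments}: any gap in $\fre$ along an axis-aligned line has length strictly greater than $1$, so a vertical segment of length at most $1$ connecting two points of $\fre$ lies entirely in $\fre$. Combined with the fact that $\pth_A(\nu^+)\in\fre$, $\pth_A(\mu')\in R\subset\fre$, and $R$ being a face of the primary-grid subdivision $\horF$, this is enough to certify that the L-shaped constructions stay feasible.
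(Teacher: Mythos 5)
Your reduction ``produce an $xy$-monotone path, then $x$-monotonicity and the length bound are free'' is sound only where such a path exists, and the crux of the claim is precisely the configuration where it does not. In your Case~3 ($\pth_A(\nu^+)\in Q^+$), if $\pth_A(\nu^+)_x\notin I_x$ then a vertical hop of length at most~$1$ cannot ``deposit'' you in $R$ (wrong $x$-range), need not deposit you in $\gamma(\R)$ (the point above may be blocked, or $\pth_A(\nu^+)$ may lie outside $\Rsq$ altogether), and \obsref{free-segments} does not apply since you have not exhibited a second free endpoint for the segment. Worse, in the situation the lemma is really about (\figref{before-lambda1-x-sep}(iii)), $\pth_A(\nu^+)$ sits in $Q^+$ but is cut off from $\gamma(\R)$ by an obstacle, so every path to $\pth_A(\mu')\in\R$ must first descend into $Q$, pass the obstacle horizontally, and re-ascend through $\botR$: no $xy$-monotone path exists at all. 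This is exactly why the claim asserts only $x$-monotonicity and bounds $\|\varphi_A\|$ by $\|\pi_A[\nu^+,\mu']\|$ rather than by the $\ell_1$-distance of the endpoints. The paper handles this subcase by applying \lemref{x-mono} with $p=\pth_A(\nu^+)$, $h=\topQ$, $q=(\pth_A(\mu')_x,y(\botR))$ to get an $x$-monotone (not $y$-monotone) shortest $pq$-path, appending a vertical segment up to $\pth_A(\mu')$, and then proving the length bound by observing that $\pi_A[\nu^+,\mu']$ itself must visit $\botR$ (since $\pth_A(\lambda_1)\in\botR$ and $A$ does not move vertically on $[\nu,\nu^+)$), so the detour via $q$ costs no more than the original path. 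None of this machinery appears in your proposal, and your strategy cannot produce it.

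Two smaller gaps: in your Case~2 with $\pth_A(\nu^+)_x\notin I_x$, you assert you can ``move horizontally inside $Q$ to align with $I_x$,'' but you never show the $x$-range of $Q$ reaches $I_x$; the paper gets this from the swap interval, choosing $\mu^-\in[\mu_0,\mu_1]$ with $\xco{\pth_A(\mu^-)}=\xco{\pth_B(\mu^-)}$ so that $\pth_B(\mu^-)\in\Rud\cap Q$, which also lets one align to the \emph{near} edge of $\Rud$ so the horizontal--vertical--horizontal path is genuinely $x$-monotone (aligning to an arbitrary point of $I_x$ could force a backtrack). Your Case~1 and the $x\in I_x$ part of Case~2 are fine (the latter via the $\obsref{free-segments}$ gap-at-most-$1$ argument, which is in fact a slightly more direct route than the paper's bounding-box argument), but the $Q^+$ case is a genuine missing idea, not a detail to be filled in.
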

    \begin{claimproof}
    If $\pth_A(\nu^+)\in \gamma(R)$, the claim follows from \lemref{xy-shortest path}.
    Now suppose $\pth_A(\nu^+)\in (Q\cup Q^+)\setminus \gamma(R)$. 
    Let $\mu^-\in [\mu_0,\mu_1]$ be such that 
    $\pth_B(\mu^-)_x = \pth_A(\mu^-)_x$; such a time exists because 
    $[\mu_0,\mu_1]$ is a swap interval.
    Note that $\pth_B(\mu^-)\in \Rud \cap Q\cap Z^-$. Indeed, by \lemref{BinZ} 
    and the assumption that $B$ is below $A$, we have $\pth_B(\mu^-)\in Z^-$; 
    and since $\pth_B(\mu^-)_x = \pth_A(\mu^-)_x$ and $\pth_A(\mu^-)\in R$, we have
    $\pth_B(\mu^-)\in \Rud$; and since $\pth_B[\nu,\mu_1]\subset Q$, we have $\pth_B(\mu^-)\in Q$.

    We distinguish two cases
    depending on whether $\pth_A(\nu^+)$ lies in $I_x$, the $x$-range of $\rect$.
    \begin{itemize}
    \item 
     $\pth_A(\nu^+)_x \in I_x$.
     
     In this case, $\pth_A(\nu^+)_x$ must lie in the $x$-range of the rectangle~$Z^*$,
     where $Z^*$ is the component of $Z^-\cap \fre$ that contains~$\pth_B(\mu^-)$.
     To see this, consider the bounding box $\B$ of $\pth_A(\nu^+)$ and $\pth_B(\mu^-)$. 
     Note that if $\pi_A(\nu^+)_x\in I_x$ and $\pi_A(\nu^+)\in Q^+$, then $\pi_A(\nu^+)\in \Rud$ (as $y(\topQ) \leq y(\botR)$) and therefore $\pi_A(\nu^+)\in \gamma(R)$. Since we assumed $\pi_A(\nu^+)\notin \gamma(R)$, we 
     must therefore have $\pth_A(\nu^+)\in Q$. 
     Thus, $\pth_A(\nu^+)\in Q$ and $\pth_B(\mu^-) \in Q$, and so 
     $\B \subset Q$.
     Therefore $\pth_A(\nu^+)_x$ must lie within the $x$-range of~$Z^*$,
     otherwise the interior of $\B$ would intersect a vertical edge of
     $Z^*$, contradicting that $\B\subset Q\subset \fre$.
     This implies that the vertical segment $g$ from $\pth_A(\nu^+)$ to $\botR$
		    lies in $\I(\rect) \subset\F$. See \figref{before-lambda1-x-sep}(i).
     Hence, the path composed of $g$ followed by 
     an L-shaped path from the top endpoint of $g$ to $\pth_A(\mu')$
     yields an $xy$-monotone $(\pth_A(\nu^+),\pth_A(\mu'))$-path
     that lies in $\fre$; this is the desired $x$-monotone shortest path~$\varphi_A$.
    \item 
    $\pth_A(\nu^+)_x \not\in I_x$. 
    
    Then the horizontal segment between $(\pth_A(\nu^+)_x,\pth_B(\mu^-)_y)$ and $\pth_B(\mu^-)$
    crosses a vertical edge $\psi$ of $\Rud$; this is true because $\pth_B(\mu^-)\in \Rud$ and 
    $\pth_A(\nu^+)\notin \Rud$.
    Let $\psi'\supset \psi$ be the maximal vertical segment in $\fre$ that contains~$\psi$.
    We distinguish two subcases depending on whether $\pi_A(\nu^+)$ lies in $Q$ or in~$Q^+$.
    \begin{itemize}
    \item $\pi_A(\nu^+)\in Q$.
        
    Note that $\psi'$ is reachable from $\pth_A(\nu^+)$ via a horizontal segment in $\fre$,
    because the point $(x(\psi),\pth_B(\mu^-)_y)$ lies on $\psi$
    and the vertical segment between this point and $(x(\psi),\pth_A(\nu^+)_y)$ lies in $Q$. 
    (The latter follows because $\pth_A(\nu^+)\in Q$ and $\pth_B(\mu^-)\in Q$; see the first paragraph
    of the proof of the current claim.) 
    Thus, the horizontal segment $\pth_A(\nu^+) (x(\psi),\pth_A(\nu^+)_y)$ lies in~$Q$
    and, hence, in~$\fre$.
    The segment~$\psi'$ is also reachable via a horizontal line from $\pth_A(\mu')$,
    because $\pth_A(\mu')\in\R$. Hence, there is an $xy$-monotone $(\pth_A(\nu^+),\pth_A(\mu'))$-path $\varphi_A\subset\fre$, which consists of these two horizontal segments
    and the appropriate portion of~$\psi'$ (see \figref{before-lambda1-x-sep}(ii)).
    \item $\pi_A(\nu^+)\in Q^+$. 

     We will construct an $x$-monotone 
     path $\varphi_A$ from $\pth_A(\nu^+)$ to $\pth_A(\mu')$ 
     with 
     $\|\varphi_A\|\leq \|\pi_A[\nu^+,\mu']\|$.
     \smallskip
    
    Let $p \assign \pth_A(\nu^+)$, let $h \assign \topQ$, and let $q \assign (\pth_A(\mu')_x,y(\botR))$ be the vertical projection of $\pth_A(\mu')$ onto $\botR$.
    Then the triple $p,q,h$ satisfies the conditions of \lemref{x-mono}. 
    To see this, we first note that $p,q,h$ all lie in $\fre$ by definition.
    Moreover, condition~(i) is satisfied: we have $|p_y-y(\topQ)|\leq 1$ because $p\in Q^+$,
    and we have $|q_y-y(\topQ)| = |y(\botR)-y(\topQ)| \leq 1$ because we assumed 
    $y(\topQ) \leq y(\botR)$ and $y(\topQ) \geq y(\botRsq) = y(\botR) -1$.
    (The latter holds since $\Rsq\cap Q\neq \emptyset$, which is true because $\pi_B(\mu_1)\in \Rsq\cap Q$.)
    We next show that condition~(ii) is satisfied.
    Observe that the point $p'$ is the vertical projection of $p$ onto $\topQ$ because $p \in Q^+$. 
    Thus, $pp'$ is a vertical segment in~$\fre$. 
    There is also an $xy$-monotone path from $q$ to $q'$:
    first move horizontally along $\botR$ from $q$ to $(q'_x,q_y)$---this move is empty if
    $q$ lies in the $x$-range of $\topQ$---and then move vertically to~$q'$.
    Hence condition~(ii) is satisfied.
    Moreover, condition~(iii) is satisfied because $p$ lies in the $x$-range of~$\topQ$.
    By \lemref{x-mono} we therefore have that all shortest $pq$-paths in $\fre$ are $x$-monotone.
    Take such an $x$-monotone path $\varphi_A'$, and extend it with a vertical segment from $q$ up to $\pth_A(\mu')$ (see \figref{before-lambda1-x-sep}(iii)).
    Let $\varphi_A$ be the resulting $(\pth_A(\nu^+),\pth_A(\mu'))$-path.
    Clearly $\varphi_A$ is $x$-monotone since $\varphi_A'$ is $x$-monotone.
    \smallskip

    Next, we argue that $\|\varphi_A\|\leq \|\pi_A[\nu^+,\mu']\|$.
    Since $\lambda_1\in [\nu,\mu']$ and $A$ does not move vertically during $[\nu,\nu^+)$,
    the path $\pi_A[\nu^+,\mu']$ visits $\botR$ at some time $\nu' \in [\nu^+,\mu']$.
    Let $\varphi_A''$ be the $(\pth_A(\nu^+),\pth_A(\mu'))$-path obtained by concatenating to
    $\pi_A[\nu^+,\nu']$ the $L$-shaped path from $\pth_A(\nu')$ to $\pth_A(\mu')$ via~$q$.
    It follows that $\| \varphi_A \| \leq \|\varphi_A''\| \leq \|\pth_A[\nu^+,\mu']\|$.
    Indeed, the first inequality holds since $\varphi_A$ and $\varphi_A''$ are $(\pth_A(\nu^+),\pth_A(\mu'))$-paths 
    that visit~$q$ and $\varphi_A$ is a shortest such path by construction, and the second inequality holds since $\varphi''$ 
    is the same as $\pi_A[\nu^+,\mu']$ until $\pth_A(\nu')$ is reached and 
    its subsequent path to $\pth_A(\mu')$ is an L-shape.

    \newif\ifOldproof
    \Oldprooffalse
    \ifOldproof
     Since $\lambda_1\in [\nu,\mu']$ and $A$ does not move vertically during $[\nu,\nu^+)$,
    the path $\pi_A[\nu^+,\mu']$ visits $\botR$. 
     This means it suffices to obtain an $x$-monotone path from $\pth_A(\nu^+)$ to $\pth_A(\mu')$
     that is shortest among all such paths that visit $\botR$. 
    
    Let $q:= \botR \cap \psi'$ and let $\varphi'_A$ be a shortest $(\pth_A(\nu^+),\pth_A(\mu'))$-path 
    among all such paths that visit $\botR$. We may assume (see below) that $\varphi'_A$ crosses $\psi'$.
        \ajs{Specifically below $\botR$?}
        \mdb{I think so.}\mst{I don't think this restriction is needed, but we can prove it just as well. It also is more intuitive.}
    This implies that there exists a $(\pth_A(\nu^+),\pth_A(\mu'))$-path $\varphi_A$
    of equal length to $\varphi'_A$ that visits $q$: follow $\varphi'_A$ until the first intersection 
    with $\psi'$, then move along $\psi'$ to $q$, followed by an L-shaped path to $\pth_A(\mu')$).
    (Note that any $(\pth_A(\nu^+),\pth_A(\mu'))$-path that visits $\botR$ but does not cross
    $\psi'$ must be longer than $\varphi_A$; hence the assumption we just made is valid.  \mdb{Hopefully this is sufficiently clear.} 
        \ajs{I am finding this circular as it is not evident to me that, without the assumption, $\varphi_A$ is feasible. The details in the next para argue that it is, namely the subpath from $\pth_A(\nu^+)$ to $q$, but not so far, right?} 
        \mdb{I think feasibility is clear, and the next paragraph only argues about $x$-monotonicity?} \ajs{Oops, you're right about the next para. I see that any path from $\pi_A(\nu^+)$ to $\botR$ must cross the vertical line supporting $\psi'$ but I don't see why it is evident (from what is said here) that there is a $xy$-monotone path from $\pi_A(\nu^+)$ that visits $\psi'$ (approaching from the right side of $\psi'$). For concreteness, I was expecting details here similar to the start of the previous case: the vertical segment from $\pi_A(\nu^+)$ downwards to $\topQ$ is free, etc.}
        \mst{In general, there does not have to be an $xy$-monotone path from $\pi_A(\nu^+)$ to $\botR$. $\phi'_A$ is an arbitrary $(\pi_A(\nu^+),\pi_A(\mu'))$-path that is shortest among all such paths that visit $\botR$. $\phi'_A$ exists since $\pi_A[\nu^+,\mu']$ visits $\botR$. We modify this path to argue there exists a path $\phi_A$ that visits the point $q$ in addition to the properties of $\phi'_A$. The part of $\phi_A$ from $q$ to $\pi_A(\mu^+)$ is $xy$-monotone, as it is shortest and lies in $R$. The part of $\phi_A$ from $\pi_A(\nu^+)$ to $q$ does not have to be $xy$-monotone. In the text below, we show that it must be $x$-monotone. This is the part where we use the fact that, among others, the segment between $\pi_A(\nu^+)$ and its projection on $\topQ$ lies in $\fre$.}
        \ajs{I still don't understand the argument, sorry. My understanding is that (1) we assume $\varphi'_A$ visits $\psi'$ (and wish to show this indeed holds), (2) let $\varphi_A$ be a path whose existence depends on the assumption that $\varphi'_A$ visits $\phi'$, then (3) conclude that if $\varphi'$ (or any other path between its endpoints) did not cross $\psi'$ that it would be longer than $\varphi_A$. But in (3), if $\varphi'_A$ does not visit $\psi'$ at all, $\varphi_A$ is undefined, so there is no contradiction. What am I missing? Sorry if it is so obvious!}
        \mdb{We can define $\varphi'_A$ as the shortest path that visits $\botR$ and must also cross
        $\psi'$. We can turn this into a path $\varphi_A$, and any path that does not cross $\psi'$
        is longer than $\varphi_A$. Does that make sense?}
        \ajs{I am not sure a shortest path that crosses is well-defined (if it makes a $U$-turn on $\psi'$ then it may only visit and leave on the same side). $\pi_A(\mu')$ could also be $q$ or directly above it. But I think it all works if we say "visits" $\psi'$.}
    We will argue that $\varphi_A$ is $x$-monotone.
    
    Let $p:= \pth_A(\nu^+)$ and let $h := \topQ$.
    Then the triple $p,q,h$ satisfies the conditions of \lemref{x-mono}. 
    To see this, we first note that $p,q,h$ all lie in $\fre$ by definition.
    Moreover, condition~(i) is satisfied: we have $|p_y-y(\topQ)|\leq 1$ because $p\in Q^+$,
    and we have $|q_y-y(\topQ)| = |y(\botR)-y(\topQ)| \leq 1$ because we assumed 
    $y(\topQ) \leq y(\botR)$ and $y(\topQ) \geq y(\botRsq) = y(\botR) -1$.
    (The latter holds since $\Rsq\cap Q\neq \emptyset$, which is true because $\pi_B(\mu_1)\in \Rsq\cap Q$.)
    To show that conditions~(ii) and~(iii) are satisfied, we observe that the points $p'$ and $q'$ 
    are the vertical projections of $p$ and $q$ onto $\topQ$, respectively,
    because $p\in Q^+$ and $q_x = x(\psi)$ (and $x(\psi)$ lies in the
    $x$-range of $Q$).
    Thus, $pp'$ and $qq'$ are vertical segments in~$\fre$, which implies
    conditions~(ii) and~(iii) are satisfied.
    By \lemref{x-mono} we therefore have that all shortest $pq$-paths in $\fre$ are $x$-monotone. 
    
    We now observe that the subpath of $\varphi_A$ from $p$ to $q$ is 
    shortest in $\fre$ and therefore $x$-monotone. 
    The subpath of $\varphi_A$ from $q$ to $\pth_A(\mu')$ must be $xy$-monotone 
    as it is shortest and $q\in R$ and $\pth_A(\mu')\in R$. Because the path does not make a U-turn
    at $q$ by construction we conclude that the path $\varphi_A$ is $x$-monotone. 
    \newtext{END OF OLD PROOF}
    \fi
    \end{itemize}
    
    
    \end{itemize}
    Thus, in both cases we have an 
    an $x$-monotone $(\pth_A(\nu^+),\pth_A(\mu'))$-path in~$\fre$ with length at most $\|\pth_A[\nu^+,\mu']\|$, which proves the claim.
    \end{claimproof}
     We conclude that the conditions of Lemma~\ref{lem:x-sep} are satisfied,
     giving us a $(\plan(\nu^+),\plan(\mu'))$-plan $\plan^*[\nu^+,\mu']$ in which $A$ only parks at $\pth_A(\nu^+)$ or $\pth_A(\mu')$ and $B$ only parks at $\pth_B(\nu^+)$ or $\pth_B(\mu')$.\footnote{Just as in Lemma~\ref{lem:plan-between-swap-endpoints}, this plan is not necessarily a re-parametrization. However, since we have a valid plan from $\plan(\nu^+)$ to $\plan(\mu')$ that uses $xy$-monotone paths in $\fre$, the paths $\pth_A[\nu^+,\mu']$ and $\pth_B[\nu^+,\mu']$ must be $xy$-monotone as well, since $\plan$ is optimal. So, we can apply Lemma~\ref{lem:x-sep} with these paths to obtain a plan with the required properties that is also a re-parametrization of $\plan$.}
    We extend the domain of $\plan^*$ to $[\nu^+,\mu_i]$ by setting $\plan^*[\mu',\mu_i]:= \plan[\mu',\mu_i]$. Observe that 
    for all $\nu'\in [\nu^+,\mu_i]$, $\plan^*(\nu')$ is $x$-separated,
    so it remains to argue that the potential parking spots at times $\nu^+$ and $\mu'$ 
    are in valid positions.
    
    Note that in the cases where $\nu^+ = \nu$, the time $\nu$ is a moment where a robot moves onto or leaves a grid line, so the moving robot $X$ is at the intersection of a grid line and an edge of $\pth_X$, which is a valid parking spot. In the remaining case, the positions of $\plan^*(\nu^+)$ are valid parking spots by construction.
    The argument for $\mu'$ is similar: if $\pth_B$ enters or leaves at time $\mu_i$, the moving robot $B$ is at the orthogonal intersection of an edge of $\pth_B$ with a grid line. If $\lambda_2 = \mu_i$, then $A$ is the moving robot and is at an orthogonal intersection of an edge of $\pth_A$ and $\topR$.
    The remaining option is that $\mu_i$ is a time when $\plan$ stops or becomes $y$-separated. In this case, both locations of $\plan(\mu')$ are valid parking spots by construction.
\end{proof}

\subsection{Surgery on the paths}\label{sec:surgery}
We are now ready to describe the surgery we perform on $\pth_A$ and $\pth_B$
to align a given bad horizontal segment~$e$ of $\pth_A$ with a grid line. 
By performing compliant modifications as necessary, we can assume that the optimal
plan $\plan$ satisfies the conditions of 
Lemmas~\ref{lem:main-properties}--\ref{lem:before-lambda1-x-sep-below}.
Ideally, we would like to restrict the surgery to the interval $[\lambda_1, \lambda_2]$,
but we have to perform surgery beyond this interval in some cases. 
The surgery consists of a sequence of one or more \push operations, which 
push a part of $\pth_A$ or $\pth_B$ onto a grid line. We first describe
the \push procedure and then describe the sequence of pushes we perform in
the various cases that can arise. We use $\plan^*$ to denote the (modified) 
plan after the surgery.

\paragraph{The \push procedure.}
A \push operation takes three parameters: a robot $X \in \{A,B\}$, a time
interval $I = [\mu_1, \mu_2]$, and a $y$-coordinate $y^*$ of a horizontal grid line---recall
that a grid line is a line from the set $L_{\leq 2}$---and it pushes the subpath $\pth_X[I]$
onto the grid line $\ell: y=y^*$. More formally, the new plan $\plan_X^*$
resulting from the operation $\push(X,I,y^*)$  is obtained by setting
\begin{equation}\label{eqn:push}
\xco{\pth_X^*(\mu)} := \xco{\pth_X(\mu)} \mbox{ for all $\mu$} \hspace*{5mm} \mbox{ and } 
\hspace*{5mm} \yco{\pth_X^*(\mu)} :=
\begin{cases}
   y^*  & \mbox{ if $\mu\in I$},\\
  \yco{\pth_X(\mu)} & \mbox {otherwise.}
\end{cases}
\end{equation}
Thus, a \push operation only changes the $y$-coordinates on the subpath $\pth_X[I]$.
The exact surgery---the values of the parameters $X,I,y^*$ used in the \push
operations that we perform---depends on the location of $\pth_A(\lambda_1)$, 
the set of unsafe and swap intervals during $[\lambda_1, \lambda_2]$, 
and the position of $B$ during the swap intervals, as explained later.
\medskip

A \push operation may have side effects that we need to address. 
First, $\push(X,I,y^*)$ may introduce a discontinuity at $\mu_1$ or $\mu_2$, 
the endpoints of~$I$. 
In other words, for $\mu\in\{\mu_1,\mu_2\}$ we may have 
$p_X^-(\mu)\neq p_X^+(\mu)$, where $p_X^-(\mu) := \lim_{\lambda \uparrow \mu} \pth^*_X(\lambda)$ 
and $p_X^+(\mu) := \lim_{\lambda \downarrow \mu} \pth^*_X(\lambda)$. 
This happens when $\pth_X(\mu)\neq \pth^*_X(\mu)$.
Recall that $\xco{\pth^*_X(\lambda)} = \xco{\pth_X(\lambda)}$ for all $\lambda$ and so
$p_X^-(\mu)_x = p_X^+(\mu)_x$.  We can therefore resolve the discontinuity
by adding the vertical segment $g_X(\mu) = p_X^-(\mu) p_X^+(\mu)$, called a \emph{ghost segment},
to $\pth_X^*$ to make the path continuous.
We will also have to re-parametrize $\plan^*$ to incorporate this ghost segment
while satisfying the properties (P1)--(P5). Since $\plan^*$ is decoupled, it suffices
to describe the ``moves'' of $A$ and $B$ in the neighborhood of $\mu$,
which we will do after describing the surgery.
\begin{figure}
	\centering
	\includegraphics{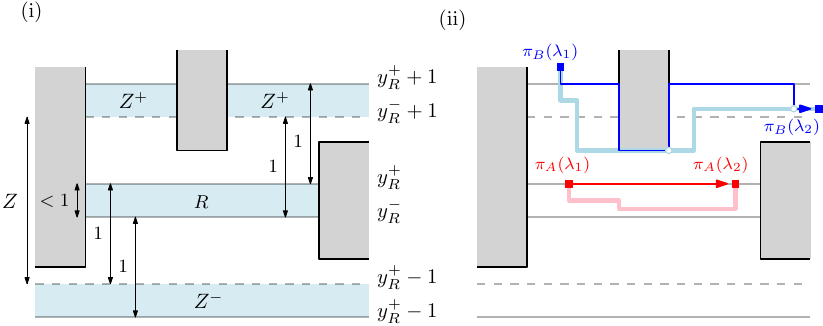}
	\caption{(i) Example of a corridor $\R$, the surrounding lines and areas within $\R+2\square$.
		(ii) The modification in Case~I. The original paths are shown in pink and light blue. 
        The light-blue disks mark the endpoints of a swap interval on $\pth_B$.}
	\label{fig:short-corridor}
\end{figure}

The second side effect of pushing $\pth_X[I]$ to $\pth^*_X[I]$ is that $X$ may collide with
the other robot during the interval $I$. This can only happen when $X=A$ and during 
unsafe intervals~$[\nu_1,\nu_2]\subset I$.
During such intervals, we then perform a set of 
\emph{secondary pushes}. As we will see below, secondary pushes are always performed
on $\pth_B$, that is, collisions created by primary pushes on $\pth_A$
are always resolved by letting robot~$\robA$ push $\robB$ out of the way.
More precisely, let $\nu \in I$ be a time instance such that $\pth_B(\nu)$
is in conflict with $\pth_A^*(\nu)$ when $\nu$ lies in an unsafe interval
$[\nu_1, \nu_2]$. We execute $\push(B, (\nu_1,\nu_2), y^*)$, where
\begin{equation}\label{eq:remove-U-turn-B-Case-I}
	y^* := \begin{cases}
		y(\topR) + 1 & \mbox{if } \yco{\pth_B(\nu)} > \yco{\pth_A(\nu)},\\
		y(\botR) -1 & \mbox{if } \yco{\pth_B(\nu)} < \yco{\pth_A(\nu)}.
	\end{cases}
\end{equation}
Note that this is well defined, since by definition of unsafe intervals,
$B$ is either above $A$ during the entire interval, or $B$ is below
$A$ during the entire interval. A secondary 
push operation may also create discontinuities at times $\nu_1$ and $\nu_2$.

A third side effect is that a \push operation may collapse a vertical segment of $\plan$ 
into a 0-length vertical segment. This is not a problem, as long as we keep the
0-length segment so that the alternating property is maintained.

\paragraph{The overall surgery.}
Without loss of generality, we assume that $\pth_A(\lambda_2)\in \topR$. 
The overall surgery of $\plan$ depends on the location of $\pth_A(\lambda_1)$,
the number of swap intervals during $[\lambda_1, \lambda_2]$, which is at most two
by \lemref{main-properties}, and the position of $B$ in the swap intervals.
Our modification of $\pth_A$ into $\pth_A^*$ follows three main cases.
\begin{itemize}
\item 
    \textit{Case I: $\pth_A(\lambda_1) \in \topR$.} \\[1mm]
    We perform one push operation: $\push(A, [\lambda_1, \lambda_2], y(\topR))$; see \figref{short-corridor}(ii).
\item 
    \textit{Case II: $\pth_A(\lambda_1) \in \botR$.
    Furthermore, either there is at most one swap interval,
	or there are two swap intervals and $\pth_B$ lies above $\pth_A$ during the first swap interval.} 
    \\[1mm] 
	In this case, we choose a suitable time value $\bar{\lambda} \in [\lambda_1,\lambda_2]$
    and perform two push operations: $\push(A, [\lambda_1, \bar{\lambda}], y(\botR))$ and
	$\push(A,[\bar{\lambda}, \lambda_2], y(\topR))$. 
    Thus, robot $A$ moves along $\botR$ from time $\lambda_1$ to $\bar{\lambda}$, 
    and then it switches to moving along $\topR$ until time~$\lambda_2$.
    \begin{figure}
	\centering
	\includegraphics{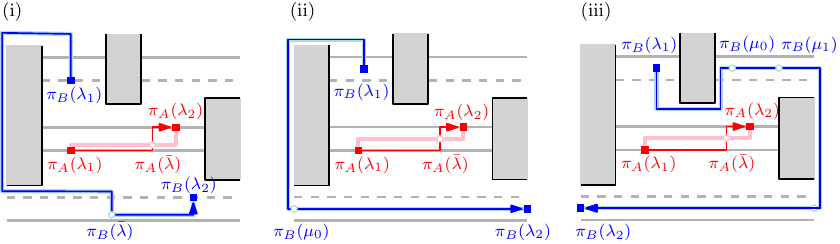}
	\caption{The modifications in the subcases of Case~II. The original paths are shown in pink and light blue.
	In all cases, $\plan^*_A$ jumps to $\topR$ at a suitable time $\bar{\lambda}$.
		(i) The modification in Case~II(a). $\bar{\lambda}$ is the moment when $\plan$ enters the last $y$-separated interval.
        (ii) The modification in Case~II(b). $\bar{\lambda}=\mu_1$ is the moment when $\plan$ enters the swap interval with $B$ below $A$.
        (iii) The modification in Case~II(c). $\bar{\lambda}=\mu_2$ is the moment when $\plan$ leaves the swap interval with $B$ above $A$.}
	\label{fig:short-corridor-II}
    \end{figure} 
	We have three subcases for the choice of $\bar{\lambda}$,
    illustrated in \figref{short-corridor-II}.
	\begin{itemize}
	\item \emph{Case II(a): There are no swap intervals.} 
		If all configurations in $\plan[\lambda_1,\lambda_2]$ are $y$-separated then either $B$ lies below $A$ during the interval $[\lambda_1,\lambda_2]$ or above $A$ during the entire interval.
			We set $\bar{\lambda}=\lambda_1$ (resp.\ $\bar{\lambda}=\lambda_2$) if
			$B$ lies below (resp.\ above) $A$ during $[\lambda_1,\lambda_2]$. 
			Otherwise there is a configuration in $\plan[\lambda_1,\lambda_2]$ that is not 
			$y$-separated. We then set 
		$$\bar{\lambda}:= \sup\{\lambda\in [\lambda_1,\lambda_2] : \text{$\plan(\lambda)$ is 
			not $y$-separated} \}.$$
			Thus, $\plan(\lambda)$ is $y$-separated for all 
			$\lambda \in [\bar{\lambda}, \lambda_2]$.
    \item \emph{Case II(b): There is one swap interval and $B$ is  below $A$ during this swap interval.}
        Let $[\mu_0,\mu_1]$ be this swap interval. Set $\bar{\lambda}:= \mu_0$.
    \item \emph{Case II(c): There is at least one swap interval and $B$
		lies above $A$ during the first one.}
        Let $[\mu_0,\mu_1]$ be this swap interval. Set $\bar{\lambda}:= \mu_1$.
	\end{itemize}
    Note that we may create a discontinuity in $\pth_A$ at $\bar{\lambda}$,
    and we thus add a ghost segment at $\pth_A(\bar{\lambda})$.
    Using \lemref{unsafe-vertical-edge}, it can be verified that this ghost 
    segment is aligned with a grid line or a vertical segment of $\pth_A$ or $\pth_B$.
    Furthermore if $\bar{\lambda} \not= \{\lambda_1, \lambda_2\}$ then
	$\plan(\bar{\lambda})$ is both $x$- and $y$-separated.
    \begin{figure}[htb]
	\centering
    \includegraphics{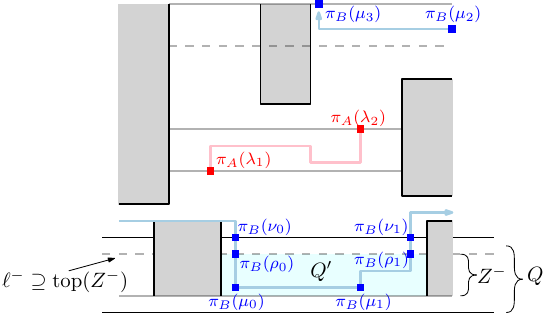}
	    \caption{Set up for Case III. $Z^-$, rectangles $Q$ and $Q'$,
	    and time instances $\nu_0, \nu_1, \rho_0, \rho_1$.}
	    \label{fig:Case-III-set-up}
    \end{figure} 
\item
    \textit{Case III: $\pth_A(\lambda_1) \in \botR$ and there are exactly two swap intervals, 
    and $\pth_B$ lies below $\pth_A$ during the first swap interval.}
    \\[1mm]
	Let $[\mu_0,\mu_1]$ and $[\mu_2,\mu_3]$ be the two swap intervals. 
    Assume wlog that $\mu_1<\mu_2$, that is, $[\mu_0,\mu_1]$ is the 
    first swap interval. Since all configurations in $\plan[\mu_0,\mu_1]$
    are $y$-separated and $B$ lies below $A$ during $[\mu_0,\mu_1]$, 
    we conclude that $\pth_B[\mu_0,\mu_1]$ lies in a connected component 
    of $\F \cap Z^-$, which is a rectangle $Q'$. No horizontal line
    of $L_0$, the set of lines supporting the edges of $\bd\fre$,
    intersects the interior of $Z^-$
    because then a horizontal primary grid line
    would intersect the interior of~$R$, which is impossible by the 
    construction of the subdivision $\horF$.
    
    By the above discussion, $\pth_B[\mu_0,\mu_1]$ 
    is contained in a rectangle $Q \supseteq Q'$ of $\horzF$, the subdivision
    of $\F$ induced by the horizontal lines of $L_0$. 
    Let $[\nu_0,\nu_1]$ be the maximal interval containing $[\mu_0,\mu_1]$ 
    such that $\pth_B(\nu_0,\nu_1) \subset \interior(Q)$.
    Note that at most one of $\pth_B(\nu_0),\pth_B(\nu_1)$ lies on $\botQ$, 
    as otherwise pushing $\pth_B[\nu_0,\nu_1]$ to $\botQ$ yields a strictly shorter plan.
	Indeed, the original path $\pth_B$ entered the influence region~$\I(\rect)$
    and so $\pth_B[\nu_0,\nu_1]\not\subset \botQ$; hence, pushing $\pth_B[\nu_0,\nu_1]$ 
    onto $\botQ$ yields a strictly shorter path.

    Let $[\rho_0,\rho_1] \supseteq [\mu_0,\mu_1]$ be the maximal interval 
    such that $\pth_B[\rho_0,\rho_1]$ lies on or below the horizontal line 
    $\ell^-\assign y=y(\topR) - 1$ that contains $\topx{Z^-}$. 
    Note that $\rho_1 < \lambda_2$ since there is a swap interval with $B$ above~$A$.
	Furthermore, we may have $\rho_0<\lambda_1$, and if both $\pth_B(\nu_0)$ and $\pth_B(\nu_1)$ lie above or on~$\ell^-$ then we have $[\rho_0,\rho_1]\subseteq[\nu_0,\nu_1]$.

    There are three subcases, as illustrated in \figref{short-corridor-III}, 
    depending on which of $\pth_B(\nu_0)$ and $\pth_B(\nu_1)$ lie on $\botQ$.
    These are indeed all subcases since, as argued above,
    we cannot have that $\pth_B(\nu_0)$ and $\pth_B(\nu_1)$ both lie on~$\botQ$.  
    In each subcase, we perform two push operations:
		\begin{itemize}
		\item \emph{Case III(a): $\pth_B(\nu_0)\in \botQ$ and $\pth_B(\nu_1)\in \topQ$.} See Figure \ref{fig:short-corridor-III} (i).
            We perform $\push(A, [\lambda_1, \lambda_2], y(\botR))$
            and $\push(B, [\nu_0, \mu_1], y(\botQ))$. 
		\item \emph{Case III(b): $\pth_B(\nu_0)\in \topQ$ and $\pth_B(\nu_1)\in \botQ$.} See Figure \ref{fig:short-corridor-III} (ii).
			We perform $\push(A, [\lambda_1, \lambda_2], y(\botR))$
            and $\push(B, [\mu_0, \nu_1], y(\botQ))$.
		\item \emph{Case III(c): $\pth_B(\nu_0)\in \topQ$ and $\pth_B(\nu_1)\in \topQ$.} See Figure \ref{fig:short-corridor-III} (iii).
            We perform  $\push(A, [\lambda_1, \lambda_2], y(\topR))$
            and $\push(B, [\rho_0, \rho_1], y(\topR) - 1)$. 
	    Note that, unlike in the previous two subcases, we push $\robA$ to $\topR$ instead of to $\botR$ 
				and $\robB$ to $\topin{Z^-}$ instead of to $\botin{Q}$.
		\end{itemize}
\end{itemize}
The push operations may introduce a discontinuity in $\plan^*_A$ at time 
$\lambda_1$ or~$\lambda_2$. 
Furthermore, in Case~III(a), we may introduce a discontinuity in $\plan^*_B$
at time $\mu_1$, and in Case~III(b), we may do so at time 
$\mu_0$. In Case~III(c), we do not introduce a discontinuity 
at $\rho_0$ or $\rho_1$ since $\yco{\pth_B(\rho_0)}=\yco{\pth_B(\rho_1)}= 
y(\topR) - 1$. Again, it can be verified that all ghost segments added
to remove discontinuities are (vertically) aligned with a vertical
grid line or a vertical segment of $\pth_A$ or $\pth_B$. 

  	\begin{figure}
	\centering
	\includegraphics{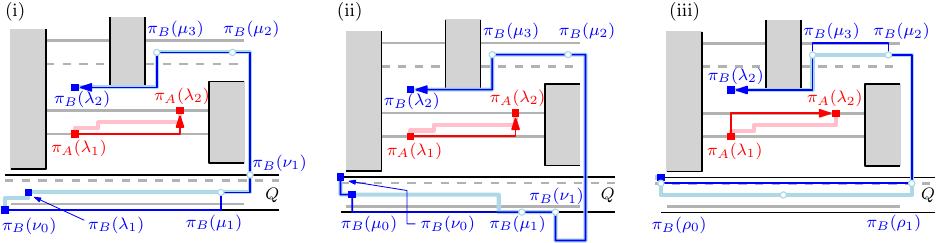}
	\caption{
	The modifications in the subcases of Case~III. The original paths are shown in pink and light blue. 
		(i) In Case~III(a) we push $\pth_A[\lambda_1,\lambda_2]$ onto $\botR$,
            and we push $\pth_B[\nu_0, \mu_1]$ onto $y(\botQ)$.
        (ii) In Case~III(b) we also push $\pth_A[\lambda_1,\lambda_2]$ onto $\botR$,
             and we push $\pth_B[\lambda_1, \nu_1]$ onto $y(\botQ)$.
             In this example, we have $\mu_0=\lambda_1$.
        (iii) In Case~III(c) we push $\pth_A[\lambda_1,\lambda_2]$ onto $\topR$,
              and we push $\pth_B[\rho_0,\rho_1]$ onto $\topR- 1$.
              We also have a secondary push, namely $\pth_B[\mu_2,\mu_3]$ onto $\topR+ 1$.}
	\label{fig:short-corridor-III}
    \end{figure} 
    
This completes the description of the main surgery procedure. Recall that whenever a primary push 
operation on $\robA$ causes $\pth_A^*$ to conflict with $\pth_B^*$, we perform a secondary push 
operation on $\robB$ as described above.  We now describe how to resolve discontinuities introduced by the push operations.

\paragraph{Resolving discontinuities.}
\label{subpara:discontinuities}
We now describe how to re-parametrize $\plan^*$ to parametrize the motion along the ghost segments 
that were added to resolve discontinuities so that the resulting plan remains feasible. 
We begin by proving a few key properties of the time instance at which a ghost segment is
created. For $\lambda\in[\lambda_1,\lambda_2]$, we call $\plan(\lambda)$ \emph{$y$-semi-separated} 
if $\plan$ transitions from being $y$-separated to not being $y$-separated at $\lambda$, or vice-versa. 
In other words, $\plan(\lambda)$ is $y$-semi-separated if
$|\yco{\pth_A(\lambda)}-\yco{\pth_B(\lambda)}| = 1$ and there is a $\delta>0$ such 
that $\plan$ is not $y$-separated during $[\lambda-\delta,\lambda)$ or during $(\lambda,\lambda+\delta]$. 
\begin{lemma}\label{lem:ghost-segment-endpts}
    Let $\lambda$ be a time instance at which a ghost segment is created by a primary
    or secondary $\push$ operation performed during the surgery on $\plan$. Then
    the following properties hold.
    \begin{enumerate}[(i)]
    \item \label{ghosti} $\lambda \in [\lambda_1, \lambda_2]$.
	\item \label{ghostii} If $\lambda \not\in \{\lambda_1, \lambda_2\}$ and $\pth_B(\lambda)\not\in\partial\I(\rect)$, then
		$\plan(\lambda)$ is $y$-semi-separated.
	\item \label{ghostiii} If $\pth_B(\lambda)\in \topRsq\cup\botRsq$ then $\pth_B^*$ may have a discontinuity 
        at time~$\lambda$
		only in Cases~III(a,b) because of the primary push operation on $\robB$,
		    and the ghost segment $g_B(\lambda)=p_B^-(\lambda)p_B^+(\lambda)$ 
		    lies below the interior of $\Rsq$. 
    \end{enumerate}
\end{lemma}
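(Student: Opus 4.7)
The plan is to prove all three parts by a systematic case analysis, going through each endpoint of each push interval used in the surgery and checking what happens there. Every ghost segment sits at such an endpoint, so it suffices to verify, for each endpoint, (a) that it lies in $[\lambda_1,\lambda_2]$, (b) that either $\plan$ is $y$-semi-separated there or $\pth_B$ crosses $\partial\I(\rect)$ there or the endpoint coincides with $\lambda_1,\lambda_2$, and (c) that the endpoint produces no ghost segment meeting $\topRsq\cup\botRsq$ except in Cases~III(a,b).

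For part~(i), primary pushes on $\pth_A$ in Cases~I and~II act on subintervals of $[\lambda_1,\lambda_2]$, and secondary pushes act on unsafe intervals $[\nu_1,\nu_2]\subseteq[\lambda_1,\lambda_2]$, so all their endpoints are in $[\lambda_1,\lambda_2]$. The delicate case is Case~III, where $\pth_B$ is pushed over $[\nu_0,\mu_1]$, $[\mu_0,\nu_1]$, or $[\rho_0,\rho_1]$, whose outer endpoints can be less than $\lambda_1$. However, the subcase definitions force $\pth_B$ already to lie on the target grid line at those outer endpoints, namely $\pth_B(\nu_0)\in\botQ$ in~III(a), $\pth_B(\nu_1)\in\botQ$ in~III(b), and $\pth_B(\rho_0),\pth_B(\rho_1)$ on $\ell^-$ in~III(c). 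Hence no discontinuity is introduced at these outer endpoints, and the only ghost-producing endpoints in Case~III are $\mu_1$ or $\mu_0$, both in $[\lambda_1,\lambda_2]$.

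For part~(ii), I use how each push endpoint is constructed. In Case~II, the interior endpoint $\bar\lambda$ is by definition either the supremum of non-$y$-separated times (II(a)) or an endpoint $\mu_0,\mu_1$ of a swap interval (II(b,c)); in either event $\plan(\bar\lambda)$ is $y$-semi-separated. Secondary-push endpoints $\nu_1,\nu_2$ bound maximal unsafe intervals, so by definition the event at such an endpoint is one of: $\pth_B$ crossing $\partial\I(\rect)$, $\plan$ switching $y$-separation status, or $\lambda\in\{\lambda_1,\lambda_2\}$. The hypotheses of part~(ii) rule out the first and third, leaving $y$-semi-separation. The primary $\pth_B$-push endpoints $\mu_0,\mu_1$ in Case~III are swap-interval endpoints, covered by the same classification.

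For part~(iii), the key observation is that $\topRsq\cup\botRsq\subseteq\partial\Rsq$, and every free point of this set lies in $\partial\I(\rect)$. Primary $\pth_A$-pushes leave $\pth_B$ untouched. For a secondary $\pth_B$-push on an unsafe interval with $B$ above $A$, the target is $y(\topR)+1$; since $\pth_A(\nu_i)\in\rect$ and $\yco{\pth_B(\nu_i)}\geq \yco{\pth_A(\nu_i)}+1\geq y(\botR)+1$, if $\pth_B(\nu_i)\in\topRsq\cup\botRsq$ then necessarily $\pth_B(\nu_i)\in\topRsq$, so its $y$-coordinate already matches the push target and no discontinuity arises; the opposite configuration is symmetric. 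In Case~III(c) the target is $y(\topR)-1$ and both endpoints $\rho_0,\rho_1$ already lie on that line. In Case~III(a) the only remaining discontinuity endpoint is $\mu_1$: because $B$ lies below $A$ on $[\mu_0,\mu_1]$ and $\pth_A(\mu_1)\in\rect$, we have $\yco{\pth_B(\mu_1)}<y(\botR)$, so $\pth_B(\mu_1)\in\topRsq\cup\botRsq$ forces $\pth_B(\mu_1)\in\botRsq$ with $\yco{\pth_B(\mu_1)}=y(\botR)-1$, and the ghost segment runs vertically from $y(\botR)-1$ down to $y(\botQ)\leq y(\botR)-1$, hence below $\interior(\Rsq)$. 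Case~III(b) is symmetric at $\mu_0$. The main obstacle in writing this out is the exhaustive bookkeeping; the only nontrivial geometric input is the subcase conditions on $\pth_B(\nu_0),\pth_B(\nu_1),\pth_B(\rho_0),\pth_B(\rho_1)$, which follow directly from their definitions together with the fact that no horizontal $0$-line crosses $\interior(Z^-)$.
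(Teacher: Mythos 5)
Your proposal is correct and follows essentially the same endpoint-by-endpoint case analysis as the paper's own proof: primary-push endpoints outside $[\lambda_1,\lambda_2]$ produce no discontinuity because $\pth_B$ already lies on the target line there, unsafe-interval endpoints are classified into $\partial\I(\rect)$-crossings, $y$-separation switches, or $\lambda\in\{\lambda_1,\lambda_2\}$, and secondary pushes cannot create discontinuities on $\topRsq\cup\botRsq$ since the push target coincides with the point's $y$-coordinate, leaving only the primary $B$-pushes of Cases~III(a,b), whose ghost segments drop from $\botRsq$ to $y(\botQ)\leq y(\botRsq)$. The only cosmetic quibbles are that in Case~II(b,c) the $y$-semi-separation claim should be stated under the hypotheses of part~(ii) (as you do for the secondary pushes), and that $\yco{\pth_B(\mu_1)}<y(\botR)$ implicitly uses $\height(\R)<1$ in Case~III, though your conclusion only needs $\yco{\pth_B(\mu_1)}<y(\topRsq)$, which you already have.
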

\begin{proof}
    It follows from the above discussion---see Cases~II and~III---that a
    primary $\push$ creates a discontinuity only during the interval 
	$[\lambda_1, \lambda_2]$. On the other hand, if the discontinuity at $\lambda$ 
    is created by a secondary push then $\plan(\lambda)$ is the endpoint of an unsafe interval. 
    By the definition of unsafe intervals, we thus have~$\lambda\in[\lambda_1,\lambda_2]$. 
    Hence, (i)~holds.

	To prove~(ii), suppose a discontinuity occurs at $\lambda \not \in \{\lambda_1, \lambda_2\}$. 
    If this happens in Case~II(a) then $\plan(\lambda)$ is $y$-semi-separated.
    In Cases II(b,c) and~III, and for secondary pushes,
    we know that $\lambda$ is an endpoint of an unsafe interval. 
    If $\pth_B(\lambda) \not\in \partial\I(\rect)$ 
	and $\lambda\not\in\{\lambda_1,\lambda_2\}$, then
	the maximality condition of an unsafe interval implies that $\plan$ transitions from being 
	$y$-separated to not being $y$-separated or vice-versa at time~$\lambda$, and so
	$\plan(\lambda)$ is $y$-semi-separated. We conclude that (ii)~holds.

	Finally, suppose $\pth_B(\lambda) \in \topRsq\cup\botRsq$. 
    Note that a	discontinuity in $\pth_B^*(\lambda)$ can exist
	only if $\lambda$ is an endpoint of an unsafe interval~$I$.
	If a secondary push is performed on $B$ during $I$ and 
	$\pth_B(\lambda) \in \topRsq$ (resp.~$\pth_B(\lambda) \in \botRsq$) then
	the secondary $\push$ sets $\pth_B^*(\lambda)$ to $\topRsq$ (resp.~$\botRsq$),
    so there is no discontinuity in $\pth_B^*$ at $\lambda$. Thus, the discontinuity
    is caused by a primary push in Case~III(a) or Case~III(b),
    which implies that $\pth_B(\lambda)$ is the upper endpoint of the ghost segment~$g_B(\lambda)$.
	In Cases II~(a,b), $B$ lies below $A$ during the interval $[\mu_0,\mu_1]$ and a discontinuity 
	due to the primary push of $B$ occurs at $\mu_0$ or $\mu_1$, so $\pth_B(\lambda) \in \botRsq$, and thus 
	$g_B(\lambda)$ lies below the interior of $\Rsq$.
	This completes the proof of the lemma.
\end{proof}

We now describe how we re-parametrize the plan in the neighborhood of
$\lambda \in [\lambda_1, \lambda_2]$ if $\plan^*(\lambda)$ has a discontinuity and 
a ghost segment $g := g_X(\lambda)$ was created, for $X \in \{A,B\}$.
Let $Y \not= X$ be the other robot. 
Note that $\pth_Y^*$ may also have a discontinuity at~$\lambda$.

If $\lambda \in \{ \lambda_1,\lambda_2\}$ then we insert a breakpoint on~$\pth_A(\lambda)$. 
(If $\pth_A^*(\lambda)$ has a discontinuity then this breakpoint corresponds to $p_A^-(\lambda)$.)
Similarly, if $\pth_B(\lambda)\in\partial \I(\rect)$, we insert a breakpoint at $\pth_B(\lambda)$.
It is easily seen that the insertion of these breakpoints is a compliant modification of the plan.
Let $\plan[\lambda^-,\lambda^+]$ be the line segment (between two consecutive breakpoints)
of $\plan$ that contains $\plan(\lambda)$; 
if $\plan(\lambda)$ is a breakpoint of $\plan(\lambda)$ then $\lambda^-=\lambda^+=\lambda$.
In this case, we first add a tiny time interval $I_\lambda$\label{page:I_lambda} at $\lambda$, set 
$\plan(\lambda')=\plan(\lambda)$, for all $\lambda'\in I_\lambda$, and re-parametrize the plan 
during $I_\lambda$ so that $\pth^*_X[I_\lambda]=g$. 

Since $\plan$ is decoupled, one of the robots is parked at time $\lambda$ 
and the other is moving along a segment during the interval $[\lambda^-,\lambda^+]$. 
We describe the re-parametrization of
$\plan^*$ during the interval $[\lambda^-,\lambda^+]$ by describing a sequence 
of moves we perform.
First, assume that $\pth_Y^*$ does not have a discontinuity at $\lambda$.
There are three cases to consider:
\begin{enumerate}[(i)]
	\item \emph{$Y$ is parked at time $\lambda$.} Keep $Y$ parked at its parking location
		$\pth_Y^*(\lambda)$. First, move $X$ along $\pth_X^*[\lambda^-,\lambda)$, then along 
		the ghost segment $g$, and finally along $\pth_X^*(\lambda,\lambda^+]$.
    \item \emph{$\pth_Y^*(\lambda)$ is a breakpoint.} 
	    In this case, $\lambda^-=\lambda^+=\lambda$.
		Park $Y$ at $\pth_Y^*(\lambda)$ and move $X$ along the ghost segment $g$. (Recall that we have replaced $\lambda$ with a tiny interval $I_\lambda$ to parametrize the motion of $X$ 
		along $g$.)
    \item \emph{$\pth_Y^*(\lambda)$ lies in the relative interior of the segment 
	    $e_Y := \pth_Y(\lambda^-)\pth_Y(\lambda^+)$ of $\pth_Y^*$.}
		Since $\pth_Y^*(\lambda)$ is not a breakpoint, 
		neither $\lambda\not\in\{\lambda_1,\lambda_2\}$ nor $\pth_B(\lambda)\not\in\partial \I(\rect)$.
		By \lemref{ghost-segment-endpts}, $\plan(\lambda)$ is  a $y$-semi-separated 
		configuration. The definition of a $y$-semi-separated 
		configuration implies that $\pth_Y(\lambda)$ is $x$-separated and 
		$e_Y$ is a vertical segment. 
		Since $\plan$ is decoupled and $Y$ is moving at time $\lambda$,
		$X$ is parked at $p_X^-(\lambda)$ at time $\lambda$. We park $Y$ at time 
		$\lambda^-$ (at location $\pth_Y^*(\lambda^-)$), move $X$ along the ghost 
		segment $g$, park $X$ at $p_X^+(\lambda)$, 
		move $Y$ along the segment $\pth_Y^*(\lambda^-)\pth_Y^*(\lambda^+)$.
\end{enumerate}

\begin{figure}
    \centering
    \includegraphics{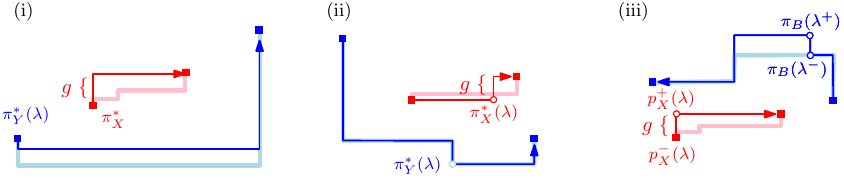}
    \caption{The cases to resolve a discontinuity at time
    $\lambda$. In case (i), we move $X$ along $g$ directly. In case (ii), we keep $Y$ parked at the breakpoint and move $X$ directly. In case (iii), we park $Y$ at $\pth_Y^*(\lambda^-)$, move $X$ along $g$, move $Y$ to $\pth_Y^*(\lambda^+)$, and then proceed with $\plan^*$.}
    \label{fig:enter-label}
\end{figure}

If both $\pth_A^*$ and $\pth_B^*$ have  discontinuities at $\lambda$,
then we reprameterize $\plan^*[\lambda^-,\lambda^+]$ as follows: first follow
$\plan^*[\lambda^-,\lambda^+)$, park $A$ at $p_A^-(\lambda)$ and 
move $B$ along the ghost segment $p_B^-(\lambda)p_B^+(\lambda)$, park $B$ at $p^+_B(\lambda)$ 
and move $A$ along the ghost segment $p_A^-(\lambda)p_A^+(\lambda)$, and finally follow $\plan^*(\lambda,\lambda^+]$. 

This completes the description of the re-parametrization of $\plan^*[\lambda^-,\lambda^+]$.
\lemref{ghost-correctness} in \secref{surgery-correct} proves the feasibility of 
$\plan^*[\lambda^-,\lambda^+]$ after the re-parametrization.

\def\ph{\varphi}
\section{Proof of Correctness}\label{sec:surgery-correct}
We now prove that the plan $\plan^*$ resulting from the surgery described
above preserves properties (P1)--(P5).
%
By construction, the resulting plan is decoupled and alternating, 
so (P5) is satisfied. Furthermore, the selected bad segment, $e$,
is pushed to a grid line and no new horizontal bad segments are 
introduced, so (P3) is satisfied. 
Ghost segments and zero-length segments at the breakpoints introduced by the reprarametrization procedure are the only new vertical segments in $\plan^*$.  The breakpoints are introduced at the intersection points of grid lines with the segments of $\plan$, so they are contained in the vertical grid lines or 
aligned with the vertical 
segments of $\plan$. The discussion during the surgery procedure implies that the ghost segments 
because of primary push operations also satisfy this property.
Finally, the  ghosts segments introduced on $\pth_B^*$ because of secondary 
pushes are on an endpoint of an unsafe interval, therefore by \lemref{unsafe-vertical-edge}, they 
also have the desired property.
Hence, the surgery preserves (P4) as well. In the rest of the section, we thus focus on proving 
feasibility (P1) and optimality (P2) of $\plan^*$.

\subsection[Feasibility of $\pi^*$]{Feasibility of $\plan^*$}
\label{subsec:feasibility}
We prove the feasibility of $\plan^*$ by showing that both $\pth_A^*$
and $\pth_B^*$ lie inside $\fre$ 
and that $\pth_A^*(\lambda)$ and $\pth_B^*(\lambda)$ are not in conflict for any $\lambda$.
The plan of robot~$\robA$ is only modified during the time 
interval~$[\lambda_1,\lambda_2]$, and the modifications are
such that $\pth^*_A[\lambda_1,\lambda_2]$, including the ghost segments added to $\pth_A$,
lie in $\R \subset \fre$.
Similarly, the modifications in $\pth_B$ (including the ghost segments) 
because of primary push operations (which happen in Case~III) lie in a 
rectangle $Q\subset \fre$.
Thus, it remains 
To prove that the modifications in $\pth_B$
because of secondary push operations stay in the free space~$\fre$ and that 
$\pth^*_B$ does not conflict
with $\pth^*_A$ (after re-parameterizing the plan around ghost segments.) To this end, we first prove that our
secondary push operations resolve conflicts
during unsafe intervals before adding ghost segments.
\begin{lemma}\label{lem:secondary-push-correctness}
If a secondary push is performed on $\pth_B[\nu_1,\nu_2]$, 
then the new subpath $\pth^*_B[\nu_1,\nu_2]$, including its ghost segments, lie in $\fre$. 
Furthermore, $\pth^*_A[\nu_1,\nu_2]$ and $\pth_B^*[\nu_1,\nu_2]$, before adding the 
ghost segments, are not in conflict.
\end{lemma}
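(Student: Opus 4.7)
The plan is to apply \lemref{BinZ} to confine $\pth_B[\nu_1,\nu_2]$ to a single connected component of $Z^+\cap\fre$ or $Z^-\cap\fre$, then exploit the corridor structure of $\R$ to show this component is a rectangle spanning the full height of $Z^\pm$, and finally derive both free-space containment and $y$-separation from this.

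First I would invoke \lemref{BinZ} to conclude that $\pth_B[\nu_1,\nu_2]$ is contained in a single connected component, call it $Z^*$, of either $Z^+\cap\fre$ or $Z^-\cap\fre$. Assume without loss of generality that $B$ lies below $A$ throughout $[\nu_1,\nu_2]$, so $Z^*\subseteq Z^-\cap\fre$ and the secondary push sends $\pth_B^*$ to $y^* = y(\botR)-1 = y(\botRsq)$.

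The key structural claim is that $Z^*$ is a rectangle whose bottom edge lies on the horizontal line $y = y(\botRsq)$. I would prove this by observing that no horizontal edge of $\bd\fre$ can intersect the interior of $Z^-$: such an edge at height $y_f\in(y(\botRsq),y(\topx{Z^-}))$ would generate a horizontal $1$-line at $y = y_f+1$ lying strictly between $y(\botR)$ and $y(\topR)$, contradicting that $\R$ is a face of the subdivision $\horF$. Therefore each connected component of $Z^-\cap\fre$ is a rectangle that spans the full vertical extent of $Z^-$.

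Given this structure, the secondary push preserves $x$-coordinates and assigns the $y$-coordinate $y(\botRsq)$ (the bottom edge of $Z^*$) to each point, so $\pth_B^*(\nu)\in Z^*$ for all $\nu\in (\nu_1,\nu_2)$. Each ghost segment at $\nu_1$ or $\nu_2$ is a vertical segment joining $\pth_B(\nu_i)\in Z^*$ to $(\pth_B(\nu_i)_x, y(\botRsq))$ on the bottom edge of $Z^*$, and hence lies inside the rectangle $Z^*\subseteq\fre$. This establishes the first assertion. For the second assertion, note that $[\nu_1,\nu_2]\subseteq[\lambda_1,\lambda_2]$ and in every case of the surgery the primary push constrains $\pth_A^*[\lambda_1,\lambda_2]$ so that $\pth_A^*(\nu)_y \in [y(\botR),y(\topR)]$; combined with $\pth_B^*(\nu)_y = y(\botR)-1$, this yields a $y$-distance of at least $1$, so there is no $\ell_\infty$-conflict. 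The symmetric case with $Z^+$ and $y^* = y(\topR)+1$ follows by the same argument with roles reversed. The main obstacle I expect is the structural claim that no horizontal edge of $\bd\fre$ can penetrate the interior of $Z^-$, which ties the geometry of the secondary push to the definition of the corridor subdivision $\horF$.
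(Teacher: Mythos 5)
Your proof is correct and follows essentially the same route as the paper: it confines $\pth_B[\nu_1,\nu_2]$ to a single rectangular component of $Z^\pm\cap\fre$ via the $y$-separation of unsafe intervals (Lemma~\ref{lem:BinZ}), notes the push target is that rectangle's outer edge so the pushed path and ghost segments stay in $\fre$, and gets non-conflict from $y(\botR)\leq \yco{\pth_A^*(\nu)}\leq y(\topR)$. The only difference is that you explicitly re-derive the fact that no horizontal edge of $\bd\fre$ crosses the interior of $Z^-$ (so the components span its full height), which the paper instead cites from its earlier discussion of the influence region and the definition of $\horF$.
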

\begin{proof}
A secondary push is of the form $\push(B,(\nu_1,\nu_2),y^*)$, where
$y^*=y(\topR)+1$ or $y^*=y(\botR)-1$. Since $y(\botR)\leq \yco{\pth^*_A(\nu)}\leq y(\topR)$
for all $\nu\in (\nu_1,\nu_2)$, this immediately implies that 
$\pth^*_B[\nu_1,\nu_2]$ and $\pth^*_A[\nu_1,\nu_2]$ do not conflict before the ghost 
segments are added.

To prove that $\pth^*_B[\nu_1,\nu_2]\subset\fre$, we first observe that
$\pth_B(\nu)$ must be $y$-separated from $\pth_A(\nu)$, for all 
$\nu\in (\nu_1,\nu_2)$, by the definition of unsafe intervals. Suppose $\robB$ lies below $\robA$
during $[\nu_1,\nu_2]$---the case of $\robB$ lying above $\robA$ is symmetric---then
$\pth_B[\nu_1,\nu_2]\subset Z^-$.  Recall that every connected
component of $Z^-\cap \I(\R)$ is a rectangle.
Thus $\pth_B[\nu_1,\nu_2]$ and its ghost segments are contained in one such rectangle, and
the secondary push keeps the path inside this rectangle and thus inside $\fre$.
\end{proof}
Note that  for any $\lambda$, if $\plan(\lambda)$ is $x$-separated 
then $\pth^*_A(\lambda)$ and $\pth^*_B(\lambda)$ are not in conflict either
since we do not change the $x$-coordinate of any point on the paths.
Any $\lambda\in[\lambda_1,\lambda_2]$ such that $\plan(\lambda)$
is not $x$-separated must lie in an unsafe interval. Hence, 
\lemref{secondary-push-correctness} implies that the configurations 
$\pth^*_A(\lambda)$ and $\pth^*_B(\lambda)$ are not in conflict after we have
applied all secondary pushes but before we added ghost segments. 
The following lemma proves the feasibility of $\plan^*[\lambda_1,\lambda_2]$ even after adding the 
ghost segments and re-parametrizing the plan.

\begin{lemma}
	\label{lem:ghost-correctness}
	Suppose $\plan^*$ has a discontinuity at time $\lambda$ before adding the ghost segments
	such that $\plan(\lambda)$ lies on a segment $\plan[\lambda^-,\lambda^+]$. Then the
    re-parametrization of $\plan^*$ in the interval $[\lambda^-, \lambda^+]$
	ensures that $A$ and $B$ are not in conflict during the interval $[\lambda^-,\lambda^+]$.
\end{lemma}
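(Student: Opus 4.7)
My plan is to do a case analysis on the re-parametrization procedure described just before the lemma, distinguishing the cases~(i), (ii), (iii) of a single discontinuity plus the case in which both $\pth_A^*$ and $\pth_B^*$ have a discontinuity at time~$\lambda$. In each case I must show that while the moving robot~$X$ traverses its ghost segment $g=g_X(\lambda)$ and its partner $Y\ne X$ sits at (or moves to) the corresponding parked position, the two placements stay at $\ell_\infty$-distance at least~$1$.

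First, I would observe that every ghost segment $g_X(\lambda)$ is vertical, since each $\push$ operation keeps the $x$-coordinate unchanged, and that it lies in~$\fre$: for $X=A$ the ghost stays inside $\rect\subseteq\fre$; for $X=B$ arising from a primary push it lies inside the rectangle $Q$ introduced in Case~III of the surgery; and for $X=B$ arising from a secondary push it lies in the rectangular component of $Z^-\cap\I(\rect)$ or $Z^+\cap\I(\rect)$ containing $\pth_B[\nu_1,\nu_2]$, as guaranteed by \lemref{secondary-push-correctness}. Hence no ghost causes $X$ to leave~$\fre$, and it remains only to check that $X$ does not collide with~$Y$ during the re-parametrization.

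The crux of the argument is the following $x$-separation fact. If $\lambda\notin\{\lambda_1,\lambda_2\}$ and $\pth_B(\lambda)\notin\bd\I(\rect)$, then by \lemref{ghost-segment-endpts}(ii) the configuration $\plan(\lambda)$ is $y$-semi-separated, so in any neighborhood of $\lambda$ there are configurations that fail to be $y$-separated; feasibility of $\plan$ at those nearby times together with continuity then forces $|\xco{\pth_A(\lambda)}-\xco{\pth_B(\lambda)}|\geq 1$. Because $g$ is vertical and the push preserves $x$-coordinates, this horizontal separation is maintained throughout the ghost motion while $Y$ stays at $\pth_Y^*(\lambda)$ (or moves vertically along~$e_Y$ in case~(iii), which by the same reasoning must be a vertical segment), and conflict-freeness follows immediately in each of the cases~(i)--(iii).

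The main obstacle is to handle the boundary situations $\lambda\in\{\lambda_1,\lambda_2\}$, $\pth_B(\lambda)\in\bd\I(\rect)$, and the case of simultaneous discontinuities in $\pth_A^*$ and $\pth_B^*$. In the first two situations the re-parametrization inserts an explicit breakpoint at the offending configuration, which turns the ghost insertion into case~(ii): $Y$ is parked at $\pth_Y^*(\lambda)$ while $X$ traverses~$g$. When $\pth_B(\lambda)\in\bd\I(\rect)$, \lemref{ghost-segment-endpts}(iii) restricts any ghost on $\pth_B$ to lie strictly below the interior of $\Rsq$, so that the entire ghost has $y$-coordinate at most $\yco{\botRsq}= \yco{\botR}-1$ while $\pth_A^*(\lambda)\in\rect$ has $y$-coordinate at least $\yco{\botR}$; the vertical gap is therefore at least~$1$ and no conflict can occur. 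For $\lambda\in\{\lambda_1,\lambda_2\}$, I would inspect each of Cases~I, II, III of the surgery using \lemref{lambda2}, which pins down the $y$-order of $\plan(\lambda_i)$ relative to the presence of swap intervals: in each subcase either $\pth_A^*(\lambda_i)=\pth_A(\lambda_i)$ so that no ghost appears, or the ghost runs between the two horizontal edges of $\rect$ while $\pth_B(\lambda_i)$ lies sufficiently far out in $Z^+$, $Z^-$, or $\Rud^\pm$ to be $x$-separated from $\pth_A(\lambda_i)$, whence the same vertical-ghost argument applies. Finally, for a double discontinuity the re-parametrization sequences the two single-ghost motions---$B$ moves along its ghost while $A$ is parked at $p_A^-(\lambda)$, and then $A$ moves along its ghost while $B$ is parked at $p_B^+(\lambda)$---and I would apply the $x$-separation argument above to each of the two phases in turn.
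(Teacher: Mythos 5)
Your generic case coincides with the paper's argument: for $\lambda\notin\{\lambda_1,\lambda_2\}$ with $\pth_B(\lambda)\notin\partial\I(\rect)$, \lemref{ghost-segment-endpts}(ii) gives $y$-semi-separation, hence $x$-separation, and since pushes preserve $x$-coordinates and ghost segments are vertical, the re-parametrized motions in cases (i)--(iii) stay conflict-free. That part is sound and is exactly how the paper handles it.

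The gap is in the endpoint case $\lambda\in\{\lambda_1,\lambda_2\}$. You claim that there $\pth_B(\lambda_i)$ ``lies sufficiently far out in $Z^+$, $Z^-$, or $\Rud^\pm$ to be $x$-separated from $\pth_A(\lambda_i)$'' and then reuse the vertical-ghost argument. But the only subcase that needs work is precisely when $\plan(\lambda_1)$ is \emph{not} $x$-separated (if it is, nothing needs proving, since the surgery never changes $x$-coordinates), and in that subcase $x$-separation simply is not available: $B$ can sit directly above or below $A$. What must be shown instead --- and what the paper establishes in Claim~\ref{claim:B-in-top-or-bot} --- is that in this situation $\plan(\lambda_1)$ is $y$-separated, that $\pth_B(\lambda_1)$ lies exactly on $\topRsq\cup\botRsq$, and that $\pth_B^*$ has no discontinuity of its own at $\lambda_1$; this requires a per-case analysis of the surgery (Case~I forces $B$ onto $\topRsq$; Case~III uses \lemref{lambda2} to place $B$ on $\botRsq$; Case~II needs a short contradiction via the choice of $\bar\lambda$), after which conflict-freeness of $A$'s ghost (which lies in $\rect$ and, at $\lambda_1$, only arises in Cases~II/III) follows from \emph{$y$-separation} with the parked $B$, not $x$-separation. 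Your sketch neither isolates this subcase nor supplies that analysis, and as stated the separation claim is false in it. Two smaller omissions in the same spirit: for $\pth_B(\lambda)\in\partial\I(\rect)$ you only treat ghosts of $B$ lying below $\Rsq$ via \lemref{ghost-segment-endpts}(iii), but $B$ may enter or leave through a vertical edge of $\Rsq$ (then the ghost lies on $\partial\Rsq$ and one argues $x$-separation from $\rect$), and you also need the subcase where $B$ has no ghost and is merely parked on $\partial\I(\rect)$ while $A$ traverses its ghost inside $\rect$; likewise, your double-discontinuity step again invokes the $x$-separation fact, which is only justified away from the endpoint and boundary situations.
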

\begin{proof}
We distinguish three cases.
\begin{itemize}
\item 
	First, assume that  $\lambda \ne \{\lambda_1, \lambda_2\}$ and
	$\pth_B(\lambda) \not\in \partial\I(\rect)$.
    \lemref{ghost-segment-endpts}\,(\ref{ghostii}) then implies that
	$\plan(\lambda)$ is $y$-semi-separated. 
	Suppose $X$ is parked at time $\lambda$ and $Y$ is moving 
	along the vertical  segment $\pth_Y[\lambda^-,\lambda^+]$. Since $\plan(\lambda)$ is 
	$x$-separated, $A$ and $B$ are not in conflict during $\plan^*[\lambda^-,\lambda^+]$ after 
	re-parametrizing $\plan^*[\lambda^-,\lambda^+]$.
\item
	Next, suppose $\pth_B (\lambda) \in \bd\I(\rect)$. In this case, 
	$\lambda^-=\lambda^+=\lambda$.
	First, suppose $\pth_B^*$ has a discontinuity at $\lambda$ and a ghost segment 
	$g_B(\lambda)$ was added.
	If $\pth_B(\lambda) \in \topRsq\cup\botRsq$, then 
	$g_B(\lambda)$ lies outside $\Rsq$ by \lemref{ghost-segment-endpts}\,(\ref{ghostiii}).
	On the other hand, if $\pth_B(\lambda)$ lies on the left or right edge of $\Rsq$ then
	$g_B(\lambda)$ lies on $\partial\Rsq$. In either case, $g_B(\lambda)$ does not intersect
	the interior of $\Rsq$ while $A$
	lies inside $\rect$ during $[\lambda^-,\lambda^+]$.
	If $\pth^*_B$ does not have a discontinuity at $\lambda$, then it is parked at 
	$\partial\I(\rect)$ while $A$ traverses its ghost segment inside $\rect$.
	Hence, $A$ and $B$ are not in conflict in $\plan^*[\lambda^-,\lambda^+]$ after 
	its re-parametrization.
\item
	Finally, suppose $\lambda=\lambda_1$. (The case of $\lambda=\lambda_2$ is symmetric.)
	Then $B$ is parked at $\pth_B(\lambda_1)$ in $\plan$, $A$ is 
	moving along a vertical segment, and $\lambda^-=\lambda^+=\lambda$ since we 
	introduced a breakpoint at $\pth_A(\lambda)$. 
	If $\plan(\lambda_1)$ is $x$-separated then
	moving $\robA$ and/or $\robB$ along their ghost segments will not cause a conflict.
	So assume $\plan(\lambda_1)$ is not $x$-separated. Then $\plan(\lambda_1)$ is 
	$y$-separated and $\lambda_1$ is the endpoint of an unsafe interval~$I$ that is not a 
	swap interval. 
    \begin{claim} \label{claim:B-in-top-or-bot}
	$\pth_B(\lambda_1)\in\topRsq\cup\botRsq$ and $\pth_B^*$ does not have a discontinuity 
	at $\lambda_1$.
    \end{claim}
    \begin{claimproof}
	First, note that a discontinuity at $\pth_B^*(\lambda_1)$ in this case 
	can happen only because of a secondary push. (Indeed, a discontinuity in $\pth_B^*$ because of a primary 
    push, which may happen in Case~III, always occurs at the endpoint of a swap interval.)
    Hence, by \lemref{ghost-segment-endpts}, it suffices to show that $\pth_B(\lambda)\in\topRsq\cup\botRsq$.
	In Case~I, a secondary push on $B$ happens only if $B$ lies above $A$. Since 
	$\pth_A(\lambda_1)\in\topR$, we have $\pth_B(\lambda_1)\in\topRsq$, as claimed.
	In Case~III, by \lemref{lambda2}, $B$ lies below $A$ at time~$\lambda_1$. 
	Since $\pth_A(\lambda_1)\in\botR$, we have $\pth_B(\lambda_1)\in\botRsq$, as claimed. 
	Finally,  in Case~II, $B$ lying above $A$ at $\lambda_1$ would imply that the value of 
	$\bar\lambda$ is greater than the right endpoint of\footnote{Recall that $I_\lambda$
    is the tiny interval we add when $\lambda^-=\lambda^+=\lambda$; see page~\pageref{page:I_lambda}.} $I_\lambda$ 
	and that a secondary push is not performed for $I_\lambda$ because
	$\pth_A^*[\lambda_,\bar\lambda] \subset \botR$. But then neither $\pth^*_A$ nor 
	$\pth^*_B$ has a discontinuity at $\lambda_1$, a contradiction. Hence,  $\robB$ lies 
	below $\robA$ at $\lambda_1$. 
	Since $\pth_A(\lambda_1)\in \botR$, we have $\pth_B(\lambda_1)\in\botRsq$, as claimed.
     \end{claimproof}

	If $\plan(\lambda_1)$ is not $x$-separated and only $\pth_A^*(\lambda_1)$ has 
	a discontinuity then we are in Case~II or~III of the surgery.
	Together with Claim~\ref{claim:B-in-top-or-bot} this implies that $\pth_B(\lambda_1)\in\botRsq$, 
    so our conflict-resolving procedure parks $B$ at $\pth_B(\lambda_1)$ while $A$ moves along the ghost segment
	$p_A^-(\lambda_1)p_A^+(\lambda_1)$, ensuring that $A$ and $B$ are not in conflict 
	in $\plan^*[\lambda^-,\lambda^+]$ after its re-parametrization.
	This completes the proof of the lemma.
\end{itemize}
\end{proof}


We conclude that $\plan^*[\lambda_1,\lambda_2]$ is feasible, which
proves the feasibility of the plan~$\plan^*$ for the surgery in Cases~I, II, and III(b) 
because in those cases
the plan is modified only during~$[\lambda_1,\lambda_2]$. This also \emph{almost} implies 
the feasibility for Cases~III(a) and III(c), except that the surgery may modify $\plan$ beyond the 
interval $[\lambda_1,\lambda_2]$. In particular, 
it is possible that $\nu_0 < \lambda_1$ in Case~III. Then, when we push 
$\pth_B(\nu_0, \mu_1)$ to $\botQ$ in Case III(a), or $\pth_B(\rho_0,\rho_1)$ to $\topin{Z^-}$ 
in Case~III(c), it might happen that $\pth_B^*(\nu)$ conflicts with $\pth_A(\nu)$ 
for some $\nu \in (\nu_0, \lambda_1)$. 
We will argue that, in fact, this cannot happen: if such a $\nu$ were to exist
then we can shortcut~$\plan$, contradicting the optimality of~$\plan$, 
or show that $\plan[\nu_0,\lambda_1]$ can be replaced by another plan $\pi'$ 
such that we remain in the case III  but 
the surgery in the modified plan does not cause a conflict. 

We begin with a few definitions and some properties of~$\plan$.
\medskip

In the following, suppose $\plan^*$ has a conflict at some time in $[\nu_0,\lambda_1]$
caused by the surgery in Case~III(a) or~III(c). Let 
\begin{equation}
	\label{eq:first-conflict}
\fconflict = \inf \{\nu \in [\nu_0,\lambda_1) : \text{$\plan^*(\nu)$ has a conflict} \}.
\end{equation}
From now on we assume $\fconflict$ exists, otherwise there is nothing left to show.
Let $Q$ be the rectangle of $\horzF$
defined in Case~III of the surgery. We now prove a few properties of~$\plan$.
\begin{figure}
        \centering
        \includegraphics{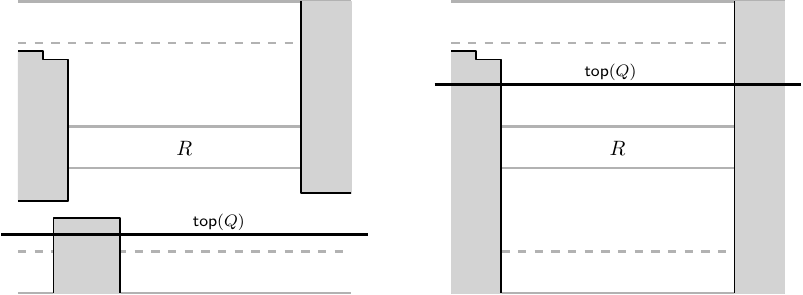}
		\caption{(left) $Q$ lies below $\rect$; (right) $\topQ$ lies above $\topR$.}
		\label{fig:QbelowR}
\end{figure}
\begin{lemma}
	\label{lem:QbelowR}
		Consider Case~III of the surgery. Then the rectangle $Q$ lies 
		below $\botR$, that is, $y(\topQ) \leq y(\botR)$.
\end{lemma}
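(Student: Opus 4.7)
My plan is to assume, for contradiction, that $y(\topQ) > y(\botR)$ and derive the nonexistence of a blocked pair, contradicting \lemref{main-properties}(ii).

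The first step is to establish $\height(\rect) < 1$. Case~III has two swap intervals, so by \lemref{main-properties}(ii), $\pth_B[\lambda_1,\lambda_2]$ contains a blocked pair. A blocked pair requires non-empty free components in the corner squares of $\Rsq$ adjacent to the sides of $\Rud$; but if $\height(\rect) \geq 1$, then $\I(\rect) = \Rud$, so $\freesp \cap (\Rud^- \cup \Rud^+) = \emptyset$ and no such components can exist. Hence $\height(\rect) < 1$, giving $y(\topR) - 1 < y(\botR)$, and combined with $Q' \subseteq Z^-$ this yields $y(\botQ) \leq y(\topQ') \leq y(\topR) - 1 < y(\botR)$.

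Next, I would pin down the shape of $Q$. Because $y(\topQ)$ lies on a horizontal $0$-line (being the top edge of a face of $\horzF$) and no horizontal $0$-line can cross the interior of $R$ (as $R$ is a face of $\horF$, whose primary grid lines include all horizontal $0$-lines), we have $y(\topQ) \geq y(\topR)$. An analogous argument using that no horizontal $0$-line lies in the interior of $Z^-$ (since such a line would produce a horizontal $1$-line in the interior of $R$), combined with $y(\botQ) \leq y(\topR) - 1$, forces $y(\botQ) \leq y(\botR) - 1$. Because $[\mu_0,\mu_1]$ is a swap interval with $\pth_A[\mu_0,\mu_1] \subseteq R$, $\pth_B$ must share an $x$-coordinate with some point of $I_x$ at the swap moment, so $Q$'s $x$-range overlaps~$I_x$; together with $Q$'s $y$-range containing $[y(\botR), y(\topR)]$, this gives $Q \cap R$ positive area. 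Because distinct faces of $\horzF$ have disjoint interiors, $Q$ must equal the face of $\horzF$ containing~$R$, so $Q \supseteq R$. Finally, since the vertical edges of $R$ lie on $\partial\freesp$, $Q$ cannot extend past $x_\R^\pm$ at $y$-values in $(y(\botR), y(\topR))$, so $Q$'s $x$-range equals exactly~$I_x$.

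The contradiction then follows from the shape of~$Q$. Since $Q$ is $x$-maximal among faces of $\horzF$, its left and right edges $\{x_\R^\pm\} \times [y(\botQ), y(\topQ)]$ lie on $\partial\freesp$ with $\freesp$ on the interior side. Using $[y(\botR)-1, y(\topR)] \subseteq [y(\botQ), y(\topQ)]$, the region immediately outside $x = x_\R^\pm$ at $y \in (y(\botR)-1, y(\topR))$ is disjoint from~$\freesp$. But a blocked pair in $\Rud^-$ or $\Rud^+$ requires a non-empty free component in the corresponding bottom corner square (whose $y$-range is $[y(\botR)-1, y(\botR))$) adjacent to the edge separating that square from~$\Rud$. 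Such a component would need a 2-dimensional neighborhood at $x$-values arbitrarily close to $x_\R^-$ or $x_\R^+$ and $y$-values in an open subinterval of $(y(\botR)-1, y(\botR))$, which is outside~$\freesp$. This contradiction completes the proof. The main obstacle will be carefully justifying that $Q$ equals the face of $\horzF$ containing~$R$, which relies on the disjointness of interiors of faces of $\horzF$ together with the $x$-maximality of~$Q$; a secondary subtlety is verifying that a non-empty free component adjacent to the boundary of $\Rud$ inside a bottom corner square must contain a genuine 2-dimensional neighborhood rather than a thin degenerate segment.
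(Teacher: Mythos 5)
Your proposal is correct in outline: you reach the same contradiction as the paper, namely that the hypothesis $y(\topQ) > y(\botR)$ rules out any blocked pair in $\I(\rect)$, contradicting \lemref{main-properties}(ii), which requires a blocked pair when there are two swap intervals. You also correctly deduce the key facts $y(\topQ) \geq y(\topR)$ and $y(\botQ) \leq y(\botR)-1$ from the structure of $\horzF$.

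Where you diverge is the route from those facts to the nonexistence of a blocked pair, and your route is considerably longer. The paper does not establish $\height(\rect)<1$, does not determine the exact $x$-range of $Q$, and does not argue that $Q\supseteq R$. Instead, it observes directly that since $\botQ$ and $\topQ$ lie on consecutive horizontal $0$-lines, no vertex of $\freesp$ has $y$-coordinate strictly between $y(\botQ)$ and $y(\topQ)$, hence none in $(y(\botRsq),y(\topR))$. This immediately forces the vertical edges of $\freesp$ that support the left and right edges of $\R$ to have their bottom endpoints at or below $y(\botRsq)$. Combined with \obsref{free-segments} (the obstacle on the far side of such an edge extends horizontally by more than one unit), the bottom corner squares of $\Rsq$ are entirely obstacle, so $\I(\rect)$ cannot meet them and a blocked pair is impossible. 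Your detour through the shape of $Q$ eventually establishes the same thing, but it requires you to justify $Q=$ (the face of $\horzF$ containing $R$) and to reason about adjacency and ``$2$-dimensional neighborhoods'' at the end---both of which you flagged yourself as the main obstacles. The paper's use of \obsref{free-segments} at the last step is cleaner: it kills the entire corner square rather than arguing a component cannot touch the boundary, and it sidesteps the degenerate-segment worry you raise.

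In short: same high-level idea and same contradiction, but the paper's argument is shorter because it reasons about vertical edges of $\freesp$ and invokes \obsref{free-segments}, rather than pinning down $Q$'s exact geometry. Neither of the two subtleties you flagged is fatal, but both are handled more simply by the paper's more direct route.
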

\begin{proof}
	No horizontal grid line can intersect the interior of $\rect$.
    Hence, $y(\topQ) > y(\botR)$ would imply $y(\topQ) \geq y(\topR)$. 
	Furthermore, the argument in the Case~III of the surgery implies that 
	$y(\botQ)\leq y(\botin{Z^-})=y(\botRsq)$. Since no vertex of $\fre$ can lie in the range $(y(\botQ),y(\topQ))$,
    no vertex of $\fre$ has 
	$y$-coordinate in the (open) interval $(y(\botRsq), y(\topR))$. 
    Thus, the bottom endpoints of the vertical 
	edges of $\fre$ that contain the left and right edges of $\rect$ lie on 
	or below $\botRsq$, and $\I(\rect)$ does not intersect the bottom corner squares or 
	$\Rsq$ (see \figref{QbelowR}(left)). Hence,
	$\I(\rect)$ does not admit any blocked pair, and therefore by \lemref{main-properties}, 
	$\plan$ has at most one swap interval, implying that Case~III does not occur, 
	a contradiction. We conclude that $y(\topQ) \leq y(\botR)$.
\end{proof}
\begin{lemma} 
	\label{lem:AaboveB}
	Consider Case~III(c). If $\pth_A(\fconflict) \not\in\I(\rect)\cup Q$
	then $\yco{\pth^*_A(\fconflict)} \geq \yco{\pth^*_B(\fconflict)}$.
\end{lemma}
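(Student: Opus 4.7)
I argue by contradiction: suppose $\yco{\pth^*_A(\fconflict)} < \yco{\pth^*_B(\fconflict)}$. Since $\fconflict < \lambda_1$, the surgery does not alter $\pth_A$ at this time, so $\pth^*_A(\fconflict) = \pth_A(\fconflict)$. For the conflict to arise from the push $\push(B,[\rho_0,\rho_1],y(\topR)-1)$ performed in Case~III(c), we must have $\fconflict \in [\rho_0,\rho_1]$, whence $\yco{\pth^*_B(\fconflict)} = y(\topR) - 1$. The contradiction hypothesis together with the $L_\infty$-conflict requirement then force $\yco{\pth_A(\fconflict)} \in (y(\topR)-2, y(\topR)-1)$ and $|\xco{\pth_A(\fconflict)} - \xco{\pth_B(\fconflict)}| < 1$, where $\pth_B(\fconflict) \in Q$.

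Next, I would exploit the structural lemmas already established. By \lemref{QbelowR}, $y(\topQ) \leq y(\botR)$, so $Q$ lies below $\rect$. The hypotheses of \lemref{before-lambda0-not-other-side} are satisfied in Case~III(c), so $\pth_A(\fconflict) \not\in \interior(\rect) \cup (\Rxy^+ \cap \gamma(\rect))$. Combined with the assumption $\pth_A(\fconflict) \not\in \I(\rect) \cup Q$ and the fact $\pth_A(\fconflict) \in \fre$, this shows $\pth_A(\fconflict) \not\in \Rsq \cup Q$; in particular $\pth_A(\fconflict) \not\in \gamma(\rect)$, since $\gamma(\rect) \subset \Rsq$.

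I would then distinguish two subcases based on whether $\pth_A(\fconflict) \in Q^+$ (the unit-height rectangle above $Q$ introduced in \lemref{before-lambda1-x-sep-below}). In the first subcase, I trace $\pth_A$ backward from $\fconflict$ to locate an earlier time $\nu^* \leq \fconflict$ at which $\plan(\nu^*)$ is $x$-separated, $\pth_A(\nu^*) \in Q \cup Q^+ \cup \gamma(\rect)$, and $\pth_B[\nu^*, \mu_1] \subset Q$; applying \lemref{before-lambda1-x-sep-below} at $\nu^*$ then yields a compliant re-parametrization that is $x$-separated on $[\nu^*, \mu_0]$, eliminating the first swap interval and contradicting the two-swap-interval assumption of Case~III. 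In the second subcase, $\pth_A(\fconflict)$ lies in a connected component of $\fre$ separated from $Q \cup \rect$ by an obstacle, and I would construct a shortcut of $\plan[\fconflict, \mu_1]$ using $xy$-monotone sub-paths for both robots, feasible because $\pth_B[\fconflict, \mu_1] \subset Q$ and the initial and terminal configurations are $x$- or $y$-separated, contradicting the optimality of $\plan$. The main obstacle is this second subcase: verifying that such a shortcut lies in $\fre$ and strictly reduces cost requires careful geometric bookkeeping around $\rect$ and $Q$, possibly with a further split depending on $\height(\rect)$.
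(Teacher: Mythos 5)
The decisive step of the paper's proof is missing from your proposal. Your opening reduction (the conflict must stem from the Case~III(c) push of $B$, so $\pth^*_A(\fconflict)=\pth_A(\fconflict)$, $\yco{\pth^*_B(\fconflict)}=y(\topR)-1$, and $|\xco{\pth_A(\fconflict)}-\xco{\pth_B(\fconflict)}|\le 1$ with $\pth_B(\fconflict)\in Q$) agrees with the paper, but you never exploit that the \emph{original} plan is conflict-free at $\fconflict$. Since $\pth_B[\rho_0,\rho_1]$ lies on or below the line $y=y(\topR)-1$, the push moves $B$ weakly upward, so a conflict with the pushed $B$ that was absent with the original $B$ forces $\yco{\pth_A(\fconflict)}\ge\yco{\pth_B(\fconflict)}+1>y(\botQ)$. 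This lower bound is what places $\yco{\pth_A(\fconflict)}$ strictly inside $(y(\botQ),y(\topQ))$, after which the paper finishes in one line: since $\pth_A(\fconflict)\notin Q$ and $Q$ is a face of $\horzF$, a vertical edge of $\bd\fre$ separates $\pth_A(\fconflict)$ from $Q$, and by \obsref{free-segments} its $x$-distance to every point of $Q$ exceeds $1$, contradicting the $x$-proximity needed for the conflict. Your substitute bound $\yco{\pth_A(\fconflict)}>y(\topR)-2$ does not give $\yco{\pth_A(\fconflict)}>y(\botQ)$ when $\height(\rect)<1$ --- and Case~III always has $\height(\rect)<1$, because two swap intervals require blocked pairs (\lemref{main-properties}), which only exist for corridors of height less than~$1$. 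That is exactly the regime you defer to ``careful geometric bookkeeping'' and a ``further split depending on $\height(\rect)$,'' i.e., the heart of the lemma is left unproven.

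The case split itself is also off. If $\pth_A(\fconflict)\in Q^+$ then $\yco{\pth_A(\fconflict)}\ge y(\topQ)\ge y(\topin{Z^-})=y(\topR)-1=\yco{\pth^*_B(\fconflict)}$, so this subcase is incompatible with your contradiction hypothesis and needs no argument; the one you give is moreover unsound: \lemref{before-lambda1-x-sep-below} requires an $x$-separated configuration at the chosen time (your own derivation gives $|\xco{\pth_A(\fconflict)}-\xco{\pth_B(\fconflict)}|<1$, and the existence of your $\nu^*$ is never established), and exhibiting a compliant modification with fewer swap intervals does not contradict the fact that the plan at hand has two swap intervals --- it would simply be a different plan falling under a different surgery case, not a contradiction. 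Likewise, the shortcut-to-optimality argument sketched for your second subcase is not how this lemma is settled: the contradiction is purely local via \obsref{free-segments}. Shortcut arguments of that flavor do appear later (\lemref{case-III-shortcut-below}--\lemref{feasible-case-III-c}), but they \emph{use} the present lemma rather than substitute for it.
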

\begin{proof} 
	For the sake of contradiction, suppose 
	$\pth^*_A(\fconflict)\notin \I(R)\cup Q$ and 
	$\pth^*_A(\fconflict)_y < \pth^*_B(\fconflict)_y$. Recall 
	that $\pth^*_A(\fconflict) = \pth_A(\fconflict)$ and that $\pth_B[\rho_0,\rho_1]$ is pushed to 
	$\topin{Z^-}$ in Case~III(c). Since $\pth_A$ conflicts with 
	$\pth_B^*$ immediately after $\fconflict$, but not with $\pth_B$, we have 
	$\pth_A(\fconflict)_y \geq \pth_B(\fconflict)_y + 1 > y(\botQ)$.
	Since 
	$$y(\botQ) < \yco{\pth_A(\fconflict)}=\yco{\pth_A^*(\fconflict)}<\yco{\pth_B^*(\fconflict)} 
			=y(\topin{Z^-})\leq y(\topQ)$$ 
	and $\pth_A(\fconflict)\notin Q$, there is a vertical edge of $\freesp$ in-between 
	$\pth_A(\fconflict)$ and $Q$, so the $x$-distance from $\pth_A(\fconflict)$ to any point 
	in $Q$ is strictly greater than $1$. 
	This contradicts that $\pth^*_A(\fconflict)+\square$ touches $\pth_B(\fconflict)+\square$ (namely, $|\xco{\pth_A^*(\fconflict)} - \xco{\pth_B^*(\fconflict)}| \leq 1$), for $\pth_B^*(\fconflict)\in Q$.
\end{proof}
	\begin{figure}
		\centering
		\includegraphics[page=2]{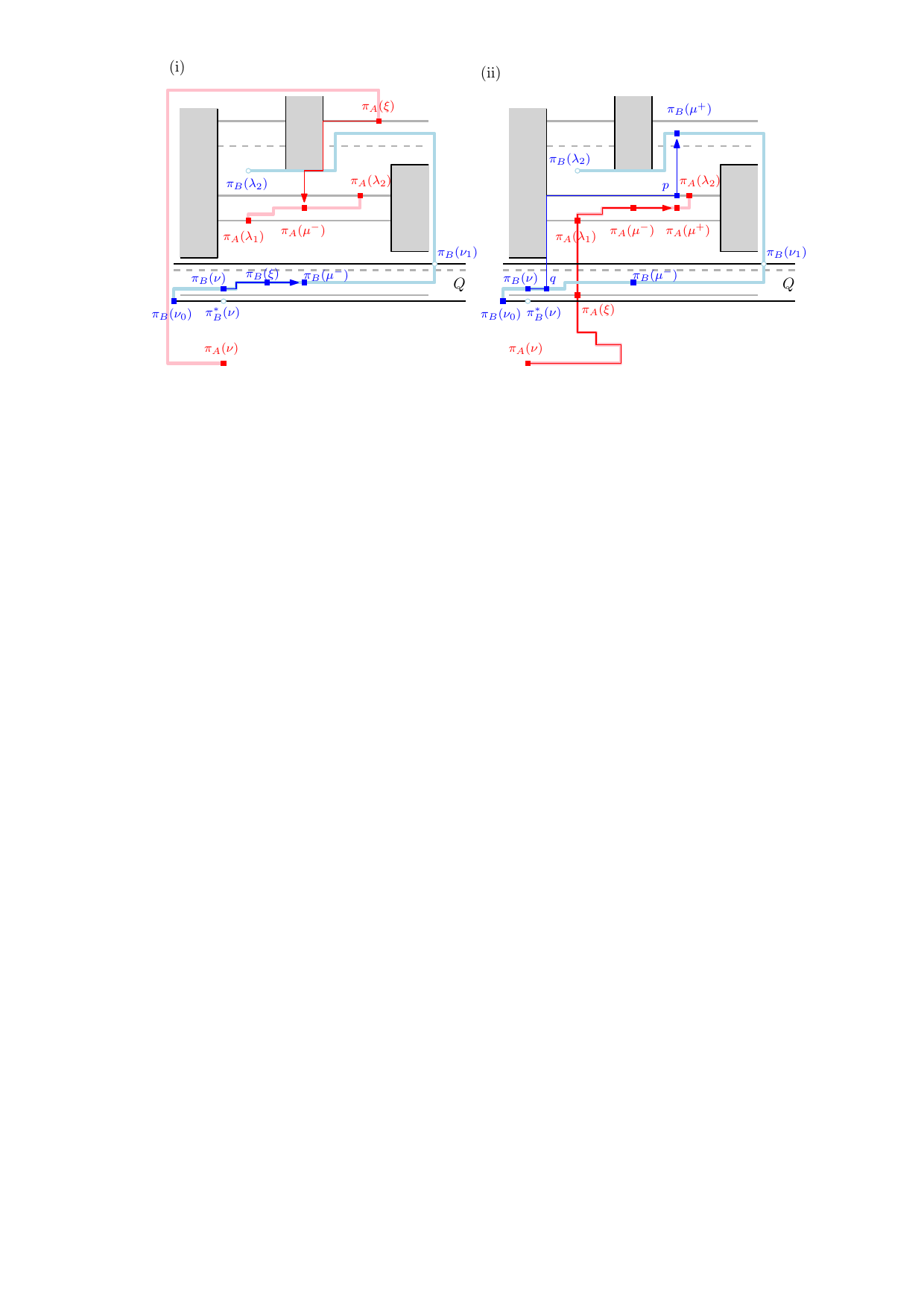}
		\caption{Illustration of the proof of \lemref{case-III-shortcut-below}.
        The original paths are shown in pink and light blue. Note that the figures are not drawn to scale; the horizontal axis is compressed for visualization purposes.
        $\pth_A$ enters $\Rxy^-$ for the first time at time $\xi$. Moves~M1--M4 for this case are shown as solid blue and red paths.
        }
		\label{fig:short-corridor-case-III-a-overview}
	\end{figure}
Recall that $[\mu_0,\mu_1]$ and $[\mu_2,\mu_3]$ are the first and second swap interval,
respectively, that we have in Case~III. Define $\mu^+ \in [\mu_2,\mu_3]$ to be a time such that 
$\pth_A(\mu^+)_x = \pth_B(\mu^+)_x$; $\mu^+$ exists because the $x$-order of $\robA$ and $\robB$
changes during a swap interval. 
Define $p^+:= (\pth_B(\mu^+)_x, y(\topR)$. 
\begin{lemma}\label{lem:case-III-shortcut-below}
	Consider Case~III with $\nu_0 < \lambda_1$. 
	Let $p_B$ be a point in $Q$, and let $\nu\in [\nu_0,\lambda_1]$.
	If there exists an $xy$-monotone $p_Bp^+$-path in $\freesp$ that does not conflict with $\pth_A(\nu)$,
	then there exists a feasible $((\pth_A(\nu), p_B),\plan(\mu^+))$-plan $\plan'$ 
    such that $\|\pth'_A\| \leq \|\pth_A[\nu,\mu^+]\|$ and $\pth'_B$ is $xy$-monotone. 
\end{lemma}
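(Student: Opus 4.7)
The plan is to construct $\plan'$ explicitly as a decoupled four-move plan. First, I would define $\pth'_B$ by appending to the hypothesized $xy$-monotone $p_Bp^+$-path $\gamma$ the vertical segment from $p^+$ up to $\pth_B(\mu^+)$. Because $\pth_B(\mu^+)_y \geq \pth_A(\mu^+)_y + 1 > y(\topR) = p^+_y$ (using $\pth_A(\mu^+) \in R$ and $\height(R) < 1$), this appended segment is directed upward and preserves $xy$-monotonicity, so $\pth'_B$ is $xy$-monotone and lies in $\freesp$.

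For $\pth'_A$, I would let $\xi \in [\nu,\mu^+]$ be the first time at which $\pth_A$ enters $\Rxy^-$, so that $\pth_A(\xi)$ lies on the horizontal line $y = y(\botR)$. I would take $\pth'_A$ as the concatenation of $\pth_A[\nu,\xi]$ with the L-shaped shortest path $L$ from $\pth_A(\xi)$ to $\pth_A(\mu^+)$ inside $R \cup (\Rxy^- \cap \Rud) \subseteq \freesp$. The length bound $\|\pth'_A\| \leq \|\pth_A[\nu,\mu^+]\|$ follows because $\|L\| = \|\pth_A(\xi) - \pth_A(\mu^+)\|_1 \leq \|\pth_A[\xi,\mu^+]\|$, and adding $\|\pth_A[\nu,\xi]\|$ to both sides gives the claim.

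The four moves of $\plan'$ would then be: (M1) $A$ follows $\pth_A[\nu,\xi]$ with $B$ parked at $p_B$; (M2) $B$ follows $\gamma$ from $p_B$ to $p^+$ with $A$ parked at $\pth_A(\xi)$; (M3) $B$ moves vertically from $p^+$ up to $\pth_B(\mu^+)$ with $A$ parked at $\pth_A(\xi)$; (M4) $A$ follows $L$ with $B$ parked at $\pth_B(\mu^+)$. Move~M4 will be feasible because $\pth_B(\mu^+)$ has $y$-distance at least~$1$ above every point of~$R$. Move~M1 will use that $p_B \in Q$ lies below~$R$ by \lemref{QbelowR}, that $\pth_A[\nu,\xi]$ stays outside $\Rxy^-$ until time $\xi$, and that the non-conflict of $\gamma$ with $\pth_A(\nu)$ propagates along $\pth_A[\nu,\xi]$ via the structural constraints of Lemmas~\ref{lem:before-lambda0-not-other-side} and~\ref{lem:before-lambda1-x-sep-below}.

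The hard part will be verifying feasibility of M2 and~M3: since $\pth_A(\xi)$ lies on $y = y(\botR)$, its $y$-distance to $p^+$ is only $\height(R) < 1$, so $x$-separation between $\pth_A(\xi)$ and the relevant portions of $\pth'_B$ is required. This reduces to bounding $|\pth_A(\xi)_x - \pth_A(\mu^+)_x| > 1$ and controlling how $\gamma$ approaches $p^+$ in its final steps, using that $\gamma$ is $xy$-monotone and avoids $\pth_A(\nu)$. When sufficient $x$-separation fails, the schedule will have to be adjusted---for instance, by executing a partial M3 until $B$ clears the height $y = y(\botR) + 1$, so that $y$-separation from $\pth_A(\xi)$ is restored before the remainder of $A$'s motion in~M4. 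Carrying out this geometric case analysis, exploiting the structural lemmas proved earlier in \subsecref{unsafe}, is the technical core of the proof.
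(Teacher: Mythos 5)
Your construction uses essentially the same two paths as the paper, but your schedule of the moves is inverted relative to what the hypothesis supports, and this is a genuine gap rather than a technicality. The hypothesis only guarantees that your $xy$-monotone $p_Bp^+$-path avoids conflict with $\pth_A(\nu)$; it says nothing about $\pth_A(\xi)$. In your schedule, $B$ traverses that path (your M2) and starts its vertical climb (M3) while $A$ is parked at $\pth_A(\xi)\in\Rxy^-$, which can lie within vertical distance less than $1$ of the top portion of $Q$ and of $p^+\in\topR$ (recall that in Case~III we have $\height(\R)<1$ and $y(\topQ)\leq y(\botR)$ by \lemref{QbelowR}), so collisions there are entirely possible. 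You flag this as ``the hard part'' but leave it unresolved, and the proposed repairs do not work: a ``partial M3'' cannot help with a conflict that arises while $B$ is still inside $Q$ on its way to $p^+$, and nothing in the setup bounds $|\pth_A(\xi)_x-\pth_A(\mu^+)_x|$. Your M1 has the same defect: while $A$ traverses $\pth_A[\nu,\xi]$, $B$ is parked at $p_B$, and no lemma provides the ``propagation'' of non-conflict along $\pth_A[\nu,\xi]$ that you invoke.

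The paper resolves exactly this difficulty by reordering the moves. First $B$ moves along the $xy$-monotone $p_Bp^+$-path while $A$ is still parked at $\pth_A(\nu)$---precisely the configuration the hypothesis covers. Then $A$ moves along $\pth_A[\nu,\xi]$ while $B$ is parked at $p^+$; this is conflict-free because $\pth_A(\nu,\xi)$ stays outside the interior of $\Rsq$ (by \lemref{before-lambda0-not-other-side} together with the minimality of $\xi$), while $p^++2\square\subset\Rsq$. Next $B$ moves vertically up to $\pth_B(\mu^+)$, moving away from $A$, which sits at $\pth_A(\xi)$ with $\yco{\pth_A(\xi)}\leq y(\botR)\leq \yco{p^+}$. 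Finally $A$ moves $xy$-monotonically to $\pth_A(\mu^+)$, remaining $y$-separated from $\pth_B(\mu^+)$. A further, more minor, slip in your write-up: the L-shaped final leg for $A$ need not lie in $R\cup(\Rxy^-\cap\Rud)$, since $\pth_A(\xi)$ may lie in a corner square of $\Rxy^-$ outside $\Rud$; the paper instead takes an $xy$-monotone path in $\fre$ via \lemref{xy-shortest path}, which applies because \lemref{main-properties}(ii) rules out tiny components of $\I(\rect)$ in Case~III, so $\pth_A(\xi)$ lies in the giant component. Your length bound for $\pth'_A$ and the $xy$-monotonicity of $\pth'_B$ are fine and match the paper.
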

\begin{proof}
	Let $\phi_B$ be an $xy$-monotone $p_Bp^+$-path in $\freesp$ that does not conflict with $\pth_A(\nu)$.
	Let $\xi := \min\{\xi'\in [\nu,\lambda_1]: \pth_A(\xi)\in \Rxy^-\}$; such a time exists 
	because $\robA$ enters $\rect$ through~$\botR$. 
	Note that $\pth_A(\xi')\notin \Rxy^-$ for all $\xi'\in [\nu,\xi)$.
	We define the $((\pth_A(\nu), p_B),\plan(\mu^+))$-plan $\plan'$
    to consist of the following four moves:
	\begin{quotation} \noindent \vspace*{-3mm}
		\begin{enumerate}[M1.]
			\item Move $B$ from $p_B$ to $p^+$ along $\phi_B$.
			\item  Move $A$ along $\pth_A[\nu,\xi]$.
			\item  Move $B$ on a vertical segment from $p^+$ to $\pth_B(\mu^+)$.
			\item Move $A$ along an $xy$-monotone path from $\pth_A(\xi)$ to $\pth_A(\mu^+) \in R$.
		\end{enumerate}
	\end{quotation}
	
	See Figure~\ref{fig:short-corridor-case-III-a-overview}. 
	Move~M4 exists because $\pth_A(\xi)\in\I(R)$---this is true because 
	by \lemref{main-properties}(ii) there are no tiny components 
	in Case~III---so we can apply \lemref{xy-shortest path}.
	By construction, moves M1--M4 are all in the free space.
	We now show that these moves do not cause conflicts. 
    Robot~$B$ does not conflict with $A$ during move M1 by assumption.
	By \lemref{before-lambda0-not-other-side}, $\pth_A[\nu_0,\lambda_1)$ does not enter 
	$\rect\cup\Rxy^+$, and by the definition of $\xi$ we know that 
    $\pth_A[\nu,\xi)$ does not enter $\Rxy^-$, 
	Hence, $\pth_A(\nu,\xi)$ does not enter the interior of $\Rsq$.
	Since $p^+ \in \topR$, we have $p^++2\Box \subset \Rsq$ and thus
	M2 does not conflict with $B$.
	The vertical move M3 of $B$ from $p^+$ to $\pth_B(\mu^+)$ is feasible
	because $B$ moves upward, while $A$ is parked at $\pth_A(\xi)$ without being in conflict with
	$p^+$ and we have $\pth_A(\xi)_y \leq y(\botR)\leq p^+_y$; thus, $\robB$ is moving away 
	from $\robA$.
	Finally, move~M4 does not conflict with $\pth_B(\mu^+)$ 
	since $\pth_A(\xi)_y\leq \pth_B(\mu^+)_y - 1$ and $\pth_A(\mu^+)_y \leq \pth_B(\mu^+)_y - 1$.
	
	Hence, the plan $\plan'$ is feasible. Note that $\phi_B$ is $xy$-monotone and $B$ moves 
	in the same vertical direction as $\phi_B$ during~M3---the latter is true bacause
	$p_B\in Q$ and by \lemref{QbelowR}, $Q$ lies below $\rect$. Hence, 
	$\pth'_B$ is $xy$-monotone. During M2, $A$ moves along $\pth_A[\nu,\xi]$, 
	while it moves $xy$-monotonically during M4. So, we indeed have 
	$\|\pth'_A\|\leq \|\pth_A[\nu,\mu^+]\|$.
\end{proof}
The above lemma will be sufficient to shortcut $\plan$ in Case~III(a), 
but we need the following variant of the lemma for Case~III(c).
	\begin{figure}
		\centering
		\includegraphics{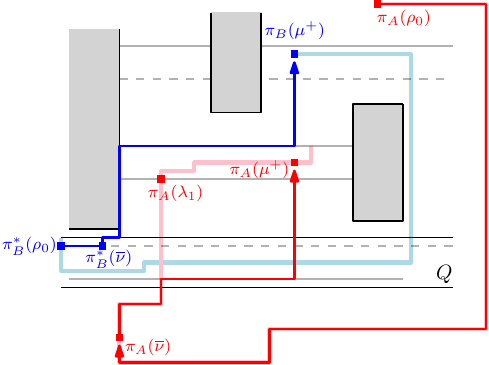}
		\caption{Illustration of the proof of \lemref{apply-shortcut-to-pi*} depicting the hypothetical situation in Case~III(c) where pushing $\pth_B(\nu)$ up to $\pth^*_B(\nu)$ causes a conflict (here, $A$ collides with $B$ when moving upwards). The paths of $\plan''$ are shown in solid blue and red, and the original paths are shown in pink and light blue. Note that the horizontal axis is compressed for visualization purposes.}
		\label{fig:apply-shortcut-to-pi*}
	\end{figure}
    
\begin{lemma}
	\label{lem:apply-shortcut-to-pi*}
	Consider Case~III(c). 
	If there exists an $xy$-monotone $\pth^*_B(\fconflict)p^+$-path in $\freesp$ that does not 
	conflict with $\pth_A(\fconflict)$, then $\plan$ is not optimal.
\end{lemma}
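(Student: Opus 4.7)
The plan is to apply \lemref{case-III-shortcut-below} with $\nu=\fconflict$ and $p_B=\pth^*_B(\fconflict)$, obtaining a plan $\plan'$ from $(\pth_A(\fconflict),\pth^*_B(\fconflict))$ to $\plan(\mu^+)$ with $\|\pth'_A\|\le\|\pth_A[\fconflict,\mu^+]\|$ and with $\pth'_B$ $xy$-monotone, and then to use $\plan'$ to build a cheaper plan $\plan''$ that agrees with $\plan$ outside $[\fconflict,\mu^+]$. The hypotheses of \lemref{case-III-shortcut-below} are met: $\pth^*_B(\fconflict)=(\pth_B(\fconflict)_x,y(\topR)-1)$ lies in $Q$ because $\pth_B(\rho_0),\pth_B(\rho_1)\in Q$ both have $y$-coordinate $y(\topR)-1$ by definition of $[\rho_0,\rho_1]$ in Case~III(c), so the line $y=y(\topR)-1$ intersects $Q$; and the required $xy$-monotone path avoiding $\pth_A(\fconflict)$ is exactly the hypothesis of the present lemma.

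I then form $\plan''$ as the concatenation of $\plan[0,\fconflict]$, a bridge move in which $A$ is parked at $\pth_A(\fconflict)$ and $B$ translates vertically from $\pth_B(\fconflict)$ to $\pth^*_B(\fconflict)$, the plan $\plan'$, and finally $\plan[\mu^+,T]$. The bridge segment lies in $\fre$ because $Q\subseteq\fre$ is convex and contains both endpoints. The segment avoids $A$ automatically when $|\pth_A(\fconflict)_x-\pth_B(\fconflict)_x|\ge 1$. Otherwise, the non-conflict of $\pth_A(\fconflict)$ with both $\pth_B(\fconflict)$ (by feasibility of $\plan$) and $\pth^*_B(\fconflict)$ (by hypothesis) forces $\pth_A(\fconflict)_y$ to lie either at least $1$ below $\pth_B(\fconflict)_y$ or at least $1$ above $y(\topR)-1$; either alternative rules out any conflict along the segment. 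The remaining possibility, $\pth_A(\fconflict)_y$ strictly between $\pth_B(\fconflict)_y+1$ and $y(\topR)-2$, is excluded by combining \lemref{AaboveB} (which forces $\pth_A(\fconflict)_y\ge y(\topR)-1$ when $\pth_A(\fconflict)\notin\I(\rect)\cup Q$) with \lemref{before-lambda0-not-other-side} (restricting $\pth_A$ during $[\nu_0,\lambda_1]$).

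For the cost, telescoping through $y=y(\topR)-1$ gives
\[
\text{bridge}+\|\pth'_B\|=|\pth_B(\fconflict)_x-\pth_B(\mu^+)_x|+\bigl(\pth_B(\mu^+)_y-\pth_B(\fconflict)_y\bigr),
\]
the $\ell_1$-distance from $\pth_B(\fconflict)$ to $\pth_B(\mu^+)$ and hence at most $\|\pth_B[\fconflict,\mu^+]\|$. Combined with $\|\pth'_A\|\le\|\pth_A[\fconflict,\mu^+]\|$ from \lemref{case-III-shortcut-below}, this yields $\|\plan''\|\le\|\plan\|$. Strictness, needed to contradict optimality, follows because $\pth_B[\fconflict,\mu^+]$ starts at $y\le y(\topR)-1$, must dip into $Z^-$ to $y<y(\topR)-1$ during the unsafe interval $[\mu_0,\mu_1]$, and ends at $y\ge y(\topR)$, so its total $y$-travel strictly exceeds $\pth_B(\mu^+)_y-\pth_B(\fconflict)_y$ unless $\pth_B(\fconflict)$ is itself the $y$-minimum of $\pth_B[\fconflict,\mu^+]$; in that degenerate case $\pth_B$ is forced to detour around $\rect$ (which is blocked by $\pth_A$ throughout $[\lambda_1,\lambda_2]$), and this detour together with the non-$xy$-monotonicity of $\pth_A[\xi,\mu^+]$---which must traverse the bad segment $e$ and reach $\topR$ at time $\lambda_2$---yields the strict saving through the slack in $\|\pth'_A\|\le\|\pth_A[\fconflict,\mu^+]\|$.

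The main obstacle will be this last strict-inequality argument: ruling out the pathological configuration in which both $\pth_B[\fconflict,\mu^+]$ and $\pth_A[\xi,\mu^+]$ happen to be simultaneously $xy$-monotone, by carefully reconciling the constraints coming from the two swap intervals, the mandatory dip into $Z^-$, and the routing of $\pth_A$ from $\Rxy^-$ through $\rect$ to $\topR$.
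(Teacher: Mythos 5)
Your core idea coincides with the paper's: apply \lemref{case-III-shortcut-below} with $\nu=\fconflict$ and $p_B=\pth^*_B(\fconflict)$ and turn the resulting plan into a shortcut of $\plan$. The genuine gap is in how you glue. You concatenate with the \emph{original} prefix $\plan[0,\fconflict]$ and insert a vertical bridge in which $B$ rises from $\pth_B(\fconflict)$ to $\pth^*_B(\fconflict)$ while $A$ is parked, and your exclusion of the dangerous configuration fails: consider $\pth_A(\fconflict)\in Q$ (not in $\I(\rect)$) with $|\xco{\pth_A(\fconflict)}-\xco{\pth_B(\fconflict)}|<1$ and $\yco{\pth_B(\fconflict)}+1\leq\yco{\pth_A(\fconflict)}\leq y(\topR)-2$ (in fact $A$ touching the pushed point from below, $\yco{\pth_A(\fconflict)}=y(\topR)-2$, and about to move up). \lemref{AaboveB} does not apply, since it requires $\pth_A(\fconflict)\notin\I(\rect)\cup Q$, and \lemref{before-lambda0-not-other-side} only excludes $\interior(\rect)$ and $\Rxy^+\cap\gamma(\rect)$, which says nothing about $A$ sitting inside $Q$ below $\rect$; moreover this configuration is fully compatible with the lemma's hypothesis, because every $xy$-monotone $\pth^*_B(\fconflict)p^+$-path stays at height at least $y(\topR)-1\geq\yco{\pth_A(\fconflict)}+1$ and hence automatically avoids $\pth_A(\fconflict)$. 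In that configuration your bridge sweeps $B$ through all heights within distance $1$ of $\pth_A(\fconflict)$ while the two are not $x$-separated, i.e., it collides; and you cannot argue the configuration "cannot occur in an optimal plan" without circularity, since establishing non-optimality in exactly such situations is the purpose of this lemma. The paper avoids the bridge altogether: it prepends $\plan^*[\rho_0,\fconflict]$, which coincides with $\plan$ at $\rho_0$ (no discontinuity there in Case~III(c) since $\yco{\pth_B(\rho_0)}=y(\topR)-1$), is conflict-free by the very definition of $\fconflict$, and is no longer than $\plan[\rho_0,\fconflict]$ because the primary push is length-non-increasing on that interval.

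The second gap is the strict inequality, which you acknowledge is unresolved: your telescoping only yields $\|\plan''\|\leq\|\plan\|$, and your attempt to manufacture strictness from the non-monotonicity of $\pth_A[\xi,\mu^+]$ is not an argument. The paper obtains strictness entirely on the $B$-side from \lemref{main-properties}(ii): with two swap intervals, $\pth_B[\mu_0,\mu_3]$ must leave $\I(\rect)$, so $\pth_B[\fconflict,\mu^+]$ is not $y$-monotone, whereas the replacement $\pth'_B$ on $[\fconflict,\mu^+]$ is $xy$-monotone; combined with $\|\pth'_A\|\leq\|\pth_A[\fconflict,\mu^+]\|$ and the length-non-increasing prefix, this gives a strictly cheaper plan and the desired contradiction. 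If you repair your construction along these lines (glue at $\rho_0$ rather than at $\fconflict$, and invoke \lemref{main-properties}(ii) for strictness), the proof goes through.
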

\begin{proof}
	Applying \lemref{case-III-shortcut-below} with $p_B:=\pth^*_B(\fconflict)$ and 
	$\nu:= \fconflict$, we obtain
	a $\plan^*(\fconflict)\plan(\mu^+)$-plan~$\plan'$ that is feasible and such that 
	$\|\pth'_A\| \leq \|\pth_A[\fconflict,\mu^+]\|$ and $\pth'_B$ is $xy$-monotone.
	
	Let $\plan'' = \plan^*[\rho_0,\fconflict] \circ \plan'$, which
	is a feasible $\plan(\rho_0)\plan(\mu^+)$-plan with 
    \[
    \|\pth''_A[\rho_0,\mu^+]\| = \|\pth_A[\rho_0,\fconflict]\| + \|\pth'_A \| \leq \|\pth_A[\rho_0,\mu^+]\|
    \mbox{\ \ \ and \ \ \ }
	\|\pth''_B[\rho_0,\fconflict]\| \leq \|\pth_B[\rho_0,\fconflict]\|. 
    \] 
See \figref{apply-shortcut-to-pi*}. Note that $\pth''_B[\fconflict,\mu^+]$, which is the same as $\pi'_B[\fconflict,\mu^+]$, is $xy$-monotone.
    On the other hand, $\pth_B[\fconflict,\mu^+]$ is not $y$-monotone,
    because $\pth_B[\mu_1,\mu_2]$ is not $xy$-monotone by \lemref{main-properties}(ii), 
    and $[\mu_1,\mu_2]\subset [\fconflict,\mu^+]$.
    Thus, $\|\pth''_B[\fconflict,\mu^+]\| < \|\pth_B[\fconflict,\mu^+]\|$. This means that
	$\plan''$ is a shortcut, contradicting the optimality of~$\plan$.
\end{proof}
We are now ready to prove the feasibility of $\plan^*$ in Cases~III(a) and~(c).

	\begin{figure}
		\centering
		\includegraphics{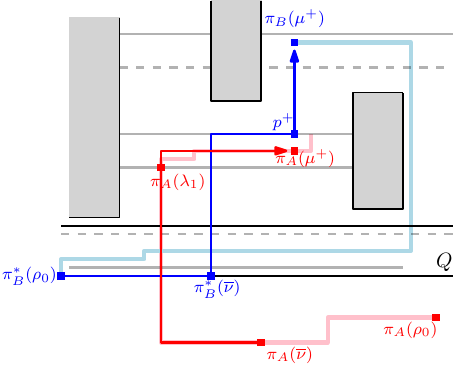}
		\caption{Illustration of the proof of \lemref{apply-shortcut-to-pi*} depicting the hypothetical situation in Case~III(a) where pushing $\pth_B(\nu)$ down to $\pth^*_B(\nu)$ causes a conflict (here, $B$ collides with $A$ when moving to the right). The paths of $\plan'$ are shown in solid blue and red, and the original paths are shown in pink and light blue. Note that the horizontal axis is compressed for visualization purposes.}
		\label{fig:feasible-case-IIIa}
	\end{figure}

\begin{lemma}\label{lem:feasible-case-IIIa}
	Consider Case~III(a) with $\nu_0 < \lambda_1$. 
	Then the modified plan $\plan^*$ does not have a conflict for any 
	$\nu \in [\nu_0, \lambda_1]$.
\end{lemma}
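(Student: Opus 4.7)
The proof is by contradiction: if $\fconflict$ exists I will build a plan $\plan''$ with $\|\plan''\|<\|\plan\|$, contradicting optimality of~$\plan$. Let $p^\diamond_B := \pth^*_B(\fconflict)\in\botQ$. Since $\plan^*(\fconflict)$ is in conflict, $\pth_A(\fconflict)=\pth^*_A(\fconflict)$ lies within $\ell_\infty$-distance~$1$ of $p^\diamond_B$; combined with \lemref{QbelowR} this gives $\yco{\pth_A(\fconflict)}<y(\botR)+1$, so $\pth_A(\fconflict)\notin\topR$. \lemref{before-lambda0-not-other-side} (applicable with $\nu:=\fconflict$) further rules out $\pth_A(\fconflict)\in\interior(\rect)\cup(\Rxy^+\cap\gamma(\rect))$. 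These two location constraints are precisely what will let the shortcut path for $B$ avoid $A$.

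The next step is to apply \lemref{case-III-shortcut-below} with $p_B:=p^\diamond_B$ and $\nu:=\fconflict$. This requires an $xy$-monotone path $\phi_B\subset\freesp$ from $p^\diamond_B$ to $p^+=(\xco{\pth_B(\mu^+)},y(\topR))$ that avoids $\pth_A(\fconflict)$. I take $\phi_B$ to be the L-shape that first slides horizontally along $\botQ$ to the column $x=\xco{\pth_B(\mu^+)}$ and then goes vertically upward through $Q$, through any region of $\freesp$ between $Q$ and $\rect$, and through $\rect$ itself up to $p^+$. The horizontal leg lies on $\botQ\subseteq\freesp$; the vertical leg lies in $\freesp$ because $\pth_A(\mu^+)\in\rect$ sits directly below $p^+$, and \lemref{xy-shortest path} provides an $xy$-monotone connection from $\topQ$ into $\rect$ inside the giant component of $\I(\rect)$ (which is all of $\I(\rect)$ in Case~III by \lemref{main-properties}(ii)). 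Avoidance of $\pth_A(\fconflict)$ follows from the constraints above after a short case analysis on whether $\pth_A(\fconflict)$ lies in $Q$, between $Q$ and $\rect$, or in one of the corner squares of $\Rsq$.

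With $\phi_B$ established, \lemref{case-III-shortcut-below} produces a feasible $(\plan^*(\fconflict),\plan(\mu^+))$-plan $\plan'$ with $\|\pth'_A\|\leq\|\pth_A[\fconflict,\mu^+]\|$ and $\pth'_B$ an $xy$-monotone segment. I then set $\plan'':=\plan^*[0,\fconflict]\circ\plan'\circ\plan[\mu^+,T]$; this concatenation is feasible because $\plan^*[0,\fconflict]$ is conflict-free by the definition of $\fconflict$ as an infimum, and the endpoints match by construction. The push of $\pth_B$ over $[\nu_0,\fconflict]$ saves $\yco{\pth_B(\fconflict)}-y(\botQ)$ in $B$'s length relative to $\plan$, which is exactly the cost of starting $\pth'_B$ at $\botQ$ rather than at $\pth_B(\fconflict)$, so these terms cancel, and $\|\pth''_A\|\leq\|\pth_A\|$ by \lemref{case-III-shortcut-below}. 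The strict inequality $\|\plan''\|<\|\plan\|$ then comes from the fact that $\pth'_B$ is $xy$-monotone while $\pth_B[\fconflict,\mu^+]$ is not, since by \lemref{main-properties}(ii) the subpath $\pth_B[\mu_0,\mu_3]$ leaves $\I(\rect)$ and $[\mu_1,\mu_2]\subset[\fconflict,\mu^+]$, forcing a detour strictly longer than the $\ell_1$-distance between the endpoints. The main obstacle is the feasibility check for $\phi_B$ together with its avoidance of $\pth_A(\fconflict)$; this requires using $y(\topQ)\leq y(\botR)$ from \lemref{QbelowR} in a careful way to enumerate the residual possible locations of $\pth_A(\fconflict)$.
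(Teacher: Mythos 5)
You take a genuinely different route from the paper, modeled on the Case~III(c) machinery (\lemref{apply-shortcut-to-pi*}): you start the shortcut at the \emph{pushed} position $p^\diamond_B=\pth^*_B(\fconflict)\in\botQ$ and pay back the extra vertical distance of $\pth'_B$ with the savings of the push over $[\nu_0,\fconflict]$. The bookkeeping and the final strict inequality (via \lemref{main-properties} and $B$ leaving $\I(\rect)$) are fine, but the heart of the argument --- the existence of an $xy$-monotone $p^\diamond_B p^+$-path in $\fre$ that avoids $\pth_A(\fconflict)$, which is exactly the hypothesis of \lemref{case-III-shortcut-below} --- is where the gap lies. Your concrete path (horizontal along $\botQ$ to $x=\xco{\pth_B(\mu^+)}$, then vertical) can collide with $\pth_A(\fconflict)$: for the push to create a conflict at all, $\pth_A(\fconflict)$ must lie within vertical distance $1$ of $\botQ$ and within horizontal distance $1$ of $\pth^*_B(\fconflict)$; for instance $A$ may sit inside $Q$ at height $y(\botQ)+\tfrac12$ (or just below $\botQ$, which is possible since $\botQ$ lies on a $0$-line that need not bound an obstacle locally), directly over the horizontal leg of your L-shape, in which case that leg is in conflict from the very first moment. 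The constraints you extract (\lemref{before-lambda0-not-other-side} and $\yco{\pth_A(\fconflict)}\leq y(\botR)+1$) do not exclude such positions --- and, as a side remark, $\yco{\pth_A(\fconflict)}\leq y(\botR)+1$ does not imply $\pth_A(\fconflict)\notin\topR$ when $\height(\rect)\leq 1$. In Case~III(c) the paper needs \lemref{before-lambda1-x-sep-below}, \lemref{AaboveB} and a page-long case analysis to settle precisely this avoidance question, so deferring it to ``a short case analysis'' is not a harmless omission; it is the whole difficulty.

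The paper's own proof of this lemma sidesteps the issue entirely. Since the push only lowers $B$'s $y$-coordinates and never changes $x$-coordinates, a conflict can arise at $\fconflict$ only if, in the \emph{original} plan, $\yco{\pth_B(\fconflict)}\geq\yco{\pth_A(\fconflict)}+1$, i.e.\ $A$ lies at least one unit \emph{below} the unpushed $B$. The paper then applies \lemref{case-III-shortcut-below} with $p_B:=\pth_B(\fconflict)$ (the original, unpushed position): because $p^+\in\topR$ lies above $Q$ (\lemref{QbelowR}), any $xy$-monotone path from $\pth_B(\fconflict)$ to $p^+$ is non-decreasing in $y$ and therefore stays $y$-separated from $\pth_A(\fconflict)$ automatically, so the avoidance hypothesis is free. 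The resulting plan shortcuts $\plan$ directly, with no cancellation against the push savings needed. Note that if you repair your $\phi_B$ by going vertical first (up from $\botQ$ back to $\yco{\pth_B(\fconflict)}$ inside the rectangle $Q$), you essentially reconstruct the paper's path, which indicates that starting from the original position is the right formulation; as written, your proof has a genuine gap at the avoidance step.
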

\begin{proof}
	Suppose for a contradiction that the plan $\plan^*$ obtained in Case~III(a) 
	has a conflict at some time $\nu\in [\nu_0,\lambda_1]$. Let $\fconflict$ be the first 
	point of conflict in $\plan^*$, as defined in~\eqref{first-conflict}.
	
	Since $\pth_B[\nu_0, \mu_1]$ is pushed down to $\botQ$,
	a conflict can arise immediately after $\fconflict$
	only if $\yco{\pth_B(\fconflict)} \geq \yco{\pth_A(\fconflict)}+1$.  Since 
	$\pth_B(\fconflict)\in Q$, $Q$ lies below $\rect$ by \lemref{QbelowR}, 
	and $\robA$ lies below $\robB$ at $\plan(\fconflict)$, 
	there is a $xy$-monotone path from $\pth_B(\fconflict)$ to $p^+$ that
	does not conflict with $\pth_A(\fconflict)$. See \figref{feasible-case-IIIa}.
	Hence, by \lemref{case-III-shortcut-below}, there exists 
	a $(\plan(\nu),\plan(\mu^+))$-plan $\plan'$ such that 
	$\|\pth'_A\| \leq \|\pth_A[\nu,\mu^+]\|$ and 
	$\|\pth'_B\|=\|\pth_B(\fconflict)-\pth_B(\mu^+)\|_1$. On the other hand,
	$\|\pth_B[\fconflict,\mu^+]\| > \|\pth_B(\fconflict)-\pth_B(\mu^+)\|_1$ since 
	$\pth_B[\fconflict,\mu^+]$ leaves $\I(\rect)$, by \lemref{main-properties}~(iii), 
	contradicting the optimality of $\plan$.
\end{proof}
\begin{lemma}\label{lem:feasible-case-III-c}
	Consider  Case~III(c) with $\nu_0 < \lambda_1$. 
	Then the modified plan $\plan^*$ does not conflict for any 
	$\nu \in [\nu_0, \lambda_1]$.
\end{lemma}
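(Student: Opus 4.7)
The proof mirrors that of \lemref{feasible-case-IIIa}, replacing the direct use of \lemref{case-III-shortcut-below} by \lemref{apply-shortcut-to-pi*}. Suppose, for a contradiction, that $\plan^*$ has a conflict at some time $\nu\in[\nu_0,\lambda_1]$, and let $\fconflict$ denote the first such time, as in~\eqref{eq:first-conflict}. Since $\pth_A^*(\nu)=\pth_A(\nu)$ for $\nu<\lambda_1$, the conflict must be caused by the secondary push of $\pth_B[\rho_0,\rho_1]$ up to $\ell^-=y(\topR)-1$; in particular $\fconflict\in[\rho_0,\rho_1]$ and $\pth_B^*(\fconflict)=(\pth_B(\fconflict)_x,\,y(\topR)-1)$. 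Because the push preserves $x$-coordinates, the feasibility of $\plan$ at $\fconflict$ combined with the appearance of a conflict in $\plan^*$ forces $\pth_A(\fconflict)_y\geq\pth_B(\fconflict)_y+1$ and $\pth_A(\fconflict)_y\in[y(\topR)-2,\,y(\topR)]$; so $A$ lies strictly above $B$ at time~$\fconflict$.

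The main step is then to construct an $xy$-monotone path $\varphi\subset\freesp$ from $\pth_B^*(\fconflict)$ to $p^+=(\pth_B(\mu^+)_x,y(\topR))$ that does not conflict with $\pth_A(\fconflict)$, since \lemref{apply-shortcut-to-pi*} will then contradict the optimality of~$\plan$. The natural candidate for $\varphi$ is the L-shape that first moves horizontally along the line $y=y(\topR)-1=y(\topin{Z^-})$ from $\pth_B^*(\fconflict)$ to $(p^+_x,y(\topR)-1)$, and then vertically up by~$1$ to~$p^+$. The vertical leg lies inside~$\R$ because $p^+_x\in I_x$, and the horizontal leg stays inside the component $Z^*$ of $Z^-\cap\freesp$ containing $\pth_B[\rho_0,\rho_1]$; the $x$-range of $Z^*$ contains $I_x$, as witnessed by the swap time $\mu^-\in[\mu_0,\mu_1]$ at which $\pth_B(\mu^-)_x=\pth_A(\mu^-)_x\in I_x$.

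The main obstacle is verifying that $\varphi$ avoids $\pth_A(\fconflict)$. I would handle this via the case split on whether $\pth_A(\fconflict)\in\I(\R)\cup Q$. If $\pth_A(\fconflict)\notin\I(\R)\cup Q$, then \lemref{AaboveB} gives $\pth_A(\fconflict)_y\geq y(\topR)-1$, while \lemref{before-lambda0-not-other-side} excludes $\pth_A(\fconflict)$ from $\interior(\R)\cup(\Rxy^+\cap\gamma(\R))$; together these force $\pth_A(\fconflict)$ outside the $x$-slab $I_x$ of~$\R$, so both legs of $\varphi$, whose $x$-range sits in the $x$-range of $Z^*$ and terminates at $p^+_x\in I_x$, stay at $\ell_\infty$-distance at least~$1$ from~$\pth_A(\fconflict)$. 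In the remaining case $\pth_A(\fconflict)\in\I(\R)\cup Q$, \lemref{before-lambda0-not-other-side} together with the bound on $\pth_A(\fconflict)_y$ places $\pth_A(\fconflict)$ on a vertical edge of~$\R$ or in $Q$ just below it; since $|I_x|\geq 1$, the single unit square at $\pth_A(\fconflict)$ cannot block both vertical sides of~$\R$ at height $y(\topR)$, so $\varphi$ can be rerouted through the opposite side. Applying \lemref{apply-shortcut-to-pi*} with the resulting $\varphi$ then contradicts the optimality of~$\plan$, completing the proof.
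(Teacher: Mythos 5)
Your overall strategy---take the first conflict time and contradict optimality via \lemref{apply-shortcut-to-pi*}---covers only part of what is needed, and the conflict-avoidance step has a genuine gap. The unjustified claim is that, when $\pth_A$ at the first conflict time lies outside $\I(\R)\cup Q$, its exclusion from the $x$-slab of $\R$ makes your L-shaped path from $\pth_B^*$ (along $y=y(\topR)-1$, then up to $p^+$) avoid $A$. The horizontal leg of that path starts at $\xco{\pth_B}$ at the conflict time, which is within distance $1$ of $\xco{\pth_A}$, and ends at $p^+_x\in I_x$; if $A$ sits horizontally \emph{between} $B$ and $\R$ (e.g.\ $\xco{\pth_A}>x_\R^++1$ with $B$ still further to the right) and the configuration is $x$-separated but not $y$-separated, the leg must sweep past $A$ at vertical distance less than $1$. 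Worse, every $xy$-monotone path from $\pth_B^*$ to the \emph{fixed} endpoint $p^+$ is confined to the strip $[y(\topR)-1,\,y(\topR)]$ and to the $x$-interval between its endpoints, so in this configuration no conflict-free monotone path exists at all and \lemref{apply-shortcut-to-pi*} is inapplicable. The same rigidity defeats the ``reroute through the opposite side'' idea in your case $\pth_A\in\I(\R)\cup Q$: a monotone path to $p^+$ has vertical extent exactly $1$ and a forced $x$-range, so there is no opposite side to use; moreover the auxiliary claim $|I_x|\geq 1$ is unsupported, since faces of $\horF$ can be narrower than $1$.

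What is missing is the second mechanism in the paper's proof: when the configuration at the first conflict time is $x$-separated, the paper does not build a shortcut at all. It shows that $\pth_A$ then lies in $Q\cup Q^+\cup\gamma(\R)$ (in the troublesome sub-case, in $Q^+$, using the bound from \lemref{AaboveB} together with a containment argument), and invokes \lemref{before-lambda1-x-sep-below} to propagate $x$-separation forward to $\mu_0$, so that no conflict can occur between the first conflict time and $\lambda_1$ in the first place---contradicting the choice of that time. Your proposal never uses \lemref{before-lambda1-x-sep-below}, and the shortcut route cannot be stretched to cover those configurations, so the proof as written does not go through. (The remaining configurations---$A$ on the far side of the path, or $A$ $y$-separated below $\pth_B^*$---are handled correctly by your shortcut argument and essentially match the paper's Case~1 ``$A$ below $B$'' and the first two bullets of its Case~2.)
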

\begin{proof}
	Let $\rho_0,\rho_1$ be as defined in the surgery of Case~III. Only $\plan[\rho_0,\rho_1]$ 
	is modified in Case~III(c), so the lemma follows immediately if $\rho_0 \geq \lambda_1$.
	Now assume that $\rho_0< \lambda_1$ and that there is a conflict in 
	$\plan^*[\rho_0,\lambda_1]$. Let $\fconflict$ be the first instance of conflict, as 
	defined in~\eqref{first-conflict}. 
	Note that $\yco{\pth_B^*(\fconflict)}=y(\topx{Z^-})=y(\topx{\rect})-1$.
	There are two cases.
	\medskip

	\begin{figure}
		\vspace{1in}
        \makebox[\textwidth][c]{\includegraphics[width=1.2\textwidth]{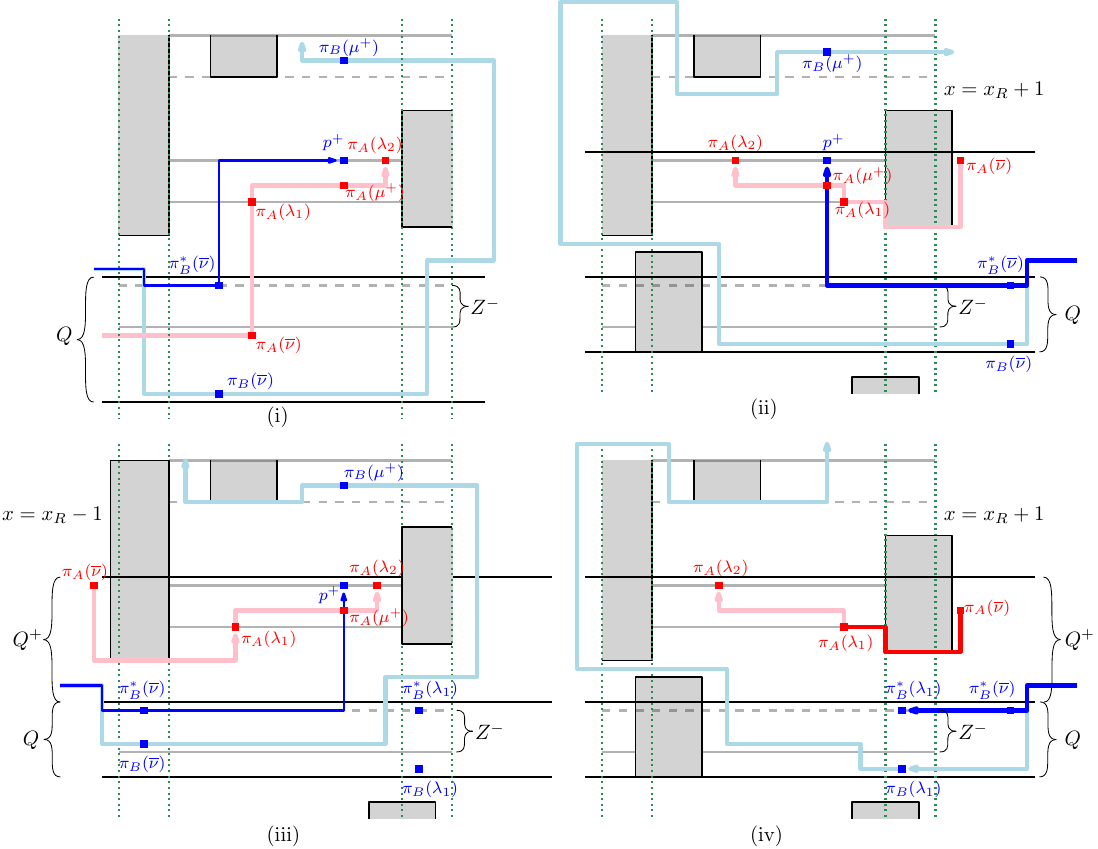}}
		\caption{Illustrations of the proof of \protect\lemref{feasible-case-III-c}. In all figures, the horizontal axis is compressed, and in (i), the height of $Q$ is not shown to scale (which must be at least three for $\plan(\fconflict),\plan^*(\fconflict)$ to be free configurations). (i) 
		$\pth_A^*(\fconflict)\in \I(\rect)\cup Q$;
        (ii) $\pth_A^*(\fconflict)\not\in \I(\rect)\cup Q$ and $\plan^*(\fconflict)$ is $x$-separated; 
		(iii) $\pth_A^*(\fconflict)\not\in \I(\rect)\cup Q$ and
		$\plan^*(\fconflict)$ is $y$-separated with $A$ left of $B$ and left of $\Rsq$;
        (iv) $\pth_A^*(\fconflict)\not\in \I(\rect)\cup Q$ and
		$\plan^*(\fconflict)$ is $y$-separated with $A$ left of $B$ and right of $\Rsq$. }
		\label{fig:feasibility-IIIc}
	\end{figure}
\noindent \textit{Case~1: $\pth_A^*(\fconflict)\in \I(\rect)\cup Q$.}
\smallskip \\
By \lemref{before-lambda1-x-sep-below},
$\plan^*(\fconflict)$ being $x$-separated implies that
$\plan(\nu')$ is $x$-separated for all $\nu'\in [\fconflict,\mu_0]$, which means 
	$\plan^*[\fconflict,\lambda_1]$ has no conflict, contradicting our assumption.
	So, $\plan^*(\fconflict)$ is not $x$-separated and thus $y$-separated. 
    Moreover, the two robots touch since they are about to be in conflict.

	If $\robA$ lies above $\robB$, then
	$\yco{\pth^*_A(\fconflict)} = \yco{\pth^*_B(\fconflict)}+1=y(\topx{\rect})$.
	Hence, $\pth^*_A(\fconflict)$ must lie on $\botin{\Rxy^+}$ because
	$\pth^*_A(\fconflict) \in \I(\rect) \cup Q$ and $Q$ lies below $\botR$ 
	by \lemref{QbelowR}. But this contradicts the assumption 
	that $\pth_A(\xi)$ is not in the closure of $\Rxy^+$ 
	(see \lemref{before-lambda0-not-other-side}) for all $\xi \in [\nu_0,\lambda_1]$. 
    On the other hand, if $A$ lies below $B$ in $\plan^*(\fconflict)$, 
    then the point $\pth_A(\fconflict)$ does not conflict 
	with any $xy$-monotone $\pth_B(\fconflict)p^+$-path (and there is such a path).
    See \figref{feasibility-IIIc}(i).
	So, by \lemref{apply-shortcut-to-pi*}, we obtain a contradiction to $\plan$ being optimal. 
\medskip

\noindent \textit{Case~2: $\pth_A^*(\fconflict)\notin \I(R)\cup Q$.}
\smallskip \\
Using \lemref{AaboveB} and the fact that $\plan$ conflicts immediately 
after $\fconflict$, we have $|\pth_A^*(\fconflict)_y - \pth_B^*(\fconflict)_y \leq 1$, and hence
	\[
	y(\topR)-1 = \yco{\pth_B^*(\fconflict)} \leq\yco{\pth_A^*(\fconflict)} 
	\leq \yco{\pth_B^*(\fconflict)}+ 1 = y(\topR).
	\]
We now have two subcases.
\begin{itemize}
\item 
First, suppose that $\plan^*(\fconflict)$ is $y$-separated. 
In this case, $\yco{\pth_A^*(\fconflict)} = y(\topR)$. 
But $\pth_A^*(\fconflict) \not\in \I(\rect)$, so we conclude that 
$\xco{\pth_A^*(\fconflict)} \not\in \Ixp$ and thus $\pth_A^*(\fconflict)$ 
is $x$-separated with every point in $\rect$.
The above means that the following $xy$-monotone
$\pth_B^*(\fconflict)p^+$-path 
does not conflict with $\pth_A(\fconflict)$: 
\begin{itemize}
\item 
    First, move $B$ horizontally within $Q$ from $\pth_B^*(\fconflict)$ to 
    the $x$-range of $\rect$, namely to vertical line containing the left (resp.\ right) 
    edge of $\rect$ if $\xco{\pth_B^*(\fconflict)} < x_\rect^-$ 
    (resp.\ $\xco{\pth_B^*(\fconflict)} > x_\rect^-$). Note that this move is empty
    if $\xco{\pth_B^*(\fconflict)}\in \Ixp$.
\item Then move $B$ vertically to $\botR$.
\item Finally, move $B$ to $p^+$ along an L-shaped path.
\end{itemize}
See \figref{feasibility-IIIc}(ii). The first move is without conflict because $\robB$ remains $y$-separated 
with $\pth_A^*(\fconflict)$, the second and third moves are without conflict 
because $\robB$ lies in $\rect$ during these moves and 
is thus $x$-separated from $\pth^*_A(\fconflict)$.
So, by \lemref{apply-shortcut-to-pi*}, $\plan$ is not optimal, a contradiction.
\item 
Next, suppose $\plan^*(\fconflict)$ is not $y$-separated. Hence, $\plan^*(\fconflict)$
is $x$-separated. Assume wlog that 
$\xco{\pth_A^*(\fconflict)} = \xco{\pth_B^*(\fconflict)}-1$, 
so that $\robA$ is to the left of 
$\robB$; the case of $\robA$ being to the right of $\robB$ is symmetric. 
Since $\yco{\pth^*_A(\fconflict)} \in \Iyp$ and  $\pth_A^*(\fconflict)\notin \I(R)\cup Q$,
we conclude that  $\xco{\pth_A^*(\fconflict)} \not \in \Ixp$.

If $\xco{\pth_A^*(\fconflict)} < x_R^- - 1$, then 
$\xco{\pth_B^*(\fconflict)} = \xco{\pth_A^*(\fconflict)} + 1 < x_R^-$.
Furthermore, $p_x^+\geq x_R^-$. This implies that, as in the previous case, there exists an $xy$-monotone 
$\pth^*_B(\fconflict)p^+$-path in $\fre$ that does not conflict with~$\pth^*_A(\fconflict)$:
move $B$ horizontally to the $x$-range of $R$,
then vertically to $\botR$, and then with an L-shaped path to $p^+$.
(Here the first move is without conflict because $\pth_A(\fconflict)_x +1 = \pth_B(\fconflict)_x < p^+_x$.)
See \figref{feasibility-IIIc}(iii). Therefore, by \lemref{apply-shortcut-to-pi*}, $\plan$ is not optimal.

Otherwise, $\xco{\pth_A^*(\fconflict)} > x_R^+ + 1$. 
Let $Q^+$ be the rectangle of height~$1$ that has $\topQ$ as its bottom edge.
Recall that we are in the situation where $\pth_A(\fconflict)\not\in Q$,
and $\pth_B(\fconflict)\in Q$,  and $\plan(\fconflict)$ is not $y$-separated.
See \figref{feasibility-IIIc}(iv).
Since $\pth_A(\fconflict)\not\in Q$ and
$\pth_A(\fconflict) \not\in \I(\R)$---the latter is true because $\xco{\pth_A(\fconflict)} =\xco{\pth_A^*(\fconflict)} > x_R^+ + 1$---we know that $\pi^*_A(\fconflict)$ lies above $\pi^*_B(\fconflict)$ by \lemref{AaboveB}.
We thus have 
\[
y(\topQ) < \yco{\pth_A(\fconflict)} \leq y(\topQ)+1.
\]
We also have
\[
\pth_B(\mu_0)_x \leq x^+_R + 1 < \pth_A(\fconflict)_x < \pth_B(\fconflict)_x.
\]
Since $[\pth_B(\mu_0)_x, \pth_B(\fconflict)_x]\times [y(\topQ), y(\topQ)+1]\subset Q^+$,
it follows that $\pth_A(\fconflict) \in Q^+$. 
So, by Lemma~\ref{lem:before-lambda1-x-sep-below}, the fact that $\plan^*(\fconflict)$ 
is $x$-separated implies that $\plan(\nu')$ is $x$-separated for 
all $\nu'\in [\fconflict,\mu_0]$. Hence, $\plan^*[\fconflict,\lambda_1]$ has no conflict, contradicting our assumption.
\end{itemize}
\end{proof}
Putting everything from \subsecref{feasibility} together, we obtain the following lemma.
\begin{lemma}
	\label{lem:feasible}
	The modified plan $\plan^*$ is feasible.
\end{lemma}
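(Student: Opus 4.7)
The plan is to prove \lemref{feasible} by assembling the feasibility arguments already established in \subsecref{feasibility}. Feasibility of $\plan^*$ requires two properties: (a) both modified paths $\pth^*_A$ and $\pth^*_B$ lie inside the single-robot free space $\fre$; and (b) the configurations $\pth^*_A(\lambda)$ and $\pth^*_B(\lambda)$ are not in conflict for any $\lambda$. Since the surgery only alters $\plan$ on a bounded window, the task reduces to checking these properties regime by regime and invoking the corresponding sub-lemmas.

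First I would verify (a). Outside $[\lambda_1,\lambda_2]$ the $A$-path is unchanged, so $\pth^*_A$ trivially lies in $\fre$; inside $[\lambda_1,\lambda_2]$ every primary push on $\pth_A$, together with the ghost segments added to repair its discontinuities, keeps the path inside $\rect \subseteq \fre$, because each push sets the $y$-coordinate to $y(\topR)$ or $y(\botR)$ while keeping $\xco{\pth^*_A(\mu)} = \xco{\pth_A(\mu)} \in I_x$. For $\pth^*_B$, primary pushes arise only in Case~III and move a subpath onto a horizontal edge of the containing rectangle $Q$ of $\horzF$ (Cases~III(a,b)) or of the component of $Z^-\cap\fre$ containing $\pth_B[\mu_0,\mu_1]$ (Case~III(c)), each of which lies inside $\fre$, and the ghost segments added at the corresponding discontinuities are vertical segments inside the same rectangle. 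Finally, \lemref{secondary-push-correctness} guarantees that each secondary push on $\pth_B$ during an unsafe interval, including its ghost segments, stays inside a single connected component of $Z^+ \cap \fre$ or $Z^- \cap \fre$.

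Next I would verify (b) by partitioning the time domain. For $\lambda \in [\lambda_1,\lambda_2]$ with $\plan(\lambda)$ $x$-separated, the surgery preserves $x$-coordinates, so $\plan^*(\lambda)$ is also $x$-separated and hence conflict-free. For $\lambda \in [\lambda_1,\lambda_2]$ with $\plan(\lambda)$ not $x$-separated, $\lambda$ lies in an unsafe interval, and \lemref{secondary-push-correctness} shows that the secondary push resolves the conflict prior to ghost-segment insertion. The ghost segments themselves are handled by \lemref{ghost-correctness}, which gives a collision-free re-parametrization around each discontinuity in $[\lambda_1,\lambda_2]$. For times $\lambda$ outside $[\lambda_1,\lambda_2]$, $\plan^*$ can differ from $\plan$ only in Cases~III(a) and~III(c), where the primary push of $\pth_B$ may extend back to a time $\nu_0 < \lambda_1$ (or $\rho_0 < \lambda_1$). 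These are precisely the cases covered by \lemref{feasible-case-IIIa} and \lemref{feasible-case-III-c}: both lemmas show that a hypothetical first conflict $\fconflict \in [\nu_0,\lambda_1]$ would either enable an optimality-contradicting shortcut (via \lemref{case-III-shortcut-below} and \lemref{apply-shortcut-to-pi*}) or fall into a regime in which \lemref{before-lambda1-x-sep-below} forces $\plan^*$ to remain $x$-separated on $[\fconflict,\mu_0]$, contradicting the existence of the conflict.

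The main obstacle is purely organizational: the lemma is a clean assembly of everything proven in \subsecref{feasibility}, and the only subtlety is ensuring we account for all sources of potential conflict — primary pushes, secondary pushes, ghost segments, and extensions of the surgery outside $[\lambda_1,\lambda_2]$ — and match each source to the lemma that handles it. In particular, one must carefully note that in Cases~III(a) and~III(c) the pre-$\lambda_1$ behavior of $\pth_A$ is constrained by \lemref{before-lambda0-not-other-side} so that it avoids $\interior(\rect) \cup (\Rxy^+\cap\gamma(\rect))$, which is exactly what makes the shortcut arguments in \lemref{feasible-case-IIIa} and \lemref{feasible-case-III-c} go through. Once these pieces are listed, feasibility follows by direct citation.
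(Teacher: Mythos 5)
Your proposal is correct and mirrors the paper's approach: the paper treats \lemref{feasible} as the direct summary of Section~\ref{subsec:feasibility}, and your regime-by-regime assembly — $\pth^*_A$ in $\R$, primary $B$-pushes in $Q$, \lemref{secondary-push-correctness} for secondary pushes, \lemref{ghost-correctness} for the ghost-segment re-parametrization within $[\lambda_1,\lambda_2]$, and \lemref{feasible-case-IIIa}/\lemref{feasible-case-III-c} for the spillover before $\lambda_1$ in Cases~III(a,c) — is exactly the decomposition the paper uses. The only minor imprecision, which does not affect correctness, is the description of the Case~III(c) primary push target: it lands on the line $y=y(\topR)-1$ inside $Q$ (and creates no ghost segments at $\rho_0,\rho_1$), rather than on an edge of a $Z^-$-component per se.
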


\subsection{Optimality of $\plan^*$}
\label{subsec:optimality}
We will now prove the optimality of $\plan^*$. Since the surgery does not 
change the $x$-coordinate for any $\lambda$, the cost of $\plan^*$ may only
increase because of the ghost segments added to remove discontinuities created 
by the push operations. Roughly speaking, we show that we are always in
one of the following three cases.
\begin{itemize}
\item $\push(X, I, y)$ does not create a discontinuity at either endpoint of $\plan^*_X[I]$
      and, hence, no ghost segment is added; or
\item $\push(X, I, y)$ does not increase the cost of $\plan^*[I]$,
      that is, $\|\plan^*[I]\| \leq \|\plan[I]\|$ where the lengths of the ghost segments are included; or
\item $\push(X, I, y)$ increases the cost of $\plan^*[I]$, but $\|\plan^*[I]\|-\|\plan[I]\|$
      (which is the cost increase) is compensated by a cost decrease resulting from another push operation. 
\end{itemize}
We need the following basic properties of the push operation.
\begin{lemma}\label{lem:surgery-props}
Consider the operation $\push(X, [\nu_1, \nu_2], y^*)$. If there is a time 
	$\nu \in [\nu_1, \nu_2]$ such that $\yco{\pth_X(\nu)} = y^*$, then the total
length of $\pth_X^*[\nu_1, \nu_2]$, including those of the ghost segments 
$g(\nu_1) := p_X^-(\nu_1)p^+_X(\nu_1)$ and $g(\nu_2) := p^-_X(\nu_2)p^+_X(\nu_2)$,
is at most $\cost{\pth_X[\nu_1, \nu_2]}$.
\end{lemma}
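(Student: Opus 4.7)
The plan is to split the $\ell_1$-length of each path into its horizontal and vertical contributions, and then bound the new vertical contribution (which is entirely contained in the two ghost segments) using the hypothesis that $\pth_X$ visits the target $y$-coordinate $y^*$ at some intermediate time.

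For a rectilinear path $\gamma$, write $\|\gamma\|_x$ and $\|\gamma\|_y$ for its total horizontal and vertical length, so that $\|\gamma\|=\|\gamma\|_x+\|\gamma\|_y$. Because $\push(X,[\nu_1,\nu_2],y^*)$ preserves the $x$-coordinate at every instant (see the definition in~\eqref{eqn:push}), we have $\|\pth_X^*[\nu_1,\nu_2]\|_x=\|\pth_X[\nu_1,\nu_2]\|_x$. Moreover, since $\pth_X^*[\nu_1,\nu_2]$ lies on the horizontal line $y=y^*$, its vertical component is zero, so the entire new vertical contribution comes from the two ghost segments. Writing $y_i:=\yco{\pth_X(\nu_i)}$ for $i\in\{1,2\}$, the ghost segment at $\nu_i$ has length $|y_i-y^*|$, and therefore
\[
\|\pth_X^*[\nu_1,\nu_2]\|+|g(\nu_1)|+|g(\nu_2)|=\|\pth_X[\nu_1,\nu_2]\|_x+|y_1-y^*|+|y_2-y^*|.
\]

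The key step is to show that the vertical contribution of the \emph{original} path is at least the sum of the two ghost lengths, i.e., $\|\pth_X[\nu_1,\nu_2]\|_y\geq |y_1-y^*|+|y_2-y^*|$. This uses the hypothesis: there exists $\nu\in[\nu_1,\nu_2]$ with $\yco{\pth_X(\nu)}=y^*$. The total variation of the real-valued continuous function $\mu\mapsto\yco{\pth_X(\mu)}$ on $[\nu_1,\nu_2]$ equals $\|\pth_X[\nu_1,\nu_2]\|_y$ since the path is rectilinear. Splitting this total variation at $\nu$ and applying the elementary inequality that the total variation on a subinterval is at least the absolute difference of the endpoint values gives
\[
\|\pth_X[\nu_1,\nu_2]\|_y\geq |y_1-y^*|+|y^*-y_2|.
\]
Adding $\|\pth_X[\nu_1,\nu_2]\|_x$ to both sides and combining with the displayed equation above yields $\|\pth_X^*[\nu_1,\nu_2]\|+|g(\nu_1)|+|g(\nu_2)|\leq \|\pth_X[\nu_1,\nu_2]\|$, as required.

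There is no real obstacle here; the only subtlety is to recognize that the two ghost segments have lengths precisely $|y_i-y^*|$ and that the hypothesis $\yco{\pth_X(\nu)}=y^*$ is exactly what allows the total-variation inequality to decouple into the two required terms. The lemma will then be invoked in the sequel whenever a push operation introduces ghost segments but the pushed subpath already crossed the target grid line, so that the extra vertical cost is absorbed by the vertical motion that has been flattened away.
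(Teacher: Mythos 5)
Your proposal is correct and follows essentially the same argument as the paper: the horizontal length is preserved since a \push does not change $x$-coordinates, the new vertical contribution is exactly the two ghost-segment lengths $|\yco{\pth_X(\nu_1)}-y^*|+|\yco{\pth_X(\nu_2)}-y^*|$, and the original path's vertical length is at least this amount because it visits the line $y=y^*$. Your total-variation phrasing merely makes the paper's final inequality explicit.
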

\begin{proof}
The horizontal distance traversed by $\pth_X[\nu_1, \nu_2]$ and by 
$\pth_X^*[\nu_1, \nu_2]$ is the same,
since a \push does not modify the $x$-coordinates of points on the path.
The vertical distance traversed by $\pth_X^*[\nu_1, \nu_2]$ 
equals the total length of the ghost segments, which is 
$|\yco{\pth_X(\nu_1)} - y^*|+|\yco{\pth_X(\nu_2)} - y^*|$. Since $\pth_X[\nu_1,\nu_2]$
visits a point with $y$-coordinate~$y^*$, the vertical distance traversed
by $\pth_X[\nu_1, \nu_2]$ must be at least this amount.
\end{proof}
\lemref{surgery-props} immediately implies that the primary push operations on $\robA$ 
on $\robB$---the latter happens only in Case~III---do not increase the cost of the plan. 
(In fact, as we will see below, the cost sometimes goes down, for example in Cases~I and~III(c), 
which can compensate for the cost increase in secondary pushes.) 
We thus focus on the secondary push operations, which are performed only on $\robB$ 
during an unsafe interval.\footnote{For simplicity, if a push operation is performed 
during an interval $I$, we use $\plan^*[I]$ to denote the subplan that includes 
the parametrization of  the motion along the 
ghost segments at the endpoints of $\plan[I]$.}
\begin{lemma}
	\label{lem:non-x-sep-push-cost}
	Let $I=[\nu_1,\nu_2]$ be an unsafe interval that is not $x$-separated. If a secondary push
	$\push(B,I,y^*)$, for some $y^*$, is performed, then $\|\plan^*[I]\| \leq \|\plan[I]\|$.
\end{lemma}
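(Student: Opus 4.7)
Since the push operations change only $y$-coordinates, the cost change over $I$ is governed entirely by vertical motion. Let $V_A$ and $V_B$ denote the total vertical distances of $\pth_A$ and $\pth_B$ during $\plan[I]$, and set $\beta_i := \yco{\pth_B(\nu_i)}$. Assume without loss of generality that $B$ lies above $A$ throughout $I$; by \lemref{BinZ}, $\pth_B[I] \subset Z^+$, so $\beta_i \leq y^* = y(\topR)+1$, and since the secondary push is triggered by the primary-push conflict, the primary push makes $\pth_A^*[I]$ constant at $y(\topR)$. The cost change then decomposes as
\[
\|\plan^*[I]\| - \|\plan[I]\| \;=\; (D_1 + D_2) - (V_A + V_B), \qquad D_i := y^* - \beta_i \geq 0,
\]
reducing the claim to the inequality $V_A + V_B \geq D_1 + D_2$.

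The crucial observation I would establish next is that every endpoint $\nu_i$ that is \emph{not} $x$-separated must satisfy $D_i = 0$. I would prove this by examining the three possible reasons $\nu_i$ bounds the unsafe interval (from \lemref{unsafe-vertical-edge} and the definition): (a)~$\pth_B$ enters or leaves $\I(\R)$, (b)~$\nu_i\in\{\lambda_1,\lambda_2\}$, or (c)~the $y$-separation status changes. Case~(c) forces $x$-separation by a continuity argument: immediately before $\nu_i$, $\plan$ is not $y$-separated, so it must be $x$-separated by feasibility, which carries over to $\nu_i$ since $x$-coordinates are continuous. In case~(a), a side entry/exit of $\I(\R)$ similarly forces $x$-separation, as the left and right edges of $\I(\R)$ lie at $x$-distance exactly $1$ from the $x$-range of $\R$ (which contains $\pth_A$); whereas a top entry/exit gives $\beta_i = y^*$ and hence $D_i = 0$. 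In case~(b), tracing the primary-push direction through Cases~I--III of the surgery shows that an upward secondary push forces $\alpha_i = y(\topR)$ (and so $\beta_i = y^*$, $D_i = 0$): under the WLOG assumption $\pth_A(\lambda_2)\in\topR$ this is immediate for $\nu_i=\lambda_2$, while for $\nu_i=\lambda_1$, the subcases in Cases~II,~III (where $\pth_A(\lambda_1)\in\botR$) are incompatible with upward conflict since the primary push at $\lambda_1$ is then to $\botR$. Consequently, under the lemma's hypothesis at most one endpoint---say $\nu_1$---has $D_1 > 0$, and $D_2 = 0$, i.e., $\beta_2 = y^*$.

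It remains to show $V_A + V_B \geq D_1$; in fact, $V_B$ alone will suffice. Pick any time $\nu^* \in I$ at which $\plan(\nu^*)$ is not $x$-separated (guaranteed by the unsafe-interval condition). At $\nu^*$, $y$-separation gives $\beta^* := \yco{\pth_B(\nu^*)} \geq \yco{\pth_A(\nu^*)} + 1$, and the primary-push conflict condition forces $\beta^* < y^*$. Since $\pth_B$ is continuous and visits $\beta^*$, the triangle inequality along $\pth_B$ yields
\[
V_B \;\geq\; |\beta_1 - \beta^*| + |\beta^* - \beta_2| \;=\; |\beta_1 - \beta^*| + (y^* - \beta^*).
\]
A case split on whether $\beta_1 \leq \beta^*$ or $\beta_1 > \beta^*$ gives $V_B \geq y^* - \beta_1 = D_1$ in either case, completing the argument. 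The main obstacle is the structural case analysis in the previous paragraph; once the possible endpoint types have been tightly constrained, the final inequality reduces to a clean triangle-inequality application on $\pth_B$.
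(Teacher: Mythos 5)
Your overall skeleton matches the paper's: the heart of the matter is showing that the endpoint of $I$ that is not $x$-separated already lies on the push line (in the paper's language, $\pth_B(\nu_1)\in\topRsq\cup\botRsq$ with matching $y^*$), after which the length bound is exactly the triangle-inequality bookkeeping of \lemref{surgery-props}. Your endpoint cases (a) and (c) are argued correctly, and your closing inequality is fine (in fact, once $\beta_2=y^*$ you can drop $\nu^*$ altogether, since $V_B\geq|\beta_1-\beta_2|=D_1$ directly).

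The gap is in case (b), $\nu_i\in\{\lambda_1,\lambda_2\}$, and it sits precisely where the paper invokes \lemref{lambda2}, which you never use. First, ``WLOG $B$ above $A$ throughout $I$'' is not a true symmetry: the surgery fixes $\pth_A(\lambda_2)\in\topR$ and its Cases I--III are not up/down symmetric, so the mirrored situation ($B$ below $A$, downward push to $y(\botR)-1$, endpoint $\nu_2=\lambda_2$) is a genuine case and is not ``immediate''---in Cases III(a,b) robot $A$ is pushed to $\botR$, a downward secondary push can occur, and $y$-separation at $\lambda_2$ only gives $\beta_2\leq y(\topR)-1$, not $\beta_2=y(\botR)-1$; if that configuration could arise, both endpoints could have $D_i>0$ and your reduction to $V_B\geq D_1$ collapses. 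Second, even within your chosen orientation, your stated reason for excluding $\nu_i=\lambda_1$ in Case III---``the primary push at $\lambda_1$ is then to $\botR$''---is false in Case III(c), where $A$ is pushed to $\topR$ on all of $[\lambda_1,\lambda_2]$ even though $\pth_A(\lambda_1)\in\botR$, so the push direction does not rule out an upward conflict on an unsafe interval starting at $\lambda_1$. In both places the missing ingredient is the structural fact of \lemref{lambda2}: in the two-swap-interval situation of Case III, if $\lambda_1$ (resp.\ $\lambda_2$) bounds an unsafe interval then $B$ lies below (resp.\ above) $A$ there, which is what pins $\pth_B$ to $\botRsq$ (resp.\ $\topRsq$) and forces $D_i=0$. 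Without this lemma or a substitute, the key dichotomy ``non-$x$-separated endpoint $\Rightarrow D_i=0$'' is not established in Case III, so the proof does not go through as written. (Your Case II reasoning is essentially salvageable, but it also relies on the unproven observation that $\bar\lambda$ cannot lie in the interior of an unsafe interval, which you use implicitly when asserting that $\pth_A^*[I]$ is constant at $y(\topR)$.)
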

\begin{proof}
	As discussed above, $\|\pth_A^*[I]\| \leq \|\pth_A[I]\|$, so we need to argue that 
	$\|\pth^*_B[I]| \leq \|\pth_B[I]\|$.
	Since $I$ is not an $x$-separated unsafe interval, one of the endpoints of $\plan[I]$, 
	which we assume wlog to be $\plan(\nu_1)$, is not $x$-separated. Then $\plan(\nu_1)$ is $y$-separated and 
	either $\pth_B(\nu_1)\in\topRsq\cup\botRsq$ or $\nu_1=\lambda_1$. 

	If $\pth_B(\nu_1)\in\topRsq$ then $y^*=y(\topRsq)$,  which by \lemref{surgery-props}
	implies $\|\pth^*_B[I]| \leq \|\pth_B[I]\|$. The argument for $\pth_B(\nu_1)\in\botRsq$)
    is similar.

	  It remains to consider the case $\nu_1=\lambda_1$. 
	In case~I, we have $\pth_A(\lambda_1) \in \topR$.
	Since $\plan(\lambda_1)=\plan(\nu_1)$ is $y$-separated and a secondary push is performed during
	$I$, we have $\pth_B(\nu_1) \in \topRsq$ and the secondary push operation pushes 
	$\robB$ to $\topRsq$. In Case~III, \lemref{lambda2} implies that $\robA$ lies below $\robB$, 
	and since $\plan(\lambda_1)$ is $y$-separated, $\pth_B(\nu_1)\in\botRsq$ and the 
	secondary push operation pushes $\robB$ to $\botRsq$. Finally, in Case~II, $\robA$ is 
	pushed to $\botR$ during the interval $[\lambda_1,\bar\lambda]$, so a secondary 
	push on an unsafe interval starting at $\lambda_1$ implies that $\robB$ lies below $\robA$ in 
	$\plan(\lambda_1)$. Since $\pth_A(\lambda_1)\in\botR$ and $\plan(\lambda_1)$ 
	is $y$-separated, we have that $\pth_B(\nu_1) \in \botRsq$. Hence, again, 
	by \lemref{surgery-props}, $\|\pth^*_B[I]| \leq \|\pth_B[I]\|$ in all cases.
	This completes the proof of the lemma.
\end{proof}
\lemref{non-x-sep-push-cost} implies that only a secondary push on an $x$-separated 
unsafe interval may increase the cost of the plan. By~\lemref{unsafe-is-swap}, any 
$x$-separated unsafe interval in $\plan$ is a swap interval.
Hence,it suffices to examine the secondary push operations 
performed on swap intervals, which we do case by case.
\begin{lemma}\label{lem:snapping-case-I}
The modification in Case~I yields a plan~$\plan^*$ that is optimal.
\end{lemma}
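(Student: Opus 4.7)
The plan is to prove $\cost{\plan^*}\leq \cost{\plan}$; combined with the optimality of $\plan$ and the feasibility guaranteed by \lemref{feasible}, this will yield the claim. In Case~I both $\pth_A(\lambda_1)$ and $\pth_A(\lambda_2)$ already lie on $\topR$, so the primary push $\push(A,[\lambda_1,\lambda_2],y(\topR))$ introduces no ghost segment on $\pth_A$. Since pushes never alter $x$-coordinates, $\pth_A^*$ has the same horizontal length as $\pth_A$, while its vertical length on $[\lambda_1,\lambda_2]$ drops to zero. Let $V_A$ denote the vertical length of $\pth_A[\lambda_1,\lambda_2]$; the primary push therefore saves exactly $V_A$, and this is the budget I will use to pay for any cost increase on $\pth_B$.

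I next classify the secondary pushes. Because $\pth_A^*$ sits on $\topR$, and \lemref{BinZ} confines $\pth_B$ to a component of $Z^-\cap\fre$ on any unsafe interval with $\robB$ below $\robA$ (so that $\yco{\pth_B}<y(\topR)-1$), no conflict with $\pth_A^*$ arises in that situation and no secondary push is triggered. Secondary pushes therefore occur only on unsafe intervals in which $\robB$ lies above $\robA$. For the non-$x$-separated ones, \lemref{non-x-sep-push-cost} already shows that the combined cost on the subinterval does not grow. By \lemref{unsafe-is-swap} and \lemref{main-properties}(i), the only remaining case is a single swap interval $[\nu_1,\nu_2]$ with $\robB$ above $\robA$. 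On this interval the push moves $\pth_B^*$ onto $y=y(\topR)+1$ and introduces ghost segments of total vertical length
\[
V_B^* \;=\; \bigl(y(\topR)+1-\yco{\pth_B(\nu_1)}\bigr)+\bigl(y(\topR)+1-\yco{\pth_B(\nu_2)}\bigr),
\]
where the signs are correct because $\pth_B[\nu_1,\nu_2]\subset Z^+$ gives $\yco{\pth_B(\nu_i)}\leq y(\topR)+1$.

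The core step is to compare $V_B^*$ against $V_A$ plus the vertical length $V_B$ of $\pth_B[\nu_1,\nu_2]$. I pick $\nu^*\in[\nu_1,\nu_2]$ with $\xco{\pth_A(\nu^*)}=\xco{\pth_B(\nu^*)}$, which exists since the $x$-order reverses on the swap interval. At $\nu^*$ the $y$-separation constraint forces $\yco{\pth_A(\nu^*)}\leq \yco{\pth_B(\nu^*)}-1$. Because $\pth_A$ starts and ends at $y(\topR)$ and remains in $\R$ throughout $[\lambda_1,\lambda_2]$, the triangle inequality yields
\[
V_A \;\geq\; 2\bigl(y(\topR)-\yco{\pth_A(\nu^*)}\bigr) \;\geq\; 2\bigl(y(\topR)+1-\yco{\pth_B(\nu^*)}\bigr),
\]
while $V_B\geq |\yco{\pth_B(\nu_1)}-\yco{\pth_B(\nu^*)}|+|\yco{\pth_B(\nu^*)}-\yco{\pth_B(\nu_2)}|$. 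A brief four-case check on the signs of $\yco{\pth_B(\nu_i)}-\yco{\pth_B(\nu^*)}$ for $i\in\{1,2\}$ will verify $V_A+V_B\geq V_B^*$, so the cost increase $V_B^*-V_B$ is covered by the saving $V_A$. Combined with \lemref{non-x-sep-push-cost} on the remaining unsafe intervals, this gives $\cost{\plan^*}\leq\cost{\plan}$.

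The hardest part is the inequality $V_A+V_B\geq V_B^*$: the lower bound on $V_A$ is only tight at the lowest crossing of $\pth_A$, so one must be careful that the same $\nu^*$ used to bound $V_A$ also gives a useful bound on $V_B$. The triangle-inequality case analysis is what ties the two together, and the crucial geometric input is the containment $\pth_B[\nu_1,\nu_2]\subset Z^+$, which keeps $\yco{\pth_B(\nu_i)}$ within $(y_\R^-+1,\,y(\topR)+1]$.
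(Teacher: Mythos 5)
Your proof is correct, and it ultimately rests on the same high-level strategy as the paper (use Lemma~\ref{lem:non-x-sep-push-cost} and Lemma~\ref{lem:unsafe-is-swap} to reduce to a single secondary push on a swap interval with $B$ above $A$, then show that the primary push's cost decrease on $A$ dominates). But the final comparison step differs genuinely from the paper's. The paper bounds the decrease in $\cost{\pth_A}$ by $2y_\R^+ - \yco{\pth_A(\nu_1)} - \yco{\pth_A(\nu_2)}$ using the two endpoints $\nu_1,\nu_2$ of the swap interval, and then applies $y$-separation \emph{at those same endpoints} to conclude that this decrease is at least the ghost-segment length $2y_\R^+ + 2 - \yco{\pth_B(\nu_1)} - \yco{\pth_B(\nu_2)}$; the savings from flattening $\pth_B$ (your $V_B$) is discarded as slack. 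You instead bound $V_A$ by $2(y(\topR)-\yco{\pth_A(\nu^*)})$ at a single $x$-crossing time $\nu^*$ and then pay the remainder out of $V_B$ via the triangle inequality, so $y$-separation is invoked only at $\nu^*$ and $V_B$ does real work. Both routes close, but the paper's is more economical because it uses $y$-separation where the ghost segments actually sit and needs no case split; yours needs the four-case check on $\sign(\yco{\pth_B(\nu_i)}-\yco{\pth_B(\nu^*)})$, which you assert but do not carry out. That check is in fact straightforward (with $a=y(\topR)+1-\yco{\pth_B(\nu^*)}$ and $b_i=y(\topR)+1-\yco{\pth_B(\nu_i)}$, the inequality $2a+|b_1-a|+|a-b_2|\geq b_1+b_2$ follows by considering whether each $b_i$ is at least or less than $a$), so this is a gap in presentation rather than in substance. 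One small further observation you make — that no ghost segments appear on $\pth_A$ because $\pth_A(\lambda_1),\pth_A(\lambda_2)\in\topR$ — is true but not used by the paper, which only needs Lemma~\ref{lem:surgery-props}.
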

\begin{proof}
	Let $I=[\nu_1,\nu_2]$ be a swap interval in $\plan[\lambda_1,\lambda_2]$ such that 
	a secondary push $\push(B,I, y^*)$ is performed. Since $\robA$ is pushed to 
	$\topR$ during $[\lambda_1,\lambda_2]$, $\robB$ lies above $A$ during $I$ and $y^*=y_R^++1$.
	Since there is at most one swap interval with $B$ above $A$ by
	Lemma~\ref{lem:limit-swap-intervals}, there
	is at most one swap interval $I$ for which a secondary push is performed.
	Pushing $B$ to $\topRsq$ during $I$ increases $\cost{\pth^*_B}$ by at
	most the total length of the ghost segments $g_B(\nu_1)$ and 
	$g_B(\nu_2)$, which is $2y_{\R}^+ + 2 - \yco{\pth_B(\nu_1)} - \yco{\pth_B(\nu_2)}$.
	Pushing $A$ to $\topR$ during $[\lambda_1,\lambda_2]$ decreases the length of its
	path by at least $2y_{\R}^+ - \yco{\pth_A(\nu_1)} - \yco{\pth_A(\nu_2)}$. 
	By definition, the configurations $\plan(\nu_i)$ are $y$-separated for $i = 1,2$,
	so $\yco{\pth_B(\nu_i)} - \yco{\pth_A(\nu_i)} \geq 1$.
	Hence,
    \begin{align*}
	    \cost{\plan} - \cost{\plan^*} &\geq \Big( 2y_{\R}^+ - \yco{\pth_A(\nu_1)} - 
		\yco{\pth_A(\nu_2)} \Big) - \Big( 2y_{\R}^+ + 2 - \yco{\pth_B(\nu_1)} - 
		\yco{\pth_B(\nu_2)} \Big) \\
	    &= \Big( \yco{\pth_B(\nu_1)} - \yco{\pth_A(\nu_1)} - 1 \Big) +  
		\Big( \yco{\pth_B(\nu_2)} - \yco{\pth_A(\nu_2)} - 1 \Big)\\
        &\geq 2\cdot (1 -1 )=0.
    \end{align*}
	We thus conclude that $\plan^*$ is optimal.
\end{proof}
\begin{lemma}\label{lem:snapping-case-II}
The modification in Case~II yields a plan~$\plan^*$ that is  optimal.
\end{lemma}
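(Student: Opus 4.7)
The plan is to show $\cost{\plan^*} \leq \cost{\plan}$, which together with the optimality of $\plan$ implies the optimality of $\plan^*$. Since the surgery does not modify $x$-coordinates of any path, only vertical lengths can change, and only on the subpath $\pth_A[\lambda_1,\lambda_2]$ and on subpaths of $\pth_B$ corresponding to unsafe intervals where secondary pushes are triggered. I would bound the vertical cost of each modified subpath separately.

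For $\pth^*_A$, the surgery applies two primary pushes, $\push(A,[\lambda_1,\bar\lambda],y(\botR))$ and $\push(A,[\bar\lambda,\lambda_2],y(\topR))$. Because $\pth_A(\lambda_1)\in\botR$ and $\pth_A(\lambda_2)\in\topR$, the witness conditions of \lemref{surgery-props} are satisfied (at times $\lambda_1$ and $\lambda_2$ respectively) and the ghost segments at $\lambda_1,\lambda_2$ themselves have length zero. Applying the lemma separately to each push and adding the resulting inequalities yields $\cost{\pth^*_A[\lambda_1,\lambda_2]} \leq \cost{\pth_A[\lambda_1,\lambda_2]}$; intuitively, the combined ghost segment at $\bar\lambda$ has length exactly $y(\topR)-y(\botR)$, which cannot exceed the vertical extent of $\pth_A[\lambda_1,\lambda_2]$ since that subpath must travel from $\botR$ to $\topR$.

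The main work is to argue that secondary pushes on $\pth_B$ do not increase its cost. By \lemref{unsafe-is-swap} every $x$-separated unsafe interval is a swap interval, and by \lemref{non-x-sep-push-cost} any secondary push on an unsafe interval that is not $x$-separated (hence not a swap interval) does not increase cost. So the essential claim is that in Case~II \emph{no} secondary push is triggered during a swap interval, because the primary push on $A$ already leaves $A$ and $B$ $y$-separated throughout the interval. I would verify this subcase by subcase: Case~II(a) has no swap intervals; in Case~II(b) the unique swap interval $[\mu_0,\mu_1]$ has $B$ below $A$ and $\bar\lambda=\mu_0$, so $\pth^*_A$ lies on $\topR$ while $y$-separation plus $\pth_A\subset\R$ forces $\yco{\pth_B}\leq y(\topR)-1$; in Case~II(c) the analogous argument works for the first swap interval (where $\pth^*_A$ lies on $\botR$ and $\yco{\pth_B}\geq y(\botR)+1$), and for a possible second swap interval \lemref{limit-swap-intervals} forces $B$ below $A$ while $\pth^*_A$ now lies on $\topR$, so $y$-separation is again maintained.

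The main obstacle I anticipate is establishing, at every instant of each swap interval and not merely at its endpoints, that the primary push's target $y$-coordinate remains compatible with $\yco{\pth_B}$. This follows from the fact that an unsafe interval is by definition $y$-separated at every time, so the bound on $\yco{\pth_B}$ relative to $\yco{\pth_A}$ propagates to a bound relative to the push target. Combining this subcase analysis with \lemref{non-x-sep-push-cost} for the remaining (non-swap) unsafe intervals yields $\cost{\pth^*_B} \leq \cost{\pth_B}$; combined with the bound on $\cost{\pth^*_A}$, we conclude $\cost{\plan^*} \leq \cost{\plan}$, so $\plan^*$ is optimal.
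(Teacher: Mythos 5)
Your proposal is correct and follows essentially the same route as the paper: the core of the paper's proof is exactly your key claim that in Case~II the choice of $\bar\lambda$ puts the pushed $\pth^*_A$ on the side of $\R$ away from $\robB$ throughout every swap interval (using that $B$ stays on one side during an unsafe interval), so $\pth_B$ is never modified there, and the remaining unsafe intervals are handled by \lemref{non-x-sep-push-cost} via \lemref{unsafe-is-swap}. Your slightly more explicit accounting of the two primary pushes on $\pth_A$ via \lemref{surgery-props} (with the ghost segment at $\bar\lambda$) is a harmless elaboration of what the paper treats as immediate.
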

\begin{proof}
	Let $I=[\nu_1,\nu_2]$ be a swap interval.  We claim that $\pth_B[\nu_1, \nu_2]$ is not 
	modified. Indeed,
	if $[\nu_1, \nu_2]$ is a swap interval, then either Case~II(b) or Case~II(c) occurs. In Case~II(b), we have $[\nu_1, \nu_2] = [\mu_1, \mu_2]$
	and $\pth_B$ lies below $\pth_A$ during $[\nu_1, \nu_2]$. Since $\bar{\lambda} = \mu_1$ in this case, and $A$ is pushed to $\topR$ during
	the interval $[\bar{\lambda}, \lambda_2]$, $B$ does not conflict with $A$ during $[\nu_1, \nu_2]$ after the modification of
	$\pth_A$. In Case~II(c), there may be two swap intervals. If $[\nu_1, \nu_2]$ is the first swap interval, $[\mu_1,\mu_2]$, then $B$ lies above $A$
	during $[\nu_1, \nu_2]$. Since $\bar{\lambda} = \mu_2$ in this case, $\pth_A$ is pushed to $\botR$ during $[\lambda_1, \bar{\lambda}] \supseteq [\nu_1, \nu_2]$,
	and thus away from $B$, and $\pth_B$ is not modified during $[\nu_1, \nu_2]$. On the other hand, if $[\nu_1, \nu_2]$ is the second swap interval, then $\nu_1 > \mu_2 = \bar{\lambda}$
	and $B$ lies below $A$ by \lemref{unsafe-is-swap}. Since $\pth_A$ is pushed to $\topR$ during $[\bar{\lambda}, \lambda_2] \supseteq [\nu_1, \nu_2]$
	and thus away from $B$, the path~$\pth_B$ is not modified during $[\nu_1, \nu_2]$.

	We conclude that if $[\nu_1, \nu_2]$ is a swap interval then $\pth_B$ is not modified during $[\nu_1, \nu_2]$, as claimed. 
	Hence, $\cost{\pth^*_B[\nu_1,\nu_2]} = \cost{\pth_B[\nu_1,\nu_2]}$.  Combining this with 
	\lemref{non-x-sep-push-cost} (for Case~II(a)), we conclude that $\|\plan^*\|\leq \|\plan\|$ for Case~II.
\end{proof}
\begin{lemma}
	\label{lem:snapping-case-III}
The modifications in Case~III yields a plan~$\plan^*$ that is  optimal.
\end{lemma}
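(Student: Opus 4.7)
The proof I plan naturally splits into the three subcases III(a), III(b), III(c) of the surgery, and in each I will show $\cost{\plan^*}\leq\cost{\plan}$.

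In subcases III(a) and III(b) the argument will be direct. Both primary pushes---pushing $\pth_A[\lambda_1,\lambda_2]$ to $y(\botR)$ and pushing the relevant subpath of $\pth_B$ to $y(\botQ)$---satisfy the hypothesis of \lemref{surgery-props}, because $\pth_A(\lambda_1)\in\botR$ and because $\pth_B$ meets $\botQ$ at either $\nu_0$ or $\nu_1$; so neither primary push increases the cost. The remaining task is to verify that no secondary push is triggered. For any swap interval with $B$ above $A$ (which, in these subcases, must be the second swap interval), \lemref{BinZ} together with the definition $Z^+=\Ixp\times(y_\R^-+1,y_\R^++1]$ forces $\pth_B(\nu)_y>y(\botR)+1$ strictly, so there is no conflict with $\pth_A^*(\nu)_y=y(\botR)$. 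For the first swap interval (with $B$ below $A$), I will use \obsref{free-segments} together with the fact that no horizontal $L_0$-line crosses the interior of $Z^-$, to conclude that the bottom of every component of $Z^-\cap\freesp$ lies at $y(\botR)-1$; hence $y(\botQ)\leq y(\botR)-1$, which provides unit separation from $\pth_A^*$. With no secondary push fired, optimality in Cases III(a) and III(b) is immediate.

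Subcase III(c) is the genuinely hard case, and its proof will closely follow the template of \lemref{snapping-case-I}. The primary push of $\pth_A$ to $\topR$ creates a ghost segment of length $\height(\R)$ at $\lambda_1$ (since $\pth_A(\lambda_1)\in\botR$), while the primary push of $\pth_B[\rho_0,\rho_1]$ to $y(\topR)-1$ creates no ghost segments by the definition of $\rho_0,\rho_1$. After the primary pushes, the first swap interval $[\mu_0,\mu_1]$ has exactly unit separation and needs no secondary push. Only the second swap interval $[\mu_2,\mu_3]$ may trigger a secondary push of $\pth_B$ to $y(\topR)+1$; its added cost is at most the total length of the two ghost segments $(y(\topR)+1-\pth_B(\mu_2)_y)+(y(\topR)+1-\pth_B(\mu_3)_y)$, which by $y$-separation at the endpoints, $\pth_B(\mu_i)_y\geq\pth_A(\mu_i)_y+1$, is bounded by $(y(\topR)-\pth_A(\mu_2)_y)+(y(\topR)-\pth_A(\mu_3)_y)$.

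The main obstacle is accounting for the extra $\height(\R)$ introduced by the ghost segment at $\lambda_1$, which has no analogue in Case~I. The plan is to combine three sources of vertical savings: (i) the excess travel of $\pth_A$ beyond $\height(\R)$ forced by the visits to all four swap-interval points $\pth_A(\mu_0),\pth_A(\mu_1),\pth_A(\mu_2),\pth_A(\mu_3)$, derived by the triangle inequality along the chain of $y$-coordinates; (ii) $\cost{\pth_B[\rho_0,\rho_1]}_y$, which is entirely saved by the primary push on $\pth_B$ and can itself be bounded below using the detour from $y(\topR)-1$ down to $\pth_B(\mu_0)$ and $\pth_B(\mu_1)$; and (iii) $\cost{\pth_B[\mu_2,\mu_3]}_y$. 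Applying the four $y$-separation inequalities at $\mu_0,\mu_1,\mu_2,\mu_3$ together with these three lower bounds should yield the required inequality $\cost{\plan^*}\leq\cost{\plan}$, completing the proof.
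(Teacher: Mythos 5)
Your proposal is correct and takes essentially the same approach as the paper: Cases~III(a,b) are settled by observing that both primary pushes are cost-neutral via \lemref{surgery-props} and that no swap-interval secondary push arises (the second swap interval has $B$ in $Z^+$, strictly more than unit distance above $y(\botR)$, and the first is covered by the primary push of $B$ to $y(\botQ)\leq y(\botR)-1$), while Case~III(c) is settled by charging the secondary-push ghosts at $y(\topR)+1$ against the vertical savings of the two primary pushes using the $y$-separation inequalities at the swap endpoints. The only differences are cosmetic: the paper picks single witness times $\mu'\in[\rho_0,\rho_1]$ and $\mu''\in[\mu_2,\mu_3]$ and telescopes three terms, whereas you chain all four endpoints $\mu_0,\ldots,\mu_3$ (your listed bounds do close the inequality, and the ``extra $\height(\R)$'' ghost at $\lambda_1$ that you flag as the main obstacle is in fact automatically absorbed, since $\pth_A[\lambda_1,\lambda_2]$ must climb from $\botR$ to $\topR$ anyway, so \lemref{surgery-props} already makes that push cost-neutral); also, your claim that ``no secondary push is triggered'' in III(a,b) should be weakened to ``none during a swap interval,'' the remaining ones being cost-neutral by \lemref{non-x-sep-push-cost}.
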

\begin{proof}

	First consiser Cases~III(a,b).
	Because $\pth_A[\lambda_1,\lambda_2]$ is pushed to $\botR$ in these cases,
a secondary push can only occur during a swap interval $I$ only if $\robB$ lies
below $\robA$ during $I$. Recall that 
	$[\mu_0,\mu_1]$ is the only swap interval with $B$ below $A$ in Case~III.
	The primary push on $\pth_B$ in Cases~II(a,b), however, are of the form 
	$\push(B, J, y(\botQ))$ for some interval $J\supset [\mu_0,\mu_1]$, so
	there are no conflicts in $\plan^*[\mu_0,\mu_1]$ after the primary push on $\robB$ 
	has been performed.
	Hence, there is no secondary push in Cases~III(a,b) during a swap interval, and 
	thus the modified plan $\plan^*$ is optimal.

	Next, consider Case~III(c). Because $\pth_A[\lambda_1,\lambda_2]$ is pushed to $\topR$ in 
	Case~III(c), a secondary push can only occur during a swap interval $I$ only 
	if $\robB$ lies above $\robA$ during $I$. Recall that 
	$[\mu_2,\mu_3]$ is the only swap interval with $\robB$ below $\robA$ in Case~III. So 
	$I=[\mu_2,\mu_3]$ and the surgery procedure performs $\push(\robB, [\mu_2,\mu_3],y_R^++1)$.
	We now argue that even though the length of the path of $\robB$ might increase because of the secondary push, 
	the cost of the overall plan does not increase.

	Since $[\mu_0,\mu_1]\subseteq [\rho_0,\rho_1]$, there is a time $\mu'\in [\rho_0,\rho_1]$ such that $\yco{\pth_A(\mu')} - \yco{\pth_B(\mu')} \geq 1$.
	Moreover, there is a time $\mu''\in [\mu_2,\mu_3]$ such that 
	$\yco{\pth_B(\mu'')} - \yco{\pth_A(\mu'')} \geq 1$.
    (This is, in fact, true for all $\mu''\in [\mu_2,\mu_3]$, since the robots are $y$-separated
    during the interval.)
	Since $\pth^*_A[\lambda_1,\lambda_2]$ is $xy$-monotone and $\mu''>\mu'$, we have
	\[
    \|\pth_A[\lambda_1,\lambda_2]\| - \|\pth^*_A[\lambda_1,\lambda_2]\| \geq 
		\max \Big\{ 0, \; 2\cdot \big( \yco{\pth_A(\mu')} -\yco{\pth_A(\mu'')} \big) \Big\}.
    \]
	Now consider the difference in length for $B$. The \emph{primary} push (on $\robB$) 
	during $[\rho_1,\rho_2]$ makes
	the path of~$\robB$ shorter by at least $2\cdot (y(\topR)- 1 - \yco{\pth_B(\mu')})$, 
	while the \emph{secondary} push during $[\mu_2,\mu_3]$ increases the length of the path 
	of~$B$ 
	by at most $2\cdot (y(\topR)+ 1 - \yco{\pth_B(\mu'')})$;
	the latter is true because we have argued there is only one secondary push that increases length.
	Putting it all together, we obtain	
	\begin{align*}
		& \|\plan\| - \|\plan^*\| \\
		& \;\; \geq  2\cdot \Big( \big( (\yco{\pth_A(\mu')} - \yco{\pth_A(\mu'')}) \big) 
		    + \big( y(\topR) - 1 -\yco{\pth_B(\mu')} \big)
		    - \big( (y(\topR) + 1 - \yco{\pth_B(\mu'')}) \Big)\\
		& \;\; = 2\cdot \Big( \yco{\pth_A(\mu')} -\yco{\pth_B(\mu')} - 1 + 
			\yco{\pth_B(\mu'')} - \yco{\pth_A(\mu'')} - 1) \Big) \\
		& \;\; \geq 2\cdot \Big (1-1 + 1-1 \Big)  \; =\; 0.
	\end{align*}
	Hence, $\plan^*$ is still optimal.
\end{proof}
Combining Lemmas~\ref{lem:snapping-case-I}--\ref{lem:snapping-case-III}, we obtain the following result.
\begin{lemma}
	\label{lem:optimal}
	The modified plan $\plan^*$ is optimal.
\end{lemma}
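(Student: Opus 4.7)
The plan is to observe that \lemref{optimal} is simply the consolidation of the three case-specific optimality lemmas that immediately precede it. The overall surgery procedure described in \secref{surgery} partitions into exactly three cases (I, II, III) based on the location of $\pth_A(\lambda_1)$, the number of swap intervals during $[\lambda_1,\lambda_2]$, and the relative $y$-position of $\robB$ during the first swap interval. Since the input plan $\plan$ falls into exactly one of these cases, the cost analysis needed is already complete; the job of \lemref{optimal} is only to aggregate.

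Concretely, I would first note that $\plan$ was assumed to be an optimal \stplan, so any plan $\plan^*$ satisfying $\|\plan^*\| \leq \|\plan\|$ must itself be optimal, provided it is feasible (which is guaranteed by \lemref{feasible}). Next, I would invoke \lemref{snapping-case-I} in Case~I, \lemref{snapping-case-II} in Case~II, and \lemref{snapping-case-III} in Case~III, each of which gives $\|\plan^*\| \leq \|\plan\|$. In all three cases the conclusion $\|\plan^*\| \leq \|\plan\|$ therefore holds.

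There is no real obstacle here: the hard work was already done in the three preceding lemmas, where the cost increase from ghost segments due to secondary pushes on $x$-separated unsafe (i.e.\ swap) intervals was shown to be offset either by \lemref{surgery-props} applied to the primary push, or by the $y$-separation gap of at least~$1$ at both swap endpoints (as in the telescoping bound in \lemref{snapping-case-I} and the $\mu',\mu''$ comparison in \lemref{snapping-case-III}), or by the absence of such secondary pushes altogether (as in \lemref{snapping-case-II} and in Cases~III(a,b)). Together with \lemref{feasible} for feasibility, this yields the desired optimality of $\plan^*$, completing the proof.
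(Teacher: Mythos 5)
Your proposal matches the paper's own argument: the paper proves this lemma simply by combining Lemmas~\ref{lem:snapping-case-I}--\ref{lem:snapping-case-III}, which cover the exhaustive Cases~I--III of the surgery, exactly as you do. Your additional remark that $\|\plan^*\|\leq\|\plan\|$ together with feasibility of $\plan^*$ and optimality of $\plan$ yields optimality of $\plan^*$ is correct and consistent with how the paper uses the lemma.
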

Combining Lemmas~\ref{lem:feasible} and~\ref{lem:optimal} with the discussion 
in the beginning of the section, we obtain the following result.
\begin{proposition}
	\label{prop:progress}
Let $\plan$ be a rectilinear, decoupled, alternating $(\bs,\bt)$-plan that 
contains a bad horizontal segment. Then there exists a plan $\plan^*$ that satisfies (P1)--(P5).
\end{proposition}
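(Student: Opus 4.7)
The plan is to synthesize the entire machinery built up in Section~\ref{sec:grid-snapping} into a single statement, so the bulk of the work is bookkeeping rather than new construction. I would begin by selecting an arbitrary bad horizontal segment~$e$ on~$\pth_A$ (the case where $e$ lies on~$\pth_B$ is symmetric, with the roles of $A$ and $B$ exchanged and $R$ chosen as the corridor containing~$e$), and letting $R$ be the corridor of $\horF$ that contains~$e$, with $[\lambda_1,\lambda_2]$ defined as in (\ref{def:lambdas}).

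Next, I would bring $\plan$ into the structural form required by the surgery by applying, in order, the compliant re-parametrizations of Lemmas~\ref{lem:unsafe-is-swap}, \ref{lem:limit-swap-intervals}, \ref{lem:tiny-one-swap}, and \ref{lem:two-swap-leaves-influence} (that is, invoking \lemref{main-properties} in its entirety), as well as Lemmas~\ref{lem:before-lambda0-not-other-side} and~\ref{lem:before-lambda1-x-sep-below}. Compliant modifications compose, so the resulting plan is still decoupled, alternating, optimal, feasible, and has the {\sc No Regress} property. I then branch on the location of $\pth_A(\lambda_1)$ relative to $\topR$ and $\botR$ and, in the case $\pth_A(\lambda_1)\in\botR$, on the number and type of swap intervals during $[\lambda_1,\lambda_2]$. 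This gives exactly Cases I, II(a--c), and III(a--c) of Section~\ref{sec:surgery}, and I apply the corresponding sequence of primary and secondary \push operations, adding ghost segments where discontinuities arise and re-parametrizing via the three sub-cases (i)--(iii) described on page~\pageref{subpara:discontinuities}.

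It remains to verify (P1)--(P5) for the output $\plan^*$. Properties~(P3) and~(P5) are essentially by construction: the chosen bad segment lies on a primary or secondary grid line of the surgery, no new horizontal bad segments are created (the only new horizontal subpaths lie on grid lines), and the reparametrization preserves the alternating, decoupled structure. Property~(P4) follows from \lemref{unsafe-vertical-edge} applied to each ghost segment together with the observation that breakpoints introduced by the reparametrization lie either at intersections of grid lines with segments of $\plan$ or at already-existing breakpoints. For~(P1), I would simply quote \lemref{feasible}, whose proof combines \lemref{secondary-push-correctness}, \lemref{ghost-correctness}, \lemref{feasible-case-IIIa}, and \lemref{feasible-case-III-c}. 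For~(P2), I would quote \lemref{optimal}, which combines Lemmas~\ref{lem:snapping-case-I}--\ref{lem:snapping-case-III}.

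The main obstacle to watch is the accounting across Case III(c). There, a \emph{primary} push on $\pth_A[\lambda_1,\lambda_2]$ towards $\topR$ must coexist with a \emph{secondary} push that lifts $\pth_B[\mu_2,\mu_3]$ up to $y(\topR)+1$, and simultaneously with a primary push of $\pth_B[\rho_0,\rho_1]$ down to $y(\topR)-1$ that can extend beyond $[\lambda_1,\lambda_2]$ on the left. Both feasibility (no conflicts in $[\nu_0,\lambda_1]$) and optimality (the cost saved by the primary push on $B$ must dominate the cost added by the secondary push on $B$) are delicate, and this is where the preparatory Lemmas~\ref{lem:lambda2}, \ref{lem:before-lambda0-not-other-side}, \ref{lem:before-lambda1-x-sep-below}, \ref{lem:case-III-shortcut-below}, \ref{lem:apply-shortcut-to-pi*} are needed. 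Once these have been invoked to eliminate the problematic configurations, the telescoping argument comparing $\pth_A(\mu')$, $\pth_A(\mu'')$, $\pth_B(\mu')$, $\pth_B(\mu'')$ in the proof of \lemref{snapping-case-III} closes the optimality gap, and the proposition follows.
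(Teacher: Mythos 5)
Your proposal is correct and follows essentially the same route as the paper: the paper's proof of this proposition is precisely the synthesis you describe, namely the preparatory compliant modifications (Lemma~\ref{lem:main-properties} and the surrounding lemmas), the case surgery of Section~\ref{sec:surgery} with ghost segments and re-parametrization, the direct verification of (P3)--(P5) at the start of Section~\ref{sec:surgery-correct}, and then (P1) via Lemma~\ref{lem:feasible} and (P2) via Lemma~\ref{lem:optimal}. Your remark that the {\sc No Regress} property suffices because the surgery pushes all of $\pth_A[\lambda_1,\lambda_2]$ onto grid lines is exactly the paper's justification as well.
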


\section{Hardness of Min-Makespan}\label{sec:hardness}
In this section we prove \thmref{two-robots-hardness},
which states that when the objective is to minimize the makespan
instead of the sum of the path lengths, the motion-planning
problem for two square robots becomes hard.
\hardness*
Our reduction is from \partition---given a set $X=\{x_1,\ldots,x_m\}$ of $m$
integers, decide if there is a partition of $X$ into disjoint subsets $X_A$
and $X_B$ such that $\sum_{x_i\in X_A} x_i = \sum_{x_i\in X_B} x_i$---and
it is similar to the one of Kobayashi and Sommer~\cite[Theorem 14]{kobayashi2010} 
for the shortest edge-disjoint paths problem.

It will be convenient to scale the given instance such that
the input elements sum to~1; this scaling is not necessary, but it will simplify
the presentation. Thus, we define $Y=\{y_1,\ldots,y_m\}$
where $y_i := x_i/\left(\sum_{i=1}^m x_i\right)$, and we ask:
is there a partition of $Y$ into disjoint subsets $Y_A$
and $Y_B$ such that $\sum_{y_i\in Y_A} y_i = \sum_{y_i\in Y_B} y_i = \tfrac{1}{2}$?
We call such a partition \emph{valid}.
\medskip

The idea of the reduction is to build a workspace~$\envir$ that consists
of gadgets $\envir_1,\ldots,\envir_m$, each corresponding to an element of~$Y$ and to choose a parameter $T_{\max} \geq 0$, 
so that both robots must pass through every gadget and there is
a plan with makespan at most $T_{\max}$ if and only if there is a valid partition of~$Y$.
\figref{gadget} shows the gadget\footnote{For simplicity 
we have used zero-width obstacles---that is, line segments---and
passages of exactly width~1 in the construction, 
but such degeneracies could also be avoided.}
$\envir_i$ corresponding to an element~$y_i\in Y$.
\begin{figure}
\centering
\includegraphics{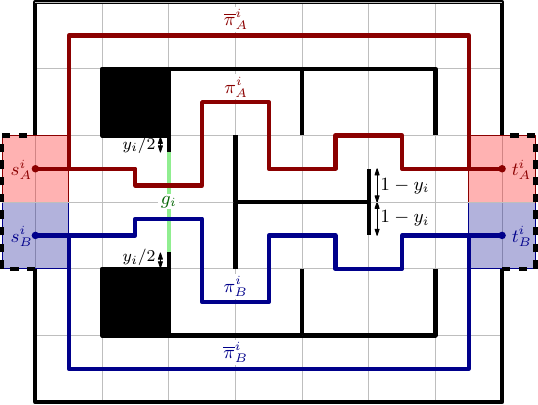}
\caption{The gadget $\envir_i$ for an element~$y_i\in Y$. The thick black segments and
the black square are obstacles. The thin grey lines forming a unit grid 
are not obstacles---they are drawn to show the sizes of the various parts
of the construction. If $i = 1$ then the left side 
of the gadget will be closed off, and if $i = m$ then the right side 
will be closed off, as indicated by the dashed segments. The gate~$g_i$
of $\envir_i$ is shown in light green.}
\label{fig:gadget}
\end{figure}
The points $s_A^i$ and $s_B^i$ are the points where $\robA$ and $\robB$ will
enter the gadget, respectively, and $t_A^i$ and $t_B^i$ are the points where 
$\robA$ and $\robB$ will leave it. Note that the gadget has a vertical 
obstacle segment of length~$\tfrac{1}{2}y_i$ attached to the bottom-right and top-right corner,
respectively, of the two square obstacles. The segment connecting these
two obstacle segments is called the \emph{gate} of the gadget---the gate
is \emph{not} an obstacle---and it is denoted by~$g_i$.
The length of the gate is $2-y_i$, which prevents the two robots from passing
through the gate at the same time. The following observation is easy to verify.

\begin{observation} \label{obs:gadget} $\mbox{}$
\begin{enumerate}[(i)]
\item The path $\pi_A^i$ is the shortest path from $s_A^i$ to $t_A^i$,
      the path $\pi_B^i$ is the shortest path from $s_B^i$ to $t_B^i$, 
      and these paths have length $11-y_i$.
\item The alternative paths $\overline{\pi}_A^i$ from $s_A^i$ to $t_A^i$ 
      and $\overline{\pi}_B^i$ from $s_B^i$ to $t_B^i$ have length $11$.
\item No point on $\pi_A^i$ conflicts with any point on $\overline{\pi}_B^i$, and
      no point on $\overline{\pi}_A^i$ conflicts with any point on $\pi_B^i$.
\end{enumerate}
\end{observation}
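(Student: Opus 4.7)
Each of the three parts reduces to a direct inspection of \figref{gadget}, using the unit grid shown there to read off coordinates as functions of $y_i$. I would verify each part in turn.

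For parts (i) and (ii), I would trace each of the four named paths and sum the lengths of their axis-parallel segments. The $y_i$-dependence of the shorter paths $\pi_A^i$ and $\pi_B^i$ comes from the way they hug the two length-$\tfrac{1}{2}y_i$ obstacle extensions that define the gate, while the alternatives $\overline{\pi}_A^i$ and $\overline{\pi}_B^i$ detour entirely around one of the two square obstacles and so have $y_i$-independent length equal to $11$. To confirm that $\pi_A^i$ is truly a shortest $(s_A^i, t_A^i)$-path in $\envir_i$ (and likewise $\pi_B^i$), I would compare the computed length $11 - y_i$ against the $\ell_1$-geodesic distance in the rectilinear polygon $\envir_i$, which can be determined by a standard taxicab argument (any path must cross at least one of the horizontal lines bounding the gate, and the gate configuration is the only way to match the target-source vertical offset without an extra detour of length at least $y_i$).

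For part (iii), the key observation is that the two paths in each pair $(\pi_A^i, \overline{\pi}_B^i)$ and $(\overline{\pi}_A^i, \pi_B^i)$ lie in regions of $\envir_i$ that are vertically separated by at least $1$: one robot threads through the gate at the middle vertical level, while the other detours around an obstacle square on the opposite side, and the gadget is dimensioned precisely so that these two regions, even after being expanded by the unit-square bodies of the robots, remain disjoint. Verifying this separation is a routine coordinate bookkeeping exercise and immediately yields $\norm{p_A - p_B}_\infty \geq 1$ for every pair $(p_A, p_B) \in \pi_A^i \times \overline{\pi}_B^i$, so the two robots simultaneously occupying these paths induces only free configurations; the second pair is handled symmetrically.

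I do not expect a deep obstacle in this proof. The only nontrivial task is the bookkeeping in part~(iii), where I must partition each of the four paths into its axis-parallel sub-segments, identify the horizontal sub-intervals in which a sub-segment of $\pi_A^i$ could potentially be vertically close to a sub-segment of $\overline{\pi}_B^i$ (and symmetrically for the other pair), and confirm the vertical gap of at least~$1$ on each such interval. Organising the verification by labelling each sub-segment with its horizontal range and vertical level should make this step mechanical, and the same scheme yields the length computations needed for~(i) and~(ii).
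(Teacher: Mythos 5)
Your proposal matches the paper's treatment: the paper gives no explicit proof of this observation, stating only that it is ``easy to verify'' from the gadget construction, so a direct inspection of \figref{gadget} with segment-by-segment length computations and an $\ell_\infty$-separation check of the two path pairs is exactly the intended argument. Your sketch is consistent with the construction (gate paths of length $11-y_i$, detour paths of length $11$, and the pairs $(\pi_A^i,\overline{\pi}_B^i)$ and $(\overline{\pi}_A^i,\pi_B^i)$ kept at $\ell_\infty$-distance at least $1$ by the dimensioning of the gadget), so there is nothing to correct.
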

The entire workspace~$\envir$ is obtained by concatenating the 
gadgets $\envir_1,\ldots,\envir_m$ so that $s_A^{i+1}$ and $s_B^{i+1}$
coincide with $t_A^{i}$ and $t_B^{i}$ for all $1\leq i<m$, as shown in \figref{reduction}. 
The instance of \minmakespan is completed by setting 
$\bs := (s_A^1,s_B^1)$ and $\bt := (t_A^m,t_B^m)$.
\begin{figure}
\centering
\includegraphics{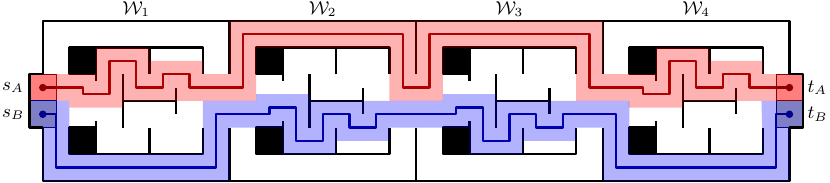}
\caption{The workspace created by our construction for a set $Y = \{y_1, ..., y_4\}$. The paths shown for
the two robots correspond to the partition $Y_A=\{y_1,y_4\}$ and $Y_B=\{y_2,y_3\}$. The black areas are obstacles, and the
red/blue regions are the areas swept by the two robots.}
\label{fig:reduction}
\end{figure}

The construction of $\envir$ can  clearly be carried out in linear time.
The next lemma establishes the correctness of the reduction, thus finishing
the proof of \thmref{two-robots-hardness}.
\begin{lemma}
\label{lem:makespan-iff}
Let $\plan^*$ be a plan of minimum makespan for the created instance~$\envir$.
Then $\pc(\plan^*) \leq 11m-\tfrac{1}{2}$ if and only if the corresponding instance~$Y$
of \partition is valid.
\end{lemma}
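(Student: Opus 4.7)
My plan is to prove each direction of the biconditional separately. The $(\Leftarrow)$ direction is constructive: from a valid partition I build an explicit plan of makespan $11m - \tfrac{1}{2}$. The $(\Rightarrow)$ direction extracts a valid partition from any plan of makespan at most $11m - \tfrac{1}{2}$. The main obstacle will be, in the reverse direction, ruling out the possibility that both robots use the gate of the same gadget.

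For the $(\Leftarrow)$ direction, given a valid partition $Y = Y_A \sqcup Y_B$ with $\sum Y_A = \sum Y_B = \tfrac{1}{2}$, I would have robot $A$ follow $\pi_A^i$ in gadget~$i$ whenever $y_i \in Y_A$ and follow $\overline{\pi}_A^i$ otherwise, while robot $B$ makes the opposite choice. With both robots moving at unit speed, each robot's total path length is $\sum_{y_i \in Y_X}(11-y_i) + \sum_{y_i \notin Y_X} 11 = 11m - \tfrac{1}{2}$, so the makespan equals $11m - \tfrac{1}{2}$. Feasibility follows from \obsref{gadget}(iii) gadget-by-gadget, since in each gadget one robot is on its short path and the other is on its long path, and no pair of points on these two paths is in conflict regardless of timing.

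For the $(\Rightarrow)$ direction, given a plan with makespan $T \leq 11m - \tfrac{1}{2}$, I let $S_X \subseteq [m]$ be the set of gadgets whose gate is traversed by robot $X \in \{A,B\}$. Any path through gadget $i$ that uses the gate has length at least $11 - y_i$ (by \obsref{gadget}(i)), and any path that avoids the gate has length at least $11$ (since $\overline{\pi}_X^i$ is the shortest such path, by the gadget's structure); hence the total path length of $X$ satisfies $L_X \geq 11m - \sum_{i \in S_X} y_i$. Combined with $T \geq L_X$, this forces $\sum_{i \in S_X} y_i \geq \tfrac{1}{2}$ for each $X$. Once I establish $S_A \cap S_B = \emptyset$, the bounds together with $\sum_i y_i = 1$ yield $\sum_{i \in S_A} y_i = \sum_{i \in S_B} y_i = \tfrac{1}{2}$ and $S_A \sqcup S_B = [m]$, so $(\{y_i : i \in S_A\}, \{y_i : i \in S_B\})$ is a valid partition of $Y$.

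The crux is to show $S_A \cap S_B = \emptyset$. Suppose some $i_0 \in S_A \cap S_B$. Because the gate $g_{i_0}$ has length $2 - y_{i_0} < 2$, the center of a unit-square robot crossing the gate is confined to a $y$-interval of length $1 - y_{i_0} < 1$, which together with the required $\ell_\infty$-separation of $1$ between the two robot centers forces the two robots' gate-crossing time intervals to be disjoint. Moreover, each crossing lasts at least $1$ unit of time, since the robot has unit width and moves at speed at most $1$. Hence whichever robot crosses second---say $B$---must incur at least $1$ unit of delay somewhere in its schedule (whether as explicit waiting, a detour, or a late entry into gadget $i_0$), giving $T \geq L_B + 1 \geq (11m - 1) + 1 = 11m$ and contradicting $T \leq 11m - \tfrac{1}{2}$. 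The main technical point I expect to grind through is making the ``delay of at least $1$'' bound precise and showing that it applies robustly regardless of how the conflict is resolved in the plan.
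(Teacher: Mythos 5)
Your $(\Leftarrow)$ direction and your accounting for the case $S_A\cap S_B=\emptyset$ are fine and essentially match the paper (the paper phrases the latter contrapositively via the smaller of the two gate sets, you phrase it via $\sum_{i\in S_X}y_i\geq\tfrac12$ for both robots; these are equivalent). The gap is in the shared-gate case, i.e.\ exactly the step you flag as needing to be ``ground through'': the inequality $T\geq L_B+1$ is not true in general, so the contradiction $T\geq 11m$ does not follow from it. The second robot need not be delayed by a full unit relative to \emph{its own} path length. Concretely, suppose both robots cross the gate of the last gadget, $\robA$ having taken the short paths $\pi_A^j$ in all earlier gadgets and $\robB$ the long paths $\overline{\pi}_B^j$. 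Then $L_B=11m-y_m$, robot $\robB$ arrives at the gate only slightly before $\robA$ clears it, follows $\robA$ through at unit speed with a wait of only about $y_m<1$, and finishes at time $11m$. So $T=11m<L_B+1$, falsifying your intermediate claim, even though the conclusion $T\geq 11m$ is correct. The flaw is that ``late entry into gadget $i_0$'' is already paid for inside $L_B$ (via the $11$'s instead of $11-y_j$'s), so it cannot be added again as a delay; and if the lateness is due to path length rather than idling, there is no $+1$ available at all.

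What the correct argument needs --- and what the paper does --- is a mixed bound that is anchored to the \emph{first} robot's timeline rather than to $L_B$: if $\robA$ crosses the shared gate $g_i$ first, then the earliest moment its left edge can lie in $g_i$ is $t^*\geq\sum_{j<i}(11-y_j)+2\tfrac12+\tfrac12 y_i$ (its shortest route through the preceding gadgets plus the intra-gadget distance to the gate), $\robB$ cannot have fully crossed before $t^*+1$ because the gate opening of length $2-y_i$ forces the two straddling intervals to be essentially disjoint, and after crossing $\robB$ still needs length at least $8\tfrac12-\tfrac32 y_i+\sum_{j>i}(11-y_j)$; summing gives $\pc(\plan)\geq 11m$ regardless of which routes $\robB$ took earlier. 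Your geometric observations about the gate (center confined to a $y$-interval of length $1-y_i$, crossing time at least $1$) are the right ingredients for the ``$+1$'', but they must be combined with $\robA$'s prefix and $\robB$'s suffix distances as above, not with $L_B$; also note that since touching is allowed, the two crossing intervals can share an instant (one robot's right edge and the other's left edge both on the gate), which is exactly the extremal configuration the paper's $t^*+1$ bound accounts for.
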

\begin{proof}
\begin{description}
\item[$\Longleftarrow$:] 
    Let $Y_A,Y_B$ be a valid partition of~$Y$. Let $\pi^*_A$ be the
    path from $s^1_A$ to $t^m_A$ that uses $\pi_A^i$ if
    $y_i\in Y_A$ and $\overline{\pi}_A^i$ otherwise. Similarly, let
    $\pi^*_B$ be the path from $s_B^1$ to $t_B^m$ that uses $\pi_B^i$ if
    $y_i\in Y_B$ and $\overline{\pi}_B^i$ otherwise; see \figref{reduction}
    for an example. 
    By Observation~\ref{obs:gadget}(i) and~(ii), and because $Y_A,Y_B$ is a valid
    partition, we have
    \[
    \| \pi^*_A\| = \sum_{y_i\in Y_A} (11-y_i) + \sum_{y_i\in Y_B} 11 
                 =  11m-\tfrac{1}{2}
                 = \sum_{y_i\in Y_A} 11 + \sum_{y_i\in Y_B} (11-y_i)
                 = \| \pi^*_B\|.
    \]
    Moreover, $\robA$ and $\robB$ can traverse their respective paths 
    $\pi^*_A$ and $\pi^*_B$ at speed~1 without colliding with each other,
    by Observation~\ref{obs:gadget}(iii).
    Hence, $\pc(\plan^*) \leq 11m-\tfrac{1}{2}$.
\item[$\Longrightarrow$:] 
    Suppose that $Y$ does not have a valid partition. We must show that this
    implies that $\pc(\plan)>11m-\tfrac{1}{2}$ for any feasible plan~$\plan=(\pi_A,\pi_B)$.
    Define $G_A$ to be the set of gates crossed by~$\pi_A$. More precisely,
    $G_A$ is the set of gates~$g_i$ such that at some moment in time the left 
    edge of $\robA$ is contained in~$g_i$. Define $G_B$ similarly.
    Note that if $g_i\not\in G_A$ then
    the shortest path for $\robA$ through $\envir_i$ is via~$\overline{\pi}_A^i$;
    the analogous statement holds for~$\robB$. 
    We now distinguish two cases.
    \begin{itemize}
    \item \emph{Case I: $G_A\cap G_B=\emptyset$.} \\
           Assume without loss of generality that 
           $\sum_{g_i\in G_A} y_i \leq \sum_{g_i\in G_B} y_i$. Then
           \[
           \sum_{g_i\in G_A} y_i \leq \tfrac{1}{2} \cdot \left( \sum_{g_i\in G_A} y_i + \sum_{g_i\in G_B} y_i \right) \leq \tfrac{1}{2} \cdot  \sum_{i=1}^m y_i = \tfrac{1}{2}.
           \] 
           Moreover, we cannot have 
           $\sum_{g_i\in G_A} y_i = \tfrac{1}{2}$, otherwise $Y_A := \{y_i : g_i\in G_A\}$
           and $Y_B := Y\setminus Y_A$ would be a valid partition.
           Hence, $\sum_{g_i\in G_A} y_i < \tfrac{1}{2}$. Given that $\robA$ only
           crosses the gates in~$G_A$, the fastest way for
           $\robA$ to reach $t_A^m$ is by using $\pi_A^i$ if $g_i\in G_A$ and 
           $\overline{\pi}_A^i$ otherwise. Hence,
           \[
           \pc(\plan) \geq \sum_{g_i\in G_A} (11-y_i) + \sum_{g_i\not\in G_A} 11
                        = 11m - \sum_{g_i\in G_A} y_i > 11m - \tfrac{1}{2}.
           \]  
    \item \emph{Case II: $G_A\cap G_B \neq \emptyset$.} \\
           Let $g_i$ be a gate such that $g_i\in G_A\cap G_B$, and assume 
           without loss of generality that $\robA$ crosses $g_i$ before $\robB$ does.
           Let $t^*$ be the first time at which the left edge of $\robA$ is contained
           in~$g_i$. The fastest way for $\robA$ to reach such a position is by taking
           the paths $\pi_A^j$ for all $j<i$ and then going straight to the right from
           $s_A^i$ until it crosses~$g_i$. Hence,
           \[
           t^* \geq \sum_{j=1}^{i-1} (11-y_j)+2\tfrac{1}{2} + \tfrac{1}{2}y_i.
           \]
           The earliest time at which $\robB$ can have fully crossed $g_i$ is
           at time~$t^*+1$. ($\robB$ is able to achieve this if its right edge
           would be contained in $g_i$ at time $t^*$.) After $\robB$ has crossed 
           $g_i$ for the first time, it still has to reach ~$t_B^m$. The fastest way to
           do so, is by first following the part of $\pi_B^i$
           that remains after crossing~$g_i$---this assumes $\robB$ is located on $\pi_B^i$ 
           when it has fully crossed~$g_i$, which is the shortest possible path for $B$---and 
           then taking the paths $\pi_B^j$ for all $j> i$.
           The total length of this path is $8\tfrac{1}{2} - \tfrac{3}{2}y_i + \sum_{j=i+1}^{m} (11-y_j)$.
           Hence,
           \[
           \pc(\robB) \geq t^* + 1 + 8\tfrac{1}{2} - y_i + \sum_{j=i+1}^{m} (11-y_j)
                      \geq 11m + 1 - \sum_{j1}^{m} y_j
                      = 11m.
           \]
    \end{itemize}
    We conclude that $\pc(\plan)>11m-\tfrac{1}{2}$ in both cases, which finishes the proof.
\end{description}
\vspace*{-8mm}
\end{proof}

\section{Conclusion}

We presented an $O(n^4\log{n})$-time algorithm to compute a plan of minimum total length for two unit
squares in a rectilinear environment; this is the first polynomial
algorithm for 2-robot optimal motion planning in a polygonal environment. In contrast,
we showed that minimizing the makespan in the same setting is weakly \nphard. We conclude
with some open problems:
\begin{itemize}
    \item Can the runtime of our algorithm for the \minsum problem be improved to
    $O(n^2\log{n})$? Can our algorithm be extended to $k$ unit squares with $n^{O(k)}$ runtime?
    \item Is the \minsum problem for two unit squares (or disks) in a general polygonal environment 
in $\p$?
    \item Is there a pseudo-polynomial-time algorithm for the \minmakespan problem for two
    unit squares translating in a rectilinear environment? What about an approximation algorithm?
\end{itemize}

\bibliography{references,soda}

\end{document}